\documentclass[a4paper,UKenglish,cleveref,autoref,thm-restate]{lipics-v2021}

\pdfoutput=1 \hideLIPIcs  

\title{A Rank-Preserving Gaifman Normal Form for First-Order Logic on Weighted Structures}

\titlerunning{Rank-Preserving Gaifman Normal Form for First-Order Logic on Weighted Structures}

\author{Steffen {van Bergerem}}
       {Humboldt-Universität zu Berlin, Germany}
       {steffen.van.bergerem@hu-berlin.de}
       {https://orcid.org/0000-0002-5212-8992}
       {This work was funded by the Deutsche Forschungsgemeinschaft
        (DFG, German Research Foundation) -- project number 541000908
        (gefördert durch die Deutsche Forschungsgemeinschaft (DFG) -- Projektnummer 541000908).}
\author{Martin Grohe}{RWTH Aachen University, Germany}{grohe@informatik.rwth-aachen.de}{https://orcid.org/0000-0002-0292-9142}{Funded by the European Union (ERC, SymSim,
101054974). Views and opinions expressed are however those of the
author(s) only and do not necessarily reflect those of the European
Union or the European Research Council. Neither the European Union
nor the granting authority can be held responsible for them.}
\author{Charlotte Lenz}{Humboldt-Universität zu Berlin, Germany}{charlotte.lenz@hu-berlin.de}{}{}
\author{Nicole Schweikardt}
       {Humboldt-Universität zu Berlin, Germany}
       {schweikn@hu-berlin.de}
       {https://orcid.org/0000-0001-5705-1675}
       {This work was funded by the Deutsche Forschungsgemeinschaft
        (DFG, German Research Foundation) -- project number 541000908
        (gefördert durch die Deutsche Forschungsgemeinschaft (DFG) -- Projektnummer 541000908).}

\authorrunning{S. van Bergerem, M. Grohe, C. Lenz, and N. Schweikardt} 

\Copyright{Steffen van Bergerem, Martin Grohe, Charlotte Lenz, and Nicole Schweikardt} 

\ccsdesc[500]{Theory of computation~Finite Model Theory}

\keywords{First-order logic, weighted structures, Gaifman normal form}

\category{} 

\relatedversiondetails[cite=GroheSchweikardt_2026_Locality]
{This paper supersedes the following preprint, see \cref{sec:intro} for details}
{https://arxiv.org/abs/2606.11993}

\useofai{No AI tools were used for conducting the research presented in this paper.}

\nolinenumbers

\EventEditors{John Q. Open and Joan R. Access}
\EventNoEds{2}
\EventLongTitle{42nd Conference on Very Important Topics (CVIT 2016)}
\EventShortTitle{CVIT 2016}
\EventAcronym{CVIT}
\EventYear{2016}
\EventDate{December 24--27, 2016}
\EventLocation{Little Whinging, United Kingdom}
\EventLogo{}
\SeriesVolume{42}
\ArticleNo{23}

\usepackage{mathtools} \usepackage{stmaryrd} \usepackage{xcolor}

\crefname{conjecture}{Conjecture}{Conjectures}
\crefname{claim}{Claim}{Claims}

\newcommand{\ie}{i.\,e.}

\newcommand{\NN}{\mathbb{N}}
\newcommand{\NNpos}{\ensuremath{\NN_{\scriptscriptstyle \geq 1}}}
\newcommand{\QQ}{\mathbb{Q}}
\newcommand{\QQpos}{\ensuremath{\QQ_{\scriptscriptstyle > 0}}}
\newcommand{\RR}{\mathbb{R}}
\newcommand{\ZZ}{\mathbb{Z}}

\newcommand{\Structure}[1]{\ensuremath{\mathcal{#1}}}
\newcommand{\A}{\Structure{A}}
\newcommand{\B}{\Structure{B}}
\newcommand{\CN}{\Structure{N}}

\newcommand{\Class}[1]{\ensuremath{\mathcal{#1}}}
\newcommand{\CC}{\Class{C}}
\newcommand{\CG}{\Class{G}}

\newcommand{\I}{\ensuremath{\mathcal{I}}}

\newcommand{\X}{\ensuremath{\mathcal{X}}}

\newcommand{\deff}{\coloneqq}
\newcommand{\ffed}{\eqqcolon}

\newcommand{\neighb}[3]{\ensuremath{N_{#1}^{#2}(#3)}} \newcommand{\neighbr}[2]{\neighb{r}{#1}{#2}} \newcommand{\Neighb}[3]{\ensuremath{\mathcal{N}_{#1}^{#2}(#3)}} \newcommand{\Neighbr}[2]{\Neighb{r}{#1}{#2}} \newcommand{\nrA}[1]{\ensuremath{\neighb{r}{\A}{#1}}}
\newcommand{\NrA}[1]{\ensuremath{\Neighb{r}{\A}{#1}}}

\newcommand{\NrB}[1]{\ensuremath{\Neighb{r}{\B}{#1}}}

\newcommand{\abs}[1]{\left\lvert#1\right\rvert}
\newcommand{\bigabs}[1]{\bigl\lvert#1\bigr\rvert}

\newcommand{\norm}[1]{\left\lVert#1\right\rVert}

\DeclareMathOperator*{\ar}{ar}
\DeclareMathOperator{\dist}{dist}
\DeclareMathOperator*{\free}{free}
\DeclareMathOperator*{\qr}{qr}
\DeclareMathOperator*{\mdOperator}{md}
\newcommand{\md}[1]{\mdOperator \paren{{#1}}}

\renewcommand{\leq}{\leqslant}
\renewcommand{\geq}{\geqslant}

\renewcommand{\phi}{\varphi}
\renewcommand{\epsilon}{\varepsilon}

\newcommand{\set}[1]{\ensuremath{\{#1\}}}
\newcommand{\setc}[2]{\ensuremath{\set{#1 : #2}}}
\newcommand{\bigset}[1]{\ensuremath{\bigl\{ #1 \bigr\}}}
\newcommand{\bigmid}{\ \mathrel{:} \ }
\newcommand{\bigsetc}[2]{\bigset{#1 \bigmid #2}}

\newcommand{\tuple}[1]{\bar{#1}}
\newcommand{\ta}{\tuple{a}}
\newcommand{\tb}{\tuple{b}}
\newcommand{\tc}{\tuple{c}}
\newcommand{\tx}{\tuple{x}}
\newcommand{\ty}{\tuple{y}}
\newcommand{\tz}{\tuple{z}}

\newcommand{\tv}{\tuple{v}}

\newcommand{\tX}{\tuple{X}}
\newcommand{\tY}{\tuple{Y}}

\newcommand{\vars}{\ensuremath{\textsf{\upshape vars}}}
\newcommand{\varsof}{\ensuremath{\textup{\upshape vars}}}
\newcommand{\sem}[1]{\left\llbracket #1 \right\rrbracket} \newcommand{\Land}{\ensuremath{\bigwedge}}
\newcommand{\Lor}{\ensuremath{\bigvee}}

\newcommand{\bigO}{\mathcal{O}}

\newcommand{\AWstar}{\ensuremath{\mathsf{AW[\ast]}}}

\newcommand{\CollectionOfRingsFont}[1]{\mathbb{#1}}
\newcommand{\SC}{\CollectionOfRingsFont{S}} \newcommand{\Weights}{\ensuremath{\mathbf{W}}}
\newcommand{\weight}{\ensuremath{\mathtt{w}}}

\newcommand{\wtype}{\ensuremath{\textup{type}}}
 \newcommand{\ptype}{\ensuremath{\textup{type}}}
\newcommand{\plusSR}[1]{\ensuremath{+_{#1}}}
\newcommand{\minusSR}[1]{\ensuremath{-_{#1}}}
\newcommand{\malSR}[1]{\ensuremath{{\cdot}_{#1}}}
\newcommand{\nullSR}[1]{\ensuremath{0_{#1}}}

\newcommand{\plus}{\plusSR{}}
\newcommand{\mal}{\malSR{}}
\newcommand{\minus}{\minusSR{}}
\newcommand{\plusS}{\plusSR{S}}

\newcommand{\malS}{\malSR{S}}
\newcommand{\nullS}{\nullSR{S}}

\newcommand{\LogicFont}[1]{\ensuremath{\mathsf{#1}}}
\newcommand{\FO}{\LogicFont{FO}}
\newcommand{\FOMOD}{\LogicFont{FO{+}MOD}}
\newcommand{\FOMODplus}[1]{\ensuremath{\FOMOD^{+}_{#1}}}
\newcommand{\FOplus}{\FO^+}

\newcommand{\FOWA}{\LogicFont{FOWA}}

\newcommand{\FOW}{\ensuremath{\LogicFont{FOW}}}
\newcommand{\FOWplus}{\ensuremath{\FOW^+}}
\newcommand{\FOWun}{\ensuremath{\FOW_1}}
\newcommand{\FOWPSR}[2]{\ensuremath{\FOW(#1)[#2]}}
\newcommand{\FOWplusPSR}[2]{\ensuremath{\FOWplus(#1)[#2]}}
\newcommand{\FOWPS}{\FOWPSR{\PP}{\sigma,\SC,\Weights}}
\newcommand{\FOWplusPS}{\FOWplusPSR{\PP}{\sigma,\SC,\Weights}}
\newcommand{\ngFOW}{\ensuremath{\LogicFont{ngFOW}}}
\newcommand{\ngFOWplus}{\ensuremath{\ngFOW^+}}

\newcommand{\PP}{\ensuremath{\mathbb{P}}}
\newcommand{\Pred}{\ensuremath{\mathsf{P}}}

\newcommand{\Fparam}[3]{\ensuremath{\LogicFont{F}(#1,#2,#3)}}
\newcommand{\Fqkd}{\Fparam{q}{k}{d}}

\newcommand{\FOplusParam}[3]{\FOplus(#1,#2,#3)}
\newcommand{\FOplusqkd}{\FOplusParam{q}{k}{d}}

\newcommand{\FOMODplusParam}[3]{\FOMODplus{M}(#1,#2,#3)}
\newcommand{\FOMODplusqkd}{\FOMODplusParam{q}{k}{d}}

\DeclareMathOperator{\locRadOperator}{r}
\newcommand{\locRad}[3]{\locRadOperator ({#1}, {#2}, {#3})}
\newcommand{\myr}{\locRadOperator}

\DeclareMathOperator{\locRadFOOperator}{r_1}
\newcommand{\locRadFO}[3]{\locRadFOOperator ({#1}, {#2}, {#3})}

\DeclareMathOperator{\locOperator}{L}
\newcommand{\loc}[3]{\locOperator ({#1}, {#2}, {#3})}
\DeclareMathOperator{\sentOperator}{S}
\newcommand{\sent}[3]{\sentOperator ({#1}, {#2}, {#3})}

\newcommand{\Lqkd}{\loc{q}{k}{d}}
\newcommand{\Sqkd}{\sent{q}{k}{d}}

\newcommand{\Fraisse}{Fra\"{\i}ss{\'e}}

\usepackage{csquotes}

\DeclarePairedDelimiter{\paren}{(}{)}
\DeclarePairedDelimiter{\brac}{[}{]}
\newcommand{\setpadding}{\mathchoice{ \ }{ \, }{}{}}
\DeclarePairedDelimiter{\setS}{\{}{\}}

\DeclarePairedDelimiterX{\setcompS}[2]{\{}{\}}{\setpadding {#1} \setpadding : \setpadding {#2} \setpadding}

\newcommand{\enud}[2]{ #1, \dots, #2 }

\newcommand{\setd}[2]{\setS*{\enud{#1}{#2}}}
\newcommand{\tup}[1]{\paren*{ #1 }}
\newcommand{\tupd}[2]{\tup{\enud{#1}{#2}}}

\newcommand{\isdef}{\coloneqq}

\newcommand{\concat}{\mathrel{\kern-0.15em | \kern-0.1em | \kern-0.15em} }

\newcommand{\intsUpTo}[1]{\brac{#1}}

\newcommand{\proj}[2]{{#1}_{#2}}
\newcommand{\projTup}[2]{\proj{\myvec{#1}}{#2}}

\newcommand{\eval}[2]{#1 (#2)}
\newcommand{\annSym}[2]{\stackrel{\text{#1}}{#2}}

\newcommand{\annLeq}[1]{\annSym{#1}{\leq}}

\newcommand{\bigLand}{\bigwedge}
\newcommand{\bigLor}{\bigvee}

\newcommand{\bigLandClPad}[2]{
    \hspace{#2}
    \bigLand_{
        \mathclap{
            \substack{
                #1
            }
        }
    }
    \hspace{#2}
}
\newcommand{\bigLorClPad}[2]{
    \hspace{#2}
    \bigLor_{
        \mathclap{
            \substack{
                #1
            }
        }
    }
    \hspace{#2}
}

\newcommand{\relationpadding}{\mathchoice{\ \; }{}{}{}}
\newcommand{\deffp}{\relationpadding \isdef \relationpadding}
\newcommand{\equivp}{\relationpadding \equiv \relationpadding}

\newcommand{\modelsp}{\relationpadding \models \relationpadding}
\newcommand{\binarypadding}{\mathchoice{\ }{}{}{}}

\newcommand{\landp}{\mathbin{\binarypadding \land \binarypadding}}
\newcommand{\lorp}{\mathbin{\binarypadding \lor \binarypadding}}
\newcommand{\undp}{\landp}

\newcommand{\refsubstat}[2]{\hyperref[#2]{\ref*{#1}\ref*{#2}}}
\newcommand{\autorefsubstat}[2]{\hyperref[#2]{\autoref*{#1}\,\ref*{#2}}}

\newcommand{\baseCases}{Base Cases.\ }

\newcommand{\indStep}{Induction Step.\ }

\newcommand{\BaseCasesP}{\proofsubparagraph{\baseCases}}

\newcommand{\IndStepP}{\proofsubparagraph{\indStep}}

\DeclareMathOperator{\arOperator}{ar}
\DeclareMathOperator{\qrOperator}{qr}
\DeclareMathOperator{\freeOperator}{free}
\DeclareMathOperator{\distOperator}{dist}
\DeclareMathOperator{\predOperator}{\mathsf{P}}

\DeclarePairedDelimiterXPP{\arS}[1]{\arOperator}{(}{)}{}{#1}
\DeclarePairedDelimiterXPP{\qrS}[1]{\qrOperator}{(}{)}{}{#1}
\DeclarePairedDelimiterXPP{\mdS}[1]{\mdOperator}{(}{)}{}{#1}
\DeclarePairedDelimiterXPP{\freeS}[1]{\freeOperator}{(}{)}{}{#1}
\DeclarePairedDelimiterXPP{\distS}[2]{\distOperator^{#1}}{(}{)}{}{#2}
\DeclarePairedDelimiterXPP{\wtypeS}[1]{\wtype}{(}{)}{}{#1}
\DeclarePairedDelimiterXPP{\ptypeS}[1]{\ptype}{(}{)}{}{#1}
\DeclarePairedDelimiterXPP{\predS}[1]{\predOperator}{(}{)}{}{#1}

\newcommand{\nc}[1]{\newcommand{#1}}
\newcommand{\rnc}[1]{\renewcommand{#1}}

\nc{\myvec}{\bar}
\rnc{\vec}{\myvec}

\nc{\wmaxar}{\ensuremath{\omega}}
\nc{\myk}{\ensuremath{{k'}}}

\nc{\und}{\ensuremath{\wedge}}
\nc{\Und}{\ensuremath{\bigwedge}}
\nc{\oder}{\ensuremath{\vee}}
\nc{\Oder}{\ensuremath{\bigvee}}
\nc{\nicht}{\ensuremath{\neg}}
\nc{\impl}{\ensuremath{\to}}
\nc{\gdw}{\ensuremath{\leftrightarrow}}

\nc{\true}{\ensuremath{\textsf{\upshape true}}}
\nc{\false}{\ensuremath{\textsf{\upshape false}}}

\nc{\GG}[1]{\ensuremath{G_{#1}}}
\nc{\GGA}{\GG{\A}}

\nc{\myAlphas}{\ensuremath{\mathbb{A}}}
\nc{\myBetas}{\ensuremath{\mathbb{B}}}
\nc{\tildeDelta}{\ensuremath{\tilde{\Delta}}}
\nc{\tildealpha}{\ensuremath{\tilde{\alpha}}}
\nc{\tildebeta}{\ensuremath{\tilde{\beta}}}
\nc{\tildealphaStrich}{\ensuremath{\tilde{\alpha}{}'}}
\nc{\tildebetaStrich}{\ensuremath{\tilde{\beta}{}'}}
\nc{\altDelta}{\ensuremath{\nabla}}
\nc{\MnotNull}{\ensuremath{{M_0}}}
\nc{\NnotNull}{\ensuremath{{N_0}}}

\definecolor{ibm-ultramarine}{HTML}{648fff}
\definecolor{ibm-indigo}{HTML}{785ef0}
\definecolor{ibm-magenta}{HTML}{dc267f}
\definecolor{ibm-orange}{HTML}{fe6100}
\definecolor{ibm-gold}{HTML}{ffb000}
\usepackage{tikz}

\newcommand{\removalsig}[3]{\tilde{#1}^{#3}_{#2}}

\newcommand{\removalsigrs}[1]{\removalsig{#1}{r}{s}}
\newcommand{\sigmars}{\removalsigrs{\sigma}}
\newcommand{\removalsigma}[2]{\removalsig{\sigma}{#1}{#2}}
\newcommand{\distAtom}[3]{\dist(#1,#2) \,{\leq}\, #3}
\newcommand{\distAtomNeg}[3]{\dist(#1,#2) \,{>}\, #3}

\usetikzlibrary{backgrounds,calc,shadows}
\tikzset{dropshadow/.style={drop shadow={opacity=.4, shadow xshift=.25ex, shadow yshift=-.25ex}},
  vertex template/.style={draw, semithick, circle, inner sep=.3ex, minimum width=3.5ex, fill=white},
  vertex/.style={vertex template, dropshadow},
  edge/.style={thick},
}
 
\begin{document}

\setcounter{page}{0}

\maketitle

\begin{abstract}
  We prove a \emph{rank-preserving} version of Gaifman's Theorem.
  Compared to earlier rank-preserving locality theorems (in
  particular, [Grohe, Kreutzer, Siebertz, JACM~2017]), our theorem is
  much simpler and yields formulas in exactly the same
  normal form as Gaifman's original theorem.
  Furthermore, it holds not only for first-order logic, but also for first-order
  logic with modulo-counting quantifiers and, more generally, for the
  \emph{first-order logic on weighted structures}
  $\ngFOWplus$ that is introduced in this article.

  As an application of our theorem, we give a simplified proof of the
  algorithmic meta-theorem of [Grohe, Kreutzer, Siebertz, JACM~2017] stating that
  first-order properties of nowhere dense structures can be decided in
  almost-linear time. Our locality theorem for the weight logic $\ngFOWplus$ can be seen as an essential step toward such a meta-theorem for this logic.
\end{abstract}
\clearpage

\section{Introduction}
\label{sec:intro}
\emph{Locality} is a fundamental property of first-order logic with many applications,
mainly in proving expressivity lower bounds
(e.g., \cite{Fagin75,Hanf65,KuskeS17,Libkin97,Libkin00}) and algorithmic meta-theorems
(e.g., \cite{DreierMS23,FrickG01,GroheKreutzerSiebertz_2017_NowhereDense,SchweikardtSV22,Seese96}).
Locality comes in different forms (cf.~\cite{Libkin97,Libkin-FMT}),
but arguably the most important is the one captured by Gaifman's Theorem~\cite{GaifmanPaper}.
We say that a formula of first-order logic is in \emph{Gaifman normal form}
if it is a Boolean combination of local formulas and of basic local sentences,
i.e.\ sentences of the form
\begin{equation}\label{eq:intro-bl}
  \exists y_1\cdots\exists y_\ell\ \Big(\bigwedge_{1\le i<j\le\ell}
  \distAtomNeg{y_i}{y_j}{2r}\ \wedge\bigwedge_{1\le j\le \ell}\lambda(y_j)\Big),
\end{equation}
where $\distAtomNeg{y_i}{y_j}{2r}$ is a formula expressing that the distance
between $y_i$ and $y_j$ is greater than $2r$,
and the formula $\lambda(y)$ is $r$-local around $y$
(cf.~\cref{sec:preliminaries,sec:FO-Gaifman} for all necessary definitions).
Gaifman's Theorem states that every formula of first-order logic
is equivalent to a formula in Gaifman normal form.
The currently known proofs of this theorem
(cf.~\cite{EF-FMT,GaifmanPaper,KeislerLotfallah2004})
transform $\phi$ into a formula in Gaifman normal form that does not preserve
the quantifier rank, i.e.\ the local formulas and the formulas $\lambda(y)$
in the basic local sentences may have a larger quantifier rank than $\phi$.
In fact, we give an example
-- to the best of our knowledge the first such example --
showing that an increase in the quantifier rank
is necessary~(cf.~\cref{ex:fo-gaifman}).

For some applications of Gaifman's Theorem,
specifically for proving algorithmic meta-theorems for nowhere dense classes
\cite{DreierMS23,GroheKreutzerSiebertz_2017_NowhereDense,GroheSchweikardt_2018_FOC1,SchweikardtSV22},
this increase in quantifier rank is a serious problem.
The reason is that Gaifman's Theorem needs to be applied recursively,
and the depth of the recursion depends on the quantifier rank of the formulas
that still need to be processed.
To resolve this issue, Grohe, Kreutzer, and Siebertz~\cite{GroheKreutzerSiebertz_2017_NowhereDense}
introduced a new, two-parameter rank measure (\enquote{$q$-rank at most $\ell$})
and proved a locality theorem preserving the rank.
For other, more general algorithmic meta-theorems,
variants of this rank-preserving locality theorem have been
given~\cite{DreierT25,GroheSchweikardt_2018_FOC1}.
All these rank-preserving locality theorems,
including the original one by Grohe et al.~\cite{GroheKreutzerSiebertz_2017_NowhereDense},
are very complex, much more complex than Gaifman's original theorem.
This not only made them difficult to apply, but also error-prone.
Indeed, it recently turned out that the proof of~\cite[Theorem~7.1]{GroheSchweikardt_2018_FOC1},
which had been applied in \cite{GroheSchweikardt_2018_FOC1,SchweikardtSV22},
contains a subtle error.
The present paper grew out of our attempt to fix this error
(and hence salvage the applications of the theorem,
notably the algorithmic
meta-theorems of \cite{GroheSchweikardt_2018_FOC1,SchweikardtSV22}).
We not only succeeded in fixing the error,
but also ended up with a much simpler and arguably nicer rank-preserving locality theorem
that actually is a rank-preserving version of Gaifman's original theorem.
As an application of this theorem, we give a simplified proof of the main result
of~\cite{GroheKreutzerSiebertz_2017_NowhereDense}
that first-order properties of nowhere dense structures can be decided
in almost-linear time (see \cref{section:nowhere-dense}).
A similar but incomparable rank-preserving locality theorem for first-order logic
was obtained independently by Dreier and Toru\'nczyk~\cite{DreierT26a}.

We then realised that our techniques are actually applicable to a much broader class of logics,
extensions of first-order logic with the ability to count,
and on weighted structures, with the ability to aggregate values.
Logics with counting operators were first studied
in descriptive complexity theory~\cite{Immerman87,ImmermanL90,GradelO92}.
Their locality was later studied
in~\cite{HellaLN99,KuskeSchweikardt_ICALP18,Libkin00,Libkin01,LibkinN99}.
Grädel and Gurevich~\cite{GradelG98} introduced a general framework for logics
with weights and aggregation that they dubbed \emph{metafinite model theory}.
It forms the basis for many other such logics that have been introduced and analysed later,
for example~\cite{vanBergeremSchweikardt_2021_FOWA,Geerts23,GeertsMRVV21,GradelM95,Grohe26,GroheSSV25,KuskeS17,Torunczyk20}.

The specific logic we study here is based on \cite{vanBergeremSchweikardt_2021_FOWA}.
In general, logics with counting or aggregation are not local.
Intuitively, this is because vertices of a graph that are far apart may share a numeric value
(for example, the number of neighbours, or a weight) that appears nowhere else,
and this allows the logic to establish a \enquote{virtual} connection between these two vertices.
Thus, locality theorems for such logics can only be established for fragments
that do not allow such long-range virtual connections.
Van Bergerem and Schweikardt~\cite{vanBergeremSchweikardt_2021_FOWA}
proved a Gaifman locality theorem for the fragment $\FOWun$
that forbids comparisons between terms associated with different vertices
and allows aggregations only for values that come from a finite set.
However, while useful in other applications,
this theorem does not help in proving algorithmic meta-theorems for nowhere dense structures,
because it is not rank-preserving.
We establish a rank-preserving locality theorem for a larger fragment $\ngFOW$,
\enquote{neighbourhood-guarded $\FOW$},
which allows comparisons between terms associated with several variables,
but only if they are close together.
The logic not only subsumes $\FOWun$, but also first-order logic with
modulo-counting quantifiers.

A first step towards rank-preserving locality
is to separate the quantifier rank induced by assertions about distances between vertices
from the rank induced by the remaining quantifiers.
This can be done by introducing distance atoms of the form
$\distAtom{x}{y}{r}$ (for constants $r$) into the logic.
The resulting logic $\ngFOWplus$ has the same expressiveness as $\ngFOW$,
but as distance atoms are considered to be quantifier-free, the quantifier rank may change.
For $\ngFOWplus$-formulas $\phi$,
the distances $r$ occurring in distance atoms $\distAtom{x}{y}{r}$
provide a new natural resource to be restricted.
Instead of defining a complicated multi-parameter rank measure
(as~\cite{GroheKreutzerSiebertz_2017_NowhereDense} did),
we found it easier to stratify our logic $\ngFOWplus$
into a family of fragments $\Fqkd$ whose union is the full logic.
Roughly speaking, the parameter~$q$ controls the quantifier rank,
$k$ the number of free variables,
and $d$ (together with $q$ and $k$) controls the values allowed in the distance atoms.
However, the actual values permitted in a distance atom
also depend on the position of the atom in a formula.
One of our main new technical insights is how to refine the interaction
between the distances in atoms and their depth within the formula.

Our main result (\cref{thm:GNF_ngFOWplus}) states that every formula $\phi\in\Fqkd$
is equivalent to a formula in Gaifman normal form
that increases neither the quantifier rank $q$ nor the bounds on the distance atoms.
More precisely, $\phi$ is equivalent to a Boolean combination of local formulas in $\Fqkd$,
of basic local sentences of the form~\eqref{eq:intro-bl}
where $\lambda(y)$ is a local formula in $\Fparam{q{-}1}{k{+}\wmaxar}{d}$
(for some constant $\wmaxar$ depending on the arity of the weight symbols occurring in $\phi$),
and \enquote{local aggregation sentences} that aggregate the weights of tuples $\bar y$
satisfying a local formula $\lambda(\bar y) \in \Fparam{q{-}1}{k{+}\wmaxar}{d}$.
While locality is defined semantically in this theorem,
we actually prove a refined syntactic version (\cref{thm:GNF_ngFOWplus_refined})
where locality is enforced by the structure of the formulas.
Our proof combines ideas from the Ehrenfeucht--\Fraisse-style proof
of~\cite{GroheKreutzerSiebertz_2017_NowhereDense} with Gaifman's original,
more syntactic proof ideas \cite{GaifmanPaper}
and generalises these to the logic $\ngFOWplus$.
A key ingredient is a Feferman--Vaught decomposition for the logic (\cref{lem:fv_decomp})
that is compatible with all relevant parameters.
Furthermore, the combinatorial core of our construction differs significantly
from that of Gaifman's original proof;
and this is what enabled us to avoid an increase of the quantifier rank.

The structure of the remainder of the paper is as follows.
\cref{sec:preliminaries} provides basic notations.
\cref{sec:FO-Gaifman} shows that in Gaifman's Theorem for first-order logic,
an increase of the quantifier rank is unavoidable.
\cref{sec:fow} defines the logic $\ngFOWplus$.
\cref{sec:gaifman} proves our main result, a rank-preserving Gaifman normal form for $\ngFOWplus$.
\cref{sec:conclusion} discusses applications of our theorem and points out directions for future research.
\cref{sec:rpGNFforFOandFOMOD} translates our main result into rank-preserving versions
of a Gaifman normal form for first-order logic and for first-order logic with modulo counting,
both extended by distance atoms.
Further details are provided in the appendix,
including a simplified proof of the main result of~\cite{GroheKreutzerSiebertz_2017_NowhereDense}.

The present paper supersedes the preprint~\cite{GroheSchweikardt_2026_Locality},
which proves an analogous result for first-order logic
with an alternative proof based on Ehrenfeucht--\Fraisse\ games.
 \section{Preliminaries}
\label{sec:preliminaries}
We let \(\ZZ\), \(\NN\), \(\NNpos\), \(\QQpos\) denote the sets of
integers, non-negative integers, positive integers, and positive rationals,
respectively.
For \(m, n \in \ZZ\),
we let \([m,n] \deff \setc{\ell \in \ZZ}{m \leq \ell \leq n}\)
and \([n] \deff [1,n]\).
For a \(k\)-tuple \(\ta = (a_1, \dots, a_k)\), we write \(\abs{\ta}\)
to denote its \emph{length} \(k\).
We denote the power set of a set \(S\) by \(2^S\).
We assume that readers are familiar with basic notions of algebra such
as groups and rings, and with basic notions
of graph theory.
When referring to an abelian group (or ring), we will usually write
\((S,\plusS)\) (or \((S,\plusS,\malS)\)),
we denote the neutral element of
\((S,\plusS)\)
by \(\nullS\),
and \(\minus s\) denotes the inverse of an element \(s\) in
\((S,\plusS)\).
In this paper, graphs usually are undirected and simple (no loops or
parallel edges). We write $V(G)$ and $E(G)$ to denote
the set of vertices and the set of edges of a graph $G$. For graphs $G$ and
$H$ we write $H\subseteq G$ and, equivalently, $G\supseteq H$ to
indicate that $H$ is a subgraph of $G$. For a set $X\subseteq V(G)$,
we write $G[X]$ for the subgraph induced by $G$ on $X$, and we let
$G\setminus X\deff G[V(G)\setminus X]$.

A \emph{(relational) signature} is a finite set of relation symbols,
each relation symbol \(R\) coming with an \emph{arity} \(\ar(R) \in
\NN\).
Note that we admit 0-ary relation symbols, which is convenient in some
applications (cf.\ Section~\ref{section:nowhere-dense} and
\cite{GroheSchweikardt_2018_FOC1}).
Let \(\sigma\) be a signature.
A \emph{\(\sigma\)-structure \(\A=(A,(R^{\A})_{R\in\sigma})\)}
consists of a finite non-empty set \(A\), called the \emph{universe} of \(\A\),
and a relation \(R^{\A} \subseteq A^{\ar(R)}\) for every \(R \in \sigma\).
A \emph{(relational) structure} is a \(\sigma\)-structure for some signature \(\sigma\).
For a set \(X \subseteq A\), we let the \emph{induced substructure of \(\A\) on \(X\)}
be the \(\sigma\)-structure \(\A[X]\) with universe \(X\)
and \(R^{\A[X]} \deff R^{\A} \cap X^{\ar(R)}\) for every \(R \in \sigma\).

The \emph{Gaifman graph} $\GGA$ of a $\sigma$-structure $\A$ is the
graph with vertex set $A$ and edges $\set{a,b}$ for all distinct
$a,b\in A$ such that $a,b\in\set{a_1,\ldots,a_{\ar(R)}}$ for some
tuple $(a_1,\ldots,a_{\ar(R)})\in R^{\A}$ for some $R\in\sigma$.
We generalise graph-theoretic notions such as connectivity and degree
to $\sigma$-structures via their Gaifman graphs. In particular, the
\emph{distance} $\dist^{\A}(a,b)$ between two elements $a,b\in A$ is
the length (i.e., the number of edges) of a shortest path from $a$ to
$b$ in the Gaifman graph $\GGA$, or $\infty$ if no such path exists.
For every $r\in\NN$, the \emph{$r$-neighbourhood} of an element $a\in
A$ in a $\sigma$-structure $\A$ is the set $\nrA{a}\deff\setc{b\in
A}{\dist^{\A}(a,b)\leq r}$. More generally, for a $k\in\NNpos$ and a
tuple $\bar a=(a_1,\ldots,a_k)\in A^k$, we let $\nrA{\bar a}\deff
\bigcup_{i\in[k]}\nrA{a_i}$. Moreover, the \emph{$r$-neighbourhood
structure of $\bar a$} is the induced substructure $\NrA{\bar
a}\deff \A[\nrA{\bar a}]$.

We assume familiarity with first-order logic $\FO$ (for
background, cf.\ \cite{EbbinghausFT21}).
$\FO$ formulas are built from \emph{relational atoms},
(including equalities and atomic sentences $\true$ and $\false$)
by using Boolean combinations and existential quantifications.
The set of free variables of a formula $\phi$ is denoted by
$\free(\phi)$.
\emph{Sentences} are formulas $\phi$ with $\free(\phi)=\emptyset$.
We write $\phi(x_1,\ldots,x_k)$ to stipulate that
$\free(\phi)\subseteq\set{x_1,\ldots,x_k}$.
For a structure $\A$ and elements $a_1,\ldots,a_k\in A$ we
write $\A\models\phi(a_1,\ldots,a_k)$ if $\A$ with the variable
assignment $x_i\mapsto a_i$ for $i\in[k]$ satisfies $\phi$.
The notation $\phi(x_1,\ldots,x_k)$ is also convenient for
substitutions;
 by $\phi(y_1,\ldots,y_k)$ we denote the formula obtained from $\phi$ by
substituting $x_i$ with $y_i$ for all $i$ (and renaming bound
variables if necessary, cf.~\cite[Section~III.8]{EbbinghausFT21}).
We write $\phi\equiv \psi$ to express that two formulas $\phi$ and
$\psi$ are equivalent.
The \emph{quantifier rank} $\qr(\phi)$ of $\phi$ is
the maximum nesting depth of quantifiers in $\phi$.

Let \(r \in \NN\).
For a formula \(\phi\) and a non-empty tuple \(\tx = (x_1,
\dots, x_k)\) of variables with
$\free(\phi)\subseteq\set{x_1,\ldots,x_k}$, we say that $\phi(\tx)$
is \emph{\(r\)-local} (around \(\tx\))
if for every structure \(\A\) and all \(\ta \in A^k\)
we have \(\A \models \phi(\ta) \iff \NrA{\ta} \models
\phi(\ta)\); note that this implies $r'$-locality for all
$r'\geq r$.
By the \emph{coincidence lemma},
if $\phi(\bar x)$ is
$r$-local for some non-empty tuple $\bar x$ that contains all the free variables
of $\phi$, then $\phi(\bar x)$ is $r$-local for \emph{every} non-empty
tuple
$\bar x$ that contains all the free variables of $\phi$ (cf.\ \cref{appendix:LocalityNotion}).
A formula $\phi$ is \emph{$r$-local} if $\phi(\bar x)$ is $r$-local for some
non-empty  $\tx =(x_1,\ldots, x_k)$ with $\free(\phi)\subseteq
\set{x_1,\ldots,x_k}$, and
it is \emph{local} if it is \(r\)-local for some \(r \in \NN\).
In particular, for a relation symbol $R$ of
arity 0, the sentence $R()$ is 0-local.
 \section{Gaifman's Theorem for \texorpdfstring{\(\FO\)}{FO} is Not Rank-Preserving}\label{sec:FO-Gaifman}

For every signature $\sigma$ and $d\in\NN$, there is an $\FO$ formula
$\delta^{\sigma}_{\leq d}(x,y)$ such that, for all
$\sigma$-structures $\A$ and $a,b\in A$,
we have $\A\models\delta^{\sigma}_{\leq d}(a,b)\iff\dist^{\A}(a,b)\le
d$. The formula $\delta^{\sigma}_{\leq d}(x,y)$ depends on 
$\sigma$ and $d$ and can be chosen to be 
of quantifier rank $\lceil \log_2 d \rceil + \alpha-2$
for $\alpha\deff\max\setc{\ar(R)}{R\in\sigma}$, if $d\geq 1$ and
$\alpha\geq 2$, and of quantifier rank 0 if $d=0$ or $\alpha\leq 1$.
A \emph{basic local sentence} (for $\FO$ of signature $\sigma$) is an \(\FO\) sentence of
the form
\eqref{eq:intro-bl}, where \enquote{$\distAtomNeg{y_i}{y_j}{2r}$}  is the
formula $\nicht\,\delta^{\sigma}_{\leq 2r}(y_i,y_j)$, and
where $\ell\geq 1$, $r\geq 0$, and $\lambda(y)\in\FO$
is $r$-local and has only one free variable.
An \(\FO\) formula is said to be in \emph{Gaifman normal form}
if it is a Boolean combination of local formulas and basic local sentences.

\begin{theorem}[Gaifman \cite{GaifmanPaper}]\label{thm:gaifman}
  Every \(\FO\) formula \(\phi\) is equivalent to a formula \(\phi'\) in Gaifman normal form.
  Moreover, \(\free(\phi') = \free(\phi)\),
  and $\phi'$ can be computed from $\phi$.
\end{theorem}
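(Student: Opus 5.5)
We prove the statement by induction on the structure of $\phi$, following Gaifman's original syntactic approach. Atomic formulas (including $\true$, $\false$, and 0-ary atoms $R()$) are $0$-local. Negation and Boolean combinations preserve Gaifman normal form trivially, since the class of Boolean combinations of local formulas and basic local sentences is closed under Boolean operations; for the free-variable requirement we note that a local formula $\psi(\bar x)$ can always be re-read as local around any non-empty tuple containing its free variables (coincidence lemma). Sentences that are Boolean combinations of basic local sentences can be treated as $0$-ary components. So the only real step is existential quantification: assume $\phi=\exists x\,\psi(\bar x,x)$ with $\psi$ already in Gaifman normal form. Write $\psi$ in disjunctive normal form over its components, pull the basic local sentences out of the existential quantifier (they do not contain $x$), and reduce to the case $\phi=\exists x\,\lambda(\bar x,x)$ with $\lambda$ $r$-local around $\bar x x$.

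If $\bar x$ is empty, $\exists x\,\lambda(x)$ is itself a basic local sentence with $\ell=1$. Otherwise split $\exists x\,\lambda$ into $\exists x\,(\dist(x,\bar x)\le 2r+1 \wedge \lambda)$, which is $(3r+1)$-local around $\bar x$ (a witness $x$ with its $r$-neighbourhood lies inside $N_{3r+1}(\bar x)$, and $\lambda$ evaluated in that sub-neighbourhood agrees), and $\exists x\,(\dist(x,\bar x)>2r+1\wedge\lambda)$. In the latter case the $r$-neighbourhoods of $\bar x$ and of $x$ are disjoint and non-adjacent, so by a Feferman--Vaught style decomposition for disjoint unions of neighbourhood structures, $\lambda(\bar x,x)$ is equivalent (under this distance condition) to a finite disjunction $\bigvee_i \alpha_i(\bar x)\wedge\beta_i(x)$ with $\alpha_i$ $r$-local around $\bar x$ and $\beta_i$ $r$-local around $x$. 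It thus suffices to express $\exists x\,(\dist(x,\bar x)>2r+1\wedge\beta(x))$ in Gaifman normal form, with $\beta$ $r$-local.

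This is the main obstacle: the formula still refers to $\bar x$. Following Gaifman, let $k=|\bar x|$ and use a counting/scattering argument. Either there exist $k+1$ (really, enough) elements pairwise at distance $>4r+2$ satisfying $\beta$ — expressed by a basic local sentence with local formula $\beta$ and parameter $2r+1$ — in which case at least one of them is at distance $>2r+1$ from every $x_i$ (each $x_i$'s $(2r+1)$-ball contains at most one of them), so the formula is true. Otherwise the maximal number $m\le k$ of pairwise far $\beta$-witnesses is expressible by Boolean combinations of basic local sentences, and then every $\beta$-witness lies within distance $4r+2$ of one of the $m$ scattered witnesses; one then shows the property ``there is a $\beta$-witness far from $\bar x$'' is equivalent to a disjunction over the exact scattering count combined with a formula local around $\bar x$ of radius roughly $(k+1)(4r+2)$ expressing that within that neighbourhood one cannot cover all witnesses. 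Making this last step precise (choosing the radius so that the witness set near $\bar x$ is determined locally, iterating over $j\le k$ clusters) is where care is needed; I would handle it by induction on the number of $x_i$, splitting $\bar x$ into groups whose neighbourhoods are close, so each group's contribution is local.

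Finally, free variables are preserved because we only introduce sentences and formulas in $\bar x$, and every step (DNF, Feferman--Vaught decomposition into finitely many types of bounded rank, distance formulas $\delta^\sigma_{\le d}$) is effective, so $\phi'$ is computable.
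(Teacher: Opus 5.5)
Your reduction steps are sound and follow Gaifman's original route. The DNF step, the split into the $(3r{+}1)$-local part $\exists x\,(\dist(x,\bar x)\le 2r{+}1\wedge\lambda)$ and the far part, the Feferman--Vaught decomposition over the disjoint union of $\mathcal{N}_r(\bar x)$ and $\mathcal{N}_r(x)$, and the argument that $k{+}1$ pairwise $(4r{+}2)$-scattered $\beta$-witnesses force a far witness are all correct.

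\textbf{The gap.} It is exactly the step you flag yourself, and it is the actual content of the theorem. You never produce the formula that, given the exact global scattering number $m\le k$, decides whether $\mu(\bar x)=\exists x\,(\dist(x,\bar x)>2r{+}1\wedge\beta(x))$ holds. A formula local around $\bar x$ cannot see witnesses outside the neighbourhood of $\bar x$. So ``within that neighbourhood one cannot cover all witnesses'' misses precisely the case where some witness lies outside the neighbourhood; in that case $\mu$ is true but no local condition detects it. Your plan of inducting on the number of $x_i$, with groups of close variables, is not an argument either: whether a far witness exists is a global property and does not split into per-group contributions.

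\textbf{The missing idea.} Use the maximality of $m$ to certify locally that \emph{all} witnesses are near $\bar x$. Define:
\begin{itemize}
\item $\xi_m$: the basic local sentence ``there are $m$ pairwise $(4r{+}2)$-far $\beta$-witnesses'';
\item $\gamma_m(\bar x)$: ``there are $m$ pairwise $(4r{+}2)$-far $\beta$-witnesses, each within distance $2r{+}1$ of $\bar x$'', which is $(4r{+}2)$-local;
\item $\eta(\bar x)$: ``some $\beta$-witness has distance in $(2r{+}1,6r{+}3]$ from $\bar x$'', which is $(7r{+}3)$-local.
\end{itemize}
Now assume $\xi_m\wedge\neg\xi_{m+1}$ holds.
\begin{itemize}
\item If $\gamma_m(\bar a)$ holds, those $m$ near witnesses form a maximal scattered set. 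Hence every witness is within $6r{+}3$ of $\bar a$, and $\mu(\bar a)$ holds if and only if $\eta(\bar a)$ does.
\item If $\gamma_m(\bar a)$ fails, any scattered set of $m$ witnesses (one exists by $\xi_m$) has an element at distance $>2r{+}1$ from $\bar a$, so $\mu(\bar a)$ holds.
\end{itemize}
This gives
\[
\mu(\bar x)\;\equiv\;\xi_{k+1}\;\vee\;\bigvee_{m=1}^{k}\bigl(\xi_m\wedge\neg\xi_{m+1}\wedge(\neg\gamma_m(\bar x)\vee\eta(\bar x))\bigr).
\]

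The paper only cites Gaifman for this statement; it also follows from \cref{thm:GNF_ngFOWplus} after replacing distance atoms by $\delta^{\sigma}_{\le d}$. In the proof of \cref{thm:GNF_ngFOWplus_refined} the paper deliberately avoids the $m$ extra nested quantifiers in $\gamma_m$, because they raise the quantifier rank. Instead it uses the formulas $\delta_i$, $\psi^{(c)}_\ell$ and $\xi^{(c)}_\ell$, together with a pigeonhole argument over the scale parameter $c$. Either construction closes your gap; the simpler one above suffices when rank preservation is not required.
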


Let \(\phi'\) be a formula in Gaifman normal form,
and let \(S\) and \(L\) be the sets of basic local sentences and of local formulas,
resp., such that \(\phi'\) is a Boolean combination of the
formulas in \(S \cup L\).
The \emph{outer quantifier rank} of \(\phi'\) is the maximum of the quantifier ranks
of all formulas in \(L\).
The next example shows that Gaifman's Theorem
does not preserve the outer quantifier rank:
it provides an \(\FO\) formula \(\phi(x)\)
with $\qr(\phi)=2$
that is not equivalent to any Boolean combination
of local formulas of quantifier rank  \(\leq 2\) and
of basic local sentences.

\begin{example}\label{ex:fo-gaifman}
  Let \(\sigma \deff \set{E, \mathsf{orange}, \mathsf{blue}}\)
  and consider the \(\FO\) formula
  \[
    \phi(x) \ \deff \ \exists y\, \Big(\mathsf{orange}(y)
    \land \neg\exists z\, \big(E(x,z) \land E(z,y)\big)
    \land \neg\exists z\, \big(\mathsf{blue}(z) \land E(y,z)\big)\Big).
  \]
  This formula states that there exists an orange node \(y\)
  that is not reachable from \(x\) by a walk of length \(2\)
  and that does not have a blue neighbour.

  \begin{figure}
    \centering
    \begin{tikzpicture}[scale=0.9]
      \def\nodedist{4ex}
\def\nodedistSquishedX{1.4*\nodedist}
\def\nodedistSquishedY{0.6*\nodedist}
\node[vertex, fill=ibm-orange] (3) {\(3\)};
\foreach \x in {1,2} {
  \node[vertex] at ($(3) + {3-\x}*(-\nodedistSquishedX, -\nodedistSquishedY)$) (\x) {\(\x\)};
}
\foreach \x in {4,5} {
  \node[vertex] at ($(3) + {\x-3}*(-\nodedistSquishedX, \nodedistSquishedY)$) (\x) {\(\x\)};
}
\foreach \x in {6} {
  \node[vertex] at ($(5) + {(\x-5)}*(\nodedistSquishedX, \nodedistSquishedY)$) (\x) {\(\x\)};
}
\node[vertex, fill=ibm-orange] at ($(6) + (\nodedistSquishedX, 0ex)$) (7) {\(7\)};
\foreach \x in {8} {
  \node[vertex] at ($(7) + (\nodedistSquishedX, 0ex)$) (\x) {\(\x\)};
}
\foreach \x in {10,11} {
  \node[vertex] at ($(5) + {\x-9}*(-\nodedistSquishedX, -\nodedistSquishedY)$) (\x) {\(\x\)};
}
\node[vertex] at ($(1) + (-\nodedistSquishedX, \nodedistSquishedY)$) (9) {\(9\)};
\foreach \x in {12,...,15} {
   \node[vertex] at ($(11) + {1.5*(\x-11)}*(-\nodedist, 0ex)$) (\x) {\(\x\)};
}
\node[vertex, fill=ibm-orange] at ($(15) + 1.5*(-\nodedist, 0ex)$) (16) {\(16\)};
\node[vertex, fill=ibm-ultramarine] at ($(16) + 1.5*(-\nodedist, 0ex)$) (17) {\(17\)};

\begin{scope}
  \clip ($(17) + (0, -2*\nodedist)$) rectangle ($(14) + (0, 2.5*\nodedist)$);
  \node[draw, circle, line width=.4ex, ibm-magenta!30, minimum width={14.65*\nodedist}] at (11) {};
  \node[draw, circle, line width=.4ex, ibm-magenta, minimum width={14.8*\nodedist}] at (11) {};
\end{scope}
\node[ibm-magenta] at ($(14) + (-.7*\nodedist,2*\nodedist)$) {\(\neighb{7}{\CG}{v} = \neighb{7}{\CG}{v'}\)};

\draw[edge] (1) -- (2) -- (3) -- (4) -- (5) -- (6) -- (7) -- (8)
  (1) -- (9) -- (11) -- (12) -- (13) -- (14) -- (15) -- (16) -- (17)
  (5) -- (10) -- (11);

\node at ($(1) + (-2.25ex, -2.5ex)$) (v) {\(v\)};
\node at ($(5) + (-2.ex, 3ex)$) (vs) {\(v'\)};
     \end{tikzpicture}
    \caption{Structure \(\CG=\CG_7\) of \cref{ex:fo-gaifman}.
      Undirected edges represent directed edges in both directions,
      \(\mathsf{orange}^{\CG} = \set{3,7,16}\),
      and \(\mathsf{blue}^{\CG}=\set{17}\).
    }
    \label{fig:fo-gaifman}
  \end{figure}

  Let \(\CG\) be the \(\sigma\)-structure depicted in \cref{fig:fo-gaifman},
  and let \(v \deff 1\) and \(v' \deff 5\)
  (as indicated in the figure).
  Clearly, \(\CG \models \phi(v)\) and \(\CG \not \models \phi(v')\).
  Observe that \(\neighb{7}{\CG}{v} = \neighb{7}{\CG}{v'}\),
  and let \(\CN_7 \deff \Neighb{7}{\CG}{v} = \Neighb{7}{\CG}{v'}\).
  In \cref{fig:fo-gaifman}, \(\CN_7\) is the substructure of \(\CG\)
  induced on the pink part.

  Now consider an arbitrary \(r \geq 7\),
  and let \(\CG_r\) be the \(\sigma\)-structure that looks exactly like \(\CG\),
  with the only difference that the path from vertex \(11\) to the leftmost orange vertex
  (vertex \(16\) for \(r=7\))
  has length \(r{-}2\)
  (hence, the distance between the leftmost orange vertex and \(v\) as well as \(v'\) is \(r\)).
  The structure \(\CG\) depicted in \cref{fig:fo-gaifman} is \(\CG_7\).
  Clearly,
  \(\CG_r \models \phi(v)\) and \(\CG_r \not \models \phi(v')\).
  Furthermore, \(\neighbr{\CG_r}{v} = \neighbr{\CG_r}{v'}\).
  Let \(\CN_r \deff \Neighbr{\CG_r}{v} = \Neighbr{\CG_r}{v'}\).
  We use Ehrenfeucht--\Fraisse\ games (cf., \cite{Libkin-FMT,EF-FMT}).
  An exhaustive exploration proves the following claim,
  which, combined with the Ehrenfeucht--\Fraisse\ Theorem
  (cf., \cite{Libkin-FMT,EF-FMT}),
  yields that \(\phi(x)\) is not equivalent to a Boolean combination \(\phi'(x)\)
  of local formulas of quantifier rank \(\leq 2\)
  and of basic local sentences (see \cref{appendix:fo-gaifman} for details).
\begin{restatable}{claim}{GaifmanCounterExample}
    \label{claim:GaifmanCounterExample}
    For every \(r \in \NN\) with \(r \geq 7\),
    Duplicator has a winning strategy for the 2-round
    Ehrenfeucht--\Fraisse\ game on \((\CN_r, v)\) and \((\CN_r, v')\).
  \end{restatable}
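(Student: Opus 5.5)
The plan is to describe Duplicator's strategy explicitly and reduce its correctness to a finite check of \emph{pair profiles}. In the 2-round game on \((\CN_r,v)\) and \((\CN_r,v')\), suppose Spoiler picks \(a\) in round~1 and Duplicator answers \(b\). Duplicator then wins the remaining round iff two conditions hold. First, \((v,a)\) and \((v',b)\) must have the same atomic type: equality, \(E\) in both directions, and the colours of \(a,b\). Second, the sets of atomic types of a third element \(z\) over the pair must coincide on both sides. Such a type is determined by the colour of \(z\) (orange or not; no vertex of \(\CN_r\) is blue), by whether \(z\) equals \(v\) resp.\ \(a\), and by whether \(z\) is adjacent to \(v\) resp.\ \(a\). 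Call this set of realised types the \emph{profile} of the pair. So it suffices to give a response map \(a\mapsto b\) and, symmetrically, \(b\mapsto a\) for moves in \((\CN_r,v')\), that preserves the atomic type and the profile.

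First I fix the shape of \(\CN_r\), uniformly in \(r\geq 7\). It contains the 8-cycle \(1,2,3,4,5,10,11,9\). It contains the tail \(5,6,7,8\) with \(7\) orange. It contains the tail from \(11\) to the leftmost orange vertex \(u\) (vertex \(16\) for \(r=7\)), which has distance exactly \(r\) from both \(v\) and \(v'\). Since the blue vertex lies at distance \(r{+}1\), it is cut off, and \(u\) becomes a leaf. The orange vertices are \(3,7,u\).

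The key observation, reflecting the intuition behind the example, is the following. On both sides, every orange vertex other than one adjacent to the base point has the same profile with respect to the base point in the relevant sense. In particular, \(u\) has no common neighbour with either \(v\) or \(v'\). To witness \(\neg\exists z(E(x,z)\wedge E(z,y))\) together with the missing-blue-neighbour condition, one would need three nested quantifiers, and only two are available. This explains why the strategy can ignore that conjunction.

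The strategy itself uses the identity on vertices far from the base points wherever that is consistent; in particular \(u\mapsto u\), and long-tail vertices map to themselves. For vertices close to the base points it uses a hand-built correspondence. The base point maps to the base point, \(v\mapsto v'\). Neighbours of \(v\) (namely \(2,9\), both plain) map to plain neighbours of \(v'\) (among \(4,6,10\)). The choice is made so that the existence of a common neighbour, and of an orange neighbour of \(a\), match: \(2\mapsto 4\) (both have orange neighbour \(3\)), and \(9\mapsto 10\) (both adjacent to \(11\)). Non-adjacent orange vertices map among \(3,7,u\), matching whether they share a neighbour with the base point: \(3\) and \(7\) do for \(v'\), while for \(v\) only \(3\) does (via \(2\)), so \(3\mapsto 3\) and \(7\mapsto 7\). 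Non-adjacent plain vertices map to plain vertices whose profile agrees, i.e.\ with the same answers to ``has a neighbour adjacent to the base point'', ``has an orange neighbour'', ``has a neighbour that is both'', and ``has a non-adjacent neighbour''. For moves made in \((\CN_r,v')\) I would give the inverse assignments, adding choices for vertices not in the image, such as \(6\) and \(8\).

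The main obstacle is this finite but fiddly verification. For each of the roughly twenty vertex classes, one must check that the profile of \((v,a)\) equals that of \((v',b)\). The delicate spots are the degree difference (\(v'\) has three neighbours, \(v\) two, which rank~2 cannot detect once \(a\) is fixed) and vertices near \(6,7,8\), which have no exact analogue near \(v\). I would organise the check as a table listing, for each \(a\), its colour, its relation to the base point, and which of the at most eight profile entries are realised. I would first show that for every vertex at distance \(\geq 3\) from both base points the profile depends only on local features that are identical on both sides, which makes the identity response valid there. That leaves only the ball of radius~2 around \(v\) and \(v'\) to check by hand. Since this ball does not depend on \(r\), the claim then holds for all \(r\geq 7\).
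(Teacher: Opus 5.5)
Your framework matches the paper's. Duplicator's strategy is a list of good pairs $(w,w')$, meaning pairs for which the remaining one-round game is won, and that is exactly your profile condition. Your observation that the identity response works at every vertex at distance at least $3$ from both base points is also correct; the paper uses $(x,x)$ for $x\geq 11$ and $(8,8)$.

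The problem is the concrete assignment at the vertex the whole example hinges on. You map $7\mapsto 7$, and your ``inverse assignments'' would also answer Spoiler's $7$ in $(\CN_r,v')$ with $7$. This pair is not good. The vertex $6$ is a common neighbour of $v'=5$ and $7$, whereas $v=1$ and $7$ have no common neighbour: the neighbours of $1$ are $2,9$, and those of $7$ are $6,8$. So in the second round Spoiler picks $6$ in $(\CN_r,v')$ and wins. Equivalently, the rank-1 formula $\exists z\,(E(x,z)\land E(z,y))$ separates $(v,7)$ from $(v',7)$. Your own rule, matching orange vertices according to whether they share a neighbour with the base point, already excludes this assignment. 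On the $v$-side, $7$ shares no neighbour with the base point, and on the $v'$-side the only orange vertex with that property is the leftmost orange vertex $u$.

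The fix has two parts:
\begin{itemize}
\item Answer Spoiler's $7$ in $(\CN_r,v)$ with $u$. The profiles of $(1,7)$ and $(5,u)$ agree: there is no common neighbour, the chosen vertex has only plain neighbours and none of them is adjacent to the base point, and on both sides some orange vertex is adjacent to neither.
\item Answer Spoiler's $7$ in $(\CN_r,v')$ with $3$. Here the common neighbours are $2$ and $6$, respectively.
\end{itemize}
These are exactly the paper's pairs $(7,16)$ and $(3,7)$, written for $r=7$.

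The same misconception underlies your ``key observation''. Orange vertices do not all have the same profile: with respect to $v$ they split into $\{3\}$ and $\{7,u\}$, and with respect to $v'$ into $\{3,7\}$ and $\{u\}$. Your remark about needing three nested quantifiers is beside the point, since $\phi$ itself has quantifier rank $2$. What actually makes the claim true is that the blue vertex is cut off in $\CN_r$. Hence $u$ can play for $v'$ the role that $7$ plays for $v$, and indeed $\CN_r\models\phi(v)\land\phi(v')$.

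With the $(7,7)$ assignments replaced as above, the remaining finite check goes through. That check covers the radius-$2$ balls around the base points together with the responses outside them, such as $u$.
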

\end{example}
 \section{First-Order Logic on Weighted Structures:
\texorpdfstring{$\FOWplus$}{FOW+} and
\texorpdfstring{$\ngFOWplus$}{ngFOW+}}
\label{sec:fow}

We give a brief introduction into the weight aggregation logic \(\FOWplus\),
based on \cite{vanBergeremSchweikardt_2021_FOWA}
(details can be found in \cref{appendix:fow}).
The logic operates on \emph{weighted structures},
which are standard relational structures expanded by weight functions
that map tuples of elements to some abelian group or ring \(S\)
that is part of a collection \(\SC\) that is fixed as part of the vocabulary.
The (finite) set of available weight symbols \(\Weights\) is also fixed as part of the vocabulary.
It is required that each tuple with non-zero weight is contained in
some
tuple of some
relation of the structure (\enquote{\emph{locality condition}}).
Typical rings in \(\SC\) are \((\RR, +, \cdot)\),
\((\ZZ, +, \cdot)\),
or \((\ZZ/m\ZZ, +, \cdot)\).

The logic \(\FOW\) is a first-order logic that has the usual
relational atoms (including equality),
and it also has \emph{weight atoms} of the form \(\bigl(s = \weight(\ty)\bigr)\)
for a weight symbol \(\weight \in \Weights\) and an element $s$ in the
range $S$ of $\weight$.
Furthermore, elements in $S$ and weight symbols can be combined into terms
by using the group operation \(\plusS\),
and possibly \(\malS\) if \(S\) is a ring.
New atoms can be formed by relating such terms \(t\)
using predicates on the groups or rings \(S \in \SC\).
The collection \(\PP\) of predicates that can be used is part of the vocabulary.
Typically, these predicates include the binary equality predicate
and the natural order on rings such as \((\RR,+,\cdot)\).
However, we may also include more esoteric predicates,
for example, a predicate for the transcendental numbers.
The atoms can be combined using the usual Boolean connectives and quantifiers of first-order logic.
In addition, we have an \emph{aggregation} operator (or quantifier),
forming new formulas of the form \(\bigl(s = \sum \weight(\ty).\phi \bigr)\),
where \(\phi\) is a formula,
\(\weight\) is a weight symbol whose range is a
\emph{finite} group or ring \(S \in \SC\), and \(s \in S\).

The semantics of the logic is defined in the straightforward way.
Only the semantics of a formula
\(\psi(\tx) \deff \bigl(s = \sum \weight(\ty).\phi(\tx,\ty)\bigr)\) requires explanation:
suppose the range of \(\weight\) is \(S\).
Let \(\A\) be a structure and \(\ta \in A^{\abs{\tx}}\).
Then \(\A \models \psi(\ta)\)
iff \(s\) is equal to the sum in \(S\)
over all values \(\weight^\A(\tb)\),
where \(\tb\) ranges over all tuples \(\tb \in A^{\ty}\)
such that \(\A \models \phi(\ta,\tb)\).

Finally, \(\FOWplus\) is the extension of \(\FOW\)
where we also allow \emph{distance atoms} of the form \(\distAtom{x}{y}{d}\)
for constants \(d \in \NN\), with the obvious semantics.
The next example presents a concrete choice of $\SC$, $\Weights$,
a predicate collection \(\PP\),
and an $\FOWplus$ formula $\phi(x)$
that is not equivalent to any Boolean combination of
sentences and local formulas.

\begin{example}\label{example:noGNFforFOW}
Let $\sigma=\emptyset$, let $\SC=\set{S}$ for an \emph{infinite}
abelian group $S$, and let $\Weights=\set{\weight}$
for a unary weight symbol \(\weight\) with range \(S\).
Let \(\PP\)
contain the \emph{equality predicate for $S$}, denoted
by $\Pred_{=}$ with semantics $\sem{\Pred_=}=\setc{(s,s)}{s\in S}$.
Then,
\(
  \phi(x)  \deff
  \exists y\,\bigl(\neg\, x{=}y \ \land \
  \Pred_{=}\bigl(\weight(x), \weight(y)\bigr)\bigr)
\)
is an $\FOW$ formula that expresses
in
a $(\sigma,\Weights)$-structure
$\A$ that $x$ is assigned with an
$a\in A$ for which there exists
a $b\in A$ that is
different from $a$ but has the same weight as $a$.
It is not too difficult to prove that $\phi(x)$ is not
equivalent to any Boolean combination of
sentences and local formulas of $\FOWplus$ (see \cref{appendix:fow}).
\end{example}

The previous example shows that Gaifman's Theorem (\cref{thm:gaifman})
cannot be extended to the general logic $\FOWplus$.
Henceforth, we will consider the logic \(\ngFOWplus\),
obtained from \(\FOWplus\) by restricting the use of predicates from \(\PP\)
to the \emph{neighbourhood-guarded} version
\[\textstyle\Big(
   \Pred(t_1, \dots, t_m)
   \land
   \Land_{y \in \bigcup_{i=1}^{m} \free(t_i) \setminus \set{x}} \dist(x,y) \leq d
\Bigr),\]
where \(\Pred\) is an \(m\)-ary predicate from \(\PP\),
\(t_1, \dots, t_m\) are terms,
and $\free(t_i)$ denotes the set of variables occurring in $t_i$.

\begin{example}
  Let \(\SC \) contain the abelian group $(\ZZ,+)$,
  let $\Weights$ contain a unary weight symbol $\weight$
  with range $\ZZ$,
  and let \(\PP\) contain a predicate $\Pred_<$ with
  \(\sem{\Pred_<} \deff \setc{(i,j) \in \ZZ^2}{i < j}\).
Let \(\sigma \deff \set{E}\) be the signature of directed graphs.
  The \(\ngFOWplus\) sentence
  \(
  \phi
  \deff
  \forall x \forall y\,
  \Bigl(
    \neg E(x,y) \lor
    \big(\,
      \Pred_<(\weight(x), \weight(y))
      \und
      \dist(x,y)\,{\leq}\, 1
    \,\big)
    \Bigr)\)
  expresses in a vertex-weighted digraph that all edges are directed towards the
  vertex with the larger weight.
\end{example}

\begin{example}\label{ex:FOplus}
The logic $\FOplus$ of
\cite[Section~7.2]{GroheKreutzerSiebertz_2017_NowhereDense} extends
$\FO$ by the use of distance atoms.
This coincides with $\ngFOWplus$
(and with $\FOWplus$)
for the special case where
\(\SC = \Weights = \PP = \emptyset\), because then,
no weight symbols or predicates can be used.
\end{example}

\begin{example}\label{ex:FOMODplus}
For a finite set $M$ of integers $m\geq 2$, let $\FOMOD_M$ denote the
extension of
$\FO$
with modulo-counting quantifiers
$\exists^{i \text{\,mod\,}m}x$ for $m\in M$ and $i\in[0,m{-}1]$.
A formula  $\exists^{i \text{\,mod\,}m}x\,\phi$ expresses
in a $\sigma$-structure $\A$ that the number of elements $a\in A$ that
can be assigned to $x$ such that $\phi$ is satisfied is congruent to $i$
modulo $m$.  Extending this logic by the use of distance atoms, one
obtains the logic $\FOMODplus{M}$. This logic is embedded in
$\ngFOWplus$ for the special case where $\PP=\emptyset$, $\SC$
consists of the abelian groups $\ZZ/m\ZZ$ with addition modulo
$m$, for all $m\in M$, and $\Weights=\setc{\weight_m}{m\in M}$ with
\(\weight_m\) being a unary weight symbol with range \(\ZZ/m\ZZ\).
For this, we can turn any $\sigma$-structure $\A$ into the
$(\sigma,\Weights)$-structure $\hat{\A}$ by letting $\weight_m(a)=1\in
\ZZ/m\ZZ$ for all $m\in M$ and $a\in A$. Then, the formula $\exists^{i
  \text{\,mod\,}m}x\,\phi$ corresponds to the formula
$\big(i=\sum\weight_m(x).\phi\big)$. In fact, since $\PP=\emptyset$,
no predicates can be used, and there is a
1-to-1 correspondence that maps formulas $\psi(\bar x)\in\FOMODplus{M}$ to
formulas $\psi'(\bar x)\in\ngFOWplus$ (for this choice of $\SC,\Weights,\PP$)
such that, for every $\sigma$-structure $\A$
and every assignment $\bar a\in A^{|\bar x|}$,
we have $\A\models\psi(\bar a)\iff\hat{\A}\models\psi'(\bar a)$.
\end{example}
 \section{A Rank-Preserving Gaifman Normal Form for \texorpdfstring{\(\ngFOWplus\)}{ngFOW+}}
\label{sec:gaifman}
We fix a signature \(\sigma\),
a collection \(\SC\) of rings and/or abelian groups,
a finite set \(\Weights\) of weight symbols, and
a collection \(\PP\) of predicates.
Let $x_1,x_2,\ldots$ be a fixed enumeration of the available variables.
The \emph{quantifier rank $\qr(\phi)$} of an $\ngFOWplus$ formula
$\phi$ is the maximum nesting depth of constructs using
existential quantification or aggregation
in order to construct $\phi$.
The \emph{maximum distance} \(\md{\phi}\)
is the least \(r \in \NN\) such that $r'\leq r$ holds
for all constructs \(\dist(x,y)\,{\leq}\, r'\) appearing in \(\phi\).
We will write
\enquote{$\dist(x,y)\,{>}\,r$} for the formula
\enquote{$\nicht\,\dist(x,y)\,{\leq}\,r$} and, slightly abusing notation, we
will also call this a \enquote{distance atom}.

Let
$\wmaxar\deff\max(\set{1}\cup\setc{\ar(\weight)}{\weight\in\Weights})$,
i.e., $\wmaxar$ is the maximum arity of the weight symbols in
$\Weights$ (or 1 if $\Weights$ is empty or contains only symbols
of arity 0).
We define the \emph{radius function}
\begin{equation}\label{eq:def:radius_function}
 \locRad{0}{k}{d} \deff \max\set{d,1}
    \quad\text{and}\quad
 \locRad{q{+}1}{k}{d} \deff 4k\cdot \locRad{q}{k{+}\wmaxar}{d}
    \,, ~~
    \text{for all \(q,k,d \in \NN\)}
    .
\end{equation}
Note that
\(\locRad{q}{k}{d} =  4^q \cdot \prod_{i=0}^{q-1} (k{+}i\wmaxar)\cdot \max\set{d,1}\).
Using this function $\myr$,
for $q,k,d\in\NN$, we define a particular set $\Fqkd$
of $\ngFOWplus$ formulas $\phi$ as follows.

For all $k,d\in\NN$, we let
$\Fparam{0}{k}{d}$ be the set of all formulas $\phi$ in $\ngFOWplus$
with $\qr(\phi)=0$,
$\free(\phi)\subseteq \set{x_1,\ldots,x_k}$, and $\md{\phi}\leq d$.
For all $q\in\NN$, we let $\Fparam{q{+}1}{k}{d}$ be the set of all formulas
$\phi$ with $\free(\phi)\subseteq \set{x_1,\ldots,x_k}$ such that
$\phi$ is a Boolean combination of
  \begin{enumerate}[(i)]
    \item
    \label{def:fparam:previous}
      formulas in $\Fparam{q}{k{+}\wmaxar}{d}$ (with free variables
      among $\set{x_1,\ldots,x_k}$),
    \item
    \label{def:fparam:distance_atom}
      distance atoms $\dist(x_i,x_j)\leq \tilde{d}$ with $i,j\in[k]$ and
      $\tilde{d}\leq \locRad{q{+}1}{k}{d}$,
    \item
    \label{def:fparam:existential_with_distance}
      formulas of the form $\exists x_{k+1}\big(\dist(x_i,x_{k+1})\,{\leq}\, \hat{d}
      \land \psi \big)$ with $\psi\in\Fparam{q}{k{+}\wmaxar}{d}$, $i\in[k]$, and $\hat{d}\leq
      \locRad{q{+}1}{k}{d}-\locRad{q}{k{+}\wmaxar}{d}$,
    \item
    \label{def:fparam:existential_plain}
      formulas of the form $\exists x_{k+1}\,\psi$ with
      $\psi\in\Fparam{q}{k{+}\wmaxar}{d}$,
      and
    \item
    \label{def:fparam:aggregation}
      formulas of the form
      $\big(s=\sum\weight(\bar y).\psi\big)$, where
      $\weight\in\Weights$,
      the range \(S\) of \(\weight\) is finite,
      $s\in S$,
      $\ell=\ar(\weight)$,
      $\bar y=(x_{k+1},\ldots,x_{k+\ell})$, and $\psi\in \Fparam{q}{k{+}\wmaxar}{d}$.
  \end{enumerate}

For every formula $\phi$ in $\ngFOWplus$, there exist
$q,k,d\in\NN$ such that $\phi\in\Fparam{q}{k}{d}$;
and
every $\phi\in \Fparam{q}{k}{d}$ has $\qr(\phi)\leq q$,
$\free(\phi)\subseteq\set{x_1,\ldots,x_k}$, and
$\md{\phi}\leq \locRad{q}{k}{d}$.

We define notions similar to the ones used in the literature on Gaifman normal
form for $\FO$, $\FOMOD$, and $\FOWun$
(cf.\ \cite{GaifmanPaper,KuskeSchweikardt_ICALP18,vanBergeremSchweikardt_2021_FOWA}).
A \emph{basic local sentence for
$\Fparam{q}{k}{d}$} (of
 \emph{width} $\ell$, \emph{inner distance} $2r$, \emph{inner radius}
 at most $\tilde{r}$, and
 \emph{inner quantifier rank} $\qr(\lambda)$) is of the
 form~\eqref{eq:intro-bl},
 where \enquote{$\dist(y_i,y_j)\,{>}\,2r$}  is
a distance atom,
and where $\ell\geq 1$, $y_1,\ldots,y_\ell$ are $\ell$
variables, $r\geq \tilde{r}\geq 0$,
and
$\lambda(x_1)\in\Fparam{q{-}1}{k{+}\wmaxar}{d}$ is
\emph{$\tilde{r}$-local} and has only one free variable.
A \emph{local aggregation sentence for $\Fparam{q}{k}{d}$} (of
\emph{inner radius} at most $\tilde{r}$ and
\emph{inner quantifier rank} $\qr(\lambda)$) is of the form
$
\big(
  s = \sum \weight(\bar y).\lambda
\big)
$
where
$\weight\in\Weights$,
$\ell=\ar(\weight)\geq 1$,
the range \(S\) of \(\weight\) is finite,
$s\in S$,
$\bar y=(y_1,\ldots,y_\ell)$ is a tuple of $\ell$ pairwise distinct variables,
$\tilde{r}\in\NN$,
and $\lambda(\bar y)\in \Fparam{q{-}1}{k{+}\wmaxar}{d}$ is
$\tilde{r}$-local
(with  $\free(\lambda)\subseteq\set{y_1,\ldots,y_\ell}$).

A formula in \emph{Gaifman normal form for $\Fparam{q}{k}{d}$}
is a Boolean combination of \emph{local} formulas in
$\Fparam{q}{k}{d}$ and basic local sentences and local aggregation
sentences for $\Fparam{q}{k}{d}$.
Given such a formula $\phi'$, let $\mathcal{S}$ and $\mathcal{L}$ be the sets of
sentences and local formulas, resp., such that $\phi'$ is a Boolean combination
of the formulas in $\mathcal{S}\cup \mathcal{L}$. The
\emph{outer quantifier rank} of $\phi'$ is the maximum of the
quantifier ranks of the formulas in $\mathcal{L}$; and we say that $\phi'$ has
\emph{outer radius at most $r$} if every formula in $\mathcal{L}$ is $r$-local.
The
\emph{inner quantifier rank} (\emph{radius}, \emph{distance},
\emph{width}) of $\phi'$ is the maximum of the inner
quantifier ranks (radii and, for basic local sentences,
distances, widths) of the sentences in $\mathcal{S}$.
The main theorem of this section is
the following.

\begin{theorem}\label{thm:GNF_ngFOWplus}
Let $q,k,d\in\NN$. Every $\phi\in\Fqkd$ is equivalent to a
formula $\phi'$ in Gaifman normal form for $\Fqkd$.
Furthermore, $\free(\phi')=\free(\phi)$,
the
outer (inner) quantifier rank of $\phi'$
is at most $\qr(\phi)$ (resp.\ $\qr(\phi){-}1$),
the outer (inner) radius of $\phi'$
is at most $\locRad{q}{k}{d}$
(resp.\ $\locRad{q{-}1}{k{+}\wmaxar}{d}$),
and the width (inner distance)
is at most $k{+}1{+}(q{-}1)\wmaxar$
(resp.\ $\locRad{q}{k}{d}+2\locRad{q{-}1}{k{+}\wmaxar}{d}$).
Moreover,
such a $\phi'$ can be computed from $\phi,q,k,d$.
\end{theorem}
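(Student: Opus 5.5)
We will prove a syntactic refinement by induction on $q$, following the inductive definition of $\Fqkd$, with the target class being Boolean combinations of local formulas in $\Fqkd$ and of basic local and local aggregation sentences satisfying the stated bounds. The base case $q=0$ is immediate. A quantifier-free formula with $\md{\phi}\le d$ is a Boolean combination of relational, weight, neighbourhood-guarded predicate and distance atoms. Each such atom is $\max\set{d,1}$-local around its free variables, and $\max\set{d,1}=\locRad{0}{k}{d}$. For the induction step it suffices, since the target class is closed under Boolean combinations, to treat the five kinds of building blocks (i)--(v) of $\Fparam{q{+}1}{k}{d}$.

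Blocks of types (i) and (ii) are local with radius at most $\locRad{q{+}1}{k}{d}$: for (i) use the induction hypothesis and $\locRad{q}{k{+}\wmaxar}{d}\le\locRad{q{+}1}{k}{d}$, and (ii) is clear. For type (iii), the induction hypothesis makes $\psi$ equivalent to a Boolean combination of sentences and $\locRad{q}{k{+}\wmaxar}{d}$-local formulas. We pull the sentences out by a case distinction over their truth values, which is a disjunction of conjunctions of these sentences with $\exists$-formulas. The remaining formula $\exists x_{k+1}(\dist(x_i,x_{k+1})\le\hat d\wedge\chi)$ is then local of radius $\hat d+\locRad{q}{k{+}\wmaxar}{d}\le\locRad{q{+}1}{k}{d}$. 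The main work lies in types (iv) and (v).

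For type (iv), $\exists x_{k+1}\psi$, we again normalise $\psi$ and pull out sentences. This leaves $\exists x_{k+1}\chi(\bar x,x_{k+1})$ with $\chi$ an $R$-local formula, where $R\deff\locRad{q}{k{+}\wmaxar}{d}$. We split $x_{k+1}$ according to whether it is near the tuple, at distance at most $\locRad{q{+}1}{k}{d}-R$ from some $x_i$, or far from it. The near part is of type (iii) and therefore local. In the far part we apply the Feferman--Vaught decomposition (\cref{lem:fv_decomp}) to split $\chi$ into a disjunction of conjunctions $\alpha_j(\bar x)\wedge\beta_j(x_{k+1})$ with the same rank and distance parameters. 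The remaining task is to express $\exists x_{k+1}$ (far from $\bar x$ and satisfying $\beta_j$) by basic local sentences and local formulas around $\bar x$. We do this through a combinatorial counting argument. Either there are at least $k{+}1$ witnesses that are pairwise more than $2R$ apart; by pigeonhole one of them is far from $\bar x$, and this is a basic local sentence of width $k{+}1$. Or all witnesses lie within a bounded distance of a maximal scattered set of size at most $k$, and then a guessed scattered set together with its relation to $\bar x$ is expressed locally. The radius function $\locRad{q{+}1}{k}{d}=4k\cdot R$ is chosen to absorb exactly the chain of at most $k$ scattered witnesses, each contributing roughly $4R$, and the inner distance bound $\locRad{q}{k}{d}+2R$ comes from this. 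We will carry this out as in the standard proof, but with the scattered-set radii graded dyadically, so that no extra quantifier rank is spent. All distance conditions are expressed by atoms, which cost nothing.

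For type (v), the aggregation $\big(s=\sum\weight(\bar y).\psi\big)$, the locality condition restricts the nonzero weights to tuples $\bar y$ lying inside a single relation tuple, so all $y_j$ are within distance $1$ of each other. We split the sum according to whether $y_1$ is near $\bar x$ or far from it. The near part is local. For the far part, the Feferman--Vaught decomposition splits $\psi$ into components over $\bar x$ and over $\bar y$, and finiteness of $S$ lets us write the far sum as the total sum minus the near sum. The total sum is a local aggregation sentence and the near sum is a local formula; we finish by taking a finite disjunction over the possible values in $S$. We expect the main obstacle to be the scattered-witness argument in (iv): keeping the quantifier rank at $q$ and meeting exactly the bound $\locRad{q{+}1}{k}{d}$ requires careful bookkeeping of radii across the $k$ nested cluster cases. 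Computability follows because every step is effective.
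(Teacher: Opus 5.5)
Your treatment of blocks (i)--(iii) and (v) follows the paper's outline. The proposal has a genuine gap, however, at the one step the theorem is really about: the far case of $\exists x_{k+1}$ when $\beta$ has at most $k$ scattered witnesses. You propose that \enquote{a guessed scattered set together with its relation to $\bar x$ is expressed locally, as in the standard proof}. Guessing the set means a local formula of the shape $\exists y_1\cdots\exists y_m\bigl(\bigwedge_j\beta(y_j)\wedge\dots\bigr)$, usually with one more quantifier saying that every further witness is close to some $y_j$. Its quantifier rank is at least $\qr(\beta)+m$, i.e.\ up to $\qr(\phi)+k-1$. This is exactly the rank increase of Gaifman's original construction. \enquote{Grading the radii dyadically} does not remove it. Moreover, to fit $\locRad{q{+}1}{k}{d}=4k\,\locRad{q}{k{+}\wmaxar}{d}$ at all, the scales must be linear ($2(2c{+}1)r'$, steps of $4r'$), as your own \enquote{each contributing roughly $4R$} suggests. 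You flag this step as the expected obstacle, so the key idea is missing, not merely unverified.

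The paper never quantifies two witnesses jointly inside a local formula. The free variables serve as anchors, and all joint information goes into sentences. Let $r'=\locRad{q}{k{+}\wmaxar}{d}$, $\hat r=\locRad{q{+}1}{k}{d}-r'$, and let $I$ be the free indices. For $c\in[0,|I|]$ and $\ell\in[|I|{+}1]$, take the basic local sentences $\xi^{(c)}_\ell$ asserting $\ell$ $\beta$-witnesses pairwise at distance $>2(2c{+}1)r'$. Set aside the trivial cases ($\xi^{(0)}_{|I|+1}$ true, or no witness at all). Then the maximal sizes $\ell^{(0)}\ge\cdots\ge\ell^{(|I|)}$ lie in $[1,|I|]$, so by pigeonhole some $c\ge 1$ has $\ell^{(c)}=\ell^{(c-1)}=\ell$, i.e.\ $\xi^{(c)}_\ell\wedge\neg\xi^{(c-1)}_{\ell+1}$ holds. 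At this scale, a witness at distance $>r'$ from $\bar x_I$ exists iff one of two things holds:
\begin{itemize}
\item some $\delta_i$ holds, i.e.\ there is a witness in the annulus between $r'$ and $\hat r$ around $x_i$; or
\item $\psi^{(c)}_\ell$ fails, where $\psi^{(c)}_\ell$ says that some $\ell$ of the $x_j$ ($j\in I$) are pairwise at distance $>4cr'$ and each has a $\beta$-witness within $r'$.
\end{itemize}
Each $\delta_i$ and each conjunct of $\psi^{(c)}_\ell$ quantifies a single variable, so the local formulas have rank $\qr(\beta)+1\le\qr(\phi)$. Correctness follows from two uses of the double triangle inequality together with $\hat r-r'=4kr'-2r'\ge 4cr'-2r'$; this is where the factor $4k$ comes from.

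Two smaller points. First, your near/far threshold should be $r'$, not $\hat r$: $k{+}1$ witnesses pairwise $>2R$ apart only yield one at distance $>R$ from $\bar x$, and the annulus is then handled by the $\delta_i$. Second, \cref{lem:fv_decomp} applies only to formulas in $\loc{q}{k{+}\wmaxar}{d}$. Your induction must therefore carry the syntactic classes $\loc{q}{k}{d}$ and $\sent{q}{k}{d}$, not merely semantically local formulas in $\Fparam{q}{k}{d}$. This is why the paper proves \cref{thm:GNF_ngFOWplus_refined} first.
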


This \enquote{Gaifman normal form theorem} is \emph{rank preserving} in the
sense that the \emph{local formulas} of the normal form formula $\phi'$ belong
to the same class $\Fqkd$ as the original formula~$\phi$, and
their quantifier rank is at most $\qr(\phi)$.
Considering \cref{ex:FOplus,ex:FOMODplus}, this
yields corollaries that provide rank-preserving
Gaifman normal forms for $\FOplus$ and for $\FOMODplus{M}$
(see \cref{sec:rpGNFforFOandFOMOD} for details). Using the normal form
for $\FOplus$, in \cref{section:nowhere-dense}, we give a simplified proof
of the algorithmic meta-theorem
of~\cite{GroheKreutzerSiebertz_2017_NowhereDense}.

The remainder of this section is devoted to the proof of
\cref{thm:GNF_ngFOWplus}.
Similarly as in
\cite{GaifmanPaper,KuskeSchweikardt_ICALP18,vanBergeremSchweikardt_2021_FOWA},
we proceed by induction on $q$, and handling atomic formulas and
Boolean combinations of formulas is trivial.
Handling formulas built using the aggregation quantifier
is achieved similarly as modulo-counting
and aggregation quantifiers are handled in
\cite{KuskeSchweikardt_ICALP18,vanBergeremSchweikardt_2021_FOWA}.
Handling formulas built using existential quantification
is the most intricate case,
and in order to avoid an increase of the quantifier rank,
we contribute a new construction that differs significantly
from that of Gaifman's original proof.
Independently of our work,
Dreier and Toru\'{n}czyk
have recently published a note
\cite{DreierT26a} that provides a \emph{rank
preserving locality theorem} for first-order logic. The combinatorial
core of their treatment of the $\exists$ quantifier is similar to
ours, but their normal form is different, with a different radius
function, different sentences, and without considering weighted structures or aggregation quantifiers.

In fact, we prove a more refined version of
\cref{thm:GNF_ngFOWplus} that is based on particular sets
\(\Lqkd\) and \(\Sqkd\) of specific local formulas in \(\Fqkd\) and
specific $\ngFOWplus$ sentences,
respectively.
We proceed with defining these sets.

For all $q,d\in\NN$,
we let $\loc{q}{0}{d}\deff\emptyset$.
For all $k,d\in\NN$ with $k\geq 1$ we let
$\loc{0}{k}{d}\deff\Fparam{0}{k}{d}$, and for all $q\in \NN$ we let
$\loc{q{+}1}{k}{d}$ be the set of all formulas $\phi$ with
$\free(\phi)\subseteq\set{x_1,\ldots,x_k}$ such that $\phi$ is a
Boolean combination of
\begin{enumerate}
\item
\label{itm:def:loc:kind_previous_rank}
formulas in $\loc{q}{k{+}\wmaxar}{d}$
(with free variables among $\set{x_1,\ldots,x_k}$),
\item
\label{itm:def:loc:kind_distance_atom}
distance atoms $\dist(x_i,x_j)\leq \tilde{d}$ \ with $i,j\in[k]$ and $\tilde{d}\leq \locRad{q{+}1}{k}{d}$,
and
\item
\label{itm:def:loc:kind_recursive}
  formulas of the following form, where
  $r'\deff \locRad{q}{k{+}\wmaxar}{d}$ and $\hat{r}\deff \locRad{q{+}1}{k}{d}-r'$:
\begin{enumerate}
\item
\label{itm:def:loc:kind_near}
      \(
        \exists x_{k+1}
        \,
        \paren[\big]{
          \
          \dist(x_i,x_{k+1}) \,{\leq}\,
          \hat{d}
          \;\undp\;
          \lambda
          \
        }
      \)
      \ with $i\in[k]$, \(\lambda \in \loc{q}{k{+}\wmaxar}{d}\), and
      $\hat{d}\leq\hat{r}$,
\item
\label{itm:def:loc:kind_weight_aggregation}
$\displaystyle
\Big(
  s =
  \sum \weight(\bar y).
  \big(\;
     \lambda \; \und \!\!
     \Oder_{i\in I, j\in[\ell]}
 \!\!        \dist(x_i,y_j)\,{\leq}\, r'
  \,\big)
\Big)$,
  where $\weight\in\Weights$,
  the range \(S\) of \(\weight\) is finite,
  $s\in S$, $\ell=\ar(\weight)\geq 1$,  $\bar
  y=(y_1,\ldots,y_\ell)=(x_{k+1},\ldots,x_{k+\ell})$,
  $\emptyset\neq I\subseteq [k]$
  and
  $\lambda\in\loc{q}{k{+}\wmaxar}{d}$.
\end{enumerate}
\end{enumerate}

\begin{restatable}{lemma}{LocalityLemma}\label{lemma:LocalityLemma}
  For all \(q, k, d \in \NN\),
  it holds that \(\Lqkd \subseteq \Fqkd\),
  and every formula \(\phi \in \Lqkd\) is \(\locRad{q}{k}{d}\)-local.
\end{restatable}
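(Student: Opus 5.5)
The plan is a simultaneous induction on $q$, for all $k,d$ at once, proving both claims: the syntactic inclusion $\Lqkd\subseteq\Fqkd$ and $\locRad{q}{k}{d}$-locality. The fact that makes the radius bookkeeping work is the monotonicity of $\myr$ in every argument, together with $\locRad{q{+}1}{k}{d}=4k\cdot\locRad{q}{k{+}\wmaxar}{d}\geq\locRad{q}{k{+}\wmaxar}{d}$ for $k\geq1$. I would record this first, by induction from the closed form $\locRad{q}{k}{d}=4^q\prod_{i<q}(k{+}i\wmaxar)\max\set{d,1}$.

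\textbf{Base case $q=0$.} We have $\loc{0}{0}{d}=\emptyset$ and $\loc{0}{k}{d}=\Fparam{0}{k}{d}$, so the inclusion is trivial. For locality, let $\phi\in\Fparam{0}{k}{d}$ with $\md{\phi}\leq d$ and $\ta\in A^k$; I would check each atom.
\begin{itemize}
\item Relational atoms only involve elements of $\ta$, which lie in $\nrA{\ta}$.
\item Weight atoms and terms evaluate identically in $\A$ and in the induced substructure, because the weight functions are restricted.
\item Guarded predicates are handled the same way.
\item A distance atom $\dist(x_i,x_j)\leq\tilde d$ with $\tilde d\leq d\leq \max\set{d,1}$ is the only delicate case. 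Any witnessing path of length at most $d$ between $a_i$ and $a_j$ stays within distance $\lceil d/2\rceil\leq r$ of $a_i$ or of $a_j$, so it lies inside $\nrA{\ta}$ with $r=\locRad{0}{k}{d}$. Conversely, distances can only grow in the substructure.
\end{itemize}
Hence $\dist^{\A}(a_i,a_j)\leq\tilde d$ iff $\dist^{\NrA{\ta}}(a_i,a_j)\leq\tilde d$.

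\textbf{Inclusion in the step.} Each generator of $\loc{q{+}1}{k}{d}$ is matched to a clause of the definition of $\Fparam{q{+}1}{k}{d}$, using the induction hypothesis $\loc{q}{k{+}\wmaxar}{d}\subseteq\Fparam{q}{k{+}\wmaxar}{d}$:
\begin{itemize}
\item kind~\ref{itm:def:loc:kind_previous_rank} goes to clause~(\ref{def:fparam:previous});
\item kind~\ref{itm:def:loc:kind_distance_atom} goes to clause~(\ref{def:fparam:distance_atom});
\item kind~\ref{itm:def:loc:kind_near} goes to clause~(\ref{def:fparam:existential_with_distance}), since $\hat d\leq\hat r$;
\item kind~\ref{itm:def:loc:kind_weight_aggregation} goes to clause~(\ref{def:fparam:aggregation}).
\end{itemize}
The last case needs the aggregated body $\lambda\land\bigvee\dist(x_i,y_j)\leq r'$ to lie in $\Fparam{q}{k{+}\wmaxar}{d}$. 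This is a small sub-lemma: for $q\geq1$ the disjuncts are clause-(ii) atoms of $\Fparam{q}{k{+}\wmaxar}{d}$, because $r'=\locRad{q}{k{+}\wmaxar}{d}$, and $\lambda$ lies in it via the induction hypothesis. The variables $x_i,y_j$ have indices at most $k+\ell\leq k+\wmaxar$. For $q=0$, so $r'=\max\set{d,1}$, I must check that $\md{}\leq d$. The case $d=0$ (with $r'=1$) looks like a potential snag in the definitions, to be checked against the intended conventions. I would also verify the trivial monotonicity $\Fparam{q}{k}{d}\subseteq\Fparam{q{+}1}{k}{d}$, via clause~(i) and free-variable considerations.

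\textbf{Locality in the step.} Set $R=\locRad{q{+}1}{k}{d}$ and $r'=\locRad{q}{k{+}\wmaxar}{d}$. Boolean combinations preserve locality. Kind~\ref{itm:def:loc:kind_previous_rank} is $r'$-local, hence $R$-local. Distance atoms are handled as in the base case, since $\tilde d\leq R$ and witnessing paths stay within $\lceil\tilde d/2\rceil\leq R$ of an endpoint.

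\textbf{Kind~\ref{itm:def:loc:kind_near}.} A witness $b$ has $\dist(a_i,b)\leq\hat d\leq R-r'$, so $\mathcal N^{\A}_{r'}(\ta b)\subseteq \nrA{\ta}$ for the radius $R$. Distances between points of $\ta b$ in $\NrA{\ta}$ vs.\ $\A$ need care, but $\lambda$'s locality only depends on the induced neighbourhood structure. Note that $\mathcal N_{r'}$ computed inside $\NrA{\ta}$ equals that computed in $\A$, because $r'$-balls around points within $R-r'$ of $\ta$ are preserved. Here the subtle point is that $\lambda$ is $r'$-local around $(\ta,b)$ padded to indices $k{+}\wmaxar$, which the coincidence lemma allows.

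\textbf{Kind~\ref{itm:def:loc:kind_weight_aggregation}, the main obstacle.} The summation ranges only over tuples $\tb$ with some $b_j$ within $r'$ of $\ta$. I must argue the following, which I expect to be the main obstacle:
\begin{enumerate}
\item The set of contributing tuples is the same in $\A$ and $\NrA{\ta}$. Only tuples with non-zero weight matter, so by the locality condition all entries of $\tb$ lie in one relation tuple, hence pairwise within distance $1$ and within $r'+1\leq R-r'$ of $\ta$. Zero-weight tuples contribute nothing to either sum.
\item $\lambda(\ta,\tb)$ agrees in both structures, because $\mathcal N^{\A}_{r'}(\ta\tb)\subseteq\nrA{\ta}$ and balls are preserved.
\end{enumerate}
The inequality $2r'+1\leq 4k r'$ then suffices.
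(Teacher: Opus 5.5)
The snag you left open at $d=0$ is real, and no choice of conventions settles it. Whenever $\Weights$ contains a weight symbol of positive arity with finite range, the inclusion $\Lqkd\subseteq\Fqkd$ is false for $d=0$. The paper's proof, which calls this inclusion straightforward, passes over it.

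Here is a counterexample. Take a unary $\weight\in\Weights$ with finite range $S$ (e.g.\ $\weight_2$ with range $\ZZ/2\ZZ$ from the modulo-counting setting), $k=1$ and $d=0$. At level $0$ the radius is $r'=\locRad{0}{1{+}\wmaxar}{0}=\max\set{0,1}=1$. So kind~\ref{itm:def:loc:kind_weight_aggregation} with $\lambda=\true$ and $I=\set{1}$ gives
\[
  \Big(s=\sum\weight(x_2).\big(\true\land\dist(x_1,x_2)\,{\leq}\,1\big)\Big)\ \in\ \loc{1}{1}{0}.
\]
Its outermost construct is an aggregation, so it can only belong to $\Fparam{1}{1}{0}$ via clause~(v). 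That clause requires the body to lie in $\Fparam{0}{1{+}\wmaxar}{0}$, i.e.\ to have maximum distance $0$, but the body has maximum distance $1$.

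So instead of trying to close this step, you must amend the statement. One option is to require $\md{\phi}\leq\max\set{d,1}$ rather than $\md{\phi}\leq d$ in the definition of $\Fparam{0}{k}{d}$. Another is to prove the inclusion only for $d\geq1$, and then use $\locRad{q}{k}{0}=\locRad{q}{k}{1}$ and $\loc{q}{k}{0}\subseteq\loc{q}{k}{1}$ to obtain $\loc{q}{k}{0}\subseteq\Fparam{q}{k}{1}$. For $d\geq1$ the level-$0$ radius is exactly $d$, and your clause-by-clause inclusion argument is then complete. The locality half of the lemma is unaffected.

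Otherwise your plan is the paper's proof:
\begin{itemize}
\item induction on $q$;
\item an atom-by-atom base case at radius $\max\set{d,1}$;
\item a kind-by-clause matching for the inclusion;
\item for kind~\ref{itm:def:loc:kind_weight_aggregation}, the locality condition of weighted structures together with $2r'+1\leq 4kr'$.
\end{itemize}
For that last case the paper conjoins $\dist(y_j,y_{j'})\leq1$ to the body and shows $(2r'{+}1)$-locality. Your direct argument, that the relevant $r'$-balls survive in the $R$-neighbourhood structure of $\ta$, is the same idea.

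Two small corrections:
\begin{itemize}
\item For weight atoms, the point is not merely that the weight functions are restricted. A non-zero weight is witnessed by a relation tuple within distance $1$. This is why such atoms are $1$-local but not $0$-local, and why $\locRad{0}{k}{d}\geq1$ is needed.
\item The monotonicity $\Fparam{q}{k}{d}\subseteq\Fparam{q{+}1}{k}{d}$ is never used. It also does not follow from clause~(i) alone: the quantified variable is always named $x_{k+1}$, so $\Fparam{q}{k}{d}\not\subseteq\Fparam{q}{k{+}\wmaxar}{d}$ for $q\geq1$.
\end{itemize}
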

The proof is by a straightforward induction
(cf.\ \cref{appendix:LocalityLemma};
the locality of formulas of the form \ref{itm:def:loc:kind_weight_aggregation}
relies on the \emph{locality condition} of weighted structures,
which ensures that if $\weight(\bar y)\neq 0_S$ and one of the $y_j$ is close to $x_i$,
then \emph{all} the $y_j$ are close to $x_i$).

For all $k,d\in\NN$, we let $\sent{0}{k}{d}$ be the set of all atomic
formulas in $\Fparam{0}{k}{d}$ that are sentences. Hence,
\(\sent{0}{k}{d} = \sent{0}{0}{0}\), which consists of sentences of the form
$\true$, $\false$,
$R()$ for 0-ary $R\in\sigma$,
$\big(s=\weight()\big)$ for 0-ary $\weight\in\Weights$,
and $\Pred(t_1,\ldots,t_m)$ for $\Pred\in\PP$
and terms
$t_1,\ldots,t_m$ with $\bigcup_{i\in[m]}\free(t_i)=\emptyset$.
We say that  these sentences have \emph{inner quantifier rank} -1 and
\emph{outer quantifier rank} 0.
For all $q\in\NN$ we let $\sent{q{+}1}{k}{d}$ consist of all
\begin{itemize}
\item
  sentences in $\sent{q}{k{+}\wmaxar}{d}$,
\item
  sentences of the form \
  \(\displaystyle
    \exists y_1\cdots\exists y_\ell\ \Big(
      \Und_{1 \leq j < j' \leq \ell} \dist(y_j,y_{j'}) > 2(2c{+}1)r' \undp
      \Und_{j \in [\ell]} \lambda(y_j)
    \ \Big)
  \)
  with \(\ell \in [k{+}1]\), $\ell$
  variables $y_1,\ldots,y_\ell$,
  \(c \in [0,k]\),
  \(r' = \locRad{q}{k{+}\wmaxar}{d}\),
  and \(\lambda \paren{x_1} \in \loc{q}{k{+}\wmaxar}{d}\),
\item
  sentences of the form
$
\big(
  s = \sum \weight(y_1,\ldots,y_\ell).\lambda
\big)
$
where
$\weight\in\Weights$,
$\ell=\ar(\weight)\geq 1$,
the range \(S\) of \(\weight\) is finite,
$s\in S$,
$y_1,\ldots,y_\ell$ are $\ell$
distinct variables,
and
$\lambda(y_1,\ldots,y_\ell)\in\loc{q}{k{+}\wmaxar}{d}$.
\end{itemize}
By using \cref{lemma:LocalityLemma}, an easy induction on $q$ shows
that every formula in $\sent{q}{k}{d}$ is an atomic sentence in
$\sent{0}{0}{0}$
or a basic local sentence or a local aggregation sentence for
$\Fqkd$ of inner radius at most $\locRad{q{-}1}{k{+}\wmaxar}{d}$ and,
moreover, the basic local sentences in
$\sent{q}{k}{d}$ have
width (inner distance)
at most $k{+}1{+}(q{-}1)\wmaxar$
(resp.~$\locRad{q}{k}{d}+2\locRad{q{-}1}{k{+}\wmaxar}{d}$);
cf.~\cref{appendix:PropertiesOfSqkd}.
We can now state the refinement of Theorem~\ref{thm:GNF_ngFOWplus}.

\begin{restatable}{theorem}{GNFngFOWplusRefined}
 \label{thm:GNF_ngFOWplus_refined}
Let $q,k,d\in\NN$. Every $\phi\in \Fqkd$
is equivalent to a
formula $\phi'$
that is a Boolean combination of formulas in
$\Sqkd\cup\Lqkd$. Furthermore, $\free(\phi')=\free(\phi)$,
the
outer (inner) quantifier rank of $\phi'$ is at most $\qr(\phi)$
(resp.\ $\qr(\phi){-}1$),
and there is an algorithm that computes such a $\phi'$ from $\phi,q,k,d$.
\end{restatable}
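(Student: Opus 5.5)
We argue by induction on $q$, for all $k,d$ simultaneously, following the recursive definition of $\Fqkd$. For $q=0$ the claim is immediate. If $k\geq 1$, then $\phi\in\Fparam{0}{k}{d}=\loc{0}{k}{d}$. If $k=0$, then $\phi$ is a Boolean combination of atomic sentences in $\sent{0}{0}{0}$. For the step $q\to q{+}1$, Boolean combinations are trivial, so it suffices to treat each constituent (i)--(v) of $\Fparam{q{+}1}{k}{d}$ separately. We must also check that the ranks do not grow. Constituents of type (ii) lie in $\loc{q{+}1}{k}{d}$ directly. For type (i), the induction hypothesis yields a Boolean combination of $\sent{q}{k{+}\wmaxar}{d}\subseteq\sent{q{+}1}{k}{d}$ and of $\loc{q}{k{+}\wmaxar}{d}$. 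Since the free variables are among $x_1,\dots,x_k$, these local formulas are of kind~\ref{itm:def:loc:kind_previous_rank}; the degenerate case $k=0$ has only sentences. The free-variable condition $\free(\phi')=\free(\phi)$ is handled by substituting $\true$ for constituents whose extra free variables can be removed, using the coincidence lemma.

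The core tool is a Feferman--Vaught style splitting. Take $\psi\in\Fparam{q}{k{+}\wmaxar}{d}$ and apply the induction hypothesis to it. This yields a Boolean combination of sentences and of local formulas $\lambda\in\loc{q}{k{+}\wmaxar}{d}$, which are $r'$-local by \cref{lemma:LocalityLemma}, with $r'=\locRad{q}{k{+}\wmaxar}{d}$. Pull the sentences out of the quantifier by case distinction and bring the remainder into disjunctive normal form.

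For (v), with $\bar y$ new, split according to whether some $y_j$ lies within distance $r'$ of some $x_i$. If it does, the locality condition implies that the whole tuple is close whenever the weight is non-zero. The sum then becomes an aggregation of kind~\ref{itm:def:loc:kind_weight_aggregation}; we enumerate values $s_1+s_2=s$ over the finite range $S$. If no $y_j$ is close to any $x_i$, the Feferman--Vaught decomposition (\cref{lem:fv_decomp}) writes $\lambda(\bar x,\bar y)$ as a Boolean combination of $\alpha(\bar x)\wedge\beta(\bar y)$. The far-part sum equals the total sum over $\beta$ minus the near sum. The total sum is a sentence in $\sent{q{+}1}{k}{d}$, and the near sum is again local of kind~\ref{itm:def:loc:kind_weight_aggregation}.

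For (iii), $\hat d\leq\hat r$, so after splitting we obtain directly kind~\ref{itm:def:loc:kind_near}.

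The main obstacle is (iv), $\exists x_{k+1}\psi$, which must be handled without increasing the rank. After decomposition we need to express $\exists y\,(\beta(y)\wedge \bigwedge_i\dist(x_i,y)>r'')$ with $\beta\in\loc{q}{k{+}\wmaxar}{d}$ one-variable. The near case is again of kind~\ref{itm:def:loc:kind_near}. For the far case we use the combinatorial core as follows. Let $m$ be the maximum number of witnesses of $\beta$ that are pairwise more than $2(2c{+}1)r'$ apart, for scales $c\in[0,k]$. If $m\geq k{+}1$ at some suitable scale, then a far witness always exists. Otherwise, a pigeonhole argument over the $k{+}1$ scales yields a scale $c$ at which the witnesses cluster into at most $k$ groups, each of radius $2c\,r'$ around the chosen centres. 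Whether a far witness exists then depends only on the $x_i$'s neighbourhood of radius about $(4k)r'\le\locRad{q{+}1}{k}{d}$, and this can be expressed locally with kind~\ref{itm:def:loc:kind_near} formulas and distance atoms. The resulting formula is a Boolean combination of the basic local sentences in $\sent{q{+}1}{k}{d}$, one for each $\ell\in[k{+}1]$ and $c$, and of local formulas. Checking that the choice of scales $2(2c{+}1)r'$ makes this case analysis exact, and that all radii stay within $\locRad{q{+}1}{k}{d}=4k r'$, is the step I expect to be hardest. Computability follows since every step is an effective case distinction.
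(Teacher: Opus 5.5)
Your architecture matches the paper's: induction on $q$ over the constituent types (i)--(v), the near/far split at $r'=\locRad{q}{k{+}\wmaxar}{d}$, the Feferman--Vaught lemma applied to the far part, the ``total sum minus near sum'' trick for aggregation, and the scales $2(2c{+}1)r'$ with a pigeonhole argument over $c$. The gap is in case (iv), which is exactly where the paper's new idea lies.

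You assert that, once $c$ is fixed, the existence of a far witness ``depends only on the neighbourhood of the $x_i$ and can be expressed locally''. But you give no formula, and locality alone does not provide one. A far witness may lie arbitrarily far from $\tx_I$, so no formula of kind (a) can search for it directly. Knowing that a property is determined by a bounded neighbourhood does not by itself yield a formula in $\loc{q{+}1}{k}{d}$ of rank at most $\qr(\phi)$. The far witness has to be certified \emph{indirectly}, by playing local information against the global sentences $\xi^{(c)}_\ell$. Your clustering heuristic does not supply this, and it is also inaccurate. The equality $\ell^{(c)}=\ell^{(c-1)}$ only says that every witness lies within $(4c{-}2)r'$, not $2cr'$, of a maximal $(4c{+}2)r'$-scattered set of witnesses. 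For $c\geq 2$ these balls can overlap, so there are no cleanly separated groups to exploit.

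The missing construction is as follows. Put $\hat r=\locRad{q{+}1}{k}{d}-r'=(4k{-}1)r'$.
\begin{itemize}
\item Let $\delta_i(\tx_I)$ say that some $\beta$-witness lies within distance $\hat r$ of $x_i$ but at distance $>r'$ from all of $\tx_I$.
\item Let $\psi^{(c)}_\ell(\tx_I)$ say that there are $\ell$ indices $j\in I$ whose $x_j$ are pairwise more than $4cr'$ apart, each having a $\beta$-witness within distance $r'$.
\end{itemize}
Both lie in $\loc{q{+}1}{k}{d}$, being built from kind (a) formulas and distance atoms bounded by $4kr'$.

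The key claim is: whenever $\xi^{(c)}_\ell\wedge\neg\xi^{(c-1)}_{\ell+1}$ holds, a far witness exists if and only if $\bigvee_i\delta_i\vee\neg\psi^{(c)}_\ell$ holds.
\begin{itemize}
\item First direction. Suppose the $\ell$ witnesses supplied by $\xi^{(c)}_\ell$, which are pairwise $>(4c{+}2)r'$ apart, were all within $r'$ of $\tx_I$. Two triangle inequalities make their nearest $x$'s distinct and pairwise $>4cr'$ apart, so $\psi^{(c)}_\ell$ would hold.
\item Second direction. Suppose $\psi^{(c)}_\ell$ holds, no $\delta_i$ holds, and $b_0$ is a far witness. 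Then $b_0$ is $>\hat r$ from every $x_i$. Hence it is $>\hat r-r'=(4k{-}2)r'\geq(4c{-}2)r'$ from each of the $\ell$ near witnesses. Those near witnesses are themselves pairwise $>(4c{-}2)r'$ apart, so $\xi^{(c-1)}_{\ell+1}$ would hold, a contradiction.
\end{itemize}
This is exactly where $\locRad{q{+}1}{k}{d}=4kr'$ is used. It yields
$\mu'=\xi^{(0)}_{|I|+1}\vee\bigvee_{c,\ell\in[|I|]}\bigl(\xi^{(c)}_\ell\wedge\neg\xi^{(c-1)}_{\ell+1}\wedge(\bigvee_i\delta_i\vee\neg\psi^{(c)}_\ell)\bigr)$.

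A smaller point concerns case (v). Pulling $\alpha(\tx_I)$ out of the sum needs the first entries of $\Delta$ to be mutually exclusive. You also need an extra disjunct $\bigwedge_\alpha\neg\alpha$ when the target value is $0_S$.
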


Theorem~\ref{thm:GNF_ngFOWplus} is an immediate
consequence of Theorem~\ref{thm:GNF_ngFOWplus_refined}
(for this, use
\cref{lemma:LocalityLemma}, the properties of $\sent{q}{k}{d}$ stated above, and note that
every $\chi\in\Sqkd$ that is an atomic sentence in
$\sent{0}{0}{0}$
is a local formula, because $\chi(z)$ is
0-local around $z$ for every variable $z$).

The rest of this section is devoted to the proof of Theorem~\ref{thm:GNF_ngFOWplus_refined}.
A crucial tool for our proof is the next lemma that provides
specific \emph{Feferman--Vaught decompositions} for formulas in $\Lqkd$.
For this, we need further notation.

For \(\vec{z} = (z_1,\ldots,z_{k})\) and  \(K\subseteq\intsUpTo{k}\)
we write \(\projTup{z}{K}\) to denote the projection of \(\vec{z}\) on the indices in \(K\),
i.e., if \(K = \set{i_1,\ldots,i_\ell}\) with \(i_1 < \cdots < i_\ell\)
then \(\projTup{z}{K} = (z_{i_1},\ldots,z_{i_\ell})\).
For $I\subseteq K$ and \(d \in \NN\) we let
$\dist(\projTup{z}{I};\projTup{z}{K\setminus I}) >d$ be the conjunction
of the atoms $\dist(z_i,z_j) > d$ for all $i\in I$ and $j\in
K\setminus I$.
We use similar notations for
$(\sigma,\Weights)$-structures \(\A\) and tuples
\(\ta=(a_1,\ldots,a_{k}) \in A^{k}\) and write
$\dist^{\A}(\ta_I;\ta_{K\setminus I})>d$
to express that $\dist^{\A}(a_i,a_j)>d$
for all $i\in I$ and $j\in K\setminus I$.

Recall that for a formula $\phi$ and a tuple $\ty=(y_1,\ldots,y_\ell)$ of
variables, $\phi(\ty)$ indicates that
$\free(\phi)\subseteq \set{y_1,\ldots,y_\ell}$.
Our Feferman--Vaught decomposition lemma reads as follows.

\begin{restatable}{lemma}{FV}
  \label{lemma:fv}\label{lem:fv_decomp}
  Let \(q,\myk,d \in \NN\) with \(\myk \geq 2\).
Let \(\vec{x} \deff (x_1,\ldots,x_{\myk})\),
\(r' \deff \locRad{q}{\myk}{d}\),
\(K \subseteq \intsUpTo{\myk}\),
and
  \(I \subseteq K\) with \(\emptyset \neq I \neq K\).
For every \(\lambda \paren{\projTup{x}{K}} \in \loc{q}{\myk}{d}\),
  there exists a finite and non-empty set \(\Delta\) of pairs of formulas
  \(\big(\alpha\paren{\projTup{x}{I}},\beta\paren{\projTup{x}{K \setminus I}}\big)\)
  such that \(\alpha\paren{\projTup{x}{I}}\) and
  \(\beta\paren{\projTup{x}{K \setminus I}}\) belong to \(\loc{q}{\myk}{d}\),
  their
  quantifier rank is at most $\qr(\lambda)$,
  and
  \begin{equation}\label{eq:lem:fv_decomp}
    \lambda(\tx_K)
    \; \und\;
    \dist(\tx_I;\tx_{K \setminus I}) \,{>}\, r'
    \ \ \equiv \ \Oder_{(\alpha,\beta) \in \Delta}
   \paren[\Big]{
      \alpha \paren{\projTup{x}{I}}
      \und
      \beta \paren{\projTup{x}{K \setminus I}}
    }
     \undp
    \dist(\projTup{x}{I};\projTup{x}{K \setminus I}) \,{>}\, r'
    .
  \end{equation}
  Moreover, we can ensure that the first entries in $\Delta$ are \emph{mutually exclusive},
  i.e., for every two distinct $(\alpha,\beta)$ and $(\alpha',\beta')$
  in $\Delta$, the formula $(\alpha\und\alpha')$ is unsatisfiable.
  Furthermore, there is an algorithm that computes \(\Delta\)
  on input \(q, \myk, d, K, I, \lambda\).
\end{restatable}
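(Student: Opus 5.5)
We argue by induction on $q$, simultaneously for all $\myk\geq 2$, $K$ and $I$, and by structural induction on $\lambda$ as a Boolean combination of the building blocks in the definition of $\loc{q}{\myk}{d}$. Mutual exclusivity of the first components can always be enforced at the end by standard refinement. Given any $\Delta$, consider for every subset $T\subseteq\Delta$ the formula $\alpha_T\deff\Land_{(\alpha,\beta)\in T}\alpha\und\Land_{(\alpha,\beta)\notin T}\nicht\alpha$ and pair it with $\beta_T\deff\Oder_{(\alpha,\beta)\in T}\beta$. These formulas are Boolean combinations of the old ones, so they stay in $\loc{q}{\myk}{d}$, have the same quantifier rank, and keep the same free variables. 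If needed, add $(\false,\false)$ to keep $\Delta$ non-empty. Because of this, it suffices to construct some $\Delta$. Boolean combinations are then easy. For a conjunction we take products of the two sets of pairs. For a negation we first make the $\alpha$'s exclusive and exhaustive (add the complement with $\beta=\false$); then $\nicht\lambda$ corresponds to pairs $(\alpha,\nicht\beta)$.

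It remains to handle the generators. For a generator $\mu$ whose free variables lie entirely in $\projTup{x}{I}$ (or entirely in $\projTup{x}{K\setminus I}$) we take $\Delta=\set{(\mu,\true)}$ (respectively $(\true,\mu)$). A distance atom $\dist(x_i,x_j)\leq\tilde d$ with $i\in I$, $j\in K\setminus I$ satisfies $\tilde d\leq r'$, since the atoms in $\loc{q}{\myk}{d}$ have bound $\locRad{q}{\myk}{d}$ (for $q=0$, the atoms of $\loc{0}{\myk}{d}=\Fparam{0}{\myk}{d}$ have $\md{\cdot}\leq d\leq r'$). Under $\dist>r'$ such an atom is false, so we use $(\false,\false)$. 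For a generator from $\loc{q-1}{\myk+\wmaxar}{d}$ we apply the induction hypothesis; its radius $\locRad{q-1}{\myk+\wmaxar}{d}\leq r'$, so the separation hypothesis $>r'$ implies the weaker separation required there. For $q=0$, the remaining atoms are relational atoms, weight atoms and neighbourhood-guarded predicates. Each of these either mentions variables from only one side, or forces two variables from different sides to be adjacent or within distance $\leq d$ (guard), or to be equal. The locality condition is used for weight atoms with non-zero weight. Under the separation all such cross-side atoms are false. One subtlety is a weight atom $s=\weight(\bar y)$ spanning both sides: it is equivalent to $s=0_S$ there, hence to $\true$ or $\false$.

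The main case is $q+1$ with generators of kinds \ref{itm:def:loc:kind_near} and \ref{itm:def:loc:kind_weight_aggregation} whose free variables meet both sides. For $\exists x_{k+1}(\dist(x_i,x_{k+1})\leq\hat d\und\lambda)$ with $i\in I$, the witness lies within $\hat d$ of $\projTup{x}{I}$. Since $\hat d+\locRad{q}{\myk+\wmaxar}{d}=r'$, it is far enough from $\projTup{x}{K\setminus I}$, namely more than $r'-\hat d=\locRad{q}{\myk+\wmaxar}{d}$ away. We apply the induction hypothesis to $\lambda$ with $I\cup\set{k+1}$ versus $K\setminus I$ and obtain pairs $(\alpha,\beta)$. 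We then set $\alpha'\deff\exists x_{k+1}(\dist(x_i,x_{k+1})\leq\hat d\und\alpha)$ and keep $\beta$. Exclusivity of the $\alpha$'s is not needed here, since the disjunction commutes with $\exists$. For the aggregation $(s=\sum\weight(\bar y).(\lambda\und\Oder_{i\in I_0,j}\dist(x_i,y_j)\leq r''))$, the locality condition shows the following. Any $\bar y$ with non-zero weight that is near one $x_i$ lies entirely near that side. Hence the sum splits into the sum over tuples near $I_0\cap I$ and the sum over tuples near $I_0\cap K\setminus I$, and these are disjoint under separation. We apply the induction hypothesis to $\lambda$ (with $\bar y$ joined to the appropriate side, plus the required distance atoms), using mutually exclusive first components so that the sum distributes cleanly over the $\Delta$-pairs. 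Each partial sum then becomes a disjunction over pairs $s_1+s_2=s$ of aggregation formulas on the respective sides. This finite-group bookkeeping, and checking that all radii fit exactly into $r'=\locRad{q+1}{k}{d}$, is the step I expect to be the main obstacle. Computability follows since every step is an explicit syntactic construction.
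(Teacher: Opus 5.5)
Your proposal is correct and follows essentially the paper's proof. The shared steps are:
\begin{itemize}
\item induction on $q$ with a structural induction over the generators;
\item cross-side atoms collapsing to $\true$/$\false$ under the separation;
\item the $\exists$-generator handled by attaching $x_{k+1}$ to the side of $x_i$ and re-quantifying it inside that side's formula;
\item the aggregation generator split, via the locality condition, into two disjoint partial sums that are recombined over all $(t_1,t_2)$ with $t_1 +_S t_2 = s$.
\end{itemize}
Your treatment of negation via exclusive-and-exhaustive first entries is only a minor variant of the paper's construction.

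For the bookkeeping you flagged as the main obstacle, the paper makes the entries on the side \emph{not} containing $\bar y$ mutually exclusive. Then, for each assignment, at most one such $\beta$ holds, and the partial sum collapses to the single aggregation $\big(t=\sum\weight(\bar y).(\alpha\und\Oder_{j,\nu}\distAtom{x_j}{y_\nu}{r''})\big)$ paired with that $\beta$, plus the pair $(\true,\Land\nicht\beta)$ when $t=0_S$. The only radius check needed is $\locRad{q{+}1}{k}{d}-r''-1=4kr''-r''-1\geq r''$ for $r''=\locRad{q}{k{+}\wmaxar}{d}$.
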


The proof of \cref{lemma:fv} is provided in \cref{sec:fv};
it is based on the same techniques as previous Feferman--Vaught
decomposition results (cf.\ e.g.\
\cite{FefermanVaught,KuskeSchweikardt_ICALP18,vanBergeremSchweikardt_2021_FOWA}),
but special care needs to be taken to check that the set $\loc{q}{\myk}{d}$ and
the radius function $\locRad{q}{\myk}{d}$ have the properties needed
for the proof.
Finally, we are ready for proving \cref{thm:GNF_ngFOWplus_refined}.

\begin{proof}[Proof of \cref{thm:GNF_ngFOWplus_refined}]
We proceed by induction on \(q\).
Here, we focus on the main issues;
a detailed proof is provided in \cref{appendix:DetailedGNFproof}.
The induction base with \(q = 0\) is trivial.
For the induction step from \(q\) to \(q{+}1\),
consider \(q, k, d \in \NN\)
and a formula \(\phi \in \Fparam{q{+}1}{k}{d}\);
in particular, \(\free(\phi) \subseteq \set{x_1, \dots, x_k}\).
The two crucial cases are those where
\(\phi\) is of the form \(\exists x_{k+1}\,\phi_1\)
and where
\(\phi\) is of the form \(\bigl(s = \sum\weight(\ty).\phi_1\bigr)\),
for some \(\phi_1 \in \Fparam{q}{k'}{d}\) with \(k' \deff k{+}\wmaxar\).
The latter case can be handled similarly as the analogous case in the
proof of~\mbox{\cite[Theorem~4.6]{vanBergeremSchweikardt_2021_FOWA}},
but we rely on \cref{lemma:fv} instead of the Feferman--Vaught decompositions
of~\cite{vanBergeremSchweikardt_2021_FOWA}.
The case \(\phi = \exists x_{k+1}\,\phi_1\) is the most intricate case;
in order to avoid an increase of the (outer/inner) quantifier rank,
we cannot proceed similarly as previous proofs of
Gaifman normal form theorems in
\cite{GaifmanPaper,EF-FMT,KuskeSchweikardt_ICALP18,vanBergeremSchweikardt_2021_FOWA},
but we have to come up with a different construction.
For the remainder of this proof, we focus on this case.

By applying the induction hypothesis to \(\phi_1\)
and  performing standard transformations,
we can reduce to the following task:
for a \(\lambda \in \loc{q}{k'}{d}\) with
\(\free(\lambda) \subseteq \free(\phi) \cup \set{x_{k+1}}\)
and \(\qr(\lambda) \leq \qr(\phi){-}1\),
transfer \(\exists x_{k+1}\,\lambda\) into an equivalent
Boolean combination of formulas in
\(\sent{q{+}1}{k}{d} \cup \loc{q{+}1}{k}{d}\)
with outer (inner) quantifier rank
\(\leq \qr(\phi)\) (resp.~\(\qr(\phi)-1\)).

If \(\free(\phi) = \emptyset\),
then \(\exists x_{k+1}\,\lambda\) is a sentence in \(\sent{q{+}1}{k}{d}\),
and we are done.
Thus, consider the case that \(\free(\phi) \neq \emptyset\),
and hence, \(k \geq 1\).
Let \(\tx \deff (x_1, \dots, x_{k'})\),
\(I \deff \setc{i \in [k]}{x_i \in \free(\phi)}\), and
\(r' \deff \locRad{q}{k'}{d}\).
Clearly, \(\exists x_{k+1}\,\lambda\) is equivalent to
\(\psi \lor \Lor_{i \in I} \vartheta_i\)
with
\(\vartheta_i \deff \exists x_{k+1}\bigl(\distAtom{x_i}{x_{k+1}}{r'}
\landp \lambda\bigr)\) for all \(i\in I\),
and
\(\psi \deff \exists x_{k+1}\bigl(\dist(\tx_I; x_{k+1}) \,{>}\, r'
\landp \lambda\bigr)\).
Since the \(\vartheta_i\) belong to \(\loc{q{+}1}{k}{d}\),
all that remains to be done is to transfer \(\psi\) into an
equivalent Boolean combination of formulas in
\(\sent{q{+}1}{k}{d} \cup \loc{q{+}1}{k}{d}\)
with outer (inner) quantifier rank
\(\leq \qr(\phi)\) (resp.~\(\qr(\phi)-1\)).
Applying \cref{lemma:fv} with \(K \deff I \cup \set{k{+}1}\) to \(\lambda\)
yields a finite, non-empty set \(\Delta\),
which we can utilize to reduce to the following task:
for a formula
\(\beta \in \loc{q}{k'}{d}\) with \(\free(\beta) = \set{x_{k+1}}\)
and \(\qr(\beta) \leq \qr(\phi){-}1\),
transform the formula
\[
  \mu(\tx_I)
  ~~ \deff ~~
  \exists x_{k+1} \bigl(
    \dist(\tx_I; x_{k+1}) \,{>}\, r' \landp \beta(x_{k+1})
  \big)
\]
into an equivalent formula \(\mu'(\tx_I)\)
that is a Boolean combination of formulas in
\(\sent{q{+}1}{k}{d} \cup \loc{q{+}1}{k}{d}\)
with outer (inner) quantifier rank at most
\(\qr(\phi)\) (resp.~\(\qr(\phi)-1\)).
Let \(\tilde{r} \deff \locRad{q{+}1}{k}{d}\)
and \(\hat{r} \deff \tilde{r} - r'\).
For constructing \(\mu'(\tx_I)\),
we use the following formulas:
\begin{itemize}
  \item
    \(\displaystyle
    \delta_i(\tx_I) \deff \exists x_{k+1} \big(
      \distAtom{x_i}{x_{k+1}}{\hat{r}}
      \landp
      \dist(\tx_I; x_{k+1}) \,{>}\, r'
      \landp
      \beta(x_{k+1})
    \big)\)
    for all \(i \in I\),
  \item
    \(\displaystyle
    \psi_\ell^{(c)}(\tx_I) \deff
    \Lor_{\substack{J \subseteq I,\\ \abs{J} = \ell}}
    \paren[\Big]{
      \Land_{\substack{j,j' \in J,\\ j \neq j'}}
      \!\!\!\dist(x_j,x_{j'}) \,{>}\, 4cr'
      \landp
      \Land_{j \in J} \exists x_{k+1} \,
      \paren[\big]{
        \distAtom{x_j}{x_{k+1}}{r'} \landp \beta(x_{k+1})
      }
    }\)
    for all \(\ell, c \in [\abs{I}]\),
    and
  \item
    \(\displaystyle
    \xi_\ell^{(c)} \deff \exists y_1 \cdots \exists y_\ell \Big(\!
      \Land_{1 \leq j < j' \leq \ell}\!\!\!
      \dist(y_j, y_{j'}) \,{>}\, 2(2c{+}1)r'
      \land
      \Land_{j \in [\ell]} \!\beta(y_j)
    \Big)\)
    for all \(\ell \in [\abs{I}{+}1]\) and \(c \in [0,\abs{I}]\).
\end{itemize}

\noindent
Note that
\(\delta_i(\tx_I) \in \loc{q{+}1}{k}{d}\) for all \(i \in I\),
that \(\psi_\ell^{(c)}(\tx_I) \in \loc{q{+}1}{k}{d}\)
for all \(\ell, c \in [\abs{I}]\),
and that \(\xi_\ell^{(c)} \in \sent{q{+}1}{k}{d}\)
for all \(\ell \in [\abs{I}{+}1]\) and \(c \in [0,\abs{I}]\).
We let
\[
  \mu'(\tx_I)
  \quad \deff \quad
  \xi^{(0)}_{\abs{I}+1} \ \lor \ \; \Lor_{c,\ell \in [\abs{I}]} \ \Big(
    \xi^{(c)}_\ell \landp \neg\,\xi^{(c{-}1)}_{\ell+1}
    \landp
    \big(\Lor_{i \in I} \delta_i(\tx_I)
    \lorp \neg\,\psi^{(c)}_\ell(\tx_I)\big)
  \Big).
\]
Clearly, \(\mu'\) is a Boolean combination of formulas in
\(\sent{q{+}1}{k}{d} \cup \loc{q{+}1}{k}{d}\)
with
\(\free(\mu') = \setc{x_i}{i \in I} = \free(\phi)\),
and \(\mu'\) has outer (inner) quantifier rank
\(\leq \qr(\phi)\) (resp.~\(\qr(\phi)-1\)).
Next, we prove that \(\mu'(\tx_I)\) is equivalent to \(\mu(\tx_I)\).
Consider a \((\sigma,\Weights)\)-structure \(\A\) and a tuple \(\ta \in A^k\).
We have to show that
\(\A \models \eval{\mu'}{\ta_I} \iff \A \models \eval{\mu}{\ta_I}\).
We will use the following observation,
which is an immediate consequence of a double use of the triangle inequality.
\begin{restatable}{claim}{distanceByDoubleTriangle}
  \label{claim:distance_by_double_triangle}
Let \(s, t \in \NN\),
  and let \(c_1, c_2, d_1, d_2 \in A\)
  such that
  \(\distS{\A}{d_1, d_2} \,{>}\, s{+}2t\)
  and
  \(\distS{\A}{c_j, d_j} \,{\leq}\, t\) for all \(j \in \set{1,2}\).
  Then we have \(\distS{\A}{c_1, c_2} \,{>}\, s\).
\end{restatable}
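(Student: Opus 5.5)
The plan is to argue by contradiction, using the triangle inequality twice. Suppose $\dist^{\A}(c_1,c_2)\le s$. Then there is a path from $c_1$ to $c_2$ in the Gaifman graph $\GGA$ of length at most $s$; in particular $\dist^{\A}(c_1,c_2)$ is finite. Since $\dist^{\A}$ is the shortest-path metric on $\GGA$ (with value $\infty$ between different components), it satisfies the triangle inequality in the extended sense: whenever the right-hand side is finite, so is the left-hand side, and the inequality holds.

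The first application bounds $\dist^{\A}(d_1,c_2)$ by $\dist^{\A}(d_1,c_1)+\dist^{\A}(c_1,c_2)$. The second bounds $\dist^{\A}(d_1,d_2)$ by $\dist^{\A}(d_1,c_2)+\dist^{\A}(c_2,d_2)$. Together, and using symmetry $\dist^{\A}(d_1,c_1)=\dist^{\A}(c_1,d_1)\le t$ (the Gaifman graph is undirected), we get
\[
\dist^{\A}(d_1,d_2)\ \le\ t+s+t\ =\ s+2t .
\]
This contradicts the hypothesis $\dist^{\A}(d_1,d_2)>s+2t$, so $\dist^{\A}(c_1,c_2)>s$.

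There is no real obstacle here. The only point needing a moment's care is the possibly infinite distance. It is handled by the observation that the assumed finite bound on $\dist^{\A}(c_1,c_2)$, together with the finite bounds $t$, makes every distance in the chain finite, so ordinary arithmetic applies. Equivalently, one can concatenate the three walks (from $d_1$ to $c_1$, from $c_1$ to $c_2$, and from $c_2$ to $d_2$) to get a walk of length at most $s+2t$ from $d_1$ to $d_2$ directly.
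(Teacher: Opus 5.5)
Your proof is correct and is essentially the paper's argument. The paper chains the two triangle inequalities directly, $s+2t < \dist^{\A}(d_1,d_2) \leq t + \dist^{\A}(c_1,c_2) + t$, and rearranges, while you phrase the same chain as a contradiction and add a (harmless, slightly more careful) remark about infinite distances.
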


We now focus on proving that
\(\A \models \eval{\mu'}{\ta_I} \iff \A \models \eval{\mu}{\ta_I}\).
Throughout the remainder of this proof,
we let \(B \deff \setc{b \in A}{\A \models \eval{\beta}{b}}\).

\proofsubparagraph{Case 1:}
\(\A \models \xi^{(0)}_{\abs{I}+1}\).
Then, \(\A \models \eval{\mu'}{\ta_I}\).
We have to show that \(\A \models \eval{\mu}{\ta_I}\).

Since \(\A \models \xi^{(0)}_{\abs{I}+1}\),
there exist \(\enud{b_1}{b_{\abs{I}+1}} \in B\) of pairwise distance \(> 2r'\).
Let \(\pi\) be a mapping that maps every
\(b \in \set{\enud{b_1}{b_{\abs{I}+1}}}\)
to an element \(a \in \setc{a_i}{i \in I}\) that is closest to \(b\),
that is, an element that minimises \(\distS{\A}{b,a}\),
where ties are broken arbitrarily.
Then, by the pigeonhole principle,
there are two distinct \(j, j' \in \intsUpTo{\abs{I}{+}1}\)
with \(\pi(b_j) = \pi(b_{j'}) = a_i\) for some \(i \in I\).
Since \(\distS{\A}{b_j, b_{j'}} > 2r'\),
by the triangle inequality,
we know that \(\distS{\A}{a_i, b_j} > r'\)
or \(\distS{\A}{a_i, b_{j'}} > r'\).
Say, w.l.o.g., \(\distS{\A}{a_i, b_j} > r'\).
By the choice of \(\pi\),
we obtain that \(\distS{\A}{\ta_I; b_j} > r'\).
This \(b_j \in B\) witnesses that \(\A \models \eval{\mu}{\ta}\).
 
\proofsubparagraph{Case 2:}
\(\A \not\models \xi^{(0)}_1\).
Then, there does not exist any \(b \in A\) such that
\(\A \models \eval{\beta}{b}\), so \(B = \emptyset\).
Hence, \(\A \not\models \eval{\mu}{\ta_I}\).
Furthermore, \(\A \not\models \xi^{(0)}_{\abs{I}+1}\) and, moreover,
for all \(c, \ell \in \intsUpTo{\abs{I}}\),
we have that \(\A \not\models \xi^{(c)}_\ell\).
Thus, \(\A \not\models \eval{\mu'}{\ta_I}\),
so we have that \(\A \not\models \eval{\mu'}{\ta_I}\)
and \(\A \not\models \eval{\mu}{\ta_I}\).
 
\proofsubparagraph{Case 3:}
We are neither in Case~1 nor in Case~2.
Then, we have that
\(\A \models \xi^{(0)}_1 \land \neg\,\xi^{(0)}_{\abs{I}+1}\),
so \(B \neq \emptyset\),
but \(B\) does not contain \(\abs{I}{+}1\) elements
of pairwise distance \(> 2r'\).

\begin{restatable}{claim}{rpGaifmanScatteredThreshold}
  \label{claim:proof:rp_gaifman:scattered_threshold}
  There exist \(c, \ell \in \intsUpTo{\abs{I}}\)
  such that \(\A \models \xi^{(c)}_\ell \land \neg\,\xi^{(c-1)}_{\ell+1}\).\end{restatable}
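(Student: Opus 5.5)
Write $m\deff\abs{I}$. We are in Case~3, so $\A\models\xi^{(0)}_1$ and $\A\not\models\xi^{(0)}_{m+1}$. For $c\in[0,m]$, let $L(c)$ be the largest $\ell\in[m{+}1]$ such that $\A\models\xi^{(c)}_\ell$, or $L(c)\deff 0$ if there is no such $\ell$. Two monotonicity facts drive the argument. First, $\xi^{(c)}_\ell$ implies $\xi^{(c)}_{\ell'}$ for all $\ell'\leq\ell$ (drop witnesses). Second, $\xi^{(c)}_\ell$ implies $\xi^{(c')}_\ell$ for all $c'\leq c$, since a larger required pairwise distance is a stronger condition. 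Hence $\A\models\xi^{(c)}_\ell$ iff $\ell\leq L(c)$, and $L$ is non-increasing in $c$. By the case assumptions, $1\leq L(0)\leq m$.

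The claim asks for $c,\ell\in[m]$ with $\ell\leq L(c)$ and $\ell+1>L(c{-}1)$, i.e.\ $L(c{-}1)\leq\ell\leq L(c)$. Since $L$ is non-increasing, this forces $L(c)=L(c{-}1)$ and $\ell\deff L(c)$. So the task reduces to a pigeonhole statement: find some $c\in[m]$ with $L(c{-}1)=L(c)\geq 1$.

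For the pigeonhole step, consider the sequence $L(0)\geq L(1)\geq\cdots\geq L(m)$, which has $m{+}1$ terms with values in $[0,m]$, and $L(0)\leq m$.
\begin{itemize}
\item If $L(m)\geq 1$, then all $m{+}1$ values lie in $[1,m]$. These are only $m$ possible values, so some consecutive pair is equal, i.e.\ $L(c{-}1)=L(c)\geq1$ for some $c\in[m]$.
\item If $L(m)=0$, take the least $c_0$ with $L(c_0)=0$; note $c_0\geq 1$. Suppose $L$ were strictly decreasing on $[0,c_0]$. Then $L(c_0{-}1)\geq1$ and, going back step by step, $L(0)\geq c_0$. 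This gives no contradiction by itself, so this case needs a different argument.
\end{itemize}
The second bullet is the main obstacle. The fix is to show that $L(c)\geq1$ for all $c$, i.e.\ that $L(m)=0$ never happens. Indeed, $\xi^{(c)}_1$ is just $\exists y_1\,\beta(y_1)$: with $\ell=1$ the conjunction of distance atoms is empty, so $\xi^{(c)}_1$ does not depend on $c$. Since $\A\models\xi^{(0)}_1$, we get $L(c)\geq1$ for every $c\in[0,m]$. Therefore only the first bullet applies.

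Finally, with the $c\in[m]$ obtained above and $\ell\deff L(c)\in[1,m]$, we have $\A\models\xi^{(c)}_\ell$ by definition of $L(c)$. We also have $\A\not\models\xi^{(c-1)}_{\ell+1}$, because $\ell+1>L(c)=L(c{-}1)$; this is legitimate since $\ell+1\leq m+1$ keeps $\xi^{(c-1)}_{\ell+1}$ among the defined formulas. This proves the claim.
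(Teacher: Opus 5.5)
Your proof is correct and matches the paper's argument: your $L(c)$ is the paper's $\ell^{(c)}$, the maximal size of a set of elements of $B$ with pairwise distance $>2(2c{+}1)r'$. Both arguments show that it is non-increasing in $c$ and takes values in $[1,\abs{I}]$ (using $B\neq\emptyset$, equivalently that $\xi^{(c)}_1$ does not depend on $c$, together with $\A\not\models\xi^{(0)}_{\abs{I}+1}$), and then use pigeonhole to find consecutive equal values. The detour in your second bullet is unnecessary, since the lower bound $L(c)\geq 1$ is immediate.
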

\begin{claimproof}
  For every \(c \in [0,\abs{I}]\),
  let \(\ell^{(c)} \in \NN \cup \set{\infty}\) be maximal
  such that there exists a set \(X \subseteq B\) of \(\ell^{\paren{c}}\) elements
  of pairwise distance \(> 2(2c{+}1)r'\).
From \(B \neq \emptyset\),
  we obtain that \(\ell^{\paren{c}} \geq 1\) for each \(c \in \intsUpTo{0,\abs{I}}\).
Since $B$ does not contain $\abs{I}{+}1$ elements of pairwise distance
  $>2r'$, we know that
  \(\ell^{\paren{c}} \leq \abs{I}\) for each \(c \in \intsUpTo{0,\abs{I}}\).
  Furthermore, for all \(c \geq 1\), we have \(\ell^{\paren{c}} \leq \ell^{(c-1)}\),
  because \(2(2c{+}1)r' \geq 2(2(c{-}1){+}1)r'\).
Hence, we have
  $\abs{I} \geq \ell^{\paren{0}} \geq \ell^{\paren{1}}  \geq
    \cdots \geq  \ell^{\paren{\abs{I}}} \geq 1$.
  By the pigeonhole principle,
  there exists a \(c \in \intsUpTo{\abs{I}}\)
  such that \(\ell^{(c-1)} = \ell^{\paren{c}}\).
We choose such a \(c\) and let \(\ell \deff \ell^{\paren{c}}\).
  Since \(\ell = \ell^{\paren{c}}\), we have that \(\A \models \xi^{(c)}_\ell\).
  Since \(\ell = \ell^{(c-1)}\) and \(\ell^{(c-1)}\) is maximal,
  we have that  \(\A \models \neg\,\xi^{(c-1)}_{\ell+1}\).
\end{claimproof}
 
Consider arbitrary \(c, \ell \in \intsUpTo{\abs{I}}\)
such that \(\A \models \xi^{(c)}_\ell \land \neg\,\xi^{(c-1)}_{\ell+1}\),
which exist due to \cref{claim:proof:rp_gaifman:scattered_threshold}.

\proofsubparagraph{Case 3.1:}
\(\A \models \eval{\delta_i}{\ta_I}\) for some \(i \in I\).
This implies that \(\A \models \eval{\mu'}{\ta_I}\)
and \(\A \models \eval{\mu}{\ta_I}\).
 
\proofsubparagraph{Case 3.2:}
\(\A \models \eval{\neg\,\psi^{(c)}_\ell}{\ta_I}\).
Then \(\A \models \eval{\mu'}{\ta_I}\).
We have to show that \(\A \models \eval{\mu}{\ta_I}\).

From \(\A \models \xi^{(c)}_\ell\),
we know that there exist \(\enud{b_1}{b_\ell} \in B\)
of pairwise distance \(> 2(2c{+}1)r'\).
If there is a \(\nu \in \intsUpTo{\ell}\)
such that \(\distS{\A}{\ta_I; b_\nu} \,{>}\, r'\),
then \(b_\nu\) serves as a witness certifying that
\(\A \models \eval{\mu}{\ta_I}\).
\emph{For contradiction},
assume that, for each \(\nu \in \intsUpTo{\ell}\),
we do \emph{not} have
\(\distS{\A}{\ta_I;b_\nu} \,{>}\, r'\).
Then, for every \(\nu \in \intsUpTo{\ell}\),
there is an \(i(\nu) \in I\)
such that \(\distS{\A}{a_{i(\nu)}, b_\nu} \,{\leq}\, r'\).
 
\begin{restatable}{claim}{rpGaifmanPwFarNearElements}
  \label{claim:proof:rp_gaifman:pw_far_near_elements}
  For all distinct \(\nu,\nu' \in \intsUpTo{\ell}\),
  we have \(\distS{\A}{a_{i(\nu)}, a_{i(\nu')}} > 4cr'\).
\end{restatable}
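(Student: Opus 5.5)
This claim is a direct application of \cref{claim:distance_by_double_triangle}. Fix distinct \(\nu,\nu'\in\intsUpTo{\ell}\). From \(\A\models\xi^{(c)}_\ell\) we chose \(b_1,\dots,b_\ell\in B\) of pairwise distance \(>2(2c{+}1)r'\). In particular \(\distS{\A}{b_\nu,b_{\nu'}}>2(2c{+}1)r' = 4cr'+2r'\). By the contradiction assumption, we also have \(\distS{\A}{a_{i(\nu)},b_\nu}\leq r'\) and \(\distS{\A}{a_{i(\nu')},b_{\nu'}}\leq r'\).

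We now invoke \cref{claim:distance_by_double_triangle} with the following choices:
\begin{itemize}
  \item \(s\deff 4cr'\) and \(t\deff r'\);
  \item \(d_1\deff b_\nu\) and \(d_2\deff b_{\nu'}\);
  \item \(c_1\deff a_{i(\nu)}\) and \(c_2\deff a_{i(\nu')}\).
\end{itemize}
The hypothesis \(\distS{\A}{d_1,d_2}>s+2t\) is exactly the bound \(>4cr'+2r'\) above. The hypotheses \(\distS{\A}{c_j,d_j}\leq t\) are the two closeness facts. The claim then yields \(\distS{\A}{a_{i(\nu)},a_{i(\nu')}}>4cr'\), as required.

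The only subtlety is that the conclusion needs \(i(\nu)\neq i(\nu')\) only implicitly. Because \(c\geq 1\) and \(r'\geq 1\), the strict inequality \(\distS{\A}{a_{i(\nu)},a_{i(\nu')}}>4cr'\geq 4>0\) already forces \(a_{i(\nu)}\neq a_{i(\nu')}\), hence \(i(\nu)\neq i(\nu')\). So nothing further is required; the main (minor) point is the arithmetic identity \(2(2c{+}1)r'=4cr'+2r'\), which matches the inner distance built into \(\xi^{(c)}_\ell\).

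This fact is what the surrounding argument uses next. The set \(J\deff\setc{i(\nu)}{\nu\in\intsUpTo{\ell}}\) has size \(\ell\) and is \(4cr'\)-scattered. Each \(a_j\) with \(j\in J\) has a \(\beta\)-witness within distance \(r'\). Together these give \(\A\models\psi^{(c)}_\ell(\ta_I)\), which contradicts Case~3.2.
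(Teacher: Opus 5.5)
Your proposal is correct and matches the paper's proof: it applies \cref{claim:distance_by_double_triangle} with the same choice \(s=4cr'\), \(t=r'\), \((c_1,c_2,d_1,d_2)=(a_{i(\nu)},a_{i(\nu')},b_\nu,b_{\nu'})\), using \(2(2c{+}1)r'=4cr'+2r'\). Your remark that the strict inequality forces \(i(\nu)\neq i(\nu')\) is exactly how the paper uses the claim immediately afterwards.
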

\begin{claimproof}
  For distinct \(\nu, \nu' \in [\ell]\),
  apply \cref{claim:distance_by_double_triangle}
  with \(s = 4cr'\),
  \(t = r'\),
  and \((c_1, c_2, d_1, d_2) = (a_{i(\nu)}, a_{i(\nu')}, b_\nu, b_{\nu'})\).
  This yields
  \(\distS{\A}{c_1, c_2} \,{>}\, s\),
  so
  \(\distS{\A}{a_{i(\nu)}, a_{i(\nu')}} \,{>}\, 4cr'\).
\end{claimproof}
 
\Cref{claim:proof:rp_gaifman:pw_far_near_elements} implies that
\(i(\nu) \neq i(\nu')\) holds for all distinct \(\nu, \nu' \in [\ell]\).
Thus, for \(J \deff \setc{i(\nu)}{\nu\in[\ell]}\),
we have \(\abs{J} = \ell\).
Furthermore, for each \(\nu \in \intsUpTo{\ell}\),
the element \(b_\nu\) serves as a witness certifying that
\(
(\A,\ta_I) \models \exists x_{k+1}\,\paren[\big]{
  \distAtom{x_{i(\nu)}}{x_{k+1}}{r'}
  \landp
  \beta(x_{k+1})
}
\).
Combining this with
\cref{claim:proof:rp_gaifman:pw_far_near_elements} proves
that \(J\) serves as a witness certifying that
\(\A \models \eval{\psi^{(c)}_\ell}{\ta_I}\).
However, recall that we currently consider the case that
\(\A \models \neg\,\eval{\psi^{(c)}_\ell}{\ta_I}\),
a contradiction!
Thus, there must exist a \(\nu \in \intsUpTo{\ell}\)
such that \(\distS{\A}{\ta_I; b_\nu} \,{>}\, r'\).
This \(b_\nu\) serves as a witness certifying that
\(\A \models \eval{\mu}{\ta_I}\).
 
\proofsubparagraph{Case 3.3:}
For all \(c, \ell \in \intsUpTo{\abs{I}}\)
with \(\A \models \xi^{(c)}_\ell \land \neg\,\xi^{(c-1)}_{\ell+1}\),
we have \(\A \not\models \eval{\delta_i}{\ta_I}\) for all \(i \in I\),
and \(\A \models \eval{\psi^{(c)}_\ell}{\ta_I}\).
This is the only remaining case.
Since we are still in Case~3,
we have \(\A \not\models \eval{\mu'}{\ta_I}\).
In the following,
we show that also \(\A \not\models \eval{\mu}{\ta_I}\).

Let us fix \(c, \ell \in \intsUpTo{\abs{I}}\)
according to \cref{claim:proof:rp_gaifman:scattered_threshold}.
Thus, we have \(\A \models \xi^{(c)}_\ell \land \neg\,\xi^{(c-1)}_{\ell+1}\)
and \(\A \not\models \eval{\delta_i}{\ta}\) for all \(i \in I\),
and \(\A \models \eval{\psi^{(c)}_\ell}{\ta}\).
\emph{For contradiction}, assume that \(\A \models \eval{\mu}{\ta_I}\).
Then there exists a \(b_0 \in B\) such that \(\dist^{\A}(\ta_I; b_0) > r'\).
Since, for all \(i \in I\),
we have \(\A \not\models \delta_i(\ta_I)\),
we obtain
\(\distS{\A}{\ta_I; b_0} > \hat{r}\).
From \(\A \models \eval{\psi^{(c)}_\ell}{\ta_I}\),
we obtain that there exists a set \(J \subseteq I\)
with \(\abs{J} = \ell\)
such that
\(\distS{\A}{a_j, a_{j'}}
  >
  4cr'
\)
holds for all distinct \(j, j' \in J\);
and for each \(j \in J\),
there exists a \(b_j \in B\) with
\(\distS{\A}{a_j, b_j}
  \leq
  r'
\).
We would like to use the elements \(b_0, (b_j)_{j \in J}\) as witnesses
certifying that \(\A \models \xi^{(c-1)}_{\ell+1}\).
All that remains to be done
is to show that they have pairwise distance \(> 2(2(c{-}1){+}1)r'\).
Note that \(2(2(c{-}1){+}1)r' = 4cr'-2r'\).

For distinct \(j,j' \in J\),
we apply \cref{claim:distance_by_double_triangle}
with \(s = 4cr'-2r'\),
\(t=r'\),
and \((c_1, c_2, d_1, d_2) = (b_j, b_{j'}, a_j, a_{j'})\).
This yields
\(\distS{\A}{c_1, c_2} > s\),
so \(\distS{\A}{b_j, b_{j'}} > 4cr'-2r'\).

All that remains to be done is to show that
\(\distS{\A}{b_j, b_0} > 4cr'-2r'\) for all \(j \in J\).
Consider a \(j \in J\).
We have
\(
  \hat{r}
  <
  \distS{\A}{a_j, b_0}
  \leq
  \distS{\A}{a_j, b_j} + \distS{\A}{b_j, b_0}
  \leq
  r' + \distS{\A}{b_j, b_0}
\).
Thus,
\(
  \distS{\A}{b_j, b_0}
  >
  \hat{r} - r'
\).
Furthermore, by our choice of \(\tilde{r}, \hat{r}, r'\),
we have
\(\hat{r} - r' = \tilde{r} - 2r' = \locRad{q{+}1}{k}{d} - 2r'\),
which, by \eqref{eq:def:radius_function},
is equal to \(4kr' - 2r'\),
and this is \(\geq 4cr'-2r'\),
since \(c \leq \abs{I} \leq k\).

Finally, we have shown that \(b_0, (b_j)_{j \in J}\) serve as witnesses
certifying that \(\A \models \xi^{(c-1)}_{\ell+1}\).
However, this is a contradiction to the fact that
\(\A \models \neg\,\xi^{(c-1)}_{\ell+1}\).
Therefore, we obtain that \(\A \not \models\eval{\mu}{\ta_I}\).
This completes the proof that \(\mu'(\tx_I)\)
is equivalent to \(\mu(\tx_I)\).
\end{proof}
  \section{Applications and Outlook}
\label{sec:conclusion}
By \cref{ex:FOplus,ex:FOMODplus}, the logics $\FOplus$ and $\FOMODplus{M}$
can be viewed as special cases of the logic $\ngFOWplus$.
Consequently, we can translate \cref{thm:GNF_ngFOWplus} into
results that provide \emph{rank-preserving} versions of the known Gaifman
normal forms for $\FO$ and $\FOMOD$;
see \cref{cor:GNF_FOplus,cor:GNF_FOMODplus}
in \cref{sec:rpGNFforFOandFOMOD} for details.
To demonstrate the use of our rank-preserving normal form for
$\FOplus$, we apply it to obtain a
simplified proof of the algorithmic meta-theorem
of \cite{GroheKreutzerSiebertz_2017_NowhereDense} for
$\FO$ model checking on nowhere dense classes; see
\cref{section:nowhere-dense}.
We believe that,
by using our rank-preserving normal form for $\FOMODplus{M}$,
this proof can be generalised to obtain an analogous result
for $\FOMOD$ model checking on nowhere dense classes.

Canonical next steps are to first try to use \cref{thm:GNF_ngFOWplus}
to achieve a model-checking result
for (a suitable fragment of) $\ngFOWplus$ on nowhere dense classes, and then
try to combine this with methods of
\cite{GroheSchweikardt_2018_FOC1,vanBergeremSchweikardt_2021_FOWA}
to lift it to a suitable fragment of the more general weight
aggregation logic $\FOWA$ of \cite{vanBergeremSchweikardt_2021_FOWA}.
This research agenda, however, comes with a \emph{caveat}:
the next example shows
that $\FO$ definable problems over
arbitrary graphs can be translated into $\ngFOWplus$ definable problems
over vertex-weighted trees of height $1$, provided that the
vertex weights range over $\ZZ$,
and $\PP$ contains a predicate for checking if two such
weights are equal.
This shows that for this particular choice of $\SC,\Weights,\PP$, the
\(\ngFOWplus\) model-checking problem is
\(\AWstar\)-hard
already on vertex-weighted trees of height $1$. This implies that,
in order to achieve efficient model checking on nowhere dense
classes, one needs to find suitable restrictions of $\ngFOWplus$ and
$\SC,\Weights,\PP$.

\begin{example}\label{ex:GraphsAsTrees}
  This example shows how to transform \(\FO\) definable problems over graphs
  into \(\ngFOWplus\) definable problems over vertex-weighted trees of height \(1\).

  Let \(\sigma \deff \set{E}\) for a binary relation symbol \(E\),
  let \(\SC\) consist of a single abelian group,
  the group \((\ZZ, +)\) of integers with natural addition,
  let \(\PP \deff \set{\Pred_=}\) with \(\sem{\Pred_=} \deff \setc{(i,i)}{i \in \ZZ}\),
  and let \(\Weights \deff \set{\weight_1, \weight_2}\)
  for two unary weight symbols with range $\ZZ$.
For variables $x,y,e$, we let
  \begin{align*}
    \phi_V(x)
    &\ \deff\ \big(\,0 = \weight_2(x)\,\big),\\
    \phi_{E,1}(x,e)
    &\ \deff\ \Big(\,\Pred_=\bigl(\weight_1(x), \weight_1(e)\bigr) \
      \und \ \dist(x,e) \,{\leq}\, 2 \,\Big),\\
    \phi_{E,2}(y,e)
    &\ \deff\ \Big(\, \Pred_=\bigl(\weight_1(y), \weight_2(e)\bigr) \
      \und \ \dist(y,e) \,{\leq}\, 2\,\Big),\\
    \intertext{and}
    \phi_E(x,y)
    &\ \deff\
    \phi_V(x) \land \phi_V(y) \land
    \exists e\,\Bigl(
    \bigr(\phi_{E,1}(x,e) \land \phi_{E,2}(y,e)\bigr)
    \lor
    \bigr(\phi_{E,1}(y,e) \land \phi_{E,2}(x,e)\bigr)
    \Bigr).
  \end{align*}

  For a formula \(\phi(\tx) \in \FO\) of signature $\sigma$,
  we let \(\psi(\tx) \) be the $\ngFOWplus$ formula obtained from \(\phi\)
  by replacing every subformula of the form \(\exists x\, \phi'\)
  by \(\exists x\, (\phi_V(x) \land \phi')\)
  and replacing every atomic formula of the form \(E(x,y)\)
  by \(\phi_E(x,y)\).

  Now let \(G\) be an undirected graph with vertex set \(V(G) = [n]\) for some \(n \in \NNpos\).
  We let \(\mathcal{T}\) be the \((\sigma, \Weights)\)-structure with
  universe \(T \deff [0,n] \cup E(G)\),
  edges
  \(E^{\mathcal{T}} \deff \set{0} \times \bigl([n] \cup E(G)\bigr)
  \cup \bigl([n] \cup E(G)\bigr) \times \set{0}\),
  and weights \(\weight_1(i) \deff i\) for all \(i \in [0,n]\),
  \(\weight_2(0) \deff 1\),
  \(\weight_2(j) \deff 0\) for all \(j \in [n]\),
  and \(\weight_1(\set{i,j}) \deff i\) and \(\weight_2(\set{i,j}) \deff j\)
  for all \(\set{i,j} \in E(G)\) with \(i < j\).
It can easily be verified that,
  for all \(\tv \in [n]^{\abs{\tx}}\),
  it holds that \(G \models \phi(\tv)\)
  if and only if \(\mathcal{T} \models \psi(\tv)\).
\end{example}

\section{Application:\texorpdfstring{\newline}{} Rank-Preserving Gaifman Normal Forms
for \texorpdfstring{$\FOplus$}{FO+}
and \texorpdfstring{$\FOMODplus{}$}{FO+MOD+}}\label{sec:rpGNFforFOandFOMOD}

The purpose of this section is to formulate corollaries to \cref{thm:GNF_ngFOWplus} for
first-order logic and for first-order logic with modulo-counting quantifiers.
Recall from Examples~\ref{ex:FOplus} and \ref{ex:FOMODplus} that both
logics, enriched with distance atoms, can be viewed as special cases
of $\ngFOWplus$ with $\PP=\emptyset$ and suitable choices of $\SC$ and
$\Weights$. In particular, $\Weights$ is empty for first-order logic,
and it contains only unary symbols for first-order logic with
modulo-counting quantifiers. In both cases, we have $\wmaxar=1$ for
the parameter $\wmaxar$ introduced at the beginning of \cref{sec:gaifman};
and the radius function defined in \cref{eq:def:radius_function} is
\begin{equation}\label{eq:specialRadiusFunction}
 \locRadFO{0}{k}{d} = \max\set{d,1}
    \quad\text{and}\quad
 \locRadFO{q{+}1}{k}{d} = 4k\cdot \locRadFO{q}{k{+}1}{d}
    \,, \ \
    \text{for all \(q,k,d \in \NN\)}
    .
\end{equation}

\subsection{A Rank-Preserving Gaifman Normal Form for \texorpdfstring{$\FOplus$}{FO+}}
\label{sec:rpGNFforFOplus}

Recall from \cref{ex:FOplus} that $\FOplus$ coincides with
$\ngFOWplus$ for the particular case that
$\SC=\Weights=\PP=\emptyset$. We write $\FOplusqkd$ to denote the set
$\Fqkd$ for this particular setting.
That is, $\FOplusParam{0}{k}{d}$ (for $k,d\in\NN$) is
the set of all formulas $\phi$ in $\FOplus$
with $\qr(\phi)=0$,
$\free(\phi)\subseteq \set{x_1,\ldots,x_k}$, and $\md{\phi}\leq d$;
and $\FOplusParam{q{+}1}{k}{d}$ (for all $q,k,d\in\NN$) is the set of all formulas
$\phi$ with $\free(\phi)\subseteq \set{x_1,\ldots,x_k}$ such that
$\phi$ is a Boolean combination of
  \begin{enumerate}[(i)]
    \item\label{item:FOplusParam_One}
      formulas in $\FOplusParam{q}{k{+}1}{d}$ (with free variables
      among $\set{x_1,\ldots,x_k}$),
    \item\label{item:FOplusParam_Two}
      distance atoms $\dist(x_i,x_j)\leq \tilde{d}$ with $i,j\in[k]$ and
      $\tilde{d}\leq \locRadFO{q{+}1}{k}{d}$,
    \item\label{item:FOplusParam_Three}
      formulas of the form $\exists x_{k+1}\big(\dist(x_i,x_{k+1})\,{\leq}\, \hat{d}
      \land \psi \big)$ with $\psi\in\FOplusParam{q}{k{+}1}{d}$, $i\in[k]$, and $\hat{d}\leq
      \locRadFO{q{+}1}{k}{d}-\locRadFO{q}{k{+}1}{d}$, and
    \item\label{item:FOplusParam_Four}
      formulas of the form $\exists x_{k+1}\,\psi$ with
      $\psi\in\FOplusParam{q}{k{+}1}{d}$.
  \end{enumerate}
Since $\Weights=\emptyset$,
\emph{aggregation quantification}
is not available.
Consequently, \enquote{local aggregation sentences} do not exist for
$\FOplusParam{q}{k}{d}$.
Note that \emph{basic local sentences for $\FOplusParam{q}{k}{d}$}
(of \emph{width} $\ell$, \emph{inner distance} $2r$, \emph{inner radius}
 at most $\tilde{r}$,  and
 \emph{inner quantifier rank} $\qr(\lambda)$) are of the form
\eqref{eq:intro-bl}, where \enquote{$\dist(y_i,y_j)\,{>}\,2r$}  is a
distance atom,
and where $\ell\geq 1$, $y_1,\ldots,y_\ell$ are $\ell$
variables, $r\geq \tilde{r}\geq 0$,
and
$\lambda(x_1)\in\FOplusParam{q{-}1}{k{+}1}{d}$ is
\emph{$\tilde{r}$-local} and has only one free variable.
A formula in \emph{Gaifman normal form for $\FOplusParam{q}{k}{d}$}
is a Boolean combination of \emph{local} formulas in
$\FOplusParam{q}{k}{d}$ and basic local sentences for $\FOplusParam{q}{k}{d}$.

As an immediate consequence of \cref{thm:GNF_ngFOWplus}, we obtain the
following \emph{rank-preserving Gaifman normal form for $\FOplusParam{q}{k}{d}$}.

\begin{corollary}\label{cor:GNF_FOplus}
Let $q,k,d\in\NN$. Every $\phi\in\FOplusqkd$ is equivalent to a
formula $\phi'$ in Gaifman normal form for $\FOplusqkd$.
Furthermore, $\free(\phi')=\free(\phi)$,
the outer (inner) quantifier rank of $\phi'$
is at most $\qr(\phi)$ (resp.\ $\qr(\phi){-}1$),
the outer (inner) radius of $\phi'$
is at most $\locRadFO{q}{k}{d}$
(resp.\ $\locRadFO{q{-}1}{k{+}1}{d}$),
and the width (inner distance)
is at most $k{+}q$
(resp.\ $\locRadFO{q}{k}{d}+2\locRadFO{q{-}1}{k{+}1}{d}$).
Moreover,
such a $\phi'$ can be computed from $\phi,q,k,d$.
\end{corollary}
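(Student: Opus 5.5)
The plan is to obtain \cref{cor:GNF_FOplus} by specialising \cref{thm:GNF_ngFOWplus} to the setting $\SC=\Weights=\PP=\emptyset$, as in \cref{ex:FOplus}, and then checking that every notion and bound in the theorem matches the one stated for $\FOplusqkd$.

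\textbf{Step 1: identify the syntax.} With $\Weights=\emptyset$ we have $\wmaxar=\max(\set{1})=1$. Hence the radius function $\locRad{q}{k}{d}$ of \eqref{eq:def:radius_function} coincides with $\locRadFO{q}{k}{d}$ of \eqref{eq:specialRadiusFunction}; a one-line induction on $q$ confirms this. Items (i)--(v) in the definition of $\Fqkd$ then reduce to items (i)--(iv) for $\FOplusqkd$, since item~(v) needs a weight symbol and so never applies. By induction on $q$ this gives $\Fqkd=\FOplusqkd$.

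\textbf{Step 2: apply \cref{thm:GNF_ngFOWplus}.} Given $\phi\in\FOplusqkd=\Fqkd$, the theorem returns a computable $\phi'$ in Gaifman normal form for $\Fqkd$ with $\free(\phi')=\free(\phi)$ and with the stated bounds on outer and inner quantifier rank and radius. Local aggregation sentences require some $\weight\in\Weights$, so none occur in $\phi'$. Therefore $\phi'$ is a Boolean combination of local formulas in $\FOplusqkd$ and basic local sentences. Their inner formulas $\lambda$ lie in $\Fparam{q{-}1}{k{+}1}{d}=\FOplusParam{q{-}1}{k{+}1}{d}$, so these are exactly the basic local sentences for $\FOplusqkd$.

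\textbf{Step 3: translate the numeric bounds with $\wmaxar=1$.} The width bound $k{+}1{+}(q{-}1)\wmaxar$ becomes $k{+}q$. The radius and inner-distance bounds carry over verbatim after replacing $\myr$ by $\locRadOperator_1$.

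The only step that needs care is Step~1: confirming that the normal-form notions (local formulas in $\Fqkd$, basic local sentences, and the parameter $\wmaxar$) specialise exactly as claimed. This is a routine syntactic check, and I expect no real obstacle.
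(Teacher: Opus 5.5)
Your proposal is correct and is the paper's own argument: the corollary is stated there as an immediate consequence of \cref{thm:GNF_ngFOWplus}, specialised to $\SC=\Weights=\PP=\emptyset$. In that setting $\wmaxar=1$, so the two radius functions coincide, $\Fqkd=\FOplusqkd$, no local aggregation sentences can occur, and the width bound $k{+}1{+}(q{-}1)\wmaxar$ becomes $k{+}q$, exactly as you check.
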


\subsection{A Rank-Preserving Gaifman Normal Form for \texorpdfstring{$\FOMODplus{}$}{FO+MOD+}}
\label{sec:rpGNFforFOMODplus}

Let $M$ be a finite, non-empty set of integers $m\geq 2$, and consider
the logic $\FOMODplus{M}$ of \cref{ex:FOMODplus}. Following
\cref{ex:FOMODplus}, we let $\PP=\emptyset$, we let $\SC$ consist of
the abelian groups $\ZZ/m\ZZ$ with addition modulo $m$, for all $m\in
M$, and we let $\Weights=\setc{\weight_m}{m\in M}$ consist of a unary
weight symbol with range $\ZZ/m\ZZ$, for every $m\in M$.
We turn every $\sigma$-structure $\A$ into a
$(\sigma,\Weights)$-structure $\hat{\A}$ by letting
$\weight_m(a)=1\in\ZZ/m\ZZ$, for every $m\in M$ and every $a\in A$.
Recall that a \enquote{modulo $m$}-quantification
$\exists^{i\text{\,mod\,}m}\,x\,\phi$ in $\FOMODplus{M}$ corresponds to an
aggregation quantification $\big(i=\sum \weight_m(x).\phi \big)$, and
using this, we can rewrite $\FOMODplus{M}$ formulas $\psi(\bar x)$
into $\ngFOWplus$ formulas $\psi'(\bar x)$ (and vice versa, for our particular choice of
$\SC,\Weights,\PP$) such that for every $\sigma$-structure $\A$ and
every assignment $\bar a\in A^{|\bar x|}$ we have
$\A\models\psi(\bar a)\iff\hat{\A}\models\psi'(\bar a)$.

Regarding this setting, the set $\Fqkd$ can be rewritten into the set
$\FOMODplusqkd$, which is defined as follows.
The set $\FOMODplusParam{0}{k}{d}$ (for $k,d\in\NN$) consists
of all formulas $\phi$ in $\FOplus$
with $\qr(\phi)=0$,
$\free(\phi)\subseteq \set{x_1,\ldots,x_k}$, and $\md{\phi}\leq d$;
and $\FOMODplusParam{q{+}1}{k}{d}$ (for all $q,k,d\in\NN$) is the set of all formulas
$\phi$ with $\free(\phi)\subseteq \set{x_1,\ldots,x_k}$ such that
$\phi$ is a Boolean combination of
  \begin{enumerate}[(i)]
    \item
      formulas in $\FOMODplusParam{q}{k{+}1}{d}$ (with free variables
      among $\set{x_1,\ldots,x_k}$),
    \item
      distance atoms $\dist(x_i,x_j)\leq \tilde{d}$ with $i,j\in[k]$ and
      $\tilde{d}\leq \locRadFO{q{+}1}{k}{d}$,
    \item
      formulas of the form $\exists x_{k+1}\big(\dist(x_i,x_{k+1})\,{\leq}\, \hat{d}
      \land \psi \big)$ with $\psi\in\FOMODplusParam{q}{k{+}1}{d}$, $i\in[k]$, and $\hat{d}\leq
      \locRadFO{q{+}1}{k}{d}-\locRadFO{q}{k{+}1}{d}$,
    \item
      formulas of the form $\exists x_{k+1}\,\psi$ with
      $\psi\in\FOMODplusParam{q}{k{+}1}{d}$, and
    \item
      formulas of the form $\exists^{i\text{\,mod\,}m}\,x_{k+1}\,\psi$
      with $\psi\in\FOMODplusParam{q}{k{+}1}{d}$, $m\in M$, and $i\in[0,m{-}1]$.
  \end{enumerate}

The notion of \emph{local aggregation sentences for $\Fqkd$}, in our
particular setting, translates into the following notion.
A \emph{local modulo-counting sentence for $\FOMODplusParam{q}{k}{d}$} (of
\emph{inner radius} at most $\tilde{r}$ and
\emph{inner quantifier rank} $\qr(\lambda)$) is of the form
$
   \exists^{i\text{\,mod\,}m}\, y\,\lambda(y),
$
where
$m\in M$, $i\in[0,m{-}1]$,
$y$ is a variable,
$\tilde{r}\in\NN$,
and $\lambda(y)\in \FOMODplusParam{q{-}1}{k{+}1}{d}$ is
$\tilde{r}$-local
(with  $\free(\lambda)\subseteq\set{y}$).

\emph{Basic local sentences for $\FOMODplusParam{q}{k}{d}$}
(of \emph{width} $\ell$, \emph{inner distance} $2r$, \emph{inner radius}
 at most $\tilde{r}$,  and
 \emph{inner quantifier rank} $\qr(\lambda)$) are of the form
\eqref{eq:intro-bl}, where \enquote{$\dist(y_i,y_j)\,{>}\,2r$}  is a
distance atom,
and where $\ell\geq 1$, $y_1,\ldots,y_\ell$ are $\ell$
variables, $r\geq \tilde{r}\geq 0$,
and
$\lambda(x_1)\in\FOMODplusParam{q{-}1}{k{+}1}{d}$ is
\emph{$\tilde{r}$-local} and has only one free variable.

Finally, a formula in \emph{Gaifman normal form for $\FOMODplusParam{q}{k}{d}$}
is a Boolean combination of \emph{local} formulas in
$\FOMODplusParam{q}{k}{d}$, of basic local sentences for
$\FOMODplusParam{q}{k}{d}$, and of
local modulo-counting sentences for $\FOMODplusParam{q}{k}{d}$.

As an immediate consequence of \cref{thm:GNF_ngFOWplus}, we obtain the
following \emph{rank-preserving Gaifman normal form for
  $\FOMODplusParam{q}{k}{d}$}; it can be viewed as a
\enquote{rank-preserving} version of the Gaifman normal form for
$\FOMOD_M$ provided in \cite{KuskeSchweikardt_ICALP18}.

\begin{corollary}\label{cor:GNF_FOMODplus}
Let $q,k,d\in\NN$. Every $\phi\in\FOMODplusqkd$ is equivalent to a
formula $\phi'$ in Gaifman normal form for $\FOMODplusqkd$.
Furthermore, $\free(\phi')=\free(\phi)$,
the outer (inner) quantifier rank of $\phi'$
is at most $\qr(\phi)$ (resp.\ $\qr(\phi){-}1$),
the outer (inner) radius of $\phi'$
is at most $\locRadFO{q}{k}{d}$
(resp.\ $\locRadFO{q{-}1}{k{+}1}{d}$),
and the width (inner distance)
is at most $k{+}q$
(resp.\ $\locRadFO{q}{k}{d}+2\locRadFO{q{-}1}{k{+}1}{d}$).
Moreover,
such a $\phi'$ can be computed from $\phi,q,k,d$.
\end{corollary}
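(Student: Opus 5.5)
The plan is to derive \cref{cor:GNF_FOMODplus} directly from \cref{thm:GNF_ngFOWplus}, applied to the instance $\PP=\emptyset$, $\SC=\setc{\ZZ/m\ZZ}{m\in M}$, $\Weights=\setc{\weight_m}{m\in M}$ with unary $\weight_m$. All weight symbols are unary, so $\wmaxar=1$ and the radius function $\myr$ coincides with $\locRadFO{}{}{}$ from \eqref{eq:specialRadiusFunction}.

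First I check that the translation $\psi\mapsto\psi'$ of \cref{ex:FOMODplus}, which replaces $\exists^{i\text{\,mod\,}m}x_{k+1}\,\psi$ by $\big(i=\sum\weight_m(x_{k+1}).\psi'\big)$, is a bijection between $\FOMODplusqkd$ and $\Fqkd$ for this instance. This is an induction on $q$ comparing the clauses. Clauses (i)--(iv) match verbatim, since $k{+}\wmaxar=k{+}1$ and the radii agree. Clause (v) of $\Fqkd$ with $\ell=\ar(\weight_m)=1$ is exactly the image of the modulo-counting clause. At $q=0$ both sides are quantifier-free $\FOplus$ formulas: there are no predicates, and a weight atom $(s=\weight_m(x))$ is constant on $\hat{\A}$, so it is equivalent to $\true$ or $\false$. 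The translation also preserves $\free$, $\qr$, $\md{\cdot}$ and, on the structures $\hat{\A}$, semantics. Moreover, since $\hat{\A}$ has the same Gaifman graph as $\A$, $r$-locality is preserved in both directions. In the $\ngFOWplus\to\FOMOD$ direction we only need this on structures of the form $\hat{\A}$; for that, note $\Neighbr{\hat{\A}}{\ta}=\widehat{\Neighbr{\A}{\ta}}$.

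Next, given $\phi\in\FOMODplusqkd$, I apply \cref{thm:GNF_ngFOWplus} to $\phi'$. Strictly speaking this is the instance of the theorem with $\Weights$ unary, so every formula occurring in the result is a formula of this instance. Each constituent is then pulled back. Local formulas in $\Fqkd$ become local formulas in $\FOMODplusqkd$ with the same radius. Basic local sentences translate to basic local sentences with the same width, distance and inner radius, with $\lambda\in\FOMODplusParam{q{-}1}{k{+}1}{d}$.

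The one point needing care is the local aggregation sentences. Since every weight symbol is unary, these have the form $\big(s=\sum\weight_m(y).\lambda(y)\big)$, which is precisely the translation of the local modulo-counting sentence $\exists^{s\text{\,mod\,}m}y\,\lambda(y)$. Equivalence holds over all $\hat{\A}$, hence over all $\sigma$-structures $\A$. This is the only place where I expect to have to state anything beyond bookkeeping, namely the correspondence of locality under $\A\mapsto\hat{\A}$.

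Finally, the numerical bounds specialise by substituting $\wmaxar=1$. The width $k{+}1{+}(q{-}1)\wmaxar$ becomes $k{+}q$. The radii and inner distance become $\locRadFO{q}{k}{d}$, $\locRadFO{q{-}1}{k{+}1}{d}$, and $\locRadFO{q}{k}{d}+2\locRadFO{q{-}1}{k{+}1}{d}$. Computability follows because the translation in both directions is effective.
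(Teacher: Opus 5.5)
Your proposal is correct and follows the same route as the paper. The paper obtains \cref{cor:GNF_FOMODplus} as an immediate consequence of \cref{thm:GNF_ngFOWplus}, using the correspondence of \cref{ex:FOMODplus} with $\wmaxar=1$, and it reads local aggregation sentences $\big(s=\sum\weight_m(y).\lambda\big)$ back as local modulo-counting sentences exactly as you describe. One cosmetic point: pull back a weight atom $\big(s=\weight_m(x)\big)$ to $x{=}x$ or $\neg\,x{=}x$ (according to whether $s=1$) rather than to $\true$/$\false$, so that $\free(\phi')=\free(\phi)$ is preserved exactly.
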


\bibliography{main}

\appendix
\crefalias{section}{appendix}
\crefalias{subsection}{appendix}
\section*{APPENDIX}

\section{Details Omitted in Section~\ref{sec:preliminaries}}
\label{appendix:LocalityNotion}

\begin{lemma}\label{lemma:LocalityNotion}
  Let $r\in\NN$, let $\phi$ be a formula, and let $\bar x=(x_1,\ldots,x_k)$ be a
  non-empty tuple of variables such that $\free(\phi)\subseteq\set{x_1,\ldots,x_k}$.
  If $\phi(\bar x)$ is $r$-local, then $\phi(\bar y)$ is $r$-local for every non-empty tuple $\bar y=(y_1,\ldots,y_\ell)$ of variables such that $\free(\phi)\subseteq \set{y_1,\ldots,y_\ell}$.
\end{lemma}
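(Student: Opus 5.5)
The argument rests on the coincidence lemma: the truth of $\phi$ under an assignment depends only on the values of its free variables. It also uses the observation that the $r$-neighbourhood of a tuple is determined by the \emph{set} of its entries. Fix a structure $\A$ and a non-empty tuple $\bar y=(y_1,\ldots,y_\ell)$ with $\free(\phi)\subseteq\set{y_1,\ldots,y_\ell}$. Fix also an assignment $\bar b\in A^\ell$. We must show $\A\models\phi(\bar b)\iff\NrA{\bar b}\models\phi(\bar b)$.

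The idea is to choose an assignment $\bar a$ for $\bar x$ that agrees with $\bar b$ on the free variables and has the same neighbourhood. First assume $\free(\phi)\neq\emptyset$.
\begin{itemize}
\item For each $x_i\in\free(\phi)$, the variable $x_i$ equals some $y_j$; set $a_i\deff b_j$.
\item For $x_i\notin\free(\phi)$, choose $a_i$ among the values $b_j$ assigned to free variables.
\end{itemize}
This choice is possible because at least one free variable exists. Then $\set{a_1,\ldots,a_k}\subseteq\set{b_1,\ldots,b_\ell}$, but equality can fail, since $\bar b$ may assign values to non-free $y_j$ that do not appear in $\bar a$.

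To cope with this, I would first reduce to the minimal tuple. Let $\bar z$ list the free variables of $\phi$. I would show two things:
\begin{itemize}
\item[(a)] $r$-locality of $\phi(\bar x)$ implies $r$-locality of $\phi(\bar z)$;
\item[(b)] $r$-locality of $\phi(\bar z)$ implies $r$-locality of $\phi(\bar y)$ for every $\bar y$ containing the free variables.
\end{itemize}

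\textbf{Step (a).} The assignment $\bar a$ built from $\bar c$ (the values of $\bar z$) uses only values in $\bar c$, so $\nrA{\bar a}=\nrA{\bar c}$. By the coincidence lemma and the locality of $\phi(\bar x)$,
\[
\A\models\phi(\bar c)\iff\A\models\phi(\bar a)\iff\NrA{\bar a}\models\phi(\bar a)\iff\NrA{\bar c}\models\phi(\bar c).
\]

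\textbf{Step (b).} This is the main obstacle, because $\nrA{\bar b}$ may be strictly larger than $\nrA{\bar c}$. Set $\B\deff\NrA{\bar b}$ and use the following facts.
\begin{itemize}
\item $\B$ is an induced substructure of $\A$ containing $\nrA{\bar c}$.
\item Distances in $\B$ are at least those in $\A$.
\item Every shortest path in $\A$ of length at most $r$ from $\bar c$ stays inside $\nrA{\bar c}\subseteq B$.
\end{itemize}
Together these give $\neighbr{\B}{\bar c}=\nrA{\bar c}$ and $\Neighbr{\B}{\bar c}=\NrA{\bar c}$. For weighted structures the weight functions also restrict, so the induced substructures coincide as $(\sigma,\Weights)$-structures. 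Applying locality of $\phi(\bar z)$ once in $\A$ and once in $\B$ then gives
\[
\A\models\phi(\bar c)\iff\NrA{\bar c}\models\phi(\bar c)\iff\B\models\phi(\bar c).
\]
The coincidence lemma transfers this to $\bar b$.

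\textbf{The sentence case.} If $\free(\phi)=\emptyset$, the tuple $\bar z$ is empty and the argument above does not apply directly. Instead, from the $r$-locality of $\phi(\bar x)$, the truth of $\phi$ in $\A$ equals its truth in $\NrA{a}$ for every $a\in A$, taking $\bar a=(a,\ldots,a)$. Running the same neighbourhood-of-neighbourhood argument with a single element $b_1$ in place of $\bar c$ then handles $\bar y$.
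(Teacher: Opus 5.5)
Your proof is correct and uses the same two ingredients as the paper's: the coincidence lemma, and applying $r$-locality once in $\A$ and once in $\B=\NrA{\bar b}$, using that $\Neighbr{\B}{\bar a}=\NrA{\bar a}$ whenever all entries of $\bar a$ occur in $\bar b$. The paper does this in a single step. It pads the non-free positions of $\bar a$ with an arbitrary entry $c$ of $\bar b$, not necessarily the value of a free variable, so the inclusion of entries holds directly; since the neighbourhood-of-neighbourhood argument needs only that inclusion, your Step~(a) and the separate sentence case become unnecessary.
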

\begin{proof}
We assume that $\phi(\bar x)$ is $r$-local.

Let $\bar y=(y_1,\ldots,y_\ell)$ be a non-empty tuple of variables such that $\free(\phi)\subseteq\set{y_1,\ldots,y_\ell}$.
Consider an arbitrary structure $\A$ and a tuple $\bar b=(b_1,\ldots,b_\ell)\in A^\ell$ to which we assign the variables in $\bar y$. We have to show that
\begin{equation}\label{eq:LocalityNotion:goal}
 \A\models\phi(\bar b) \ \iff \ \NrA{\tb} \models \phi(\tb).
\end{equation}
Let us choose an element $c\in \set{b_1,\ldots,b_\ell}$ that will be fixed henceforth. Let $\bar b'$ be the tuple obtained from $\bar b$ by replacing $b_j$ with $c$ for every $j\in[\ell]$ with $y_j\not\in\free(\phi)$.

By the \emph{coincidence lemma},
the assignment of variables that are not free in a formula is irrelevant.
Hence, we have
\begin{equation}\label{eq:fuenf}
  \A\models\phi(\bar b) \ \iff \ \A\models\phi(\bar b'),
\end{equation}
and
\begin{equation}\label{eq:sechs}
\NrA{\tb} \models \phi(\tb) \ \iff \ \NrA{\tb} \models \phi(\tb').
\end{equation}
Now, let $\bar a=(a_1,\ldots,a_k)$ be the assignment
for $\bar x=(x_1,\ldots,x_k)$,
where, for every $i\in[k]$, we let
\begin{itemize}
\item $a_i\deff c$ if $x_i\not\in\free(\phi)$, and
\item $a_i \deff b_j$ if $x_i\in \free(\phi)$ and $j\in[\ell]$ such that $y_j=x_i$.
\end{itemize}
Due to the coincidence lemma,
for the evaluation of the formula $\phi$,
assigning $\bar a$ to $\bar x$ has the same effect
as assigning $\bar b'$ to $\bar y$.
In particular, we obtain that
\begin{equation}\label{eq:sieben}
  \A\models\phi(\bar b') \ \iff \ \A\models\phi(\bar a),
\end{equation}
and
\begin{equation}\label{eq:acht}
\NrA{\tb} \models \phi(\tb') \ \iff \ \NrA{\tb} \models \phi(\ta).
\end{equation}
According to the assumption of the lemma, $\phi(\bar x)$ is $r$-local. Thus, we obtain that
\begin{equation}\label{eq:neun}
\A\models\phi(\bar a) \ \iff \ \NrA{\ta}\models\phi(\bar a),
\end{equation}
and, using the $r$-locality also for the structure $\B\deff\NrA{\tb}$,
we obtain that
$
  \B\models\phi(\bar a) \iff
  \NrB{\ta}\models\phi(\bar a)$.
Since all entries of the tuple $\bar a$ also occur in the tuple $\bar b$,
it is straightforward to see that
\(\NrA{\ta} \subseteq \NrA{\tb} = \B\),
and thus,
$\NrB{\ta}=\NrA{\ta}$.
Hence, we have
\begin{equation}\label{eq:zehn}
  \NrA{\tb}\models\phi(\bar a) \ \iff \
  \NrA{\ta}\models\phi(\bar a).
\end{equation}
In summary, we obtain the following equivalences:
\[
\begin{array}{ccrcc}
  \A\models\phi(\bar b)
& \stackrel{\eqref{eq:fuenf},\eqref{eq:sieben}}{\iff}
& \A\models\phi(\bar a)
& \stackrel{\eqref{eq:neun}}{\iff}
& \NrA{\ta}\models\phi(\ta)
  \\
& \stackrel{\eqref{eq:zehn}}{\iff}
& \NrA{\tb}\models\phi(\ta)
& \stackrel{\eqref{eq:acht},\eqref{eq:sechs}}{\iff}
& \NrA{\tb}\models\phi(\tb).
\end{array}
\]
This proves that equivalence \eqref{eq:LocalityNotion:goal} holds.
Hence, the proof of \cref{lemma:LocalityNotion} is complete.
\end{proof}

\section{Details Omitted in Section~\ref{sec:FO-Gaifman}}
\label{appendix:fo-gaifman}

\GaifmanCounterExample*

\begin{claimproof}
Fix \(r \in \NN\) with \(r \geq 7\),
  and let \(\CN \deff \CN_r\).
  We assume that the vertex set of \(\CN\) is \(N \deff \set{1, \dots, r+9}\),
  where the vertices are numbered as indicated in \cref{fig:fo-gaifman};
  for general \(r\), the vertices on the path from node \(11\)
  to the leftmost orange node are \(11, \dots, r+9\).

  We call a pair \((w,w') \in N^2\) \emph{good} if Duplicator has a winning strategy
  for the \(1\)-round EF game on \((\CN,(v,w))\), \((\CN,(v',w'))\).
  We need to prove that for every \(w \in N\),
  there is a \(w' \in N\) such that \((w,w')\) is good,
  and for every \(w'\in N\), there is a \(w \in N\) such that \((w,w')\) is good.

  We give a list of good pairs witnessing this;
  for each pair on the list, it is straightforward to verify that it is good.
  \begin{itemize}
    \item \((x,x)\) for each \(x \geq 11\);
    \item \((1,5), (9,10), (2,4), (3,3)\)
      and the reversed pairs \((5,1), (10,9), (4,2)\);
    \item \((6,15), (7,16)\);
    \item \((2,6), (3,7), (8,8)\).
  \end{itemize}
  Note that indeed every \(x \in N\) occurs in the first and in the second component
  of a some pair in this list.
  This completes the proof of Claim~\ref{claim:GaifmanCounterExample}.
\end{claimproof}

Combining \cref{claim:GaifmanCounterExample}
with the Ehrenfeucht-\Fraisse\ Theorem (cf., \cite{Libkin-FMT,EF-FMT})
yields that for all \(\FO\) formulas \(\psi(x)\) of quantifier rank \(\leq 2\),
we have \(\CN_r \models \psi(v) \iff \CN_r \models \psi(v')\).
Thus, for all \(\FO\) formulas \(\psi(x)\)
of quantifier rank \(\leq 2\) that are \(r\)-local,
we have \(\CG_r \models \psi(v) \iff \CG_r \models \psi(v')\).

\begin{claim}
  The formula \(\phi(x)\) is not equivalent to
  a Boolean combination \(\phi'(x)\) of local formulas of quantifier rank at most \(2\)
  and of basic local sentences.
\end{claim}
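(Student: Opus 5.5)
Suppose, for contradiction, that \(\phi(x)\equiv\phi'(x)\), where \(\phi'(x)\) is a Boolean combination of local formulas \(\psi_1(x),\dots,\psi_m(x)\) of quantifier rank at most \(2\) and of basic local sentences \(\chi_1,\dots,\chi_n\). The plan is to find a single structure and two elements of it that agree on every building block of \(\phi'\) but disagree on \(\phi\). Each \(\psi_i\) is \(r_i\)-local for some \(r_i\in\NN\). Let \(r\deff\max\set{7,r_1,\dots,r_m}\); by the remark in the preliminaries, every \(\psi_i\) is then also \(r\)-local. We will work in the structure \(\CG_r\) from \cref{ex:fo-gaifman} with the two elements \(v=1\) and \(v'=5\).

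First, the local formulas agree on \(v\) and \(v'\). The consequence of \cref{claim:GaifmanCounterExample} derived just above says that every \(r\)-local \(\FO\) formula of quantifier rank \(\leq 2\) has the same truth value at \(v\) and at \(v'\) in \(\CG_r\). So \(\CG_r\models\psi_i(v)\iff\CG_r\models\psi_i(v')\) for all \(i\in[m]\).

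Second, the sentences \(\chi_j\) agree trivially. They have no free variables, so each \(\chi_j\) is either true or false in \(\CG_r\), regardless of whether \(x\) is interpreted by \(v\) or by \(v'\).

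Third, we conclude. Every building block of \(\phi'\) gets the same truth value under \(x\mapsto v\) as under \(x\mapsto v'\), and a Boolean combination is determined by these values. Hence \(\CG_r\models\phi'(v)\iff\CG_r\models\phi'(v')\). However, \cref{ex:fo-gaifman} shows \(\CG_r\models\phi(v)\) and \(\CG_r\not\models\phi(v')\). This contradicts \(\phi\equiv\phi'\).

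The only delicate point is the order of quantifiers. The radius \(r\) has to be fixed from \(\phi'\) before the structure is chosen, which is why the example supplies a whole family \(\CG_r\) for all \(r\geq 7\) rather than one structure. We also need that each \(\psi_i\) can be evaluated at \(v\) and \(v'\) inside the common neighbourhood \(\CN_r\). This follows from \(\neighbr{\CG_r}{v}=\neighbr{\CG_r}{v'}\), which is already part of the preceding consequence.
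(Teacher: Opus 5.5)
Your proposal is correct and follows the paper's proof essentially step for step. You choose \(r\geq 7\) large enough that every local building block of \(\phi'\) is \(r\)-local. You then apply the consequence of \cref{claim:GaifmanCounterExample} to get agreement of these blocks on \(v\) and \(v'\) in \(\CG_r\), observe that the sentences are unaffected, and derive the contradiction from \(\CG_r\models\phi(v)\) and \(\CG_r\not\models\phi(v')\).
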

\begin{claimproof}
  Suppose for contradiction it was.
  Then let \(L\) and \(S\) be the sets of local formulas
  and basic local sentences, respectively,
  such that \(\phi'(x)\) is a Boolean combination of the formulas in \(L \cup S\).
  Let \(r \in \NN\) with \(r \geq 7\) be such that each formula in \(L\) is \(r\)-local.
  Then for each \(\psi(x)\) in \(L\),
  we have \(\CG_r \models \psi(v)\iff\CG_r \models \psi(v')\).
  This implies that \(\CG_r \models \phi'(v)\iff\CG_r \models \phi'(v')\).
  However, this contradicts \(\phi'(x)\) being equivalent to \(\phi(x)\),
  because we know that \(\CG_r \models \phi(v)\)
  and \(\CG_r \not \models \phi(v')\).
\end{claimproof}
 
\section{Details Omitted in Section~\ref{sec:fow}}\label{appendix:fow}

\subsection{Detailed Definition of Syntax and Semantics of \texorpdfstring{$\FOWplus$}{FOW+}}

We use the same notion of weighted structures as \cite{vanBergeremSchweikardt_2021_FOWA}:
Let \(\sigma\) be a signature,
let \(\SC\) be a collection of rings and/or abelian groups,
and
let \(\Weights\) be a finite set of \emph{weight symbols},
where each $\weight\in\Weights$ comes with an
\emph{arity} \(\ar(\weight) \in \NN\)
and a \emph{type}, also called \emph{range},
\(\wtype(\weight)\in\SC\).
A \emph{\((\sigma, \Weights)\)-structure} is a \(\sigma\)-structure \(\A\)
that is enriched, for every \(\weight\in \Weights\), by an interpretation
\(\weight^\A \colon A^{\ar(\weight)} \to \wtype(\weight)\)
that satisfies the following \emph{locality condition}:
if \(\weight^\A(a_1, \dots, a_\ell) \neq \nullS\) for
\(S \deff \wtype(\weight)\),
\(\ell \deff \ar(\weight)\), and
\(a_1, \dots, a_\ell \in A\),
then $\ell = 0$ or \(a_1 = \cdots = a_\ell\)
or all of the elements \(a_1, \dots, a_\ell\) are contained in one tuple of a relation of \(\A\).
More formally, in the latter case, there exists
a relation symbol
\(R \in \sigma\)
and a
tuple
\((b_1, \dots, b_{\ar(R)}) \in R^{\A}\)
such that \(\set{a_1, \dots, a_\ell} \subseteq \set{b_1, \dots, b_{\ar(R)}}\).
Hence, if \(\weight^\A(a_1, \dots, a_\ell) \neq \nullS\),
then the elements in \(\set{a_1, \dots, a_\ell}\)
form a clique in the Gaifman graph of \(\A\).
All notions that were introduced for \(\sigma\)-structures
carry over to \((\sigma, \Weights)\)-structures in the obvious way.

We fix a countably infinite set \(\vars\) of variables that can be used
to build formulas.
The \emph{atomic formulas} of \(\FOWplus\) of signature \(\sigma\)
and with weights according to \(\SC\) and \(\Weights\)
are

\begin{itemize}
\item
the same \emph{relational atoms}
(rule~\eqref{def:fo-atomic}, including equalities, \(\true\), and \(\false\))
as in \(\FO\) of signature \(\sigma\),
\end{itemize}

\noindent
and, additionally,

\begin{itemize}
\item
  \emph{distance atoms} of the form \(\distAtom{x}{y}{d}\) for
  \(x, y \in \vars\) and \(d \in \NN\) (rule~\eqref{def:fowplus-dist});
  \smallskip

\item
\emph{weight atoms} of the form \(\bigl(s = \weight(y_1, \dots, y_\ell)\bigr)\),
where \(\weight\) is an \(\ell\)-ary weight symbol in \(\Weights\),
\(s \in \wtype(\weight)\),
and \(y_1, \dots, y_\ell \in \vars\) are pairwise distinct variables
(rule~\eqref{def:fow-wsimple});
and
\smallskip

\item
  \emph{term-comparison atoms} (rule~\eqref{def:fow-P}) that allow comparisons
  of values of \emph{\(\SC\)-terms}.
\end{itemize}

\noindent
The latter are defined as follows.

For every \(S \in \SC\), every \(s \in S\) is an \(\SC\)-term of type \(S\);
for every weight symbol \(\weight \in \Weights\) of type \(S\) and arity \(\ell\)
and all pairwise distinct \(y_1, \dots, y_\ell \in \vars\),
\(\weight(y_1, \dots, y_\ell)\) is an \(\SC\)-term of type \(S\);
and \(\SC\)-terms \(t_1\) and \(t_2\) of the same type \(S\)
can be combined into an \(\SC\)-term \((t_1 * t_2)\)
of type \(S\),
where \(*\) is a binary operation available for \(S\)
(i.e.\ \(+,-\) if \(S\) is a group, and \(+,-,\cdot\) if \(S\) is a ring).
We write \(\free(t)\) for the set of \emph{free variables}
of an \(\SC\)-term \(t\),
i.e.\ the set of variables occurring in \(t\),
and we write \(t(x_1, \dots,x_k)\) to stipulate that
\(\free(t) \subseteq \set{x_1, \dots, x_k}\).
Consider such a term \(t(x_1, \dots, x_k)\) of type \(S\).
For a \((\sigma,\Weights)\)-structure \(\A\) and elements \(a_1, \dots, a_k \in A\),
we write \(t^\A(a_1, \dots, a_k)\) to denote the element in \(S\)
to which \(t\) evaluates when replacing each \(x_i\) with \(a_i\),
interpreting the weight symbols \(\weight\)
with the weight functions \(\weight^\A\) of \(\A\),
and using the operations present in \(S\) to calculate the resulting element.

To equip the logic with a kind of atomic formulas that allow to
compare values of such terms,
we use the following notion of
\cite{vanBergeremSchweikardt_2021_FOWA}.
An \emph{\(\SC\)-predicate collection} is a 4-tuple \((\PP, \ar, \ptype, \sem{\cdot})\),
where \(\PP\) is a countable set of \emph{predicate names} and, to each \(\Pred \in \PP\),
\(\ar\) assigns an \emph{arity} \(\ar(\Pred) \in \NNpos\),
\(\ptype\) assigns a \emph{type} \(\ptype(\Pred) \in \SC^{\ar(\Pred)}\),
and \(\sem{\cdot}\) assigns a \emph{semantics} \(\sem{\Pred} \subseteq \ptype(\Pred)\).
Given such an \(\SC\)-predicate collection \((\PP, \ar, \ptype, \sem{\cdot})\),
the \emph{term-comparison atoms} of \(\FOWplus\) (rule~\eqref{def:fow-P})
are of the form \(\Pred(t_1, \dots, t_m)\),
where \(\Pred \in \PP\),
\(m = \ar(\Pred)\),
and \(t_1, \dots, t_m\) are \(\SC\)-terms such that
\(\ptype(\Pred) = \big(\wtype(t_1), \dots, \wtype(t_m)\big)\).
The set of \emph{free variables} of such an atom
is the union of the free variables of \(t_1, \dots, t_m\).
Given a \((\sigma,\Weights)\)-structure \(\A\)
and an assignment \(\ta = (a_1, \dots, a_k) \in A^k\)
to the free variables \(x_1, \dots, x_k\) of \(t_1, \dots, t_m\),
the formula \(\Pred(t_1, \dots, t_m)\) \emph{is satisfied}
by \((\A, a_1, \dots, a_k)\) if and only if \((s_1, \dots, s_m) \in \sem{\Pred}\),
where for each \(i \in [m]\),
\(s_i\) is the element in \(S_i = \wtype(t_i)\) to which \(t_i\) evaluates in \(\A\)
with variable assignment \(\ta\).
(i.e.\ \(s_i = t_i^\A(\ta)\)).

We have now presented all rules
(\eqref{def:fo-atomic},
\eqref{def:fowplus-dist},
\eqref{def:fow-wsimple},
and
\eqref{def:fow-P})
for building \emph{atomic formulas} of \(\FOWplus\).
More complex formulas can be built using the same rules
for \emph{Boolean combinations} (rule~\eqref{def:fo-bool})
and \emph{existential quantification} (rule~\eqref{def:fo-exists})
as in $\FO$.
Finally,
we have also available the following rule
\eqref{def:fow-finitegroup} called \emph{aggregation quantification}:
if \(\phi\) is a formula,
      \(\weight \in \Weights\),
      \(S = \wtype(\weight)\) is \emph{finite},
      \(s \in S\),
      \(\ell = \ar(\weight)\),
      and \(\ty = (y_1, \dots, y_\ell)\) is a tuple of
      \(\ell\) pairwise distinct variables,
      then \(\bigl( s = \sum \weight(\ty).\phi \bigr)\) is a formula
      (rule \eqref{def:fow-finitegroup}).
The set of free variables of such a formula is
$\free(\phi)\setminus\set{y_1,\ldots,y_\ell}$.
Let $\bar x=(x_1,\ldots,x_k)$ be a list of the free variables of the
formula $\psi\deff \bigl( s = \sum \weight(\ty).\phi \bigr)$, and
let $S\deff\wtype(\weight)$. For
a $(\sigma,\Weights)$-structure $\A$ and a tuple $\bar a \in A^k$, we have
$\A\models\psi(\bar a)$ if and only if the sum (using the addition
$+_S$ associated with $S$) of the weights $\weight^{\A}(\bar b)$ over
all tuples $\bar b$ in the set $\setc{\bar b\in
  A^\ell}{\A\models\phi(\bar a,\bar b)}$ evaluates to~$s$.
This completes the definition of the logic $\FOWplus$.

The logic $\FOW$ is defined as the restriction of $\FOWplus$ where
distance atoms (rule \eqref{def:fowplus-dist}) are not available. The logic $\FOWun$ of
\cite{vanBergeremSchweikardt_2021_FOWA} is the restriction of $\FOW$
where the \emph{term-comparison} rule \eqref{def:fow-P} can only be used
if $\bigcup_{i \in [m]}\free(t_i)$ contains at most one variable.

\bigskip

The following \cref{def:fow} provides a more formal, mathematically
precise definition of the syntax and semantics of $\FOWplus$.
A \emph{\((\sigma, \Weights)\)-interpretation} \(\I = (\A, \beta)\)
consists of a \((\sigma, \Weights)\)-structure \(\A\)
and an \emph{assignment} \(\beta \colon \vars \to A\).

\begin{definition}[\(\FOWplus\)]
  \label[definition]{def:fow}
  Let \(\sigma\) be a signature, \(\SC\) a collection of rings and/or abelian groups,
  \(\Weights\) a finite set of weight symbols, and
  \((\PP, \ar, \ptype, \sem{\cdot})\) an \(\SC\)-predicate
  collection.
  The set of \emph{formulas} and \emph{\(\SC\)-terms} for \(\FOWplusPS\)
  is built according to the following rules.

  \begin{enumerate}[(WA)]
    \renewcommand{\labelenumi}{\textbf{(\theenumi)}}
    \renewcommand{\theenumi}{DA}
    \item\label{def:fowplus-dist}
      \(\dist(x_1,x_2)\,{\leq}\, d\) are formulas,
      for \(x_1, x_2 \in \vars\) and \(d \in \NN\).
    \renewcommand{\theenumi}{RA}
    \item\label{def:fo-atomic}\label{def:fowplus-truefalse}
      \(\true\) and \(\false\) and
      \(x_1{=}x_2\) and \(R(x_1, \dots, x_k)\) are formulas for
      \(R \in \sigma\), \(k \deff \ar(R)\),
      and \(x_1, x_2, \dots, x_k \in \vars\).
      In particular, if \(\ar(R) = 0\), then \(R()\) is a formula.
    \renewcommand{\theenumi}{WA}
    \item\label{def:fow-wsimple}
      \(\bigl(s = \weight(\tx)\bigr)\) is a formula, for
      \(\weight \in \Weights\),
      \(S = \wtype(\weight)\), \(s \in S\), \(\ell = \ar(\weight)\),
      and \(\tx = (x_1, \dots, x_\ell)\)
      a tuple of \(\ell\) pairwise distinct variables.
    \renewcommand{\theenumi}{BC}
    \item\label{def:fo-bool}
      If \(\phi\) and \(\psi\) are formulas,
      then \(\neg \phi\) and \((\phi \lor \psi)\) are formulas.
    \renewcommand{\theenumi}{\(\exists\)}
    \item\label{def:fo-exists}
      If \(\phi\) is a formula and \(x \in \vars\),
      then \(\exists x\, \phi\) is a formula.
    \renewcommand{\theenumi}{\(\Sigma\)}
    \item\label{def:fow-finitegroup}
      If \(\phi\) is a formula,
      \(\weight \in \Weights\),
      \(S = \wtype(\weight)\) is \textbf{finite},
      \(s \in S\),
      \(\ell = \ar(\weight)\),
      and \(\tx = (x_1, \dots, x_\ell)\) is a tuple of
      \(\ell\) pairwise distinct variables,
      then \(\bigl( s = \sum \weight(\tx).\phi \bigr)\) is a formula.
    \renewcommand{\theenumi}{TA}
    \item\label{def:fow-P}
      \(\Pred(t_1, \dots, t_m)\) is a formula for
      \(\Pred \in \PP\),
      \(m = \ar(\Pred)\),
      and \(\SC\)-terms \(t_1, \dots, t_m\)
      with \(\ptype(\Pred) = \bigl(\wtype(t_1), \dots, \wtype(t_m)\bigr)\).
    \renewcommand{\theenumi}{T1}
    \item\label{def:fow-constterm}
      \(s\) is an \(\SC\)-term of type \(S\), for every \(S \in \SC\) and every \(s \in S\).
    \renewcommand{\theenumi}{T2}
    \item\label{def:fow-wsimpleterm}
      \(\weight(x_1, \dots, x_\ell)\) is an \(\SC\)-term of type \(S\),
      for every \(S \in \SC\), every \(\weight \in \Weights\) of type \(S\),
      and every tuple \((x_1, \dots, x_\ell)\)
      of \(\ell \deff \ar(\weight)\) pairwise distinct variables in \(\vars\).
    \renewcommand{\theenumi}{T3}
    \item\label{def:fow-plustimesterm}
      If \(t_1\) and \(t_2\) are \(\SC\)-terms of the same type \(S\),
      then \((t_1 \plus t_2)\) and \((t_1 \minus t_2)\)
      are also \(\SC\)-terms of type \(S\);
      furthermore, if \(S\) is a ring (and not just an abelian group),
      then also \((t_1 \mal t_2)\) is an \(\SC\)-term of type \(S\).
  \end{enumerate}

  \noindent
  Let \(\I = (\A, \beta)\) be a \((\sigma, \Weights)\)-interpretation.
  For a formula or \(\SC\)-term \(\xi\) from \(\FOWplusPS\),
  the semantics \(\sem{\xi}^\I\) is defined as follows.

  \begin{enumerate}[(WA)]
    \renewcommand{\labelenumi}{\textbf{(\theenumi)}}
    \renewcommand{\theenumi}{DA}
    \item
      $\sem{\dist(x_1,x_2)\,{\leq}\,d}^{\I}=1$ if
      $\dist^{\A}(\beta(x_1),\beta(x_2))\leq d$, and
      $\sem{\dist(x_1,x_2)\,{\leq}\,d}^{\I}=0$ otherwise.
    \renewcommand{\theenumi}{RA}
    \item
      $\sem{\true}^{\I}=1$; $\sem{\false}^{\I}= 0$;
      \(\sem{x_1{=}x_2}^\I = 1\) if \(\beta(x_1) = \beta(x_2)\),
      and \(\sem{x_1{=}x_2}^\I = 0\) otherwise;
      \(\sem{R(x_1, \dots, x_k)}^\I = 1\) if
      \(\bigl(\beta(x_1), \dots, \beta(x_k)\bigr) \in R^{\A}\), and
      \(\sem{R(x_1, \dots, x_k)}^\I = 0\) otherwise.
    \renewcommand{\theenumi}{WA}
    \item
      \(\sem{\bigl(s = \weight(\tx)\bigr)}^{\I} = 1\)
      if \(s= \weight^\A \bigl(\beta(x_1), \dots, \beta(x_\ell)\bigr)\), and
      \(\sem{\bigl(s = \weight(\tx)\bigr)}^\I = 0\) otherwise.
    \renewcommand{\theenumi}{BC}
    \item
      \(\sem{\neg \phi}^\I = 1 - \sem{\phi}^\I\)
      \ and \ \(\sem{(\phi \lor \psi)} = \max \bigset{\sem{\phi}^\I, \sem{\psi}^\I}\).
    \renewcommand{\theenumi}{\(\exists\)}
    \item
      \(\sem{\exists x\, \phi}^\I = \max \bigsetc{\sem{\phi}^{\I\frac{v}{x}}}{v \in A}\).
    \renewcommand{\theenumi}{\(\Sigma\)}
    \item
      \(\sem{\bigl(s=\sum \weight(\tx).\phi\bigr)}^\I = 1\)
      if \(s = \underset{\tv \in M}{\sum_S} \weight^\A(\tv)\)
      for \(M \deff \bigsetc{(v_1, \dots, v_\ell) \in A^\ell}
      {\sem{\phi}^{\I \frac{v_1, \dots, v_\ell}{x_1, \dots, x_\ell}}=1}\),
      and \(\sem{\bigl(s=\sum \weight(\tx).\phi\bigr)}^\I = 0\)
      otherwise.
      By convention, \(\underset{\tv \in M}{\sum_S} \weight^\A(\tv) = \nullS\) if \(M = \emptyset\).
    \renewcommand{\theenumi}{TA}
    \item
      \(\sem{\Pred(t_1, \dots, t_m)}^\I = 1\) if
      \(\bigl(\sem{t_1}^\I, \dots, \sem{t_m}^\I\bigr) \in \sem{\Pred}\),
      and \(\sem{\Pred(t_1,\ldots,t_m)}^\I = 0\) otherwise.
    \renewcommand{\theenumi}{T1}
    \item
      \(\sem{s}^\I = s\) for \(s \in S\) for some \(S \in \SC\).
    \renewcommand{\theenumi}{T2}
    \item
      \(\sem{\weight(x_1, \dots, x_\ell)}^\I =
      \weight^\A\bigl(\beta(x_1), \dots, \beta(x_\ell)\bigr)\).
    \renewcommand{\theenumi}{T3}
    \item \(\sem{(t_1 \ast t_2)}^\I = \sem{t_1}^\I \ast_S \sem{t_2}^\I\),
      for \(\ast \in \set{\plus, \minus, \mal}\).
  \end{enumerate}
\end{definition}

We write $\dist(x_1,x_2)\,{>}\,d$ as a shorthand for the formula $\nicht\,\dist(x_1,x_2)\,{\leq}\,d$.
An \emph{expression} is a formula or an \(\SC\)-term.
The set \(\varsof(\xi)\) of an expression \(\xi\) is defined as the set
of all variables in \(\vars\) that occur in \(\xi\).
The \emph{free variables} \(\free(\xi)\) of \(\xi\) are inductively defined as follows.

\begin{enumerate}[(WA)]
  \renewcommand{\labelenumi}{\textbf{(\theenumi)}}
  \renewcommand{\theenumi}{DA}
  \item \(\free(\dist(x_1,x_2)\,{\leq}\,d) = \set{x_1,x_2}\).
  \renewcommand{\theenumi}{RA}
  \item \(\free(\true)=\free(\false)=\emptyset\);
    \(\free(x_1{=}x_2) = \set{x_1, x_2}\);
    and
    \(\free\bigl(R(x_1, \dots, x_k)\bigr) = \set{x_1, \dots, x_k}\).
  \renewcommand{\theenumi}{WA}
  \item \(\free\Big(\bigl(s = \weight(x_1, \dots, x_\ell)\bigr)\Big) =
    \set{x_1, \dots, x_\ell}\).
  \renewcommand{\theenumi}{BC}
  \item \(\free(\neg \phi) = \free(\phi)\) and
    \(\free(\phi \lor \psi) = \free(\phi) \cup \free(\psi)\).
  \renewcommand{\theenumi}{\(\exists\)}
  \item \(\free(\exists x\, \phi) = \free(\phi) \setminus \set{x}\).
  \renewcommand{\theenumi}{\(\Sigma\)}
  \item \(\free\Big(\bigl(s = \sum\weight(x_1, \dots, x_\ell).\phi\bigr)\Big) =
    \free(\phi) \setminus \set{x_1, \dots, x_\ell}\),
  \renewcommand{\theenumi}{TA}
  \item \(\free\bigl(\Pred(t_1, \dots, t_m)\bigr) = \bigcup_{i=1}^m \free(t_i)\).
  \renewcommand{\theenumi}{T1}
  \item \(\free(s) = \emptyset\) for \(s \in S\) for some \(S \in \SC\).
  \renewcommand{\theenumi}{T2}
  \item \(\free\bigl(\weight(x_1, \dots, x_\ell)\bigr) = \set{x_1, \dots, x_\ell}\).
  \renewcommand{\theenumi}{T3}
  \item \(\free\bigl((t_1 \ast t_2)\bigr) = \free(t_1) \cup \free(t_2)\)
    for \(\ast \in \set{\plus, \minus, \mal}\).
\end{enumerate}

We write \(\xi(x_1, \dots, x_k)\) to indicate that
\(\free(\xi) \subseteq \set{x_1, \dots, x_k}\).
A \emph{sentence} is a formula without free variables.

The notation
$\xi(x_1,\ldots,x_k)$ is also convenient for substitutions; by
$\xi(y_1,\ldots,y_k)$, we denote the expression obtained from $\xi$ by
substituting $x_i$ with $y_i$, for all $i$ (and renaming bound
variables if necessary, cf.~\cite[Section~III.8]{EbbinghausFT21}).

For a formula \(\phi\) and a \((\sigma, \Weights)\)-interpretation \(\I\),
we write \(\I \models \phi\) to indicate that \(\sem{\phi}^\I = 1\).
Likewise, \(\I \not\models \phi\) indicates that
\(\sem{\phi}^\I = 0\).
For a formula \(\phi\) with \(\free(\phi) \subseteq \set{x_1, \dots, x_k}\),
a \((\sigma, \Weights)\)-structure \(\A\),
and a tuple \(\tv = (v_1, \dots, v_k) \in A^k\),
we write \(\A \models \eval{\phi}{\tv}\) or \((\A, \tv) \models \phi\)
to indicate that \((\A, \beta) \models \phi\) for one (and hence every) assignment \(\beta\)
with \(\beta(x_i) = v_i\) for all \(i \in [k]\).
Furthermore, we set \(\sem{\phi(\tv)}^\A \deff 1\) if \(\A \models \eval{\phi}{\tv}\),
and \(\sem{\phi(\tv)}^\A \deff 0\) otherwise.
Similarly, for an \(\SC\)-term \(t(\tx)\),
we write \(t^{\A}(\tv)\) to denote \(\sem{t}^\I\).

The \emph{quantifier rank} \(\qr(\xi)\) of an \(\FOWplusPS\) expression \(\xi\)
is defined as the maximum nesting depth of constructs using
rules~\eqref{def:fo-exists} and~\eqref{def:fow-finitegroup} in order to construct \(\xi\).

$\FOWPS$ is defined as the restriction of $\FOWplusPS$ where rule
\eqref{def:fowplus-dist} is not available.

Note that first-order logic \(\FO[\sigma]\)
is the restriction of \(\FOWplusPS\)
where only rules~\eqref{def:fo-atomic}, \eqref{def:fo-bool},
and \eqref{def:fo-exists} can be applied.
As usual, we write \((\phi \land \psi)\) and \(\forall x\,\phi\) as shorthands
for \(\neg(\neg\phi \lor \neg\psi)\) and \(\neg \exists x\,\neg\phi\).

Two $\FOWplusPS$ formulas $\phi$ and $\psi$ are \emph{equivalent} (for
short: $\phi\equiv\psi$) if $\sem{\phi}^{\I}=\sem{\psi}^{\I}$ holds
for all $(\sigma,\Weights)$-interpretations $\I$.

Let \(r \in \NN\).
For an \(\FOWplusPS\) formula \(\phi\) and a non-empty tuple \(\tx = (x_1,
\dots, x_k)\) of variables with
$\free(\phi)\subseteq\set{x_1,\ldots,x_k}$, we say that $\phi(\tx)$
is \emph{\(r\)-local} (around \(\tx\))
if for every \((\sigma,\Weights)\)-structure \(\A\) and all \(\ta \in A^k\)
we have that \(\A \models \phi(\ta)\) $\iff$ \(\NrA{\ta}
\models \phi(\ta)\);
note that this implies $r'$-locality for all $r'\geq r$.
A formula $\phi$ is \emph{$r$-local} if $\phi(\bar x)$ is $r$-local for some
non-empty  $\tx =(x_1,\ldots, x_k)$ with $\free(\phi)\subseteq
\set{x_1,\ldots,x_k}$; and
it is \emph{local} if it is \(r\)-local for some \(r \in \NN\).
In particular, for a relation symbol $R$ of
arity 0, the sentence $R()$ is 0-local.

To avoid notational clutter,
when $\sigma$, $\SC$, $\Weights$, $\PP$ are clear from the context, we
simply write $\FOWplus$ instead of $\FOWplusPS$, and we use similar
conventions for other logics.

The logic \(\ngFOWplus\) is obtained from \(\FOWplus\)
by replacing rule~\eqref{def:fow-P}
with its \emph{neighbourhood-guarded} version
\begin{enumerate}[(ngTA)]
  \renewcommand{\labelenumi}{\textbf{(\theenumi)}}
  \renewcommand{\theenumi}{ngTA}
  \item\label{def:ngfow-P}
    \(\Big(
       \Pred(t_1, \dots, t_m)
       \land
       \Land_{y \in \bigcup_{i=1}^{m} \free(t_i) \setminus \set{x}} \dist(x,y) \leq d
    \Bigr)\)
    is a formula
    for \(\Pred \in \PP\),
    \(m = \ar(\Pred)\),
    \(x \in \vars\),
    \(d \in \NN\),
    and
    \(\SC\)-terms \(t_1, \dots, t_m\)
    with \(\ptype(\Pred) = \bigl(\wtype(t_1), \dots, \wtype(t_m)\bigr)\).
  \end{enumerate}

  By $\ngFOW$,
  we denote the variant of $\ngFOWplus$  where distance atoms are not
  available but are replaced by equivalent $\FO$ formulas
  $\delta^{\sigma}_{\leq d}(x,y)$ (cf.\ the beginning of \cref{sec:FO-Gaifman}).

\subsection{Details Omitted in Example~\ref{example:noGNFforFOW}}

\begin{claim}
  Let $\phi(x)$ be the formula of Example~\ref{example:noGNFforFOW}.
  There exists no formula that is equivalent to
  $\phi(x)$ and that is a Boolean combination of sentences and
  local formulas of $\FOWplus$.
\end{claim}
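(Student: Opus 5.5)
We argue by contradiction, in the same way as for \cref{ex:fo-gaifman}: we build two interpretations that every Boolean combination of sentences and local formulas of a fixed radius cannot tell apart, but on which $\phi(x)$ takes different values. Suppose $\phi'(x)$ is equivalent to $\phi(x)$ and is a Boolean combination of sentences $\chi_1,\ldots,\chi_m$ and local formulas $\lambda_1(x),\ldots,\lambda_n(x)$. Fix $r$ such that every $\lambda_i(x)$ is $r$-local. By the coincidence lemma and \cref{lemma:LocalityNotion}, we may regard each $\lambda_i$ as a formula in the single free variable $x$, $r$-local around $x$.

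Since $\sigma=\emptyset$, the locality condition forces every structure to assign weight $\nullS$ to every unary tuple except where $a_1=\cdots=a_\ell$ holds. For unary $\weight$ this condition always holds, so arbitrary weights are allowed. The Gaifman graph has no edges, so $\neighbr{\A}{a}=\set{a}$ and $\NrA{a}$ is the one-element structure on $a$ with its weight. Take distinct elements $s_1,s_2,s_3\in S$; this is possible because $S$ is infinite. Let $\A$ have universe $\set{1,2,3}$ with weights $s_1,s_1,s_2$. Then $\A\models\phi(1)$ and $\A\not\models\phi(3)$. The neighbourhoods of $1$ and $3$ are one-element structures with different weights, so the local formulas might distinguish them. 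We therefore compare two elements of the \emph{same} structure that carry the same weight.

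Let $\A$ have universe $\set{1,2,3}$ with weights $\weight(1)=\weight(2)=s_1$ and $\weight(3)=s_2$, and compare $a=1$ with $a'=3$. This again fails, because the neighbourhoods have different weights. The fix is to use \emph{two} structures. Let $\A$ have universe $\set{1,2}$ with weights $s_1,s_1$, and let $\B$ have universe $\set{1,2}$ with weights $s_1,s_2$. Evaluate both at the element $1$. Then $\NrA{1}\cong\NrB{1}$, so every $\lambda_i$ takes the same value on $(\A,1)$ and $(\B,1)$, while $\A\models\phi(1)$ and $\B\not\models\phi(1)$. The sentences, however, may distinguish $\A$ from $\B$. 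To neutralise them, choose $\A$ and $\B$ to be isomorphic as structures but to differ in the chosen element. Take the universe $\set{1,2,3}$ with weights $s_1,s_1,s_2$ and compare the elements $1$ and $3$ in this single structure. Now the sentences agree trivially, but the local parts differ, since $s_1\neq s_2$. There is a tension here: the sentences pull towards using a single structure and the local formulas pull towards using a single neighbourhood type.

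This tension is resolved by an automorphism of $S$ in the sense of the logic. The logic can only refer to group elements that occur as constants in $\phi'$, together with $\Pred_=$. So pick $s_1,s_2\in S$ that are distinct and that do not occur in $\phi'$ and are not equal to any element of the finite subgroup-generated set reachable from those constants. This is possible because $S$ is infinite and a finitely generated subgroup $G$ of $S$ is countable. If $G$ is infinite we still need $s_1,s_2$ outside $G$ with $s_2-s_1\notin G$; this part needs care, and we can require it by choosing $s_1\notin G$ and $s_2\notin G\cup(s_1+G)\cup(-s_1+G)$, which is possible if $S\neq$ a finite union of cosets, and that holds since $S$ is infinite and $G$ has infinite index or else we choose differently. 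The main obstacle is proving the \emph{invariance lemma}: every formula with constants only in $G$ has the same truth value in structures whose weights outside $G$ differ by a bijection preserving the relevant linear relations modulo $G$. We prove this by induction on formulas, reducing it to the statement that terms evaluate to expressions $g+\sum n_i w_i$ and that $\Pred_=$ compares such expressions consistently.

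Given the invariance lemma, take $\A$ with universe $\set{1,2,3}$ and weights $s_1,s_1,s_2$. Each local formula $\lambda_i(x)$ evaluated at $1$ and at $3$ reduces to the one-element structures with weights $s_1$ and $s_2$, which are indistinguishable by the lemma. The sentences do not depend on $x$. Hence $\A\models\phi'(1)\iff\A\models\phi'(3)$, whereas $\A\models\phi(1)$ and $\A\not\models\phi(3)$. This is a contradiction.
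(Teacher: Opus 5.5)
\textbf{There is a genuine gap.} Your final construction is sound: one structure with universe $\{1,2,3\}$ and weights $s_1,s_1,s_2$, comparing the elements $1$ and $3$, so the sentences cannot separate them. What is missing is the step that neutralises the local formulas, namely that the one-element structures with weights $s_1$ and $s_2$ satisfy the same $\lambda_i$. Your invariance lemma is only sketched, and your recipe for choosing $s_1,s_2$ does not deliver its hypothesis.
\begin{itemize}
\item If the constants of $\phi'$ generate a subgroup $G$ of finite index, there is no $s_1\notin G$ at all. For example, take $S=\ZZ$ with the constant $1$ occurring, so $G=S$. Your phrase ``or else we choose differently'' leaves exactly this case open.
\item Even when $s_1\notin G$ and $s_2\notin G\cup(\pm s_1+G)$, the two weights can still be separated. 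Take $S=\QQ/\ZZ$ and no constants, so $G=\{0\}$, with $s_1=1/2$ and $s_2=1/3$. The local atom $\Pred_{=}\bigl((\weight(x)+\weight(x)),(\weight(x)-\weight(x))\bigr)$ holds for weight $s_1$ but not for $s_2$.
\item The algebraic invariance argument also presupposes $\PP=\{\Pred_=\}$, whereas the example only says that $\PP$ contains $\Pred_=$. With, say, $\Pred_<$ on $\ZZ$ the lemma is false, e.g.\ for $s_1=5$ and $s_2=-7$.
\end{itemize}

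No algebra is needed to close the gap. For $w\in S$, let $\B_w$ be the one-element structure whose element has weight $w$; this is a legal weighted structure because $\weight$ is unary. The map sending $w$ to the truth values of $\lambda_1,\ldots,\lambda_n$ at the unique element of $\B_w$ goes from the infinite set $S$ into the finite set $\{0,1\}^n$. Hence there are $s_1\neq s_2$ with the same truth values, for any $\PP$. With these weights, your last paragraph becomes a complete proof.

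This repaired argument is the mirror image of the paper's. The paper fixes the local side and pigeonholes over the finitely many sentences instead:
\begin{itemize}
\item It uses $n=2^{|\mathcal{S}|}+1$ structures $\A_{n,i}$ with universe $[n{+}1]$ and weights $\pi(1),\ldots,\pi(n),\pi(i)$, where $\pi\colon\NNpos\to S$ is injective.
\item It finds $i\neq i'$ such that $\A_{n,i}$ and $\A_{n,i'}$ agree on all sentences.
\item It then evaluates both structures at the element $i$, whose neighbourhood is the same one-element structure with weight $\pi(i)$ in both.
\end{itemize}
Your repaired version uses a single structure and pigeonholes on the weights. The paper's version needs a family of structures but no choice of weights beyond injectivity.
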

\begin{claimproof}
  Towards a contradiction, suppose there is an \(\FOWplus\) formula \(\psi(x)\)
  that is a Boolean combination of sentences and local formulas
  such that \(\psi(x)\) is equivalent to \(\phi(x)\).

  Since \(S\) is infinite, there is an injective mapping \(\pi \colon \NNpos \to S\).
  For \(n \in \NNpos\) and \(i \in [n]\),
  we define the \((\sigma, \Weights)\)-structure \(\A_{n,i}\)
  with universe \(A_{n,i} \deff [n{+}1]\),
weights \(\weight^{\A_{n,i}}(j) \deff \pi(j)\) for all \(j \in [n]\),
  and \(\weight^{\A_{n,i}}(n{+}1) \deff \pi(i)\).

  Let $\mathcal{S}$ and $\mathcal{L}$ be the finite sets of sentences
  and local formulas, resp., such that $\psi(x)$ is a Boolean
  combination of the
  formulas in $\mathcal{S}\cup\mathcal{L}$.
  Set \(n \deff 2^{\abs{\mathcal{S}}} + 1\).
  By the pigeonhole principle,
  there are \(i, i' \in [n]\) with \(i \neq i'\)
  such that \(\A_{n,i}\) and \(\A_{n,i'}\) satisfy exactly the same
  sentences in $\mathcal{S}$.
  Let \(\psi'(x)\) be the formula obtained from \(\psi(x)\)
  by replacing all sentences in \(\psi\) by \(\true\) if they are satisfied by \(\A_{n,i}\)
  (and hence also by \(\A_{n,i'}\)), and
  replacing all other sentences in \(\psi\) by \(\false\).
Then \(\psi'(x)\) is a local formula in $\FOWplus$.
  Hence, $\psi'(x)$ is $r$-local for some $r\in \NN$.
  This implies that
  for all \(j \in [n{+}1]\),
  we have
\[
  \A_{n,i} \models \phi(j)
  ~ \iff ~
  \A_{n,i} \models \psi'(j)
  ~ \iff ~
  \Neighbr{\A_{n,i}}{j} \models \psi'(j),
  \]
  and the same holds for \(i'\) instead of \(i\).
  By the definition of \(\A_{n,i}\) and \(\A_{n,i'}\),  their Gaifman graphs contain no edge(s), and
  \(\Neighbr{\A_{n,i}}{i} = \A_{n,i}[\set{i}] \cong \A_{n,i'}[\set{i}] = \Neighbr{\A_{n,i'}}{i}\).
  Thus,
  \[
    \Neighbr{\A_{n,i}}{i} \models \psi'(i)
    ~ \iff ~
    \Neighbr{\A_{n,i'}}{i} \models \psi'(i),
   \]
  which implies that \(\A_{n,i} \models \phi(i)\) $\iff$ \(\A_{n,i'} \models \phi(i)\).
  On the other hand, however, by the definition of \(\phi(x)\) and since \(i, i' \in [n]\)
  with \(i \neq i'\), we have \(\A_{n,i} \models \phi(i)\) and \(\A_{n,i'} \not\models \phi(i)\).

  This is a contradiction,
  so there is no \(\FOWplus\) formula \(\psi(x)\)
  that is a Boolean combination of sentences and local formulas
  such that \(\psi(x)\) is equivalent to \(\phi(x)\).
\end{claimproof}
 
\section{Details Omitted in Section~\ref{sec:gaifman}}
\subsection{Proof of Lemma~\ref{lemma:LocalityLemma}}\label{appendix:LocalityLemma}

\LocalityLemma*
\begin{proof}
Note that $\ngFOW$ is a logic in which the \emph{coincidence lemma}
holds.
Thus, for any $\phi\in\ngFOW$, the following it true:
if $\phi(\bar x)$ is $r$-local for some non-empty tuple $\bar x$
that contains all the free variables of $\phi$,
then $\phi(\bar x)$ is $r$-local for \emph{every} non-empty tuple
$\bar x$ that contains all the free variables of $\phi$
(cf.~\cref{lemma:LocalityNotion} in \cref{appendix:LocalityNotion}).
We will tacitly use this throughout the following proof.
\medskip

For \(k = 0\), by definition,
we have \(\loc{q}{0}{d} = \emptyset\),
and there is nothing to prove.
We proceed by induction on $q$ to prove the assertion of the lemma for
$\loc{q}{k}{d}$ for all $k\geq 1$ and $d\geq 0$.
\medskip

For the induction base with $q=0$, consider arbitrary $k\geq 1$ and $d\geq 0$.
By definition, we have \(\loc{0}{k}{d} = \Fparam{0}{k}{d}\).
Concerning the claimed locality of the formulas,
recall from \cref{eq:def:radius_function} that \(\locRad{0}{k}{d} = \max \set{1, d}\).

By definition, $\Fparam{0}{k}{d}$ consists of quantifier-free formulas
that have free variables among $\set{x_1,\ldots,x_k}$ and that are
Boolean combinations of

\begin{enumerate}[(a)]
\item\label{item:LocalityLemma:RA}
relational atoms,
\item\label{item:LocalityLemma:DA}
distance atoms with maximal distance at most $d$,
\item\label{item:LocalityLemma:WA}
weight atoms of the form $\big(s=\weight(y_1,\ldots,y_\ell)\big)$,
with $y_1,\ldots,y_\ell\in\set{x_1,\ldots,x_k}$, $s\in \wtype(\weight)$
\item\label{item:LocalityLemma:ngTA}
  neighbourhood-guarded term-comparison atoms of the form\\
   \(\Big(
       \Pred(t_1, \dots, t_m)
       \land
       \Land_{y \in \bigcup_{i=1}^{m} \free(t_i) \setminus \set{x}} \dist(x,y) \leq d'
       \Bigr)\),
   \ with $d'\leq d$.
\end{enumerate}

\noindent
It is straightforward to see that
\begin{itemize}
\item
  atoms of the form \eqref{item:LocalityLemma:RA} are 0-local
\item
  atoms of the form \eqref{item:LocalityLemma:DA} are $d$-local
\item
  atoms of the form \eqref{item:LocalityLemma:WA} are $1$-local\footnote{We obtain 1-locality, but not 0-locality, due to the particular
   form of the \emph{locality condition} of weighted structures. E.g.,
   $\A$ with $A=\set{1,2,3,4}$, $R^{\A}=\set{(1,2,3,4)}$, and
   $\weight^{\A}(1,2)=s\in S$ with $s\neq 0_S$ is a weighted
   structure that witnesses that the formula
   $\big(s=\weight(x_1,x_2)\big)$ is not $0$-local.
}\item
  atoms of the form \eqref{item:LocalityLemma:ngTA} are
  $\max\set{d,1}$-local.\footnote{With the same reasoning as for the cases
  \eqref{item:LocalityLemma:WA} and \eqref{item:LocalityLemma:DA}, \enquote{$\Pred(t_1, \dots, t_m)$} is
  1-local (but not necessarily 0-local), and the distance atoms are $d$-local.
}\end{itemize}

As locality is preserved under Boolean combinations (without increasing
the radius), this proves that all formulas in $\loc{0}{k}{d}$ are $\locRad{0}{k}{d}$-local.
\medskip

For the induction step from $q$ to $q{+}1$, consider arbitrary
$q,d\in\NN$. The induction hypothesis states that the
assertion of the lemma holds for all formulas in $\loc{q}{k'}{d}$ for
all $k'\in\NN$. Now consider an arbitrary $k\in\NN$ with $k\geq 1$.
Along the particular definition of the sets $\loc{q{+}1}{k}{d}$ and
$\Fparam{q{+}1}{k}{d}$, and by using the induction hypothesis, it is
straightforward to see that $\loc{q{+}1}{k}{d}\subseteq\Fparam{q{+}1}{k}{d}$.

To prove the claimed locality of the formulas in $\loc{q{+}1}{k}{d}$,
consider an arbitrary formula $\phi\in \loc{q{+}1}{k}{d}$.
By the definition of the set $\loc{q{+}1}{k}{d}$, we have
$\free(\phi)\subseteq\set{x_1,\ldots,x_k}$, and $\phi$
is a Boolean combination of formulas of
the forms
\ref{itm:def:loc:kind_previous_rank},
\ref{itm:def:loc:kind_distance_atom},
\ref{itm:def:loc:kind_near},
or
\ref{itm:def:loc:kind_weight_aggregation}.
As locality is preserved under Boolean combinations (without
increasing the radius), it suffices to consider the case that $\phi$
itself is of one of the forms
\ref{itm:def:loc:kind_previous_rank},
\ref{itm:def:loc:kind_distance_atom},
\ref{itm:def:loc:kind_near},
or
\ref{itm:def:loc:kind_weight_aggregation}.

For the remainder of this proof let
\begin{equation*}k'\ \deff \ k{+}\wmaxar
  \quad\text{and}\quad
  \tilde{r}\ \deff\ \locRad{q{+}1}{k}{d}
  \quad\text{and} \quad
  r'\ \deff\ \locRad{q}{k'}{d}.
\end{equation*}
Our goal is to show that $\phi$ is $\tilde{r}$-local.

\begin{itemize}
\item
If $\phi$ is of the form \ref{itm:def:loc:kind_previous_rank},
then $\phi\in\loc{q}{k'}{d}$.
By the induction hypothesis, $\phi$ is $r'$-local.
By \cref{eq:def:radius_function} we have $\tilde{r}=4k r'\geq r'$, and
thus $\phi$ is also $\tilde{r}$-local.
\medskip

\item
If $\phi$ is of the form \ref{itm:def:loc:kind_distance_atom}, we have
\[
  \phi \ = \quad
  \dist(x_i,x_j)\,{\leq}\,\tilde{d}\,,
\]
with $i,j\in[k]$ and $\tilde{d}\leq \tilde{r}$.
Hence, $\phi$ obviously is $\tilde{r}$-local.
\medskip

\item
  If $\phi$ is of the form \ref{itm:def:loc:kind_near}, we have
  \[
    \phi \ = \quad
        \exists x_{k+1}
        \,
        \paren[\big]{
          \
          \dist(x_i,x_{k+1}) \,{\leq}\,
          \hat{d}
          \;\undp\;
          \lambda
          \
        }
  \]
      with $i\in[k]$, \(\lambda \in \loc{q}{k'}{d}\),
      $\hat{d}\leq\hat{r} \deff \tilde{r}-r'$, and $\free(\lambda)\subseteq\free(\phi)\cup\set{x_{k+1}}$.
      Let $\bar x=(x_{i_1},\ldots,x_{i_\ell})$ be a non-empty tuple of variables that contains all the free
      variables of $\phi$. We extend this tuple by the variable
      $x_{k+1}$ to $\bar
      x'\deff (x_{i_1},\ldots,x_{i_\ell},x_{k+1})$.

      The induction hypothesis yields that $\lambda(\bar x')$ is
      $r'$-local around $\bar x'$.

      The formula $\big(\dist(x_i,x_{k+1})\,{\leq}\,\hat{d}
      \;\und\;\lambda\big)$ enforces in its models that the
      $r'$-neighbourhood around $\bar x'$ is contained in the
      $(r'{+}\hat{d})$-neighbourhood around $\bar x$.
      Since $r'{+}\hat{d}\leq \tilde{r}$, this implies that $\phi$ is
      $\tilde{r}$-local.
      \medskip

\item
If $\phi$ is of the form \ref{itm:def:loc:kind_weight_aggregation}, we have
\[
  \phi \ = \quad
\Big(
  s =
  \sum \weight(\bar y).
  \big(\;
     \lambda \; \und \!\!
     \Oder_{i\in I, j\in[\ell]}
 \!\!        \dist(x_i,y_j)\,{\leq}\, r'
  \,\big)
\Big)\,, \quad\text{where}
\]
 $\weight\in\Weights$,
  $S=\wtype(\weight)$ is finite, $s\in S$, $\ell=\ar(\weight)\geq 1$,  $\bar
  y=(y_1,\ldots,y_\ell)=(x_{k+1},\ldots,x_{k+\ell})$,
  $\emptyset\neq I\subseteq [k]$,
  and
  $\lambda\in\loc{q}{k'}{d}$ with $\free(\lambda)\subseteq\free(\phi)\cup\set{x_{k+1},\ldots,x_{k+\ell}}$.

Recall that all weighted structures $\A$ satisfy the \emph{locality
  condition}, and hence $\weight^{\A}(\bar b)=0_S$ holds for all
tuples $\bar b$ that do not form a clique in the Gaifman graph of
$\A$. This implies that
$\phi$ is equivalent to the formula
$\phi'\deff \big(s=\sum\weight(\bar y).\lambda' \big)$ with
\[
  \lambda' \ \deff \quad
  \Big(\;
     \lambda \ \und \!\!
     \Oder_{i\in I, j\in[\ell]}
 \!\!        \dist(x_i,y_j)\,{\leq}\, r'
   \ \und \!\!
      \Und_{1\leq j<j'\leq \ell} \dist(x_{k+j},x_{k+j'})\,{\leq}\,1
 \,\Big)\,.
\]
      Let $\bar x=(x_{i_1},\ldots,x_{i_\ell})$ be a non-empty tuple of variables
      that contains all the free variables of $\phi$.
      We extend $\bar x$ by the variables
      $x_{k+1},\ldots,x_{k+\ell}$ to $\bar
      x'\deff (x_{i_1},\ldots,x_{i_\ell},x_{k+1},\ldots,\allowbreak x_{k+\ell})$.

      The induction hypothesis yields that $\lambda(\bar x')$ is
      $r'$-local around $\bar x'$.
      Since $r'\geq 1$,
      this implies that also $\lambda'(\bar x')$ is $r'$-local.

      Note that the formula $\lambda'$ enforces in its models that the
      $r'$-neighbourhood around $\bar x'$ is contained in the
      $(r'{+}r'{+}1)$-neighbourhood around $\bar x$.
      This implies that the formula $\phi'$ is
      $\paren{2r'{+}1}$-local.
      Since $\tilde{r}=4kr'\geq 2r'{+} 1$, this implies that $\phi'$ is
      $\tilde{r}$-local.
      As $\phi$ is equivalent to $\phi'$, we obtain that
      $\phi$ is $\tilde{r}$-local.
    \end{itemize}
    \smallskip

\noindent
This completes the proof of \cref{lemma:LocalityLemma}.
\end{proof}

\subsection{Properties of \texorpdfstring{$\sent{q}{k}{d}$}{S(q,k,d)}}
\label{appendix:PropertiesOfSqkd}

\begin{lemma}\label{lemma:PropertiesOfSqkd}
Let $q,k,d\in\NN$.
Every formula in $\sent{q}{k}{d}$ is
\begin{itemize}
\item
  an atomic sentence in $\sent{0}{0}{0}$, or
\item
  a local aggregation sentence for $\Fqkd$ of inner radius as most
  $\locRad{q{-}1}{k{+}\wmaxar}{d}$, or
\item
  a basic local sentence for $\Fqkd$ of
  inner radius at most $\locRad{q{-}1}{k{+}\wmaxar}{d}$,
  of width at most $k{+}1{+}(q{-}1)\wmaxar$,
  and
  of inner distance at most $\locRad{q}{k}{d}+2\locRad{q{-}1}{k{+}\wmaxar}{d}$.
\end{itemize}
\end{lemma}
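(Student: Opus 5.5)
The proof goes by induction on $q$, following the recursive definition of $\sent{q}{k}{d}$. The claim is needed for all $k$ simultaneously, since the first clause of the definition passes from $\sent{q}{k{+}\wmaxar}{d}$ to $\sent{q{+}1}{k}{d}$. For $q=0$ the set $\sent{0}{k}{d}=\sent{0}{0}{0}$ consists of atomic sentences, so the first alternative holds trivially.

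For the step from $q$ to $q{+}1$, fix $k,d$ and write $r'\deff\locRad{q}{k{+}\wmaxar}{d}$. There are three cases.

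\emph{Inherited sentences}, i.e.\ those in $\sent{q}{k{+}\wmaxar}{d}$. By induction each such sentence is either atomic in $\sent{0}{0}{0}$, or a local aggregation sentence or basic local sentence for $\Fparam{q}{k{+}\wmaxar}{d}$. In the latter two cases:
\begin{itemize}
\item the inner radius is at most $\locRad{q{-}1}{k{+}2\wmaxar}{d}$;
\item the width is at most $k{+}\wmaxar{+}1{+}(q{-}1)\wmaxar=k{+}1{+}q\wmaxar$;
\item the inner distance is at most $\locRad{q}{k{+}\wmaxar}{d}+2\locRad{q{-}1}{k{+}2\wmaxar}{d}$.
\end{itemize}
It then suffices to check two things.

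First, the inner formula $\lambda$ lies in $\Fparam{q{-}1}{k{+}2\wmaxar}{d}$, and this set should be contained in $\Fparam{q}{k{+}\wmaxar}{d}$ by clause~\ref{def:fparam:previous}. That clause requires free variables among $x_1,\dots,x_{k+\wmaxar}$, which holds since $\lambda$ has free variables among $x_1,\ldots,x_\ell$ (or among $x_1$).

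Second, the numeric bounds must be monotone. We need $\locRad{q{-}1}{k{+}2\wmaxar}{d}\le\locRad{q}{k{+}\wmaxar}{d}$ and $\locRad{q}{k{+}\wmaxar}{d}\le\locRad{q{+}1}{k}{d}$. Both follow from the recursion $\locRad{q{+}1}{k}{d}=4k\,\locRad{q}{k{+}\wmaxar}{d}$ whenever $k\ge1$.

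The width bound is the main obstacle. The inherited width $k{+}1{+}q\wmaxar$ matches the target $k{+}1{+}((q{+}1){-}1)\wmaxar$ exactly, so the step is fine at level $q{+}1$. The check is merely bookkeeping, but it must be done with the indices exactly right.

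The case $k=0$ needs separate care: then $\locRad{q{+}1}{0}{d}=0$, and monotonicity of the radius fails. I expect to handle it by noting that the inherited sentences still satisfy the bounds stated with the parameters $(q,k{+}\wmaxar)$. Alternatively one can verify directly that, for $k=0$, the relevant sentences have inner radius bounded as claimed. I would look for how the paper intends this edge case and, failing that, restrict to $k\ge1$ or read the bound with the recursion.

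\emph{New basic local sentences.} The sentence is $\exists y_1\cdots\exists y_\ell(\bigwedge\dist(y_j,y_{j'})>2(2c{+}1)r'\wedge\bigwedge\lambda(y_j))$ with $\lambda(x_1)\in\loc{q}{k{+}\wmaxar}{d}$. By \cref{lemma:LocalityLemma}, $\lambda\in\Fparam{q}{k{+}\wmaxar}{d}$ and $\lambda$ is $r'$-local, so the inner radius is $r'$. The inner distance is $2r$ with $r=(2c{+}1)r'\ge r'$. Since $c\le k$, we get $2(2c{+}1)r'\le 4kr'+2r'=\locRad{q{+}1}{k}{d}+2r'$. The width satisfies $\ell\le k{+}1\le k{+}1{+}q\wmaxar$.

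\emph{New aggregation sentences.} By \cref{lemma:LocalityLemma} again, $\lambda(\bar y)\in\loc{q}{k{+}\wmaxar}{d}$ is in $\Fparam{q}{k{+}\wmaxar}{d}$ and is $r'$-local, with $\ell=\ar(\weight)\ge1$. This is exactly the definition of a local aggregation sentence of inner radius at most $r'$.
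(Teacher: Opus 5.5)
Your handling of the newly formed basic local and aggregation sentences, and of the inner radius and width of inherited sentences, is fine and follows the paper. The inherited case has a genuine gap at $k=0$, which you notice and then leave open.

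You bound the inherited inner distance $\locRad{q}{k{+}\wmaxar}{d}+2\locRad{q{-}1}{k{+}2\wmaxar}{d}$ against the target $\locRad{q{+}1}{k}{d}+2\locRad{q}{k{+}\wmaxar}{d}$ term by term. That requires $\locRad{q}{k{+}\wmaxar}{d}\le\locRad{q{+}1}{k}{d}=4k\,\locRad{q}{k{+}\wmaxar}{d}$, which fails for $k=0$. None of your proposed remedies closes this:
\begin{itemize}
\item The bounds \emph{with parameters $(q,k{+}\wmaxar)$} are just the induction hypothesis, which is exactly what has to be compared with the target.
\item Restricting to $k\ge 1$ does not prove the lemma as stated. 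The case $k=0$ is the case of $\Fparam{q}{0}{d}$-sentences, and the lemma is needed there to derive Theorem~\ref{thm:GNF_ngFOWplus} from Theorem~\ref{thm:GNF_ngFOWplus_refined}.
\end{itemize}

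The missing observation is that the second summand can absorb the first. Let $k'=k{+}\wmaxar$; then $k'\ge 1$ for every $k$, including $k=0$. The recursion gives $\locRad{q}{k'}{d}=4k'\,\locRad{q{-}1}{k'{+}\wmaxar}{d}\ge 2\locRad{q{-}1}{k'{+}\wmaxar}{d}$. Hence the inherited inner distance is at most $2\locRad{q}{k'}{d}\le\locRad{q{+}1}{k}{d}+2\locRad{q}{k'}{d}$. You never need to compare $\locRad{q}{k'}{d}$ with $\locRad{q{+}1}{k}{d}$, so all $k$ are handled uniformly. This is exactly the auxiliary claim in the paper's proof.

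Two smaller points. Your first inequality, $\locRad{q{-}1}{k{+}2\wmaxar}{d}\le\locRad{q}{k{+}\wmaxar}{d}$, holds for all $k$, since its factor is $4(k{+}\wmaxar)$ rather than $4k$. Only the second inequality breaks at $k=0$, and you use it solely for the inner distance. Also, the width check you call the main obstacle is a one-line identity; the inner distance is the actual delicate point.
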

\begin{proof}
We proceed by induction on $q$. For the induction base with $q=0$,
note that for all $k,d\in\NN$, the \emph{atomic sentences in $\Fparam{0}{k}{d}$} are exactly the sentences of the form
\begin{itemize}
\item
  $\true$, $\false$,
\item
  $R()$ for a relation symbol $R\in\sigma$ of arity 0,
\item
  $\big(s=\weight()\big)$ for a weight symbol $\weight\in\Weights$ of arity 0,
  and
\item
  $\Pred(t_1,\ldots,t_m)$ for $\Pred\in\PP$
and terms
$t_1,\ldots,t_m$ with $\bigcup_{i\in[m]}\free(t_i)=\emptyset$.
\end{itemize}
By definition, $\sent{0}{0}{0}$ consists of exactly these atomic sentences, and $\sent{0}{k}{d}=\sent{0}{0}{0}$ for all $k,d\in\NN$.
This proves the assertion of the lemma for the particular case that $q=0$.
\medskip

For the induction step from $q$ to $q{+}1$, consider arbitrary $q,d\in\NN$. The induction hypothesis states that the assertion of the lemma holds for $\sent{q}{k'}{d}$ for all $k'\in\NN$. Now consider an arbitrary $k\in\NN$ and an arbitrary formula $\phi\in\sent{q{+}1}{k}{d}$, and let $k'\deff k{+}\wmaxar$.
\\
We have to show that $\phi$ is
\begin{enumerate}[(a)]
\item\label{item:Sqkd:goal:a}
  an atomic sentence in $\sent{0}{0}{0}$, or
  \item\label{item:Sqkd:goal:b}
  a local aggregation sentence for $\Fparam{q{+}1}{k}{d}$ of inner radius as most
  $\locRad{q}{k'}{d}$, or
\item\label{item:Sqkd:goal:c}
  a basic local sentence for $\Fparam{q{+}1}{k}{d}$ of
  inner radius at most $\locRad{q}{k'}{d}$,
  of width at most $k{+}1{+}q\wmaxar$,
  and
  of inner distance at most $\locRad{q{+}1}{k}{d}+2\locRad{q}{k'}{d}$.
\end{enumerate}

According to the definition of $\sent{q{+}1}{k}{d}$, we know that one of the following cases applies:
\begin{enumerate}
\item\label{item:Sqkd:one}
  $\phi$ is a sentence in $\sent{q}{k'}{d}$.
\item\label{item:Sqkd:two}
  $\phi$ is a sentence of the form
\[
    \exists y_1\cdots\exists y_\ell\ \Big(
      \Und_{1 \leq j < j' \leq \ell} \dist(y_j,y_{j'}) > 2(2c{+}1)r' \undp
      \Und_{j \in [\ell]} \lambda(y_j)
    \ \Big)
\]
  with \(\ell \in [k{+}1]\), $\ell$
  variables $y_1,\ldots,y_\ell$,
  \(c \in [0,k]\),
  \(r' = \locRad{q}{k'}{d}\),
  and \(\lambda \paren{x_1} \in \loc{q}{k'}{d}\).
\item\label{item:Sqkd:three}
  $\phi$ is a sentence of the form
\[
\big(
  s = \sum \weight(y_1,\ldots,y_\ell).\lambda
\big),
\]
where
$\weight\in\Weights$,
$\ell=\ar(\weight)\geq 1$,
the range \(S\) of \(\weight\) is finite,
$s\in S$,
$y_1,\ldots,y_\ell$ are $\ell$
distinct variables,
and
$\lambda(y_1,\ldots,y_\ell)\in\loc{q}{k'}{d}$.
\end{enumerate}
\medskip

\noindent
In \textbf{Case~\ref{item:Sqkd:one}}, we can apply the induction hypothesis and obtain that $\phi$ is
\begin{enumerate}[(i)]
 \item\label{item:Sqkd:IH:i}
   an atomic sentence in $\sent{0}{0}{0}$, or
 \item\label{item:Sqkd:IH:ii}
   a local aggregation sentence for $\Fparam{q}{k'}{d}$ of inner radius at most
   $\locRad{q{-}1}{k'{+}\wmaxar}{d}$, or
 \item\label{item:Sqkd:IH:iii}
   a basic local sentence for $\Fparam{q}{k'}{d}$ of inner radius at most
   $\locRad{q{-}1}{k'{+}\wmaxar}{d}$, of width at most $k'{+}1+(q{-}1)\wmaxar$, and of inner distance at most $\locRad{q}{k'}{d}+2\locRad{q{-}1}{k'{+}\wmaxar}{d}$.
\end{enumerate}

\noindent
In case~\eqref{item:Sqkd:IH:i}, statement~\eqref{item:Sqkd:goal:a} holds and we are done.
\medskip
\\
In case~\eqref{item:Sqkd:IH:ii}, $\phi$ is of the form $\big(s=\sum\weight(\bar y).\lambda\big)$, where $\weight\in\Weights$, $\ell=\ar(\weight)\geq 1$, the range $S$ of $\weight$ is finite, $s\in S$, $\bar y=(y_1,\ldots,y_\ell)$ is a tuple of $\ell$ pairwise distinct variables, and $\lambda(\bar y)\in\Fparam{q{-}1}{k'{+}\wmaxar}{d}$ is $\tilde{r}$-local for some $\tilde{r}\leq \locRad{q{-}1}{k'{+}\wmaxar}{d}$.
Since $\lambda(\bar y)$ has at most $\ell\leq\wmaxar\leq k'$ free variables, we obtain that $\lambda(\bar y)\in \Fparam{q}{k'}{d}$. This implies that $\phi$ is a local aggregation sentence for $\Fparam{q{+}1}{k}{d}$. Its inner radius is at most $\tilde{r}\leq \locRad{q{-}1}{k'{+}\wmaxar}{d}\leq 4k'\locRad{q{-}1}{k'{+}\wmaxar}{d} = \locRad{q}{k'}{d}$. Thus, statement \eqref{item:Sqkd:goal:b} holds and we are done.
\medskip
\\
In case~\eqref{item:Sqkd:IH:iii}, $\phi$ is of the form
$\exists y_1\cdots\exists y_\ell\ \Big(\bigwedge_{1\le i<j\le\ell}
\distAtomNeg{y_i}{y_j}{2r}\ \wedge\ \bigwedge_{j\in [\ell]}\lambda(y_j)\Big)$,
where $\ell\geq 1$, $y_1,\ldots,y_\ell$ are $\ell$
variables, $r\geq \tilde{r}\geq 0$,
$\lambda(x_1)\in\Fparam{q{-}1}{k'{+}\wmaxar}{d}$ is
\emph{$\tilde{r}$-local} and has only one free variable, and
$\ell\leq k'{+}1+(q{-}1)\wmaxar$, $\tilde{r}\leq \locRad{q{-}1}{k'{+}\wmaxar}{d}$, and $2r\leq \locRad{q}{k'}{d}+2\locRad{q{-}1}{k'{+}\wmaxar}{d}$.
Since $\lambda(x_1)$ has only one free variable and $k'\geq \wmaxar\geq 1$, we obtain that $\lambda(x_1)\in\Fparam{q}{k'}{d}$. This implies that $\phi$ is a basic local sentence for $\Fparam{q{+}1}{k}{d}$.
\\
Its inner radius is at most $\tilde{r}\leq \locRad{q{-}1}{k'{+}\wmaxar}{d}\leq 4k'\locRad{q{-}1}{k'{+}\wmaxar}{d} = \locRad{q}{k'}{d}$.
\\
Its width is $\ell \leq k'{+}1+(q{-}1)\wmaxar= k{+}1+q\wmaxar$.
\\
Its inner distance is at most $\locRad{q}{k'}{d}+2\locRad{q{-}1}{k'{+}\wmaxar}{d}$.
Combining this with the following claim
proves that statement  \eqref{item:Sqkd:goal:c} holds.

\begin{claim}
$\locRad{q}{k'}{d}+2\locRad{q{-}1}{k'{+}\wmaxar}{d}
~ \leq ~
\locRad{q{+}1}{k}{d}+2\locRad{q}{k'}{d}$.
\end{claim}
\begin{claimproof}
We show a stronger statement, namely,  $\locRad{q}{k'}{d}+2\locRad{q{-}1}{k'{+}\wmaxar}{d}
~ \leq ~
2\locRad{q}{k'}{d}$, as follows:
\[
 \begin{array}{rcl}
  2\locRad{q}{k'}{d}
  ~ = ~
  \locRad{q}{k'}{d} + \locRad{q}{k'}{d}
  & \stackrel{\eqref{eq:def:radius_function}}{=} &
                                                   \locRad{q}{k'}{d} + 4k'\locRad{q{-}1}{k'{+}\wmaxar}{d}
\\
  & \stackrel{k'\geq 1}{\geq} &
  \locRad{q}{k'}{d} + 2\locRad{q{-}1}{k'{+}\wmaxar}{d}.
\end{array}
\]
\end{claimproof}
In summary, we obtain that statement \eqref{item:Sqkd:goal:c} holds, and we are done with \textbf{Case~\ref{item:Sqkd:one}}.
\bigskip

\noindent
In \textbf{Case~\ref{item:Sqkd:two}}, we obtain from \cref{lemma:LocalityLemma} that $\lambda(x_1)$ belongs to $\Fparam{q}{k'}{d}$ and is \(r'\)-local for $r'=\locRad{q}{k'}{d}$. This implies that
$\phi$ is a basic local sentence for $\Fparam{q{+}1}{k}{d}$ of inner radius at most
$r'$. Furthermore, its width is $\ell\leq k{+}1\leq k{+}1{+}q\wmaxar$, and its inner distance is $2(2c{+}1)r'\leq 2(2k{+}1)r' = 4kr'+2r' = \locRad{q{+}1}{k}{d}+2\locRad{q}{k'}{d}$.
Hence, statement \eqref{item:Sqkd:goal:c} holds, and we are done.
\bigskip

\noindent
In \textbf{Case~\ref{item:Sqkd:three}}, we obtain from \cref{lemma:LocalityLemma} that $\lambda(\bar y)$ belongs to $\Fparam{q}{k'}{d}$ and is \(r'\)-local for $r'=\locRad{q}{k'}{d}$. This implies that
$\phi$ is a local aggregation sentence for $\Fparam{q{+}1}{k}{d}$ of inner radius at most
$r'$.
Hence, statement \eqref{item:Sqkd:goal:b} holds, and we are done.
\bigskip

\noindent
This completes the proof of \cref{lemma:PropertiesOfSqkd}.
\end{proof}
 \subsection{Proof of Lemma~\ref{lemma:fv} (Feferman--Vaught Decomposition)}\label{sec:fv}

We start with an easy lemma that enables us to assume w.l.o.g.\ that the
first entries in $\Delta$ are mutually exclusive.

\begin{lemma}\label{lemma:fv_mux}
  Let \(q,\myk,d \in \NN\) with \(\myk \geq 2\).
Let \(\vec{x} \deff (x_1,\ldots,x_{\myk})\),
\(r' \deff \locRad{q}{\myk}{d}\),
\(K \subseteq \intsUpTo{\myk}\),
and \(I \subseteq K\) with \(\emptyset \neq I \neq K\).
For every finite and non-empty set \(\Delta\) of pairs of formulas
  \(\big(\alpha\paren{\projTup{x}{I}},\beta\paren{\projTup{x}{K \setminus I}}\big)\)
  such that \(\alpha\paren{\projTup{x}{I}},
  \beta\paren{\projTup{x}{K \setminus I}}\in \loc{q}{\myk}{d}\),
  there exists a finite and non-empty set $\tildeDelta$ of pairs of
  formulas
  \(\big(\tildealpha\paren{\projTup{x}{I}},\tildebeta\paren{\projTup{x}{K \setminus I}}\big)\)
  such that \(\tildealpha\paren{\projTup{x}{I}},
  \tildebeta\paren{\projTup{x}{K \setminus I}} \in
  \loc{q}{\myk}{d}\),
  they are Boolean combinations
  of formulas occurring in $\Delta$,
  \begin{equation}\label{eq:lemma:mux}
    \begin{array}{ll}
      & \displaystyle
    \Oder_{(\alpha,\beta) \in \Delta}
   \paren[\Big]{
      \alpha \paren{\projTup{x}{I}}
      \und
      \beta \paren{\projTup{x}{K \setminus I}}
      }
     \undp
     \dist(\projTup{x}{I};\projTup{x}{K \setminus I}) \,{>}\, r'
     \\ \equiv & \displaystyle
        \Oder_{(\tildealpha,\tildebeta) \in \tildeDelta}
   \paren[\Big]{
      \tildealpha \paren{\projTup{x}{I}}
      \und
      \tildebeta \paren{\projTup{x}{K \setminus I}}
      }
     \undp
                 \dist(\projTup{x}{I};\projTup{x}{K \setminus I}) \,{>}\, r'\,,
    \end{array}
  \end{equation}
  and the first entries in $\tildeDelta$ are \emph{mutually exclusive},
  i.e., for every two distinct $(\tildealpha,\tildebeta)$ and $(\tildealphaStrich,\tildebetaStrich)$
  in $\tildeDelta$, the formula $(\tildealpha\und\tildealphaStrich)$ is unsatisfiable.
  Furthermore, there is an algorithm that, given \(\Delta\), computes \(\tildeDelta\).

  Analogously, we can also compute a $\tildeDelta$ for which
  \eqref{eq:lemma:mux} holds, but where, instead of the first entries,
  the second entries in $\tildeDelta$ are mutually exclusive, i.e.,
  for every two distinct $(\tildealpha,\tildebeta)$ and $(\tildealphaStrich,\tildebetaStrich)$
  in $\tildeDelta$, the formula $(\tildebeta\und\tildebetaStrich)$ is unsatisfiable.
\end{lemma}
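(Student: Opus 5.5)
The plan is the standard \enquote{disjointification} of a finite disjunction of conjunctions, carried out only on the first components. Let \(\alpha_1,\ldots,\alpha_n\) be the first entries of the pairs in \(\Delta\). For every subset \(T \subseteq [n]\) define the atom
\[
  \tildealpha_T \deff \Und_{t\in T}\alpha_t \und \Und_{t\in[n]\setminus T}\nicht\,\alpha_t ,
\]
together with the second entry
\[
  \tildebeta_T \deff \Oder_{t\in T}\beta_t
\]
(read as \(\false\) for \(T=\emptyset\)). Here \(\beta_t\) is the second entry paired with \(\alpha_t\) in \(\Delta\); if one \(\alpha\) occurs with several \(\beta\)'s, index the pairs and not the \(\alpha\)'s. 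Let \(\tildeDelta \deff \setc{(\tildealpha_T,\tildebeta_T)}{T\subseteq[n]}\). This set is finite and non-empty, and it is computable from \(\Delta\).

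Next I check the required properties.
\begin{itemize}
  \item \emph{Membership.} Each \(\tildealpha_T\) and \(\tildebeta_T\) is a Boolean combination of formulas in \(\Delta\). It therefore lies in \(\loc{q}{\myk}{d}\), because that set is closed under Boolean combinations by definition.
  \begin{itemize}
    \item For \(q\ge1\) this holds because \(\loc{q}{\myk}{d}\) is defined as the set of all Boolean combinations of certain formulas.
    \item For \(q=0\) we have \(\loc{0}{\myk}{d}=\Fparam{0}{\myk}{d}\). This is the class of quantifier-free formulas with maximum distance at most \(d\), which is again Boolean-closed.
  \end{itemize}
  \item \emph{Free variables.} These stay within \(\tx_I\) and \(\tx_{K\setminus I}\) respectively. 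The constants \(\true\) and \(\false\) are atomic, so they are allowed.
  \item \emph{Mutual exclusivity.} For distinct \(T\neq T'\) pick \(t\in T\mathbin{\triangle}T'\). Then \(\tildealpha_T\und\tildealpha_{T'}\) contains both \(\alpha_t\) and \(\nicht\,\alpha_t\), so it is unsatisfiable.
\end{itemize}

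For \eqref{eq:lemma:mux} it suffices to show the equivalence without the common conjunct \(\dist(\tx_I;\tx_{K\setminus I})>r'\), namely
\[
  \Oder_{t}(\alpha_t\und\beta_t)\ \equiv\ \Oder_T(\tildealpha_T\und\tildebeta_T).
\]
Fix an interpretation, and let \(T_0\) be the set of \(t\) with \(\alpha_t\) true. Then \(\tildealpha_{T_0}\) is the unique true atom, so the right-hand side reduces to \(\tildebeta_{T_0}=\Oder_{t\in T_0}\beta_t\). This holds exactly when some \(t\) has both \(\alpha_t\) and \(\beta_t\) true, which is the left-hand side.

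The variant with mutually exclusive second entries is symmetric: build atoms over the \(\beta_t\) and take disjunctions of the corresponding \(\alpha_t\).

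The only point needing care is the Boolean closure of \(\loc{q}{\myk}{d}\) in the base case \(q=0\), including the use of \(\true\)/\(\false\). Both follow directly from the definitions, so I expect no real obstacle.
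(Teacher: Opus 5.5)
Your proof is correct and follows the paper's construction: both use the atoms over the first entries, pair each atom with the disjunction of the second entries attached to it, and prove the equivalence via the unique true atom. The differences are cosmetic. The paper indexes by sets $J$ of distinct first entries and omits $J=\emptyset$, whereas your indexing over pairs and your extra pair $(\tilde{\alpha}_\emptyset,\false)$ only add unsatisfiable or vacuous disjuncts, which are harmless.
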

\begin{proof}
  Let $\myAlphas\deff\setc{\alpha}{\text{there exists a } \beta \text{ such that }
    (\alpha,\beta)\in \Delta}$. For every $\alpha\in\myAlphas$, let
  $\myBetas(\alpha)\deff\setc{\beta}{(\alpha,\beta)\in\Delta}$.
  For every $J\subseteq\myAlphas$, let
  \[
    \alpha_J\ \deff \ \Und_{\alpha\in J}\alpha \ \und  \Und_{\alpha\in
      \myAlphas\setminus J}\nicht\alpha
    \qquad\text{and}\qquad
    \beta_J \ \deff \ \Oder_{\alpha\in J}\Oder_{\beta\in\myBetas(\alpha)}\beta\,.
  \]
  Note that we have $\alpha_J(\tx_I)\in\loc{q}{\myk}{d}$ for every $J\subseteq \myAlphas$;
  and for $J\neq\emptyset$, we also have
  $\beta_J(\tx_{K\setminus I})\in\loc{q}{\myk}{d}$.
  Clearly, the formulas $(\alpha_J)_{J\subseteq\myAlphas}$ are
  mutually exclusive, i.e., for $J,J'\subseteq\myAlphas$ with $J\neq
  J'$, the formula $(\alpha_J\und\alpha_{J'})$ is unsatisfiable.
  Let
  \[
    \tildeDelta\ \deff \ \setc{(\alpha_J,\beta_J)}{J\subseteq\myAlphas,
    \ J\neq\emptyset}.
  \]
  It can easily be verified that \eqref{eq:lemma:mux} is satisfied.

  The analogous statement,
  aiming at a $\tildeDelta$ where the \emph{second} entries are mutually exclusive,
  follows by symmetry.
  This completes the proof of \cref{lemma:fv_mux}.
\end{proof}

\FV*

\begin{proof}
  We proceed by induction on \(q\).

\BaseCasesP
For the induction base with \(q = 0\),
consider arbitrary \(\myk,d \in \NN\) with \(\myk \geq 2\). Note that
$r'\deff \locRad{0}{k'}{d}\stackrel{\eqref{eq:def:radius_function}}{=}\max\set{d,1}$.
Let $\bar x\deff(x_1,\ldots,x_{k'})$, and consider arbitrary $K,I$
such that $K\subseteq [k']$, $I\subseteq K$, and $\emptyset\neq I\neq K$.
Note that any formula $\lambda(\bar x_K)\in
 \loc{0}{\myk}{d}=\Fparam{0}{\myk}{d}$
is a Boolean combination of \emph{atomic} \(\ngFOWplus\) formulas $\psi$
with \(\freeS{\psi} \subseteq\setc{x_j}{j\in K} \)
and \(\md{\psi} \leq d\).
We proceed by induction on the construction of \(\lambda(\bar x_K)\).

\begin{enumerate}[{Case} 1]
  \item
  \label{case:fv_decomp:base_case:all_free_vars_in_component}
  If \(\lambda\) is an atomic \(\ngFOWplus\) formula
  with \(\free(\lambda) \subseteq \setc{x_i}{i \in I}\),
  then we are done by letting
  \[
    \Delta
    \deffp
    \setS*{\paren{\lambda,\, \true}}
    .
  \]
\item\label{case:fv_decomp:base_case:all_free_vars_in_other_comp}
  Otherwise, if \(\lambda\) is an atomic \(\ngFOWplus\) formula with
  \(\freeS{\lambda}\subseteq \setc{x_j}{j \in K\setminus I}\),
  then we are done by letting
  \[
    \Delta
    \deffp
    \setS*{\paren{\true,\, \lambda}}
  .
  \]
\end{enumerate}
Note that atoms of the form \(\true\),
 \(\false\),
 \(R \tup{}\) for a \(0\)-ary relation
symbol \(R \in \sigma\), or of the form
\(\big(s = \weight \tup{}\big)\) for a \(0\)-ary weight symbol \(\weight \in \Weights\)
are already dealt with in Case~\ref{case:fv_decomp:base_case:all_free_vars_in_component}.
For the remaining cases, recall that
\(r' = \max\set{d,1}\).

\begin{enumerate}[{Case} 1]
    \setcounter{enumi}{2}
\item\label{case:fv_decomp:base_case:distance_atom}
  Consider the case that \(\lambda\)
  is of the form \(\distS{}{x_i,x_j} \leq d'\)
  or of the form \(\distS{}{x_j,x_i} \leq d'\)
  with \(i \in I\) and \(j \in K\setminus I\).
Since \(\lambda \in \loc{0}{\myk}{d}\), we know that \(d' \leq d\).
For all \(\tup{\sigma, \Weights}\)-structures \(\A\) and all tuples \(\vec{a} \in A^{\myk}\)
  with \(\distS{\A}{\projTup{a}{I};\projTup{a}{K\setminus I}} > r'\),
  we have \(\distS{\A}{a_{i},a_{j}} > r' \geq d\geq d'\),
so \(\A\not\models\eval{\lambda}{\vec{a}_K}\).
Thus, we are done by letting
  \[
    \Delta
    \deffp
    \setS*{\paren{\false,\ \false}}
    .
  \]
\item
  Consider the case that \(\lambda\) is of the form \(x_i {=} x_j\)
  or of the form \(x_j {=} x_i\)
  with \(i \in I\) and \(j \in K\setminus I\).
Then, for all \(\tup{\sigma, \Weights}\)-structures \(\A\)
  and all tuples \(\vec{a} \in A^{\myk}\)
  with \(\distS{\A}{\projTup{a}{I};\projTup{a}{K\setminus I}} > r'\),
  we have \(\distS{\A}{a_{i},a_{j}} > r' \geq 1\),
and therefore \(a_i\neq a_j\).
  Thus, \(\A\not\models\eval{\lambda}{\vec{a}_K}\),
and we are done by letting
  \[
    \Delta
    \deffp
    \setS*{\paren{\false,\ \false}}
    .
  \]
\item
  Consider the case that \(\lambda\)
  is of the form \(R \tupd{x_{\ell_1}}{x_{\ell_m}}\)
  with \(m = \arS{R}\geq 1\) and \(\enud{\ell_1}{\ell_m} \in K\)
  such that there exist \(i,j \in \intsUpTo{m}\)
  with \(\ell_i \in I\) and \(\ell_j \in K\setminus I\).
Then, for all \(\tup{\sigma, \Weights}\)-structures \(\A\)
  and all tuples \(\vec{a} \in A^{\myk}\)
  with \(\distS{\A}{\projTup{a}{I};\projTup{a}{K\setminus I}} > r'\),
  we have \(\distS{\A}{a_{\ell_i},a_{\ell_j}} > r' \geq 1\),
and therefore \((a_{\ell_1},\ldots,a_{\ell_m})\not \in R^{\A}\).
Hence,
  \(\A \not\models \eval{\lambda}{\vec{a}_K}\),
and we are done by letting
  \[
    \Delta
    \deffp
    \setS*{\paren{\false,\ \false}}
    .
  \]
\item
  Consider the case that \(\lambda\)
  is of the form \(\big(s = \weight \tupd{x_{\ell_1}}{x_{\ell_m}}\big)\)
  with
  \(\weight \in \Weights\),
  \(S = \wtypeS{\weight}\),
  \(s \in S\),
  \(m = \arS{\weight}\),
  and \(\enud{\ell_1}{\ell_m} \in K\)
  such that there exist \(i,j \in \intsUpTo{m}\)
  with \(\ell_i \in I\) and \(\ell_j \in K\setminus I\).
Then, for all \(\tup{\sigma, \Weights}\)-structures \(\A\)
  and all tuples \(\vec{a} \in A^{\myk}\)
  with \(\distS{\A}{\projTup{a}{I};\projTup{a}{K\setminus I}} > r'\),
  we have \(\distS{\A}{a_{\ell_i},a_{\ell_j}} > r' \geq 1\).
Hence,
  by the locality condition on weighted structures,
  we have \(\weight^{\A} \tupd{a_{\ell_1}}{a_{\ell_m}} = 0_S\).
Thus,
  \(\A \models \eval{\lambda}{\vec{a}_K}\) $\iff$ \(s = 0_S\),
and we are done by letting
  \[
    \Delta
    \deffp
    \setS*{\paren{\true,\ \true}}
    ~~
    \text{if } s = 0_S, \qquad
    \text{and}
    ~~
    \Delta
    \deffp
    \setS*{\paren{\false,\ \false}}
    ~~
    \text{otherwise.}
  \]
\item Consider the case that \(\lambda\)
    is of the form
    \[
      \paren[\Big]{
        \Pred \tupd{t_1}{t_m}
        \land\textstyle
        \bigLand_{y \in \bigcup_{\nu=1}^m \freeS{t_\nu} \setminus \set{z}}
        \distS{}{z, y} \leq d'
      }
    \]
    with
    \(\Pred \in \PP\),
    \(m = \arS{\Pred}\),
    \(\SC\)-terms \(\enud{t_1}{t_m}\)
    with \(\ptypeS{\Pred} = \tupd{\wtypeS{t_1}}{\wtypeS{t_m}}\),
    $d'\leq d$,
    $z\in\vars$,
    $\free(\lambda)\subseteq\setc{x_j}{j\in K}$, and there exist $i\in I$ and $j\in K\setminus I$
    such that $x_i,x_j\in\free(\lambda)$.
In particular, \(\abs{\free(\lambda)} \geq 2\), and this implies that the
    conjunction over the distance atoms actually is present in
    $\lambda$, and hence $z\in\free(\lambda)$.
    As \(z \in \free(\lambda)
    \subseteq \setc{x_\nu}{\nu \in I} \uplus \setc{x_\nu}{\nu \in K \setminus I}\),
    we can assume w.l.o.g.\ that \(z = x_i\) or \(z = x_j\).
    Thus, we either have
    \(x_i = z\) and \(x_j \in \bigcup_{\nu=1}^m\free(t_\nu) \setminus \set{z}\),
    or \(x_j = z\) and \(x_i \in \bigcup_{\nu=1}^m\free(t_\nu)\setminus\set{z}\).
Note that, for all \(\tup{\sigma, \Weights}\)-structures \(\A\)
    and all tuples \(\ta \in A^{\myk}\)
    with \(\distS{\A}{\projTup{a}{I};\projTup{a}{K\setminus I}} > r'\),
    we have \(\distS{\A}{a_{i},a_{j}} > r' \geq d \geq d'\).
    Hence, \(\A \not\models \lambda(\ta_K)\),
    and we are done by letting
    \[
      \Delta
      \deffp
      \setS*{\paren{\false,\ \false}}.
    \]
\end{enumerate}
So far, we have proved that the assertion of the lemma holds for all
\emph{atomic} formulas in \(\loc{0}{\myk}{d}\).
To handle Boolean combinations,
it suffices to consider disjunction and negation.

\begin{enumerate}[{Case} 1]
    \setcounter{enumi}{7}
  \item\label{case:proof:fv_decomp:base_case:disjunction}
  Consider the case that \(\lambda\) is of the form \((\lambda_1\oder\lambda_2)\).
By the induction hypothesis, for each
  \(\ell \in \set{1,2}\), we obtain a finite and non-empty set \(\Delta_\ell\) of pairs of
  formulas \(\paren{\alpha(\bar x_I),\beta(\bar x_{K\setminus I})}\)
  such that \(\alpha(\bar x_I), \beta(\bar x_{K\setminus I}) \in \loc{q}{\myk}{d}\),
  their quantifier rank is at most $\qr(\lambda_\ell)$, and
  \[
    \lambda_\ell\paren{\vec{x}_K}
    \undp
    \distS{}{\projTup{x}{I};\projTup{x}{K\setminus I}} > r'
    \equivp
    \bigLorClPad{\paren{\alpha,\beta} \in \Delta_\ell}{0.75 em}
    \paren[\Big]{
      \alpha\paren{\projTup{x}{I}} \und \beta\paren{\projTup{x}{K\setminus I}}
    }
    \undp
    \distS{}{\projTup{x}{I};\projTup{x}{K\setminus I}} > r'.
  \]
  It can easily be verified that we are done by choosing
  \(\Delta \deffp \Delta_1\cup\Delta_2\) and afterwards applying
  Lemma~\ref{lemma:fv_mux} to turn $\Delta$ into a $\tildeDelta$ where the
  first entries in $\tildeDelta$ are mutually exclusive.\bigskip

\item
\label{case:proof:fv_decomp:base_case:negation}
Next, consider the case that \(\lambda\) is of the form
\(\nicht\lambda'\).
By the induction hypothesis, we obtain a finite and non-empty set
\(\Delta'\) of pairs of formulas \(\paren{\alpha(\bar x_I),\beta(\bar
  x_{K\setminus I})}\) such that
\(\alpha(\bar x_I), \beta(\bar x_{K\setminus I}) \in \loc{q}{\myk}{d}\),
their quantifier rank is at most $\qr(\lambda')$, and
\begin{equation}
  \label{eq:proof:fv_decomp:base_case:negation}
  \begin{array}{ll}
 & \lambda'\paren{\vec{x}_K}
  ~ \undp ~
  \distS{}{\projTup{x}{I};\projTup{x}{K\setminus I}} \,{>}\, r'
  \medskip\\ \equivp & \displaystyle
  \Oder_{\paren{\alpha,\beta} \in \Delta'}
  \paren[\Big]{
    \alpha\paren{\projTup{x}{I}} \und \beta\paren{\projTup{x}{K\setminus I}}
  }
  \undp
  \distS{}{\projTup{x}{I};\projTup{x}{K\setminus I}} > r'.
 \end{array}
\end{equation}
Let \(m \deffp |\Delta'|\)
and let \((\alpha_1,\beta_1),\ldots,(\alpha_m,\beta_m)\)
be an enumeration of the elements in \(\Delta'\).
For every set \(H\subseteq\intsUpTo{m}\) with \(\emptyset\neq H\neq \intsUpTo{m}\) let

\newcommand{\landPad}{0.75 em}
\begin{alignat*}{5}
  \alpha_H
  &\deffp
  \bigLandClPad{h \in H}{\landPad} \nicht \alpha_h
  \qquad && \text{and} \qquad
  &
  \beta_H
  & \deffp
  \bigLandClPad{h' \in \intsUpTo{m}\setminus H}{\landPad} \nicht\beta_{h'}
  .
\intertext{We let
        }
  \alpha_\emptyset
  & \deffp
        \true
  \qquad &&\text{and} \qquad
  &
  \beta_\emptyset
  & \deffp
  \bigLandClPad{h' \in \intsUpTo{m}}{\landPad} \nicht \beta_{h'}
\shortintertext{and}
\alpha_{\intsUpTo{m}}
  & \deffp
  \bigLandClPad{h \in \intsUpTo{m}}{\landPad} \nicht \alpha_h
  \qquad &&\text{and} \qquad
  &
  \beta_{\intsUpTo{m}}
  & \deffp
        \true
  .
\shortintertext{We let}
\Delta
  & \deffp
  \mathrlap{
    \setc{\paren{\alpha_H,\, \beta_H}}{ H \subseteq \intsUpTo{m}}
    .
  }
\end{alignat*}
Then, \(\Delta\) is a finite, non-empty set of pairs of formulas
\((\alpha_H,\beta_H)\) such that \(\alpha_H(\bar x_I) \in \loc{q}{\myk}{d}\)
and \(\beta_H(\bar x_{K\setminus I}) \in \loc{q}{\myk}{d}\).
Consider an arbitrary \(\tup{\sigma, \Weights}\)-structure \(\A\)
and a tuple \(\vec{a} \in A^{\myk}\)
such that \(\distS{\A}{\projTup{a}{I};\projTup{a}{K\setminus I}} > r'\).
We have to show that
\[
  \A \models \eval{\lambda}{\vec{a}_K}
  ~~ \iff ~~
  \text{there is a
  \(H\subseteq \intsUpTo{m}\) such that \(\A \models \alpha_H(\projTup{a}{I})\) and
  \(\A \models \beta_H(\projTup{a}{K\setminus I})\)}.
\]
Note that the following holds.
\[
\begin{array}{rcl}
     \A \models \eval{\lambda}{\vec{a}_K}
  & \iff
  & \A\not\models\eval{\lambda'}{\vec{a}_K}\\
  & \stackrel{\eqref{eq:proof:fv_decomp:base_case:negation}}{\iff}
  & \text{for all } h \in \intsUpTo{m}
  ~ : ~
  \A\not\models\alpha_h(\projTup{a}{I})
  \text{ \ or \ }
  \A\not\models\beta_h(\projTup{a}{K\setminus I})
  .
\end{array}
\]
Therefore, it suffices to show that the following statements are
equivalent.
\begin{romanenumerate}
  \item
  \label{itm:proof:fv_decomp:base_case:negation:existence}
  There is an
  \(H\subseteq \intsUpTo{m}\) such that \(\A \models \alpha_H(\projTup{a}{I})\)
  and \(\A \models \beta_H(\projTup{a}{K\setminus I})\).

  \item
  \label{itm:proof:fv_decomp:base_case:negation:universality}
  For all \(h \in \intsUpTo{m}\)
  we have \(\A\not\models\alpha_h(\projTup{a}{I})\)
  or \(\A\not\models\beta_h(\projTup{a}{K\setminus I})\).
\end{romanenumerate}
For the direction
\enquote{(i)\(\Rightarrow\)(ii)},
consider an arbitrary \(h \in \intsUpTo{m}\).
If \(h \in H\),
then, from \(\A \models \alpha_H(\projTup{a}{I})\),
we obtain that \(\A\not\models\alpha_h(\projTup{a}{I})\).
If \(h\not \in H\),
then, from \(\A \models \beta_H(\projTup{a}{K\setminus I})\),
we obtain that \(\A\not\models\beta_h(\projTup{a}{K\setminus I})\).
Hence, (ii) is satisfied.

For the direction \enquote{(ii)\(\Rightarrow\)(i)},
let \(H \deffp \setc{h \in \intsUpTo{m}}{\A\not\models\alpha_h(\projTup{a}{I})}\).
Then, \(\A \models \alpha_H(\projTup{a}{I})\).
Furthermore, for every \(h' \in \intsUpTo{m}\setminus H\)
we have \(\A\not\models\beta_{h'}(\vec{a}_{K\setminus I})\).
Thus, \(\A \models \beta_H(\projTup{a}{K\setminus I})\).
Hence, \(H\) witnesses that (i) is satisfied.

This proves that the statements (i) and (ii) are equivalent.
\smallskip

Applying \cref{lemma:fv_mux}, we can turn $\Delta$ into a
$\tildeDelta$ where the first entries in $\tildeDelta$ are mutually exclusive.
This completes the case that \(\lambda\) is of the form \(\nicht\lambda'\).
\end{enumerate}

In summary, we have finished the induction base for \(q = 0\).

\IndStepP
For the induction step from \(q\) to \(q{+}1\), consider \(q,d \in
\NN\).
\medskip

The \textbf{induction hypothesis} states that the assertion of
\cref{lemma:fv} holds for all $k'\geq 2$, for
$r'=\locRad{q}{k'}{d}$, $\tx'=(x_1,\ldots,x_{k'})$, and for all
$K'\subseteq[k']$ and $I'\subseteq K'$ with $\emptyset\neq I'\neq
K'$. That is, for every $\lambda(\bar x'_{K'})\in\loc{q}{k'}{d}$, we can
compute a finite and non-empty set $\Delta'$ of pairs of formulas
$\big(\alpha(\tx'_{I'}),\beta(\tx'_{K'\setminus I'})\big)$
such that $\alpha(\bar x'_{I'}), \beta(\bar
x'_{K'\setminus I'})\in\loc{q}{k'}{d}$, their quantifier rank is at
most $\qr(\lambda)$,
\[
  \lambda\paren{\proj{\vec{x}'}{K'}}
  \undp
  \dist(\proj{\vec{x}'}{I'}; \proj{\vec{x}'}{K' \setminus I'})> r'
  ~~ \equiv ~~
  \Oder_{\mathclap{\paren{\alpha,\beta} \in \Delta'}}
  ~
  \paren[\Big]{
    \alpha \paren{\proj{\vec{x}'}{I'}}
    \und
    \beta \paren{\proj{\vec{x}'}{K' \setminus I'}}
  }
  \undp
  \dist(\proj{\vec{x}'}{I'}; \proj{\vec{x}'}{K' \setminus I'})> r'
  ,
\]
and the first entries in \(\Delta'\) are mutually exclusive.
\medskip

Now consider an arbitrary \(k \in \NN\) with \(k \geq 2\),
an arbitrary \(K \subseteq \intsUpTo{k}\) with \(|K| \geq 2\),
and an arbitrary \(I \subseteq K\) with \(\emptyset \neq I\neq K\).
Let $\bar x=(x_1,\ldots,x_k)$.
Let
\begin{equation}\label{eq:rs_fv}
 \tilde{r} \deffp \locRad{q{+}1}{k}{d}
  \quad \text{and}\quad
  r'  \deffp \locRad{q}{k{+}\wmaxar}{d}
  \quad
  \text{and}
  \quad
  \hat{r} \deffp \tilde{r} - r'\,.
\end{equation}

Note that the formulas \(\psi(\bar x_K)\in\loc{q{+}1}{k}{d}\)
are Boolean combinations
of formulas that have free variables in \(\setc{x_j}{j\in K}\)
and are of one of the forms
\ref{itm:def:loc:kind_previous_rank},
\ref{itm:def:loc:kind_distance_atom},
\ref{itm:def:loc:kind_near},
or
\ref{itm:def:loc:kind_weight_aggregation}.

Our goal is to construct a finite and non-empty set $\Delta$ of pairs
of formulas $(\alpha(\tx_I),\beta(\tx_{K\setminus I}))$ such that
$\alpha(\tx_I), \beta(\tx_{K \setminus I})\in\loc{q{+}1}{k}{d}$,
their quantifier rank is at most $\qr(\psi)$,
and
\begin{equation}\label{eq:fv-proof_goal_in_IndStep}
  \psi\paren{\proj{\vec{x}}{K}}
  \undp
  \dist(\proj{\vec{x}}{I}; \proj{\vec{x}}{K \setminus I})> \tilde{r}
  ~~ \equiv ~~
  \Oder_{\mathclap{\paren{\alpha,\beta} \in \Delta}}
  ~
  \paren[\Big]{
    \alpha \paren{\proj{\vec{x}}{I}}
    \und
    \beta \paren{\proj{\vec{x}}{K \setminus I}}
  }
  \undp
  \dist(\proj{\vec{x}}{I}; \proj{\vec{x}}{K \setminus I})> \tilde{r}\ ;
\end{equation}
once having achieved this, we can use \cref{lemma:fv_mux} to
turn $\Delta$ into a $\tildeDelta$ such that the first entries in
$\tildeDelta$ are mutually exclusive.
We proceed by induction on the construction of \(\psi(\tx_K)\).

\proofsubparagraph{Case~\ref{itm:def:loc:kind_previous_rank}:}
\(\psi\) is of the form \ref{itm:def:loc:kind_previous_rank}.
Then \(\psi\) belongs to \(\loc{q}{k{+}\wmaxar}{d}\)
(with \(\freeS{\psi} \subseteq \setc{x_j}{j\in K}\)),
and we use the induction hypothesis
with \(\lambda\deff\psi\), \(k' \deffp k{+}\wmaxar\),
\(K' \deffp \intsUpTo{k}\) and $I'\deff I$ to construct a \(\Delta'\).
Setting \(\Delta\deff\Delta'\) finishes this case, because
\(\tilde{r}\geq r'\).

\proofsubparagraph{Case~\ref{itm:def:loc:kind_distance_atom}:}
\(\psi\) is of the form \ref{itm:def:loc:kind_distance_atom}.
Then it is of the form \(\distS{}{x_i, x_j} \leq \tilde{d}\)
with \(i, j \in K\)
and \(\tilde{d} \leq \tilde{r}\).
This case can be dealt with in the same way
as the distance atoms in the base case (Cases
\ref{case:fv_decomp:base_case:all_free_vars_in_component},
\ref{case:fv_decomp:base_case:all_free_vars_in_other_comp},
\ref{case:fv_decomp:base_case:distance_atom}).

\proofsubparagraph{Case~\ref{itm:def:loc:kind_near}:}
\(\psi\) is of the form \ref{itm:def:loc:kind_near}.
Then we have
\begin{equation*}
  \psi \paren{\tx_K}
  ~ = ~
  \exists x_{k+1}\,
  \paren[\big]{
    \, \distS{}{x_i, x_{k+1}} \leq \hat{d}
    \, \und \,
    \lambda \paren{\tx_K, x_{k+1}} \,
  }
\end{equation*}
with \(i \in K\),
\(\lambda \in \loc{q}{k{+}\wmaxar}{d}\), and $\hat{d}\leq\hat{r}$.
We use the induction hypothesis for \(k' \deffp k{+}\wmaxar\),
\(K' \deffp K\cup\set{k{+}1}\),
the formula \(\lambda\),
and each of the sets \(I_1 \deffp I\cup\set{k{+}1}\) and \(I_2 \deffp I\).
This yields finite, non-empty sets \(\Delta_1\) and \(\Delta_2\)
such that, for each \(\ell \in \set{1,2}\), the set \(\Delta_\ell\)
consists of pairs of formulas \(\paren{\alpha,\beta}\)
with \(\alpha \paren{\projTup{x}{I_{\ell}}},
\beta \paren{\projTup{x}{K' \setminus I_{\ell}}} \in \loc{q}{k'}{d}\)
and of quantifier rank $\leq\qr(\lambda)$
such that, for \(\tx' \deffp (x_1,\ldots,x_{k'})\), we have
\begin{equation}
  \label{eq:proof:fv_decomp:case:kind_near:indhyp_on_disjunction}
  \begin{array}{ll}
    & \lambda \paren{\proj{\vec{x}'}{K'}}
    \undp
    \dist(\proj{\vec{x}'}{I_\ell}; \proj{\vec{x}'}{K' \setminus I_\ell}) > r'
    \smallskip\\
    \equivp & \displaystyle
    \Oder_{\mathclap{\paren{\alpha,\beta} \in \Delta_{\ell}}}
    ~
    \paren[\Big]{
      \alpha \paren{\proj{\vec{x}'}{I_\ell}}
      \und
      \beta \paren{\proj{\vec{x}'}{K' \setminus I_\ell}}
    }
    \undp
    \dist(\proj{\vec{x}'}{I_\ell}; \proj{\vec{x}'}{K' \setminus I_\ell}) > r'
    .
  \end{array}
\end{equation}
We use these two index sets \(I_1\) and \(I_2\),
because they allow us to distinguish between the two cases
\(i \in I\) and \(i \in K\setminus I\).
In the first case, we add the quantified variable $x_{k+1}$ to the component indexed by \(I\),
and in the second case, to the other component.
For each \(\paren{\alpha, \beta} \in \Delta_1\),
we define
\begin{align}
  \label{eq:proof:fv_decomp:kind_near__def_alpha_close_to_first_component}
  \alpha' \paren{\projTup{x}{I}}
  & \deffp
  \exists x_{k+1}
  \paren[\big]{
    \distS{}{x_i, x_{k+1}} \leq \hat{d}
    \, \und \,
    \alpha \paren{\projTup{x}{I}, x_{k+1}}
  },
  \intertext{
and for each \(\paren{\alpha, \beta} \in \Delta_2\)
    we define}
  \beta' \paren{\projTup{x}{K\setminus I}}
  & \deffp
  \exists x_{k+1}
  \paren[\big]{
    \distS{}{x_i, x_{k+1}} \leq \hat{d}
    \,\und\,
    \beta \paren{\projTup{x}{K\setminus I},x_{k+1}}
  }
  .
\end{align}
If \(i \in I\), we let
\begin{align*}
  \Delta
  & \deffp
  \Delta_1'
  \deffp
  \setcompS[\Big]{
    \paren[\big]{
      \alpha' \paren{\projTup{x}{I}}
      \;,\;
      \beta \paren{\projTup{x}{K \setminus I}}
      }
  }{
    \paren{\alpha,\beta} \in \Delta_1
  }
  \, ,
\shortintertext{and otherwise we let}
\Delta
  & \deffp
  \Delta_2'
  \deffp
  \setcompS[\Big]{
    \paren[\big]{
      \alpha \paren{\vec{x}_{I}}
      \;,\;
        \beta' \paren{\projTup{x}{K \setminus I}}
    }
  }{\paren{\alpha, \beta} \in \Delta_2}
  \, .
\end{align*}

For all \(\paren{\alpha, \beta} \in \Delta_1 \cup \Delta_2\),
we have \(\alpha, \beta \in \loc{q}{k'}{d}\).
Since they only have free variables in \(\setd{x_1}{x_k}\),
by formation rule~\ref{itm:def:loc:kind_previous_rank},
they are also contained in \(\loc{q {+} 1}{k}{d}\);
and, since \(\hat{d} \leq \hat{r}\),
we also have \(\alpha', \beta' \in \loc{q {+} 1}{k}{d}\)
by formation rule~\ref{itm:def:loc:kind_near}.
Furthermore, it holds that
\(\qrS{\alpha'} = \qrS{\alpha} + 1 \annLeq{ind.\,hyp.} \qrS{\lambda} + 1 = \qrS{\psi}\)
and
\(\qrS{\beta'} = \qrS{\beta} + 1 \annLeq{ind.\,hyp.} \qrS{\lambda} + 1 = \qrS{\psi}\).
Thus, \(\Delta\) is a finite, non-empty set of pairs of formulas
\(\paren{\hat{\alpha}, \hat{\beta}}\)
with \(\hat{\alpha} \in \loc{q{+}1}{k}{d}\)
and \(\hat{\beta} \in \loc{q{+}1}{k}{d}\) and of quantifier
rank at most $\qr(\psi)$.
All that remains to be done is to prove
that equivalence \eqref{eq:fv-proof_goal_in_IndStep} holds.

Consider an arbitrary \(\tup{\sigma, \Weights}\)-structure \(\A\)
and a tuple \(\vec{a} \in A^k\)
with \(\distS{\A}{\projTup{a}{I};\projTup{a}{K\setminus I}} > \tilde{r}\).
We want to show that
\[
  \A\modelsp\psi(\bar a_K)
  ~~ \iff ~~
  \text{there is a }
  (\hat{\alpha},\hat{\beta}) \in \Delta
  \text{ such that }
  \A \models \hat{\alpha}(\bar a_I)
  \text{ and }
  \A \models \hat{\beta}(\bar a_{K\setminus I}).
\]
First, suppose \(i \in I\), so $\Delta=\Delta'_1$.
For the direction \enquote{\(\Longrightarrow\)}, assume that
\(\A \models \eval{\psi}{\vec{a}_K}\).
This implies that there exists an \(a_{k+1} \in A\)
with \(\distS{\A}{a_i, a_{k+1}} \leq \hat{d}\)
and \(\A \models \eval{\lambda}{\vec{a}_K, a_{k+1}}\).
Then, for every \(j \in K \setminus I\),
we have
\[
  \tilde{r}\ <\ \distS{\A}{a_i,a_j}\ \leq\ \distS{\A}{a_i, a_{k+1}} + \distS{\A}{a_{k+1}, a_j}
  \ \leq\ \hat{d} + \distS{\A}{a_{k+1}, a_j}.
\]
Thus, \(\distS{\A}{a_{k+1}, a_j}\,>\,\tilde{r} - \hat{d}
\,\geq\,\tilde{r} - \hat{r}\,\stackrel{\eqref{eq:rs_fv}}{=}\, r'\).
Since this holds for all \(j \in K \setminus I\),
we have shown that \(\distS{\A}{\projTup{a}{K\setminus I}; a_{k+1}} > r'\).
Furthermore, since \(\tilde{r} \geq r'\),
we also have \(\distS{\A}{\projTup{a}{I};\projTup{a}{K\setminus I}} > r'\).
Hence, for \(\ta' \deffp \tupd{a_1}{a_{k+1}}\),
since \(I_1 = I \cup \set{k{+}1}\)
and \(K' \setminus I_1 = K \setminus I\),
we have
\(\distS{\A}{\proj{{\vec{a}}'}{I_1};
\proj{{\vec{a}}'}{K' \setminus I_1}} > r'\).
Thus, from equivalence \eqref{eq:proof:fv_decomp:case:kind_near:indhyp_on_disjunction}
(for \(\ell = 1\))
and from the fact that \(\A \models \eval{\lambda}{\proj{{\vec{a}}'}{K'}}\),
we obtain that there exists a tuple \(\paren{\alpha,\beta} \in \Delta_1\)
such that \(\A \models \eval{\alpha}{\proj{{\vec{a}}'}{I_1}}\)
and \(\A \models \eval{\beta}{\proj{{\vec{a}}'}{K' \setminus I_1}}\).
Let \(\alpha' \paren{\projTup{x}{I}}\) be the formula
associated with \(\alpha \paren{\proj{\vec{x}'}{I_1}}\)
according to \eqref{eq:proof:fv_decomp:kind_near__def_alpha_close_to_first_component},
and note that \(a_{k+1}\) witnesses that \(\A \models \eval{\alpha'}{\projTup{a}{I}}\).

In summary,
\((\alpha',\beta) \in \Delta'_1 = \Delta\),
and we have \(\A \models \eval{\alpha'}{\projTup{a}{I}}\)
and \(\A \models \eval{\beta}{\projTup{a}{K\setminus I}}\).
This completes the direction \enquote{\(\Longrightarrow\)}.
\medskip

For the direction \enquote{\(\Longleftarrow\)},
let \(\paren{\hat{\alpha}, \hat{\beta}} \in \Delta\)
such that \(\A \models \eval{\hat{\alpha}}{\projTup{a}{I}}\)
and \(\A \models \eval{\hat{\beta}}{\vec a_{K\setminus I}}\).
We have to show that \(\A \models \eval{\psi}{\vec{a}_K}\).
Since we are considering the case \(i \in I\),
we know that \(\Delta = \Delta'_1\).
Thus, there is a tuple \(\paren{\alpha,\beta} \in \Delta_1\)
such that \(\hat{\alpha}\) is the formula \(\alpha' \paren{\projTup{x}{I}}\)
associated with \(\alpha \paren{\projTup{x}{I},x_{k+1}}\)
according to \eqref{eq:proof:fv_decomp:kind_near__def_alpha_close_to_first_component},
and \(\hat{\beta}\) is the formula \(\beta \paren{\projTup{x}{K\setminus I}}\).

Since \(\A \models \eval{\alpha'}{\projTup{a}{I}}\),
there exists an \(a_{k+1} \in A\)
such that \(\distS{\A}{a_i, a_{k+1}} \leq \hat{d}\)
and \(\A \models \eval{\alpha}{\projTup{a}{I}, a_{k+1}}\),
so \(\A \models \eval{\alpha}{\proj{\vec{a}'}{I_1}}\),
where $\ta' \deff (a_1,\ldots,a_{k+1})$.
Furthermore,
since \(\A \models \eval{\hat{\beta}}{\projTup{a}{K\setminus I}}\),
and because \(\hat{\beta}\) is the formula \(\beta \paren{\projTup{x}{K\setminus I}}\),
and \(\projTup{x}{K\setminus I} = \proj{\vec{x}'}{K' \setminus I_1}\),
we have \(\A \models \eval{\beta}{\proj{{\vec{a}}'}{K' \setminus I_1}}\).
As above for the direction \enquote{\(\Longrightarrow\)},
having \(\distS{\A}{a_i, a_{k+1}} \leq \hat{d}\)
implies that \(\distS{\A}{\projTup{a}{K \setminus I}; a_{k+1}} > r'\).
Moreover, we already know that
\(\distS{\A}{\projTup{a}{I}; \projTup{a}{K \setminus I}} > \tilde{r} \geq r'\).
Combined, this shows that
\(\dist^{\A}(\proj{{\vec{a}}'}{I_1}; \proj{{\vec{a}}'}{K' \setminus I_1}) > r'\).

Applying equivalence~\eqref{eq:proof:fv_decomp:case:kind_near:indhyp_on_disjunction}
(for \(\ell = 1\)),
since \(\paren{\alpha,\beta} \in \Delta_1\),
we obtain that \(\A \models \eval{\lambda}{\proj{\vec{a}'}{K'}}\).
Since \(\distS{\A}{a_i, a_{k+1}} \leq \hat{d}\),
we obtain that \(a_{k+1}\) witnesses that \(\A \models \eval{\psi}{\vec{a}_K}\).
This completes the proof in case \(i \in I\).
\medskip

In the other case, we have \(i \in K \setminus I\), so $\Delta=\Delta'_2$.
This case follows analogously to the previous case,
with the roles of \(I\) and \(K \setminus I\) interchanged
and using \(\Delta_2\), \(\Delta_2'\)
instead of \(\Delta_1\), \(\Delta_1'\).
\medskip

Now, the induction step is completed for the case that \(\psi\) is of
the form~\ref{itm:def:loc:kind_near}. We proceed with the next case.

\proofsubparagraph{Case~\ref{itm:def:loc:kind_weight_aggregation}:}
\(\psi\) is of the form \ref{itm:def:loc:kind_weight_aggregation}.
Then, we have
\[
  \psi\paren{\projTup{x}{K}} \ = \ \Bigl(
    s =
    \sum \weight(\ty).\underbrace{
    \bigl(\;
      \lambda
      \; \und
      \Oder_{\substack{j\in J,\\ \nu\in[\ell]}}
      \dist(x_j,x_{k+\nu})\,{\leq}\, r'
     \;\bigr)}_{\ffed \ \lambda'(\tx'_{K'})}
  \;\Bigr),
\]
where $\weight\in\Weights$,
$S=\wtype(\weight)$ is finite,
$s\in S$, $\ell=\ar(\weight)\geq 1$,
$\ty=
(x_{k+1},\ldots,x_{k+\ell})$,
$\emptyset\neq J \subseteq K$,
and
$\lambda\in\loc{q}{k{+}\wmaxar}{d}$ is a formula with free variables
among $\setc{x_\kappa}{\kappa\in K'}$, where $K'\deff
K\cup\set{k{+}1,\ldots,k{+}\ell}$.
Let $k'\deff k{+}\wmaxar$ and $\tx'\deff (x_1,\ldots,x_{k'})$.

If $J\cap I\neq \emptyset$, then, for every $t\in S$, we set
\[
  \varrho_t(\tx_K) \ \deff\ \Bigl(
    t =
    \sum \weight(\ty).\underbrace{
      \bigl(\;
        \lambda
        \; \und \!\!
        \Oder_{\substack{j\in J\cap I,\\ \nu\in[\ell]}}
        \dist(x_j,x_{k+\nu})\,{\leq}\, r'
    \;\bigr)}_{\ffed \ \lambda_1(\tx'_{K'})}
  \Bigr).
\]
Note that this formula is similar to $\psi$ but replaces $s$ with $t$
and ensures that
one of the quantified variables
$x_{k+1},\ldots,x_{k+\ell}$
is close to an $x_j$ with $j\in J\cap I$.

Analogously, if $J\setminus I\neq \emptyset$, we set
\[
  \vartheta_t(\tx_K) \ \deff \ \Bigl(
    t =
    \sum \weight(\ty).\underbrace{
      \bigl(\;
        \lambda
        \; \und \!\!
        \Oder_{\substack{j\in J\setminus I,\\ \nu\in[\ell]}}
        \dist(x_j,x_{k+\nu})\,{\leq}\, r'
      \;\bigr)}_{\ffed \ \lambda_2(\tx'_{K'})}
  \Bigr);
\]
this formula ensures that one of the quantified variables
$x_{k+1},\ldots,x_{k+\ell}$
is close to an $x_j$ with $j\in J\setminus I$.
Note that
the formulas $\lambda_1(\tx'_{K'})$ and
$\lambda_2(\tx'_{K'})$ belong to $\loc{q}{k{+}\wmaxar}{d}$ and have
the same quantifier rank as $\lambda$.

Clearly, if $J\cap I=\emptyset$, then $J\setminus I=J\neq \emptyset$,
and $\vartheta_t$ is defined for all $t\in S$. Analogously, if
$J\setminus I=\emptyset$, then $J\subseteq I$ and $J\cap
I=J\neq\emptyset$, and $\varrho_t$ is defined for all $t\in S$.

Let $T\deff \setc{(t_1,t_2)\in S\times S}{t_1 +_S t_2 = s}$ (this set
is finite because $S$ is finite).
Let
\[
  \hat{\psi}
  ~ \deff ~
  \left\{
    \begin{array}{ll}
       \vartheta_s
    & \text{if } J\cap I=\emptyset,
    \\
        \varrho_s
    & \text{if } J\setminus I=\emptyset,
    \\
       \Oder_{(t_1,t_2)\in T} \big( \varrho_{t_1} \und \vartheta_{t_2} \big)
    & \text{otherwise.}
    \end{array}
  \right.
\]

We fix the following notation. For any
 \(\tup{\sigma, \Weights}\)-structure \(\A\)
and any tuple \(\ta \in A^k\)
with \(\distS{\A}{\projTup{a}{I};\projTup{a}{K\setminus I}} > \tilde{r}\),
we let

\begin{itemize}
\item
 $M^\A_{\ta} \ \ \deff \ \bigsetc{\,\tb \in A^\ell\,}
{\,\A \models \lambda'(\tc_K) \text{ and } \weight^\A(\tb) \neq 0_S,
  \text{ where } \tc \deff \ta \tb\,},$
\item
$M^\A_{\ta,1} \ \deff \ \bigsetc{\,\tb \in A^\ell\,}
{\,\A \models \lambda_1(\tc_K) \text{ and } \weight^\A(\tb) \neq 0_S,
  \text{ where } \tc \deff \ta \tb\,},$
if \(J \cap I \neq \emptyset\),
\item
$M^\A_{\ta,2} \ \deff \ \bigsetc{\,\tb \in A^\ell\,}
{\,\A \models \lambda_2(\tc_K) \text{ and } \weight^\A(\tb) \neq 0_S,
\text{ where } \tc \deff \ta \tb\,}$,
if \(J \setminus I \neq \emptyset\).
\end{itemize}

\noindent
Clearly, we have
\begin{equation*}
  \A \models \psi(\ta_K)
  ~ \iff ~
  s = \sum_{\tb \in M^\A_{\ta}} \weight^\A(\tb);
\end{equation*}
and for all \(t \in S\) we have
\begin{equation}
  \label{eq:fv:varrhot}
  \A \models \varrho_t(\ta_K)
  ~ \iff ~
  t = \sum_{\tb \in M^\A_{\ta,1}} \weight^\A(\tb)
\end{equation}
(if $J\cap I\neq \emptyset$), and
\begin{equation*}
  \A \models \vartheta_t(\ta_K)
  ~ \iff ~
  t = \sum_{\tb \in M^\A_{\ta,2}} \weight^\A(\tb)
\end{equation*}
(if $J\setminus I\neq \emptyset$).

\begin{claim}\label{claim:fv_agg_M}
  Let \(\A\) be a \(\tup{\sigma, \Weights}\)-structure, and
  let \(\ta \in A^k\)
  with \(\distS{\A}{\projTup{a}{I};\projTup{a}{K\setminus I}} >
  \tilde{r}\).
  \\
  If \(J \cap I \neq \emptyset\),
  then, for all \(\tb \in M^\A_{\ta,1}\)
  we have \(\distS{\A}{\projTup{a}{I}\tb; \projTup{a}{K \setminus I}}
  > r'\).
  \\
  If \(J \setminus I \neq \emptyset\),
  then, for all \(\tb \in M^\A_{\ta,2}\)
  we have \(\distS{\A}{\projTup{a}{I}; \projTup{a}{K \setminus I}\tb}
  > r'\).
  \\
  If \(J \cap I \neq \emptyset\) and \(J \setminus I \neq \emptyset\),
  then we have \(M^\A_{\ta} = M^\A_{\ta,1} \uplus M^\A_{\ta,2}\).
\end{claim}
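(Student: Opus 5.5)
The claim has three parts, and all of them reduce to triangle-inequality estimates using $\tilde r = 4kr' \geq 4r'$ (recall $k \geq 2$) together with the locality condition on weighted structures. Throughout, fix $\A$ and $\ta$ with $\dist^\A(\ta_I;\ta_{K\setminus I}) > \tilde r$.

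\emph{First statement.} Suppose $J\cap I\neq\emptyset$ and $\tb\in M^\A_{\ta,1}$.
Since $\A\models\lambda_1$, there are $j\in J\cap I$ and $\nu\in[\ell]$ with $\dist^\A(a_j,b_\nu)\leq r'$. Since $\weight^\A(\tb)\neq 0_S$, the locality condition gives one of two cases. Either all entries of $\tb$ coincide, or they form a clique in the Gaifman graph. In both cases $\dist^\A(b_\nu,b_\mu)\leq 1$ for all $\mu\in[\ell]$, so every $b_\mu$ satisfies $\dist^\A(a_j,b_\mu)\leq r'+1$.

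Now take any $j'\in K\setminus I$. Since $j \in I$, we have $\dist^\A(a_j,a_{j'})>\tilde r$, and the triangle inequality yields
\[
\dist^\A(b_\mu,a_{j'}) \;>\; \tilde r - r' - 1 \;\geq\; 4r' - r' - 1 \;\geq\; r',
\]
where the last step uses $r'\geq 1$. Together with the hypothesis $\dist^\A(\ta_I;\ta_{K\setminus I})>\tilde r\geq r'$, this gives $\dist^\A(\ta_I\tb;\ta_{K\setminus I})>r'$.

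\emph{Second statement.} If $J\setminus I\neq\emptyset$ and $\tb\in M^\A_{\ta,2}$, the argument is symmetric: swap the roles of $I$ and $K\setminus I$.

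\emph{Third statement.} Assume both $J\cap I\neq\emptyset$ and $J\setminus I\neq\emptyset$. The formula $\lambda'$ is equivalent to $\lambda_1\lor\lambda_2$, because the disjunction over $J$ splits into the disjunction over $J\cap I$ and the disjunction over $J\setminus I$. Hence $M^\A_{\ta}=M^\A_{\ta,1}\cup M^\A_{\ta,2}$. For disjointness, suppose $\tb$ lies in both sets. By the first statement, every $b_\mu$ is at distance $>r'$ from every $a_{j'}$ with $j'\in K\setminus I$. On the other hand, $\A\models\lambda_2$ provides some $j'\in J\setminus I$ and some $\nu$ with $\dist^\A(a_{j'},b_\nu)\leq r'$. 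These two facts contradict each other, so $M^\A_{\ta}=M^\A_{\ta,1}\uplus M^\A_{\ta,2}$.

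The only step requiring care is the first: invoking the locality condition, which bounds the pairwise distances among the entries of $\tb$ by $1$, and checking that the slack in $\tilde r$ covers this extra $+1$.
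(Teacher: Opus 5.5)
Your proposal is correct and follows the paper's proof essentially step for step. You use the same chain $\tilde r < \dist^{\A}(a_j,b_\nu)+\dist^{\A}(b_\nu,b_\mu)+\dist^{\A}(b_\mu,a_{j'}) \leq r'+1+\dist^{\A}(b_\mu,a_{j'})$ via the locality condition together with $\tilde r = 4kr'$ and $r'\geq 1$, the symmetric argument for $M^{\A}_{\ta,2}$, and the first statement to exclude overlap in $M^{\A}_{\ta}=M^{\A}_{\ta,1}\cup M^{\A}_{\ta,2}$; your appeal to $k\geq 2$ is harmless, though $k\geq 1$ already suffices.
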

\begin{claimproof}
  First, suppose \(J \cap I \neq \emptyset\),
  and let \(\tb \in M^\A_{\ta,1}\).
  Then there exist $i\in J\cap I$ and \(\nu \in [\ell]\) such that
  \(\distS{\A}{a_i, b_\nu} \leq r'\).
  Moreover, for all \(\mu \in [\ell]\), since \(\weight^\A(\tb) \neq 0_S\),
  we have \(\distS{\A}{b_\nu, b_\mu} \leq 1\) by the \emph{locality
    condition}
  that is satisfied by the weighted structure \(\A\).

  We have to show that
  \(\distS{\A}{\projTup{a}{I}\tb; \projTup{a}{K\setminus I}} > r'\),
  and we already know that
  \(\distS{\A}{\projTup{a}{I}; \projTup{a}{K\setminus I}}
  > \tilde{r} \geq r'\).
  All that remains to be done is to consider arbitrary
  \(\kappa \in K \setminus I\) and \(\mu \in [\ell]\)
  and show that
  \(\distS{\A}{b_\mu, a_\kappa} > r'\).
  Since \(i\in I\) and \(\kappa \in K\setminus I\),
  we have
  \(\distS{\A}{a_i, a_\kappa} > \tilde{r}\).
  Thus,
  \[
    \tilde{r}
    ~ < ~
    \dist^{\A}(a_i,a_\kappa)
    ~ \leq ~
    \dist^{\A}(a_i,b_\nu) + \dist^{\A}(b_\nu,b_\mu) +
    \dist^{\A}(b_\mu,a_\kappa)
    ~ \leq ~
    r'+1+\dist^{\A}(b_\mu,a_\kappa)\,.
  \]
  Hence, we have
  \(
    \distS{\A}{b_\mu, a_\kappa}
    >
    \tilde{r}-r'-1
    \overset{\eqref{eq:def:radius_function}}{=}
    4kr' - r' - 1
    \geq
    r'
  \).
  This holds for all \(\mu \in [\ell]\)
  and all \(\kappa \in K \setminus I\).
  In summary, we obtain that
  \(\distS{\A}{\projTup{a}{I}\tb; \projTup{a}{K \setminus I}} > r'\).

  For the case \(J \setminus I \neq \emptyset\),
  we can analogously prove
  \(\distS{\A}{\projTup{a}{I}; \projTup{a}{K \setminus I}\tb} > r'\)
  for all \(\tb \in M^\A_{\ta, 2}\).
  Lastly, for the case \(J \cap I \neq \emptyset\)
  and \(J \setminus I \neq \emptyset\),
  it is easy to see that \(M^\A_{\ta} = M^\A_{\ta,1} \cup M^\A_{\ta,2}\).
  Moreover, if \(\tb \in M^\A_{\ta,1}\),
  then \(\distS{\A}{\tb; \projTup{a}{K \setminus I}} > r'\),
  which implies that \(\tb \not\in M^\A_{\ta,2}\),
  Hence, \(M^\A_{\ta,1} \cap M^\A_{\ta,2} = \emptyset\),
  so \(M^\A_{\bar a} = M^\A_{\ta,1} \uplus M^\A_{\ta,2}\).
\end{claimproof}

\begin{claim}\label{claim:fv_agg_one}
  \(\psi(\tx_K)\; \und\; \dist(\projTup{x}{I}; \projTup{x}{K\setminus I})
  \,{>}\,\tilde{r}
  \ \ \equiv
  \ \ \hat{\psi}(\projTup{x}{K}) \;\und\; \dist(\projTup{x}{I};\projTup{x}{K\setminus I})
  \,{>}\,\tilde{r}\)\,.
\end{claim}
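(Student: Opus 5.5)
We fix a $(\sigma,\Weights)$-structure $\A$ and $\ta\in A^k$ with $\distS{\A}{\projTup{a}{I};\projTup{a}{K\setminus I}}>\tilde{r}$. Under this assumption we show $\A\models\psi(\ta_K)\iff\A\models\hat{\psi}(\ta_K)$. Everything reduces to comparing sums over the sets $M^\A_{\ta}$, $M^\A_{\ta,1}$, $M^\A_{\ta,2}$. By the characterisation of $\psi$ stated before \cref{claim:fv_agg_M}, $\A\models\psi(\ta_K)$ holds iff $s=\sum_{\tb\in M^\A_{\ta}}\weight^\A(\tb)$. This is legitimate because tuples of weight $0_S$ contribute nothing, so restricting to $\weight^\A(\tb)\neq 0_S$ does not change the sum. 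The analogous characterisations hold for $\varrho_t$ and $\vartheta_t$ via $M^\A_{\ta,1}$ and $M^\A_{\ta,2}$. We treat the three cases in the definition of $\hat{\psi}$ separately.

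\emph{Case $J\cap I=\emptyset$.} Then $J=J\setminus I$. Hence $\lambda'$ and $\lambda_2$ are syntactically the same formula, $M^\A_{\ta}=M^\A_{\ta,2}$, and $\psi=\vartheta_s=\hat{\psi}$. The case $J\setminus I=\emptyset$ is symmetric: there $\lambda'=\lambda_1$ and $\psi=\varrho_s$. In both situations the equivalence holds outright, even without the distance assumption.

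\emph{Mixed case, $J\cap I\neq\emptyset\neq J\setminus I$.} The disjunction over $J$ in $\lambda'$ splits into the disjunctions over $J\cap I$ and over $J\setminus I$. By the last part of \cref{claim:fv_agg_M}, $M^\A_{\ta}=M^\A_{\ta,1}\uplus M^\A_{\ta,2}$. Let $t_1\deff\sum_{\tb\in M^\A_{\ta,1}}\weight^\A(\tb)$ and $t_2\deff\sum_{\tb\in M^\A_{\ta,2}}\weight^\A(\tb)$. By disjointness and commutativity of $\plusS$, $\sum_{\tb\in M^\A_{\ta}}\weight^\A(\tb)=t_1\plusS t_2$, and the structure satisfies $\varrho_{t_1}\und\vartheta_{t_2}$.

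For ``$\Rightarrow$'': if $\A\models\psi(\ta_K)$, then $t_1\plusS t_2=s$, so $(t_1,t_2)\in T$ and the disjunct $\varrho_{t_1}\und\vartheta_{t_2}$ of $\hat{\psi}$ holds.

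For ``$\Leftarrow$'': if $\A\models\varrho_{u_1}\und\vartheta_{u_2}$ for some $(u_1,u_2)\in T$, then by \eqref{eq:fv:varrhot} and its analogue the sums are uniquely determined. Hence $u_1=t_1$ and $u_2=t_2$, so the total sum is $u_1\plusS u_2=s$, giving $\A\models\psi(\ta_K)$.

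The only real content is the disjointness $M^\A_{\ta,1}\cap M^\A_{\ta,2}=\emptyset$, which is where the assumption $\dist>\tilde{r}$ enters. It is already provided by \cref{claim:fv_agg_M}, so no step is a serious obstacle. The remaining care is bookkeeping: confirm that $M^\A_{\ta,1}\cup M^\A_{\ta,2}=M^\A_{\ta}$ follows directly from splitting the disjunction in $\lambda'$, and that the same set of tuples $\tb$ is summed in each formula.
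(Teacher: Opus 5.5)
Your proposal is correct and follows essentially the same route as the paper. The two degenerate cases are dispatched by observing $\psi=\vartheta_s$ resp.\ $\psi=\varrho_s$ syntactically. The mixed case uses the disjoint decomposition $M^\A_{\ta}=M^\A_{\ta,1}\uplus M^\A_{\ta,2}$ from \cref{claim:fv_agg_M}, together with the sum characterisations of $\varrho_t$ and $\vartheta_t$ and the choice of $T$.
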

\begin{claimproof}
  If $J\cap I=\emptyset$, then $J=J\setminus I$ and $\psi=\vartheta_s=\hat{\psi}$.
  If $J\setminus I =\emptyset$, then $J=J\cap I$ and
  $\psi=\varrho_s=\hat{\psi}$.
  Hence, in both cases the claim holds trivially.

  Now consider the remaining case that $J\cap I\neq \emptyset$
  and $J\setminus I\neq \emptyset$.
  Then, for an arbitrary \(\tup{\sigma, \Weights}\)-structure \(\A\)
  and a tuple \(\ta \in A^k\)
  with \(\distS{\A}{\projTup{a}{I};\projTup{a}{K\setminus I}} > \tilde{r}\),
  we have \(\A \models \psi(\ta_K)\) $\iff$
  \(s = \sum_{\tb \in M^\A_{\ta}} \weight^\A(\tb)
  \overset{\text{\cref{claim:fv_agg_M}}}{=}
  \sum_{\tb \in M^\A_{\ta,1}} \weight^\A(\tb)
  \ +_S \ \sum_{\tb \in M^\A_{\ta,2}} \weight^\A(\tb)\),
  where \(M^\A_{\ta}\), \(M^\A_{\ta,1}\),
  and \(M^\A_{\ta,2}\) are as defined above.
  Thus, \(\A \models \psi(\ta_K)\) $\iff$
  there are \(t_1, t_2 \in S\) with \(s = t_1 +_S t_2\),
  \(t_1 = \sum_{\tb \in M^\A_{\ta,1}} \weight^\A(\tb)\), and
  \(t_2 = \sum_{\tb \in M^\A_{\ta,2}} \weight^\A(\tb)\).
  The latter holds if and only if
  \(\A \models \varrho_{t_1}(\ta_K)\)
  and
  \(\A \models \vartheta_{t_2}(\ta_K)\).
  According to our choice of $T$ and $\hat{\psi}$, we obtain that
  $\A\models\psi(\bar a_K)\iff\A\models\hat{\psi}(\bar a_K)$.
\end{claimproof}

The next two claims provide the main step needed for finishing the proof
for the case that $\psi$ is of the form
\ref{itm:def:loc:kind_weight_aggregation}.

\begin{claim}
  \label{claim:fv_agg_varrho}
  If $J\cap I\neq \emptyset$, then, for every $t\in S$,
  we can compute a finite, non-empty set $\Delta_t$ of pairs of formulas
  $\big(\alpha(\tx_I),\beta(\tx_{K\setminus I})\big)$
  such that $\alpha(\tx_I)\in\loc{q{+}1}{k}{d}$,
  $\beta(\tx_{K\setminus I})\in\loc{q{+}1}{k}{d}$, their quantifier
  rank is $\leq \qr(\varrho_t)$, and
  \[
    \varrho_t\paren{\proj{\vec{x}}{K}}
    \undp
    \dist(\proj{\vec{x}}{I}; \proj{\vec{x}}{K \setminus I})> \tilde{r}
    ~~ \equiv ~~
    \Oder_{\mathclap{\paren{\alpha,\beta} \in \Delta_t}}
    ~~
    \paren[\Big]{
      \alpha \paren{\proj{\vec{x}}{I}}
      \und
      \beta \paren{\proj{\vec{x}}{K \setminus I}}
    }
    \undp
    \dist(\proj{\vec{x}}{I}; \proj{\vec{x}}{K \setminus I})> \tilde{r}\,.
  \]
\end{claim}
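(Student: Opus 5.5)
The plan is to use the inductive Feferman--Vaught hypothesis on $\lambda_1$, but with the \emph{second} entries mutually exclusive. Then aggregate over the $\alpha$-part only, since all tuples $\tb\in M^\A_{\ta,1}$ live on the $I$-side.

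Concretely, let $k'\deff k{+}\wmaxar$, $K'\deff K\cup\set{k{+}1,\ldots,k{+}\ell}$ and $I'\deff I\cup\set{k{+}1,\ldots,k{+}\ell}$, so that $K'\setminus I' = K\setminus I$. We have $\lambda_1(\tx'_{K'})\in\loc{q}{k'}{d}$, and $\emptyset\neq I'\neq K'$ because $K\setminus I\neq\emptyset$. The induction hypothesis of \cref{lemma:fv} (radius $r'=\locRad{q}{k'}{d}$), followed by the second variant of \cref{lemma:fv_mux}, yields a set $\Delta'$ of pairs $(\alpha(\tx'_{I'}),\beta(\tx_{K\setminus I}))$ in $\loc{q}{k'}{d}$. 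Their quantifier rank is at most $\qr(\lambda)$, the second entries are mutually exclusive, and decomposition \eqref{eq:lem:fv_decomp} holds under $\dist(\tx'_{I'};\tx_{K\setminus I})>r'$. Possibly add the pair $(\false,\bigwedge_{\beta}\neg\beta)$ so that the $\beta$'s are also exhaustive; this is harmless.

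Now fix $\A$ and $\ta$ with $\dist^\A(\ta_I;\ta_{K\setminus I})>\tilde r$. By \cref{claim:fv_agg_M}, every $\tb\in M^\A_{\ta,1}$ satisfies $\dist^\A(\ta_I\tb;\ta_{K\setminus I})>r'$, so the decomposition applies to $\ta\tb$. Because the $\beta$'s are exclusive and exhaustive, exactly one $\beta$ holds at $\ta_{K\setminus I}$; it does not depend on $\tb$. Call the corresponding pair $(\alpha_0,\beta_0)$. Then, for $\tb$ with $\weight^\A(\tb)\neq 0_S$,
\[\tb\in M^\A_{\ta,1}\iff \A\models \alpha_0(\ta_I\tb)\land\textstyle\bigvee_{j\in J\cap I,\nu\in[\ell]}\dist(a_j,b_\nu)\leq r'.\]
For the direction ``$\Leftarrow$'', the proximity disjunct together with the computation in \cref{claim:fv_agg_M} (which uses only the proximity and the locality condition, not $\lambda_1$ itself) gives the distance condition needed to apply the decomposition in reverse.

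Hence we define, for each $(\alpha,\beta)\in\Delta'$,
\[\alpha_t\deff\Bigl(t=\sum\weight(\ty).\bigl(\alpha\land\textstyle\bigvee_{j\in J\cap I,\nu\in[\ell]}\dist(x_j,x_{k+\nu})\leq r'\bigr)\Bigr).\]
This is of form \ref{itm:def:loc:kind_weight_aggregation} with index set $J\cap I\neq\emptyset$ and $\free(\alpha_t)\subseteq\set{x_i: i\in I}$. So $\alpha_t(\tx_I)\in\loc{q{+}1}{k}{d}$ and $\qr(\alpha_t)\leq\qr(\lambda)+1=\qr(\varrho_t)$. Since $\beta\in\loc{q}{k'}{d}$ has free variables among $x_1,\ldots,x_k$, rule \ref{itm:def:loc:kind_previous_rank} puts it in $\loc{q{+}1}{k}{d}$. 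We set
\[\Delta_t\deff\setc{(\alpha_t,\beta)}{(\alpha,\beta)\in\Delta'}.\]
The equivalence then follows: $\A\models\varrho_t(\ta_K)$ iff $t$ equals the sum over $M^\A_{\ta,1}$, which by \eqref{eq:fv:varrhot} and the display above equals the sum defining $\alpha_{0,t}$. That holds iff $\A\models\alpha_{0,t}(\ta_I)\land\beta_0(\ta_{K\setminus I})$, and by exclusivity of the $\beta$'s this holds iff some pair in $\Delta_t$ is satisfied.

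The main obstacle is the justification of the displayed equivalence. In particular, one must check that tuples $\tb$ satisfying $\alpha_0$ plus proximity, but with $\weight^\A(\tb)=0_S$ or violating the distance side condition, cause no trouble. Zero-weight tuples contribute nothing to either sum. For nonzero weight, the locality condition forces all $b_\mu$ to lie within $r'+1$ of $a_j$, hence farther than $\tilde r-r'-1\geq r'$ from $\ta_{K\setminus I}$. This is exactly the bound $4kr'-r'-1\geq r'$ used before. Computability is immediate from the inductive computability of $\Delta'$ and of \cref{lemma:fv_mux}.
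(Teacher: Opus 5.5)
Your proof is correct and follows essentially the same route as the paper: you apply the induction hypothesis to $\lambda_1$ with $I\cup\set{k{+}1,\ldots,k{+}\ell}$, make the second entries mutually exclusive via \cref{lemma:fv_mux}, use the same aggregation formulas $\alpha_t$, and take the same key step (the analogue of \cref{claim:fv_agg_M} for tuples satisfying $\alpha$ plus the proximity disjunct). The only difference is cosmetic. Adding the dummy pair $(\false,\Land_{\beta}\nicht\beta)$ makes the second entries exhaustive, so the induced pair $(\alpha_t,\Land_{\beta}\nicht\beta)$ with $\alpha_t\equiv(t=0_S)$ plays the role of the paper's extra pair $(\true,\Land_{\beta}\nicht\beta)$, which the paper adds only when $t=0_S$; as a result, the paper's two cases merge into one.
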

\begin{claimproof}
  We use the induction hypothesis for
  $k'= k{+}\wmaxar$,
  $K'= K\cup\set{x_{k+1},\ldots,x_{k+\ell}}$, the formula
  $\lambda_1(\tx'_{K'})$, and
  the set $I_1\deff I\cup\set{x_{k+1},\ldots,x_{k+\ell}}$.
  This yields a finite, non-empty set \(\Delta'_1\)
  that
  consists of pairs of formulas \(\paren{\alpha,\beta}\)
  with \(\alpha \paren{\projTup{x}{I_{1}}} \in \loc{q}{k'}{d}\) and
  \(\beta \paren{\projTup{x}{K' \setminus I_{1}}} \in \loc{q}{k'}{d}\)
  and of quantifier rank $\leq \qr(\lambda_1)$,
  such that, for \(\vec{x}' = (x_1,\ldots,x_{k'})\), we have
\begin{equation}
    \label{eq:proof:fv_decomp:case:kind_weight_aggregation:indhyp_one}
    \begin{array}{ll}
    & \lambda_1 \paren{\proj{\vec{x}'}{K'}}
    \undp
    \dist(\proj{\vec{x}'}{I_1}; \proj{\vec{x}'}{K' \setminus I_1}) > r'
    \smallskip\\
    \equivp & \displaystyle
    \Oder_{\mathclap{\paren{\alpha,\beta} \in
                                  \Delta'_{1}}} \quad
    \paren[\Big]{
      \alpha \paren{\proj{\vec{x}'}{I_1}}
      \und
      \beta \paren{\proj{\vec{x}'}{K' \setminus I_1}}
    }
    \undp
    \dist(\proj{\vec{x}'}{I_1}; \proj{\vec{x}'}{K' \setminus I_1}) > r'
            .
          \end{array}
  \end{equation}
Note that,
  by our choice of $K'$ and $I_1$, we have
  $K'\setminus I_1=K\setminus I$,
  so $\tx'_{K'\setminus I_1} = \tx_{K\setminus I}$.
  Furthermore, we have $\tx'_{I_1} = \tx_I \ty$,
  where $\ty=(x_{k+1},\ldots,x_{k+\ell})$.
Our choice of $I_1$ reflects the fact that the formula $\varrho_t$
  aims at tuples $\ty$ where at least one of the variables in $\bar
  y$ is close to an $x_j$ with $j\in J\cap I\subseteq I$.
By \cref{lemma:fv_mux}, we can assume w.l.o.g.\ that
  the \emph{second} entries in $\Delta'_1$ are mutually exclusive.

  Now fix a $t\in S$.
  For each $(\alpha,\beta)\in\Delta'_1$, we define
  \[
    \alpha_t(\tx_I)
    ~ \deff ~
    \Big(
        t =
        \sum \weight(\ty).
        \big(\;
          \alpha(\tx'_{I_1})
          \; \und
          \Oder_{\substack{j\in J\cap I,\\ \nu\in[\ell]}} \dist(x_j,x_{k+\nu})\,{\leq}\, r'
        \;\big)
    \Big).
  \]

  Let \(\tilde{\Delta}_t \deff \bigsetc{(\alpha_t,\beta)}
  {(\alpha,\beta) \in \Delta'_1}\),
  \(\myBetas \deff \setc{\beta}{\text{there exists an } \alpha
  \text{ such that } (\alpha,\beta)\in \Delta'_1}\), and
  \[
    \Delta_t\ \deff\ \begin{cases}
    \tilde{\Delta}_t
    & \text{if }  t\neq 0_S,\\
    \tilde{\Delta}_t \,\cup\,\bigset{\big(\true\,,\,\Land_{\beta\in\myBetas} \neg\beta\big)}
    & \text{if } t=0_S.
    \end{cases}
  \]
  Note that $\Delta_t$ is a finite, non-empty set of pairs of formulas
  $(\alpha',\beta')$ such that $\alpha'(\tx_I)\in\loc{q{+}1}{k}{d}$ and
  $\beta'(\tx_{K\setminus I})\in\loc{q{+}1}{k}{d}$ and their
  quantifier rank is $\leq \qr(\varrho_t)$.

  All that remains to be done is to prove
  that the equivalence stated in the claim holds.
  Consider a \(\paren{\sigma,\Weights}\)-structure \(\A\)
  and a tuple \(\ta \in A^k\) such that
  \(\distS{\A}{\ta_I; \ta_{K\setminus I}} > \tilde{r}\).
  We have to show that
  \begin{equation}\label{eq:fv:varrho:goal}
    \A\models\varrho_t(\ta_K)
    \iff
    \text{there is an } (\alpha',\beta')\in\Delta_t \text{ with }
    \A\models\alpha'(\ta_I)
    \text{ and }
    \B\models\beta'(\ta_{K\setminus I}).
  \end{equation}

  Since the second entries in \(\Delta'_1\) are mutually exclusive,
  there exists at most one \(\beta \in \myBetas\)
  such that \(\A \models \beta(\projTup{a}{K\setminus I})\).
  Hence, one of the following two cases applies.

  \bigskip
  \noindent
  \emph{Case~1:}
  There is exactly one \(\beta \in \myBetas\)
  such that \(\A \models \beta(\projTup{a}{K \setminus I})\).
In this case,
  the particular definition of the notion of
  \enquote{the second entries in \(\Delta'_1\) being mutually exclusive}
  implies that there exists exactly one \(\alpha\)
  such that \((\alpha,\beta) \in \Delta'_1\).
  Using this \(\alpha\), we let
  \[\lambda_\alpha(\tx'_{I_1}) \ \deff \ \alpha(\tx'_{I_1})
  \land \Lor_{\substack{j  \in J \cap I,\\ \nu \in [\ell]}}
  \distAtom{x_j}{x_{k + \nu}}{r'}\]
  and
  \[N^\A_{\ta} \ \deff \ \bigsetc{\,\tb \in A^\ell\,}
  {\,\A \models \lambda_\alpha(\tc_{I_1}) \text{ and } \weight^\A(\tb) \neq 0_S,
    \text{ where } \tc \deff \ta \tb\,}.\]
In the following, we will show that
$N^{\A}_{\ta}=M^{\A}_{\ta,1}$.
Note that
  \begin{equation}
    \label{eq:fv:alphat}
    \A\models\alpha_t(\ta_I)
    ~ \iff ~
    t = \sum_{\tb \in N^\A_{\ta}} \weight^{\A}(\tb).
  \end{equation}
  Analogously to \cref{claim:fv_agg_M},
  we can show that
  \begin{equation}
    \label{eq:fv:agg_dist_N}
    \distS{\A}{\projTup{a}{I}\tb; \projTup{a}{K \setminus I}} > r',
    ~
    \text{ for all }
    \tb \in N^\A_{\ta}.
  \end{equation}
Now let \(\tb \in A^\ell\) with \(\weight^\A(\tb) \neq 0_S\),
  and set \(\tc \deff \ta \tb\).
  Then, by \cref{claim:fv_agg_M},
  we have
  \[\tb \in M^\A_{\ta, 1} \ \iff \ \A \models \lambda_1(\tc_K) \text{ and }
  \distS{\A}{\projTup{c}{I_1}; \projTup{c}{K' \setminus I_1}} > r'.\]
  According to our choice of \(\Delta_t\),
  the tuple \((\alpha_t, \beta)\) is the only \((\alpha', \beta') \in \Delta_t\)
  such that \(\A \models \beta'(\projTup{a}{K\setminus I})\).
  Therefore,
  by \cref{eq:proof:fv_decomp:case:kind_weight_aggregation:indhyp_one},
  we have
  \[
    \A \models \lambda_1(\tc_K) \text{ and }
    \distS{\A}{\projTup{c}{I_1}; \projTup{c}{K' \setminus I_1}} > r'
    ~ \iff ~
    \A \models \alpha(\tc_{I_1}) \text{ and }
    \distS{\A}{\projTup{c}{I_1}; \projTup{c}{K' \setminus I_1}} > r'.
  \]
  Moreover, if \(\A \models \lambda_1(\tc_K)\),
  then, in particular, by the definition of \(\lambda_1\),
  there are \(j \in J \cap I\) and \(\nu \in [\ell]\)
  such that \(\distS{\A}{a_j, b_\nu} \leq r'\).
  Hence,
  \[
    \A \models \lambda_1(\tc_K) \text{ and }
    \distS{\A}{\projTup{c}{I_1}; \projTup{c}{K' \setminus I_1}} > r'
    ~ \iff ~
    \A \models \lambda_\alpha(\tc_{I_1}) \text{ and }
    \distS{\A}{\projTup{c}{I_1}; \projTup{c}{K' \setminus I_1}} > r'.
  \]
  Finally, by \cref{eq:fv:agg_dist_N}, we have
  \[
    \A \models \lambda_\alpha(\tc_{I_1}) \text{ and }
    \distS{\A}{\projTup{c}{I_1}; \projTup{c}{K' \setminus I_1}} > r'
    \ \iff \ \tb \in N^\A_{\ta}.
  \]
  All in all, this shows that
  \(N^\A_{\ta} = M^{\A}_{\ta,1}\).

  Combined with \cref{eq:fv:alphat,eq:fv:varrhot},
  this proves the following equivalence:
  \[
    \begin{array}{rcl}
      \text{there is an \((\alpha',\beta') \in \Delta_t\)
        with \(\A \models \alpha'(\ta_I)\)
      and \(\A \models \beta'(\ta_{K\setminus I}\))}
      &
      \!\!\!\!\iff
      & \A\models\alpha_t(\ta_I)
      \\
      &
      \!\!\!\!\stackrel{\eqref{eq:fv:alphat}}{\iff}
      & t = \sum_{\tb \in N^{\A}_{\ta}} \weight^{\A}(\tb)
      \\
      &
      \!\!\!\!\!\!\stackrel{N^{\A}_{\ta}=M^{\A}_{\ta,1}}{\iff}\!\!
      & t = \sum_{\tb \in M^{\A}_{\ta,1}} \weight^{\A}(\tb)
      \\
      &
      \!\!\!\!\stackrel{\eqref{eq:fv:varrhot}}{\iff}
      & \A \models \varrho_t(\ta_K).
    \end{array}
  \]
  This proves that equivalence \eqref{eq:fv:varrho:goal} holds and
  completes Case~1.

  \bigskip
  \noindent
  \emph{Case~2:}
  For all \(\beta \in \myBetas\),
  we have
  \(\A \not\models \beta(\ta_{K\setminus I})\).
  As above, for every \(\tb \in A^\ell\) with \(\weight^\A(\tb) \neq 0_S\)
  and \(\tc \deff \ta\tb\),
  by \cref{claim:fv_agg_M},
  we have that
    \[\tb \in M^\A_{\ta, 1} \ \iff \ \A \models \lambda_1(\tc_K) \text{ and }
  \distS{\A}{\projTup{c}{I_1}; \projTup{c}{K' \setminus I_1}} > r'.\]
  By \cref{eq:proof:fv_decomp:case:kind_weight_aggregation:indhyp_one},
  since there is no \(\beta \in \myBetas\)
  with \(\A \models \beta(\projTup{a}{K \setminus I})\),
  this implies that
  \(M^\A_{\ta, 1} = \emptyset\).
  Using \cref{eq:fv:varrhot} yields that
  \[
    \A\models\varrho_t(\ta_K)\ \iff \ t=0_S.
  \]
  Now, recall our choice of $\Delta_t$. If $t=0_S$, then
  $\Delta_t$ contains the tuple $\big(\true\,,
  \Und_{\beta\in\myBetas}\nicht\beta\,\big)$, and we have
  $\A\models\true$ and $\A\models \Und_{\beta\in\myBetas}\nicht\beta\,
  (\ta_{K\setminus I})$.
  Hence, equivalence
  \eqref{eq:fv:varrho:goal} holds.
  On the other hand, if $t\neq 0_S$, then all tuples in $\Delta_t$ are
  of the form $(\alpha',\beta')$ for some $\beta'\in\myBetas$. Thus, since we
  are in Case~2, there exists no tuple $(\alpha',\beta')\in\Delta_t$
  with $\A\models\beta'(\ta_{K\setminus I})$.
  Again, equivalence \eqref{eq:fv:varrho:goal} holds,
  and Case~2 is completed.

  This completes the proof of \cref{claim:fv_agg_varrho}.
\end{claimproof}

\begin{claim}
  \label{claim:fv_agg_vartheta}
  If $J\setminus I\neq \emptyset$, then, for every $t\in S$,
  we can compute a finite, non-empty set $\altDelta_t$ of pairs of formulas
  $\big(\alpha(\tx_I),\beta(\tx_{K\setminus I})\big)$
  such that $\alpha(\tx_I)\in\loc{q{+}1}{k}{d}$,
  $\beta(\tx_{K\setminus I})\in\loc{q{+}1}{k}{d}$, their quantifier
  rank is $\leq \qr(\vartheta_t)$, and
  \[
    \vartheta_t\paren{\proj{\vec{x}}{K}}
    \undp
    \dist(\proj{\vec{x}}{I}; \proj{\vec{x}}{K \setminus I})> \tilde{r}
    ~~ \equiv ~~
    \Oder_{\mathclap{\paren{\alpha,\beta} \in \altDelta_t}}
    ~~
    \paren[\Big]{
      \alpha \paren{\proj{\vec{x}}{I}}
      \und
      \beta \paren{\proj{\vec{x}}{K \setminus I}}
    }
    \undp
    \dist(\proj{\vec{x}}{I}; \proj{\vec{x}}{K \setminus I})> \tilde{r}\,.
  \]
\end{claim}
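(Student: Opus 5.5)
This is the mirror image of \cref{claim:fv_agg_varrho}, with the roles of $I$ and $K\setminus I$ exchanged. The plan is to redo that proof with the quantified tuple $\ty=(x_{k+1},\ldots,x_{k+\ell})$ placed in the second component.

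\emph{Setup.} We apply the induction hypothesis of \cref{lemma:fv} to the formula $\lambda_2(\tx'_{K'})\in\loc{q}{k'}{d}$, with $k'=k{+}\wmaxar$, $K'=K\cup\set{k{+}1,\ldots,k{+}\ell}$, and index set $I_2\deff I$. Then $K'\setminus I_2=(K\setminus I)\cup\set{k{+}1,\ldots,k{+}\ell}$, so $\tx'_{I_2}=\tx_I$ and $\tx'_{K'\setminus I_2}=\tx_{K\setminus I}\ty$. This yields a set $\Delta'_2$ of pairs $(\alpha,\beta)$ with $\alpha(\tx_I)$ and $\beta(\tx_{K\setminus I}\ty)$ in $\loc{q}{k'}{d}$, of quantifier rank at most $\qr(\lambda_2)$. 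By \cref{lemma:fv_mux} we may assume that now the \emph{first} entries are mutually exclusive, since the $\alpha$'s are the ones not involving $\ty$.

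\emph{Construction.} For each $(\alpha,\beta)\in\Delta'_2$ let
\[\beta_t(\tx_{K\setminus I})\deff\Big(t=\sum\weight(\ty).\big(\beta\und\Oder_{j\in J\setminus I,\nu\in[\ell]}\dist(x_j,x_{k+\nu})\,{\leq}\,r'\big)\Big).\]
This formula lies in $\loc{q{+}1}{k}{d}$ by rule~\ref{itm:def:loc:kind_weight_aggregation}, and its quantifier rank is at most $\qr(\vartheta_t)$. Set $\altDelta_t\deff\setc{(\alpha,\beta_t)}{(\alpha,\beta)\in\Delta'_2}$. If $t=0_S$, additionally add the pair $\big(\Und_{\alpha\in\myAlphas}\neg\alpha,\ \true\big)$, where $\myAlphas$ is the set of first entries of $\Delta'_2$.

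\emph{Correctness.} Fix $\A$ and $\ta$ with $\dist^{\A}(\ta_I;\ta_{K\setminus I})>\tilde r$. By \cref{claim:fv_agg_M}, every $\tb\in M^\A_{\ta,2}$ satisfies $\dist^\A(\ta_I;\ta_{K\setminus I}\tb)>r'$. Hence, for $\tb$ with $\weight^\A(\tb)\neq0_S$, membership in $M^\A_{\ta,2}$ is equivalent to ``$\lambda_2$ holds and the $r'$-separation holds''. By the induction hypothesis this is in turn equivalent to: the unique $\alpha$ true at $\ta_I$ (if one exists) has partner $\beta$ with $\beta(\ta_{K\setminus I}\tb)$ true, together with the proximity disjunct.

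\emph{Case 1: exactly one $\alpha$ holds at $\ta_I$.} We show that $N^\A_{\ta}\deff\setc{\tb}{\A\models\beta(\ta_{K\setminus I}\tb)\wedge\text{proximity},\ \weight^\A(\tb)\neq0_S}$ equals $M^\A_{\ta,2}$. The nontrivial inclusion needs the analogue of \eqref{eq:fv:agg_dist_N}: for $\tb\in N^\A_\ta$, the proximity disjunct puts some $b_\nu$ within $r'$ of some $a_j$ with $j\in J\setminus I$. The locality condition puts all of $\tb$ within $r'{+}1$ of $a_j$, and then $\tilde r-r'-1=4kr'-r'-1\geq r'$ gives the separation from $\ta_I$. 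Hence $\A\models\beta_t(\ta_{K\setminus I})$ iff $\A\models\vartheta_t(\ta_K)$, which is the required equivalence.

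\emph{Case 2: no $\alpha$ holds at $\ta_I$.} Then $M^\A_{\ta,2}=\emptyset$, so $\vartheta_t$ holds iff $t=0_S$. This is matched exactly by the extra pair, which is present precisely when $t=0_S$.

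The main point requiring care is the bookkeeping in Case 1, namely that the distance condition ensures every aggregated tuple lies on the $K\setminus I$ side. The rest is symmetric to \cref{claim:fv_agg_varrho}.
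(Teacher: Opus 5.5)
Your proposal is correct and matches the paper's proof of this claim. The paper likewise applies the induction hypothesis to $\lambda_2$ with $I_2 \deff I$, makes the first entries of $\Delta'_2$ mutually exclusive, and defines $\beta_t$ by aggregating $\beta$ together with the proximity disjunction over $J\setminus I$. It also adds the pair $\big(\Und_{\alpha\in\myAlphas}\neg\alpha,\true\big)$ exactly when $t=0_S$, and it establishes correctness by mirroring the two cases of \cref{claim:fv_agg_varrho}. Your explicit check that $\tilde r - r' - 1 \geq r'$ for the tuples in $N^{\A}_{\ta}$ fills in the one step the paper leaves as ``analogous''.
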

\begin{claimproof}
  We proceed analogously as in our proof of \cref{claim:fv_agg_varrho},
  but we use the induction hypothesis for the formula \(\lambda_2\)
  and the set \(I_2 \deff I\),
  yielding a finite, non-empty set \(\Delta'_2\).
  Our choice of \(I_2\) reflects the fact that the formula \(\vartheta_t\)
  aims at tuples \(\ty\) where at least one of the variables in \(\ty\)
  is close to an \(x_j\) with \(j \in J \setminus I \subseteq K \setminus I\).
  By \cref{lemma:fv_mux}, we can assume w.l.o.g.\ that
  the \emph{first} entries in \(\Delta'_2\) are mutually exclusive.

  Now fix a \(t \in S\).
  For each \((\alpha,\beta) \in \Delta'_2\), we define
  \[
    \beta_t(\tx_{K\setminus I})
    ~ \deff ~
    \Big(
        t =
        \sum \weight(\ty).
        \big(\;
          \beta(\tx'_{K'\setminus I})
          \; \und
          \Oder_{\substack{j\in J\setminus I,\\ \nu\in[\ell]}} \dist(x_j,x_{k+\nu})\,{\leq}\, r'
        \;\big)
    \Big).
  \]
  Let \(\tilde{\altDelta}_t \deff
  \bigsetc{(\alpha,\beta_t)}{(\alpha,\beta) \in \Delta'_2}\),
  \(\myAlphas \deff \setc{\alpha}{\text{there exists a } \beta
  \text{ such that } (\alpha,\beta)\in \Delta'_2}\),
  and
  \[
    \altDelta_t\ \deff \ \begin{cases}
      \tilde{\altDelta}_t
       & \text{if }  t\neq 0_S,\\
       \tilde{\altDelta}_t \,\cup\,\bigset{\,\big(\Und_{\alpha\in\myAlphas} \neg\alpha\,,\,\true\,\big)}
       & \text{if } t=0_S.
    \end{cases}
  \]
  Note that \(\altDelta_t\) is a finite, non-empty set of pairs of formulas
  $(\alpha',\beta')$ such that $\alpha'(\tx_I)\in\loc{q{+}1}{k}{d}$ and
  $\beta'(\tx_{K\setminus I})\in\loc{q{+}1}{k}{d}$ and their
  quantifier rank is $\leq \qr(\vartheta_t)$.

  All that remains to be done is to prove that the equivalence stated in
  \cref{claim:fv_agg_vartheta} holds.
  The proof follows along the same lines as the proof of
  \cref{claim:fv_agg_varrho},
  and this completes the proof of \cref{claim:fv_agg_vartheta}.
\end{claimproof}

We are now ready to finish the proof for the case that \(\psi\)
is of the form \ref{itm:def:loc:kind_weight_aggregation}.
Revisit \cref{claim:fv_agg_one} and the particular formula \(\hat{\psi}\).
If $J\cap I=\emptyset$ or $J\setminus I=\emptyset$,
then $\hat{\psi}$ is one of the formulas handled by
\cref{claim:fv_agg_varrho,claim:fv_agg_vartheta}, and we are done.
In the remaining case, $\hat{\psi}$ is equivalent to
$\Oder_{(t_1,t_2)\in T}\nicht (\nicht \varrho_{t_1}\oder \nicht\vartheta_{t_2})$.
Starting with the sets $\Delta_{t_1}$ and $\altDelta_{t_2}$
obtained for all $(t_1,t_2)\in T$
from \cref{claim:fv_agg_varrho,claim:fv_agg_vartheta} for the formulas
$\varrho_{t_1}$ and $\vartheta_{t_2}$, we can use the same
constructions for disjunctions and negations that we already used in
the induction base, see the Cases
\ref{case:proof:fv_decomp:base_case:disjunction}
and
\ref{case:proof:fv_decomp:base_case:negation}.
This, finally, completes the induction step for the case that $\psi$ is of
the form \ref{itm:def:loc:kind_weight_aggregation}.

To entirely complete the induction step,
all that remains to be done is to consider Boolean combinations,
i.e., disjunctions and negations.
These can be handled in exactly the same way as we had handled them in the
induction base, see the Cases
\ref{case:proof:fv_decomp:base_case:disjunction}
and
\ref{case:proof:fv_decomp:base_case:negation}.
In summary, this completes the proof of \cref{lemma:fv}.
\end{proof}
 
\subsection{Detailed Proof of Theorem~\ref{thm:GNF_ngFOWplus_refined}
  (Rank-Preserving Gaifman Normal
  Form)}\label{appendix:DetailedGNFproof}

\GNFngFOWplusRefined*

\begin{proof}
We proceed by induction on \(q\).

For the induction base with \(q = 0\),
consider arbitrary \(k, d \in \NN\),
and consider an arbitrary \(\phi \in \Fparam{0}{k}{d}\).
If \(k \geq 1\),
then \(\Fparam{0}{k}{d} = \loc{0}{k}{d}\),
and we are done by choosing \(\phi' \deff \phi\).
If \(k = 0\),
then \(\phi\) is a Boolean combination
of \emph{atomic} formulas \(\psi\) in \(\ngFOWplus\)
with \(\free(\psi) = \emptyset\),
so \(\phi\) is a Boolean combination of sentences in \(\sent{0}{k}{d}\)
and we are done by choosing \(\phi' \deff \phi\).

For the induction step from \(q\) to \(q{+}1\),
consider arbitrary \(q, d \in\NN\).
The induction hypothesis states that the assertion of the theorem
holds for all \(k' \in \NN\) and all formulas in~\(\Fparam{q}{k'}{d}\).
Now consider an arbitrary \(k \in \NN\)
and a formula \(\phi \in \Fparam{q{+}1}{k}{d}\);
in particular, \(\free(\phi) \subseteq \set{x_1, \dots, x_k}\).
For the remainder of this proof,
let
\begin{equation}\label{eq:GNF-choice-of-r}
  \tilde{r}\ \deff\ \locRad{q{+}1}{k}{d}
  \quad\text{and} \quad
  r'\ \deff\ \locRad{q}{k{+}\wmaxar}{d}
  \quad\text{and}\quad
  \hat{r} \ \deff \ \tilde{r}-r'.
\end{equation}
By the definition of \(\Fparam{q{+}1}{k}{d}\),
it suffices to consider the following 5 cases.

\proofsubparagraph{Case~(\ref{def:fparam:previous}):}
\(\phi \in \Fparam{q}{k{+}\wmaxar}{d}\).
By the induction hypothesis,
we obtain an equivalent formula \(\phi'\)
with \(\free(\phi') = \free(\phi)\)
such that \(\phi'\) is a Boolean combination of formulas
in \(\sent{q}{k{+}\wmaxar}{d} \cup \loc{q}{k{+}\wmaxar}{d}\),
and the outer (inner) quantifier rank of \(\phi'\)
is at most \(\qr(\phi)\) (resp. \(\qr(\phi){-}1\)).
Since \(\free(\phi') = \free(\phi) \subseteq \set{x_1, \dots, x_k}\),
by the definition of \(\sent{q{+}1}{k}{d}\) and \(\loc{q{+}1}{k}{d}\),
we obtain that \(\phi'\) is a Boolean combination
of formulas in \(\sent{q{+}1}{k}{d} \cup \loc{q{+}1}{k}{d}\),
and we are done.

\proofsubparagraph{Case~(\ref{def:fparam:distance_atom}):}
\(\phi\) is of the form \(\distAtom{x_i}{x_j}{\tilde{d}}\)
with \(\tilde{d} \leq \tilde{r}\) and \(i,j \in [k]\).
Then \(\phi \in \loc{q{+}1}{k}{d}\),
and we are done by choosing \(\phi' \deff \phi\).

\proofsubparagraph{Case (\ref{def:fparam:existential_with_distance}):}
\(\phi\) is of the form
\(\exists x_{k+1}\bigl(\distAtom{x_i}{x_{k+1}}{\hat{d}} \land \phi_1\bigr)\)
with \(\phi_1 \in \Fparam{q}{k{+}\wmaxar}{d}\),
\(i \in [k]\),
\(\hat{d} \leq \hat{r}\),
and
\(\free(\phi_1) \subseteq \free(\phi) \cup \set{x_{k+1}}
\subseteq \set{x_1, \dots, x_k, x_{k+1}}\).

By applying the induction hypothesis to \(\phi_1\),
we obtain an equivalent formula \(\phi_1'\)
with \(\free(\phi_1') = \free(\phi_1)\)
such that \(\phi_1'\) is a Boolean combination of formulas
in \(\sent{q}{k{+}\wmaxar}{d} \cup \loc{q}{k{+}\wmaxar}{d}\),
the outer quantifier rank of \(\phi'_1\) is \(\leq \qr(\phi_1)=\qr(\phi)-1\),
and the inner quantifier rank of \(\phi'\) is
\(\leq \qr(\phi_1)-1 \leq \qr(\phi)-2\).
We transfer \(\phi_1'\) into disjunctive normal form
in order to obtain an equivalent formula of the form
\(\Lor_{j = 1}^n \bigl(\chi_j \land \lambda_j\bigr)\),
where \(n \geq 1\) and,
for all \(j \in \intsUpTo{n}\),
\(\chi_j\) is a Boolean combination of sentences in
\(\sent{q}{k{+}\wmaxar}{d}\) of inner quantifier rank \(\leq \qr(\phi)-2\),
and \(\lambda_j \in \loc{q}{k{+}\wmaxar}{d}\) with
\(\free(\lambda_j) \subseteq \free(\phi_1')\)
and \(\qr(\lambda_j) \leq \qr(\phi)-1\).

Then, \(\phi\) is equivalent to
\(
  \exists x_{k+1} \Bigl(
    \distAtom{x_i}{x_{k+1}}{\hat{d}}
    \,\land\,
    \Lor_{j = 1}^n \bigl(\chi_j \land \lambda_j\bigr)
  \Bigr),
\)
which, in turn, is equivalent to
\(
  \phi' \deff \Lor_{j = 1}^n
  \Bigl(\chi_j\, \land\, \exists x_{k+1}
    \bigl(\distAtom{x_i}{x_{k+1}}{\hat{d}}\, \land \lambda_j\bigr)
  \Bigr)
\).
By definition,
\(\sent{q}{k{+}\wmaxar}{d} \subseteq \sent{q{+}1}{k}{d}\).
Furthermore,
for each \(j \in [n]\),
\(\exists x_{k+1}\bigl(\distAtom{x_i}{x_{k+1}}{\hat{d}}
\,\land \lambda_j\bigr)\)
is a formula in \(\loc{q{+}1}{k}{d}\)
and of quantifier rank \(\leq \qr(\phi)\).
Hence, \(\phi'\) has the desired properties, and we are done.

\proofsubparagraph{Case~(\ref{def:fparam:existential_plain}):}
\(\phi\) is of the form \(\exists x_{k+1}\,\phi_1\) with
\(\phi_1 \in \Fparam{q}{k{+}\wmaxar}{d}\) and
\(\free(\phi_1) \subseteq \free(\phi) \cup \set{x_{k+1}}
\subseteq \set{x_1, \dots, x_k, x_{k+1}}\).

By applying the induction hypothesis to \(\phi_1\),
we obtain an equivalent formula \(\phi_1'\)
with \(\free(\phi_1') = \free(\phi_1)\)
such that \(\phi_1'\) is a Boolean combination of formulas
in \(\sent{q}{k{+}\wmaxar}{d} \cup \loc{q}{k{+}\wmaxar}{d}\),
the outer quantifier rank of \(\phi_1'\) is \(\leq \qr(\phi_1)=\qr(\phi)-1\),
and the inner quantifier rank of \(\phi'\) is
\(\leq \qr(\phi_1)-1 \leq \qr(\phi)-2\).
We transfer \(\phi_1'\) into disjunctive normal form
in order to obtain an equivalent formula of the form
\(\Lor_{j = 1}^n \bigl(\chi_j \land \lambda_j\bigr)\),
where \(n \geq 1\) and,
for all \(j \in \intsUpTo{n}\),
\(\chi_j\) is a Boolean combination of sentences in
\(\sent{q}{k{+}\wmaxar}{d}\) of inner quantifier rank \(\leq \qr(\phi)-2\),
and \(\lambda_j \in \loc{q}{k{+}\wmaxar}{d}\) with
\(\free(\lambda_j) \subseteq \free(\phi_1')\)
and \(\qr(\lambda_j) \leq \qr(\phi)-1\).

Then, \(\phi\) is equivalent to
\(\exists x_{k+1} \Lor_{j = 1}^n \bigl(\chi_j \land \lambda_j\bigr)\),
which, in turn, is equivalent to
\(\Lor_{j = 1}^n \bigl(\chi_j \land \exists x_{k+1} \lambda_j\bigr)\).
By definition,
\(\sent{q}{k{+}\wmaxar}{d} \subseteq \sent{q{+}1}{k}{d}\).
Therefore, all that remains to be done is to transform,
for each \(j \in [n]\),
the formula \(\exists x_{k+1} \lambda_j\) into an equivalent formula
that is a Boolean combination of formulas in
\(\sent{q{+}1}{k}{d} \cup \loc{q{+}1}{k}{d}\)
and that has outer (inner) quantifier rank at most
\(\qr(\phi)\) (resp.~\(\qr(\phi)-1\)).
Consider an arbitrary \(j \in [n]\),
and let \(\lambda \deff \lambda_j\).
Note that
\(\free(\lambda) \subseteq \free(\phi_1)
\subseteq \free(\phi) \cup \set{x_{k+1}}\)
and \(\qr(\lambda) \leq \qr(\phi)-1\).

If \(\free(\phi) = \emptyset\),
then \(\free(\lambda) \subseteq \set{x_{k+1}}\).
Thus, in this case,
\(\exists x_{k+1}\lambda(x_{k+1})\) is a sentence in \(\sent{q{+}1}{k}{d}\)
of inner quantifier rank at most \(\qr(\phi)-1\),
and we are done.\footnote{In fact, \(\exists x_{k+1}\lambda(x_{k+1})\)
  is a basic local sentence of width \(\ell=1\),
  which is why the assertion on the distance
  of the quantified variables vanishes.}
Henceforth, we consider the case that
\(\free(\phi) \neq \emptyset\);
in particular this means that \(k \geq 1\).
Let \(k'\deff k{+}\wmaxar\), \(\tx \deff (x_1, \dots, x_{k'})\),
and \(I \deff \setc{i \in [k]}{x_i \in \free(\phi)}\).

Clearly, \(\exists x_{k+1} \lambda\) is equivalent to
\(\psi \lor \Lor_{i \in I} \vartheta_i\)
with \(\vartheta_i \deff \exists x_{k+1}
\bigl(\distAtom{x_i}{x_{k+1}}{r'}\, \land \lambda\bigr)\) for all \(i \in I\),
and \(\psi \deff \exists x_{k+1}
\bigl(\dist(\tx_I;x_{k+1}) \, {>} \, r'\, \land \lambda\bigr)\).
By definition, for every \(i \in I\),
we have \(\vartheta_i \in \loc{q{+}1}{k}{d}\).
Hence, all that remains to be done is to transform \(\psi\)
into an equivalent formula that is a Boolean combination of formulas
in \(\sent{q{+}1}{k}{d} \cup \loc{q{+}1}{k}{d}\)
and that has outer (inner) quantifier rank at most
\(\qr(\phi)\) (resp.~\(\qr(\phi)-1\)).

By applying \cref{lemma:fv} with \(K \deff I \cup \set{k{+}1}\),
we obtain a finite and non-empty set \(\Delta\) of pairs of formulas
\(\bigl(\alpha(\bar x_I),\beta(x_{k+1})\big)\) such that
\(\alpha(\bar x_I) \in \loc{q}{k'}{d}\),
\(\beta(x_{k+1}) \in \loc{q}{k'}{d}\),
they have quantifier rank \(\leq \qr(\lambda) \leq \qr(\phi)-1\),
and
\[
  \lambda(\tx_K) \; \land\; \dist(\tx_I;x_{k+1})\,{>}\,r'
  ~~ \equiv ~~
  \Lor_{(\alpha,\beta) \in \Delta} \Bigl(
    \alpha(\bar x_I) \land \beta(x_{k+1})
  \Bigr)
  \;\land\;
  \dist(\tx_I;x_{k+1})\,{>}\,r'.
\]
This implies that \(\psi\) is equivalent to
\(\Lor_{(\alpha,\beta) \in \Delta}\Bigl(
  \alpha(\tx_I) \land \exists x_{k+1}\bigl(
    \dist(\tx_I;x_{k+1})\,{>}\,r' \land \beta(x_{k+1})
  \bigr)
\Bigr)\).

Note that \(\alpha(\tx_I) \in \loc{q{+}1}{k}{d}\),
and all that remains to be done is to consider an arbitrary \(\beta\)
such that \((\alpha, \beta) \in \Delta\) for some \(\alpha\),
and transform the formula
\[
  \mu(\tx_I)
  ~~ \deff ~~
  \exists x_{k+1} \bigl(
   \dist(\tx_I;x_{k+1})\,{>}\, r' \ \land \ \beta(x_{k+1})
  \bigr)
\]
into an equivalent formula \(\mu'(\tx_I)\)
that is a Boolean combination of formulas
in \(\sent{q{+}1}{k}{d} \cup \loc{q{+}1}{k}{d}\)
and that has outer (inner) quantifier rank at most
\(\qr(\phi)\) (resp.~\(\qr(\phi)-1\)).
Recall that \(\beta(x_{k+1}) \in \loc{q}{k'}{d}\)
with \(k'=k{+}\wmaxar\),
and \(\qr(\beta) \leq \qr(\phi)-1\).

For constructing \(\mu'(\tx_I)\), we use the following formulas:
\begin{itemize}
  \item
    \(\displaystyle
    \delta_i(\tx_I) \deff
    \exists x_{k+1} \bigl(
      \distAtom{x_i}{x_{k+1}}{\hat{r}}
      \landp
      \dist(\tx_I; \tx_{k+1})\, {>} \, r'
      \landp
      \beta(x_{k+1})
    \bigr)\),
    for all \(i \in I\),
  \item
    \(\displaystyle
    \psi_\ell^{(c)}(\tx_I) \deff
    \Lor_{\substack{J \subseteq I,\\ \abs{J} = \ell}}
    \paren[\Big]{
      \Land_{\substack{j,j' \in J,\\ j \neq j'}}
      \!\!\!\dist(x_j,x_{j'}) \, {>}\, 4cr'
      \; \land \;
      \Land_{j \in J} \exists x_{k+1} \,
      \paren[\big]{
        \distAtom{x_j}{x_{k+1}}{r'} \landp \beta(x_{k+1})
      }
    }\),
    for all \(\ell, c \in [\abs{I}]\), and
  \item
    \(\displaystyle
    \xi_\ell^{(c)} \deff
    \exists y_1 \cdots \exists y_\ell\Big(\!
      \Land_{1 \leq j < j' \leq \ell}
      \!\!\! \dist(y_j,y_{j'})\,{>}\,2(2c{+}1)r'
      \land
      \Land_{j \in [\ell]}\!\beta(y_j)
    \Big)\),
    for all \(\ell \in[\abs{I}{+}1]\) and \(c \in[0, \abs{I}]\).
\end{itemize}

Note that \(\delta_i(\tx_I) \in \loc{q{+}1}{k}{d}\) for all \(i \in I\),
and
\(\xi_\ell^{(c)} \in \sent{q{+}1}{k}{d}\)
for all \(\ell \in[\abs{I}{+}1]\) and \(c \in [0,\abs{I}]\).
Furthermore,
for all \(c \in [\abs{I}]\), we have \(c \leq k\),
so \(4cr' \leq \tilde{r}\) by \cref{eq:def:radius_function}.
Thus, the formula \(\distAtom{x_j}{x_{j'}}{4cr'}\) and its negation
\(\dist(x_j, x_{j'}) \,{>}\, 4cr'\) belong to \(\loc{q{+}1}{k}{d}\).
Moreover, \(r' \leq 4kr' - r' = \tilde{r} - r' = \hat{r}\),
so the formula
\(\exists x_{k+1} \, \paren[\big]{
  \distAtom{x_j}{x_{k+1}}{r'} \landp \beta(x_{k+1})
}\)
belongs to \(\loc{q{+}1}{k}{d}\).
This shows that
\(\psi_\ell^{(c)}(\tx_I) \in \loc{q{+}1}{k}{d}\)
for all \(\ell, c\in [\abs{I}]\).
We let
\[
  \mu'(\bar x_I) \quad \deff \quad
  \xi^{(0)}_{\abs{I}+1} \lorp \Lor_{c,\ell \in [\abs{I}]} \paren[\Big]{
    \xi^{(c)}_\ell \landp \neg\,\xi^{(c{-}1)}_{\ell+1}
    \landp
    \paren[\big]{
      \Lor_{i\in I} \delta_i(\bar x_I)
      \lorp
      \neg\,\psi^{(c)}_\ell(\bar x_I)
    }
  }.
\]
Clearly, \(\mu'\) is a Boolean combination of formulas in
\(\sent{q{+}1}{k}{d} \cup \loc{q{+}1}{k}{d}\),
\(\free(\mu') = \setc{x_i}{i \in I} = \free(\phi)\),
and \(\mu'\) has outer quantifier rank \(\leq \qr(\phi)\)
and inner quantifier rank \(\leq \qr(\phi)-1\).
All that remains to be done is to show that \(\mu'(\bar x_I)\)
is equivalent to \(\mu(\bar x_I)\).
To this end,
consider an arbitrary \((\sigma,\Weights)\)-structure \(\A\)
and a tuple \(\bar{a} \in A^k\).
We have to show that
\(\A \models \eval{\mu'}{\bar{a}_I} \iff \A \models \eval{\mu}{\bar{a}_I}\).

In the proof, the following observation will be used twice.

\distanceByDoubleTriangle*

\begin{claimproof}
  By applying the triangle inequality twice, we obtain
  \(
    s + 2t
    <
    \distS{\A}{d_1, d_2}
    \leq
    \distS{\A}{d_1, c_1} +
    \distS{\A}{c_1, c_2} +
    \distS{\A}{c_2, d_2}
    \leq
    t + \distS{\A}{c_1, c_2} + t
  \).
  Rearranging the inequality, we get the desired result
  \(\distS{\A}{c_1, c_2} > s + 2t - 2t = s\).
\end{claimproof}

We now focus on proving that
\(\A \models \eval{\mu'}{\ta_I} \iff \A \models \eval{\mu}{\ta_I}\).
Throughout the remainder of this proof,
we let \(B \deff \setc{b \in A}{\A \models \eval{\beta}{b}}\).

\proofsubparagraph{Case~(\ref{def:fparam:existential_plain}).1:}
\(\A \models \xi^{(0)}_{\abs{I}+1}\).
Then, \(\A \models \eval{\mu'}{\ta_I}\).
We have to show that \(\A \models \eval{\mu}{\ta_I}\).

Since \(\A \models \xi^{(0)}_{\abs{I}+1}\),
there exist \(\enud{b_1}{b_{\abs{I}+1}} \in B\) of pairwise distance \(> 2r'\).
Let \(\pi\) be a mapping that maps every
\(b \in \set{\enud{b_1}{b_{\abs{I}+1}}}\)
to an element \(a \in \setc{a_i}{i \in I}\) that is closest to \(b\),
that is, an element that minimises \(\distS{\A}{b,a}\),
where ties are broken arbitrarily.
Then, by the pigeonhole principle,
there are two distinct \(j, j' \in \intsUpTo{\abs{I}{+}1}\)
with \(\pi(b_j) = \pi(b_{j'}) = a_i\) for some \(i \in I\).
Since \(\distS{\A}{b_j, b_{j'}} > 2r'\),
by the triangle inequality,
we know that \(\distS{\A}{a_i, b_j} > r'\)
or \(\distS{\A}{a_i, b_{j'}} > r'\).
Say, w.l.o.g., \(\distS{\A}{a_i, b_j} > r'\).
By the choice of \(\pi\),
we obtain that \(\distS{\A}{\ta_I; b_j} > r'\).
This \(b_j \in B\) witnesses that \(\A \models \eval{\mu}{\ta}\).
 
\proofsubparagraph{Case~(\ref{def:fparam:existential_plain}).2:}
\(\A \not\models \xi^{(0)}_1\).
Then, there does not exist any \(b \in A\) such that
\(\A \models \eval{\beta}{b}\), so \(B = \emptyset\).
Hence, \(\A \not\models \eval{\mu}{\ta_I}\).
Furthermore, \(\A \not\models \xi^{(0)}_{\abs{I}+1}\) and, moreover,
for all \(c, \ell \in \intsUpTo{\abs{I}}\),
we have that \(\A \not\models \xi^{(c)}_\ell\).
Thus, \(\A \not\models \eval{\mu'}{\ta_I}\),
so we have that \(\A \not\models \eval{\mu'}{\ta_I}\)
and \(\A \not\models \eval{\mu}{\ta_I}\).
 
\proofsubparagraph{Case~(\ref{def:fparam:existential_plain}).3:}
We are neither in
Case~(\ref{def:fparam:existential_plain}).1
nor in
Case~(\ref{def:fparam:existential_plain}).2.
Then, we have that
\(\A \models \xi^{(0)}_1 \land \neg \xi^{(0)}_{\abs{I}+1}\),
so \(B \neq \emptyset\),
but \(B\) does not contain \(\abs{I}{+}1\) elements
of pairwise distance \(> 2r'\).

\rpGaifmanScatteredThreshold*
\begin{claimproof}
  For every \(c \in [0,\abs{I}]\),
  let \(\ell^{(c)} \in \NN \cup \set{\infty}\) be maximal
  such that there exists a set \(X \subseteq B\) of \(\ell^{\paren{c}}\) elements
  of pairwise distance \(> 2(2c{+}1)r'\).
From \(B \neq \emptyset\),
  we obtain that \(\ell^{\paren{c}} \geq 1\) for each \(c \in \intsUpTo{0,\abs{I}}\).
Since $B$ does not contain $\abs{I}{+}1$ elements of pairwise distance
  $>2r'$, we know that
  \(\ell^{\paren{c}} \leq \abs{I}\) for each \(c \in \intsUpTo{0,\abs{I}}\).
  Furthermore, for all \(c \geq 1\), we have \(\ell^{\paren{c}} \leq \ell^{(c-1)}\),
  because \(2(2c{+}1)r' \geq 2(2(c{-}1){+}1)r'\).
Hence, we have
  $\abs{I} \geq \ell^{\paren{0}} \geq \ell^{\paren{1}}  \geq
    \cdots \geq  \ell^{\paren{\abs{I}}} \geq 1$.
  By the pigeonhole principle,
  there exists a \(c \in \intsUpTo{\abs{I}}\)
  such that \(\ell^{(c-1)} = \ell^{\paren{c}}\).
We choose such a \(c\) and let \(\ell \deff \ell^{\paren{c}}\).
  Since \(\ell = \ell^{\paren{c}}\), we have that \(\A \models \xi^{(c)}_\ell\).
  Since \(\ell = \ell^{(c-1)}\) and \(\ell^{(c-1)}\) is maximal,
  we have that  \(\A \models \neg\,\xi^{(c-1)}_{\ell+1}\).
\end{claimproof}
 
Consider arbitrary \(c, \ell \in \intsUpTo{\abs{I}}\)
such that \(\A \models \xi^{(c)}_\ell \land \neg \xi^{(c-1)}_{\ell+1}\),
which exist due to \cref{claim:proof:rp_gaifman:scattered_threshold}.

\proofsubparagraph{Case~(\ref{def:fparam:existential_plain}).3.1:}
\(\A \models \eval{\delta_i}{\ta_I}\) for some \(i \in I\).
This implies that \(\A \models \eval{\mu'}{\ta_I}\)
and \(\A \models \eval{\mu}{\ta_I}\).
 
\proofsubparagraph{Case~(\ref{def:fparam:existential_plain}).3.2:}
\(\A \models \eval{\neg\,\psi^{(c)}_\ell}{\ta_I}\).
Then \(\A \models \eval{\mu'}{\ta_I}\).
We have to show that \(\A \models \eval{\mu}{\ta_I}\).

From \(\A \models \xi^{(c)}_\ell\),
we know that there exist \(\enud{b_1}{b_\ell} \in B\)
of pairwise distance \(> 2(2c{+}1)r'\).
If there is a \(\nu \in \intsUpTo{\ell}\)
such that \(\distS{\A}{\ta_I; b_\nu} \,{>}\, r'\),
then \(b_\nu\) serves as a witness certifying that
\(\A \models \eval{\mu}{\ta_I}\).
\emph{For contradiction},
assume that, for each \(\nu \in \intsUpTo{\ell}\),
we do \emph{not} have
\(\distS{\A}{\ta_I;b_\nu} \,{>}\, r'\).
Then, for every \(\nu \in \intsUpTo{\ell}\),
there is an \(i(\nu) \in I\)
such that \(\distS{\A}{a_{i(\nu)}, b_\nu} \,{\leq}\, r'\).
 
\rpGaifmanPwFarNearElements*
\begin{claimproof}
  For distinct \(\nu, \nu' \in [\ell]\),
  apply \cref{claim:distance_by_double_triangle}
  with \(s = 4cr'\),
  \(t = r'\),
  and \((c_1, c_2, d_1, d_2) = (a_{i(\nu)}, a_{i(\nu')}, b_\nu, b_{\nu'})\).
  This yields
  \(\distS{\A}{c_1, c_2} \,{>}\, s\),
  so
  \(\distS{\A}{a_{i(\nu)}, a_{i(\nu')}} \,{>}\, 4cr'\).
\end{claimproof}
 
\Cref{claim:proof:rp_gaifman:pw_far_near_elements} implies that
\(i(\nu) \neq i(\nu')\) holds for all distinct \(\nu, \nu' \in [\ell]\).
Thus, for \(J \deff \setc{i(\nu)}{\nu\in[\ell]}\),
we have \(\abs{J} = \ell\).
Furthermore, for each \(\nu \in \intsUpTo{\ell}\),
the element \(b_\nu\) serves as a witness certifying that
\(
(\A,\ta_I) \models \exists x_{k+1}\,\paren[\big]{
  \distAtom{x_{i(\nu)}}{x_{k+1}}{r'}
  \landp
  \beta(x_{k+1})
}
\).
Combining this with
\cref{claim:proof:rp_gaifman:pw_far_near_elements} proves
that \(J\) serves as a witness certifying that
\(\A \models \eval{\psi^{(c)}_\ell}{\ta_I}\).
However, recall that we currently consider the case that
\(\A \models \neg\,\eval{\psi^{(c)}_\ell}{\ta_I}\),
a contradiction!
Thus, there must exist a \(\nu \in \intsUpTo{\ell}\)
such that \(\distS{\A}{\ta_I; b_\nu} \,{>}\, r'\).
This \(b_\nu\) serves as a witness certifying that
\(\A \models \eval{\mu}{\ta_I}\).
 
\proofsubparagraph{Case~(\ref{def:fparam:existential_plain}).3.3:}
For all \(c, \ell \in \intsUpTo{\abs{I}}\)
with \(\A \models \xi^{(c)}_\ell \land \neg\xi^{(c-1)}_{\ell+1}\),
we have \(\A \not\models \eval{\delta_i}{\ta_I}\) for all \(i\in I\),
and \(\A \models \eval{\psi^{(c)}_\ell}{\ta_I}\).
This is the only remaining subcase of
Case~(\ref*{def:fparam:existential_plain}).3.
Since we are still in Case~(\ref*{def:fparam:existential_plain}).3,
the assumptions imply \(\A \not\models\eval{\mu'}{\ta_I}\).
Hence, we have to show that
\(\A \not\models \eval{\mu}{\ta_I}\).

Let us fix \(c, \ell \in \intsUpTo{\abs{I}}\)
according to \cref{claim:proof:rp_gaifman:scattered_threshold}.
Thus, we have \(\A \models \xi^{(c)}_\ell \land \neg\xi^{(c-1)}_{\ell+1}\),
\(\A \not\models \eval{\delta_i}{\ta}\) for all \(i \in I\),
and
\(\A \models \eval{\psi^{(c)}_\ell}{\ta}\).
\emph{For contradiction}, assume that \(\A \models \eval{\mu}{\ta_I}\).
Then there exists a \(b_0 \in B\) such that
\(\distS{\A}{\ta_I; b_0} \,{>}\, r'\).
Since, for all \(i \in I\),
we have \(\A \not\models \eval{\delta_i}{\ta_I}\),
we obtain
\begin{equation}
  \label{eq:proof:rp_gaifman:distance_between_tuple_and_far_element}
  \distS{\A}{\ta_I; b_0} > \hat{r}.
\end{equation}
From \(\A \models \eval{\psi^{(c)}_\ell}{\ta_I}\),
we obtain that there exists a set \(J \subseteq I\)
with \(\abs{J} = \ell\) such that
\begin{equation}
  \label{eq:proof:rp_gaifman:distance_within_tuple}
  \distS{\A}{a_j, a_{j'}}
  >
  4cr'
\end{equation}
holds for all distinct \(j, j' \in J\);
and for each \(j \in J\), there exists a \(b_j \in B\) with
\begin{equation}
  \label{eq:proof:rp_gaifman:distance_between_tuple_entry_and_partner}
  \distS{\A}{a_j, b_j}
  \leq
  r'.
\end{equation}

We would like to use the elements \(b_0, (b_j)_{j \in J}\) as witnesses
certifying that \(\A \models \xi^{(c-1)}_{\ell+1}\).
All that remains to be done
is to show that they have pairwise distance \(> 2(2(c{-}1){+}1)r'\).
Note that \(2(2(c{-}1){+}1)r'= 4cr'-2r'\).

For distinct \(j, j' \in J\),
we apply \cref{claim:distance_by_double_triangle}
with \(s = 4cr' - 2r'\),
\(t = r'\),
and \((c_1, c_2, d_1, d_2) = (b_j, b_{j'}, a_j, a_{j'})\).
The assumptions of the claim are met due to
\cref{eq:proof:rp_gaifman:distance_within_tuple,eq:proof:rp_gaifman:distance_between_tuple_entry_and_partner}.
This yields
\(\distS{\A}{c_1, c_2} \,{>}\, s\),
so
\(\distS{\A}{b_j, b_{j'}} \,{>}\, 4cr' - 2r'\).

All that remains to be done is to show that
\(\distS{\A}{b_j, b_0} > 4cr'-2r'\) for all \(j \in J\).
Consider an arbitrary \(j \in J\).
We have
\[
  \hat{r}
  ~ \stackrel{\eqref{eq:proof:rp_gaifman:distance_between_tuple_and_far_element}}{<} ~
  \dist^{\A}(a_j,b_0)
  ~ \leq ~
  \dist^{\A}(a_j,b_j) + \dist^{\A}(b_j,b_0)
  ~ \stackrel{\eqref{eq:proof:rp_gaifman:distance_between_tuple_entry_and_partner}}{\leq} ~
  r' + \dist^{\A}(b_j,b_0).
\]
Thus,
\(
\dist^{\A}(b_j,b_0)
>
\hat{r}-r'
\).
From \eqref{eq:GNF-choice-of-r}, we know that
\(\hat{r} - r' = \tilde{r} - 2r' = \locRad{q{+}1}{k}{d} - 2r'\),
which, by \eqref{eq:def:radius_function}, is equal to \(4kr' - 2r'\),
and this is \(\geq 4cr' - 2r'\) since \(c \leq \abs{I} \leq k\).

Finally, we have shown that \(b_0, (b_j)_{j \in J}\) serve as witnesses
certifying that \(\A \models \xi^{(c-1)}_{\ell+1}\).
However, this is a contradiction to the fact that
\(\A \models \neg\xi^{(c-1)}_{\ell+1}\).
Therefore, we obtain that \(\A \not\models \eval{\mu}{\ta_I}\).
This, finally, completes Case~(\ref*{def:fparam:existential_plain}).

\proofsubparagraph{Case~(\ref{def:fparam:aggregation}):}
\(\phi\) is of the form
\(\bigl(s = \sum\weight(\ty).\phi_1\bigr)\),
where
\(\weight \in \Weights\),
\(S = \wtype(\weight)\) is finite,
\(s \in S\),
\(\ell = \ar(\weight)\),
\(\ty = (x_{k+1}, \dots, x_{k+\ell})\),
and \(\phi_1 \in \Fparam{q}{k{+}\wmaxar}{d}\)
with \(\free(\phi_1)
\subseteq \free(\phi) \cup \set{x_{k+1}, \dots, x_{k+\ell}}
\subseteq \set{x_1, \dots,x_{k+\wmaxar}}\).

This case can be handled similarly as the analogous case in the
proof of \cite[Theorem~4.6]{vanBergeremSchweikardt_2021_FOWA},
but we rely on \cref{lemma:fv} instead of the Feferman--Vaught decompositions
of \cite{vanBergeremSchweikardt_2021_FOWA}.
The proof details are as follows.

Applying the induction hypothesis to \(\phi_1\),
we obtain an equivalent formula \(\phi_1'\)
with \(\free(\phi_1') = \free(\phi_1)\)
such that \(\phi_1'\) is a Boolean combination of formulas in
\(\sent{q}{k{+}\wmaxar}{d} \cup \loc{q}{k{+}\wmaxar}{d}\),
the formula \(\phi_1'\) has outer quantifier rank
\(\leq \qr(\phi_1) = \qr(\phi)-1\),
and it has inner quantifier rank \(\leq \qr(\phi_1)-1 = \qr(\phi)-2\).

We first consider the case that \(\ell = 0\),
that is,
\(\ar(\weight) = 0\) and \(\free(\phi_1') = \free(\phi)\).
In this case, if \(s = 0_S\),
then \(\phi\) is equivalent to
\(\bigl(\neg\phi_1' \lor \bigl(0_S = \weight()\bigr)\bigr)\);
and if \(s \neq 0_S\), then \(\phi\) is equivalent to
\(\bigl(\phi_1' \land \bigl(s = \weight()\bigr)\bigr)\).
We are done, because both formulas are Boolean combinations of formulas in
\(\loc{q{+}1}{k}{d} \cup \sent{q{+}1}{k}{d}\).

Henceforth, we consider the case that \(\ell \geq 1\).
We transfer \(\phi_1'\) into disjunctive normal form in order to obtain an
equivalent formula of the form
\(\Lor_{\nu = 1}^n \bigl(\chi_\nu \land \lambda_\nu\bigr)\),
where \(n \geq 1\) and,
for all \(\nu \in \intsUpTo{n}\),
\(\chi_\nu\) is a Boolean combination of sentences in
\(\sent{q}{k{+}\wmaxar}{d}\) of inner quantifier rank \(\leq \qr(\phi)-2\),
and \(\lambda_\nu \in \loc{q}{k{+}\wmaxar}{d}\) with
\(\free(\lambda_\nu) \subseteq \free(\phi_1')\)
and \(\qr(\lambda_\nu) \leq \qr(\phi)-1\).
For every \(J \subseteq [n]\), let
\[
  \chi_J \ \deff \ \Land_{j\in J}\chi_j \ \land
  \Land_{j \in [n]\setminus J} \neg\chi_j
  \qquad\text{and}\qquad
  \lambda_J \ \deff \ \Lor_{j\in J} \lambda_j\,.
\]
Clearly, \(\Lor_{\nu = 1}^n \bigl(\chi_\nu \land \lambda_\nu\bigr)\)
is equivalent to
\(\Lor_{\emptyset \neq J \subseteq [n]} \bigl(\chi_J \land \lambda_J\bigr)\),
the \((\chi_J)_{J \subseteq [n]}\) are mutually exclusive
(i.e.\ \(\chi_J \land \chi_{J'}\) is unsatisfiable
for any distinct \(J, J' \subseteq [n]\)),
and they are Boolean combinations of sentences in
\(\sent{q}{k{+}\wmaxar}{d} \subseteq \sent{q{+}1}{k}{d}\).
Furthermore,
for each \(J \subseteq [n]\) with \(J \neq \emptyset\),
we have
\(\lambda_J \in \loc{q}{k{+}\wmaxar}{d}\)
and
\(\free(\lambda_J) \subseteq \free(\phi)
\cup \set{x_{k+1}, \dots, x_{k+\ell}}\).
Let
\[
  \tilde{\phi}
  ~ \deff ~
  \Lor_{\emptyset \neq J \subseteq [n]} \paren[\Big]{
    \chi_J \land \paren[\big]{
      s = \sum \weight(\ty).\lambda_J
    }
  }.
\]
The following is straightforward to prove.

\begin{claim}
  \label{claim:GNF-tilde-phi}
  If \(s \neq 0_S\),
  then \(\phi\) is equivalent to \(\tilde{\phi}\).
  If \(s = 0_S\),
  then \(\phi\) is equivalent to
  \(\bigl(\tilde{\phi} \lor \chi_\emptyset\bigr)\).
\end{claim}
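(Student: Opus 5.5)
The plan is a direct semantic comparison, fixing a $(\sigma,\Weights)$-structure $\A$ and an assignment $\ta$ to the free variables of $\phi$, that is, to $\tx_{[k]}$. Write
\[
M\deff\setc{\tb\in A^\ell}{\A\models\phi_1(\ta,\tb)},\qquad
M_J\deff\setc{\tb\in A^\ell}{\A\models\lambda_J(\ta,\tb)}\quad\text{for } J\subseteq[n],
\]
with the convention $\lambda_\emptyset\deff\false$, so that $M_\emptyset=\emptyset$. The sentences $\chi_\nu$ have no free variables, so their truth values in $\A$ are fixed. Hence there is a unique $J^\ast\subseteq[n]$ with $\A\models\chi_{J^\ast}$: take $J^\ast$ to be the set of indices $\nu$ with $\A\models\chi_\nu$. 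Mutual exclusivity of the $\chi_J$ guarantees that no other $J$ satisfies $\chi_J$.

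First I will show $M=M_{J^\ast}$. Since $\phi_1\equiv\phi_1'\equiv\Lor_{\nu}(\chi_\nu\land\lambda_\nu)$, and the $\chi_\nu$ do not depend on $\tb$, a tuple $\tb$ lies in $M$ iff there is a $\nu$ with $\A\models\chi_\nu$ and $\A\models\lambda_\nu(\ta,\tb)$. This holds iff $\A\models\lambda_{J^\ast}(\ta,\tb)$. Consequently,
\[
\A\models\phi(\ta)\iff s=\sum_{\tb\in M_{J^\ast}}\weight^\A(\tb).
\]

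Next I evaluate $\tilde\phi$. Every disjunct with $J\neq J^\ast$ is false because $\chi_J$ fails. So if $J^\ast\neq\emptyset$, we get $\A\models\tilde\phi(\ta)$ iff $\A\models\bigl(s=\sum\weight(\ty).\lambda_{J^\ast}\bigr)(\ta)$. By the semantics of aggregation, this is exactly the condition $s=\sum_{\tb\in M_{J^\ast}}\weight^\A(\tb)$ above, so $\phi$ and $\tilde\phi$ agree. If $J^\ast=\emptyset$, then $\tilde\phi$ is false, $\chi_\emptyset$ is true, and $M=\emptyset$, so the sum is $0_S$ by convention. In that situation $\A\models\phi(\ta)$ iff $s=0_S$.

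Combining the two cases gives the claim. When $s\neq0_S$, both $\phi$ and $\tilde\phi$ are false in the case $J^\ast=\emptyset$ and agree otherwise. When $s=0_S$, the formula $\tilde\phi\lor\chi_\emptyset$ is true in the case $J^\ast=\emptyset$, matching $\phi$. In the case $J^\ast\neq\emptyset$, $\chi_\emptyset$ is false, so $\tilde\phi\lor\chi_\emptyset$ reduces to $\tilde\phi$, which agrees with $\phi$.

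There is no real obstacle. The only point needing care is to keep track of the fact that the $\chi$'s are sentences, which lets them be pulled out of the aggregation, and to handle the empty index set via the empty-sum convention.
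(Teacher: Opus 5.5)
Your proposal is correct and follows essentially the same route as the paper. Like the paper, you fix $\A,\ta$, let $J^\ast$ be the set of indices $\nu$ with $\A\models\chi_\nu$, observe that the aggregation then ranges exactly over the tuples satisfying $\lambda_{J^\ast}$, and split into the cases $J^\ast\neq\emptyset$ and $J^\ast=\emptyset$, using the empty-sum convention in the latter. The only cosmetic difference is that the paper first rewrites $\phi$ as an aggregation over $\Lor_{\emptyset\neq J\subseteq[n]}(\chi_J\land\lambda_J)$, whereas you work directly with $\Lor_{\nu}(\chi_\nu\land\lambda_\nu)$ and the sets $M$ and $M_J$.
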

\begin{claimproof}
  Since \(\phi_1 \equiv \phi_1' \equiv \Lor_{\emptyset \neq J \subseteq [n]}
  (\chi_J \land \lambda_J)\),
  the formula \(\phi\) is equivalent to
  \[\hat{\phi} \ \deff \ \paren[\Big]{
    s = \sum \weight(\ty).\paren[\big]{
      \Lor_{\emptyset \neq J \subseteq [n]}
      (\chi_J \land \lambda_J)
    }
  }.\]
  Now let \(\A\) be a \(\paren{\sigma, \Weights}\)-structure,
  let \(\ta \in A^k\),
  and let \(J^* \deff \setc{j \in [n]}{\A \models \chi_j}\).
  Then \(\A \models \chi_{J^*}\),
  and \(\A \not\models \chi_J\) for all \(J \subseteq [n]\) with \(J \neq J^*\).

  If \(J^* \neq \emptyset\),
  then we have that
  \((\A, \ta) \models \hat{\phi}\)
  if and only if
  \((\A, \ta) \models \paren[\big]{
    s = \sum \weight(\ty).\lambda_{J^*}
  }\)
  if and only if
  \((\A, \ta) \models \Lor_{\emptyset \neq J \subseteq [n]}
  \paren[\Big]{
    \chi_J \landp
    \paren[\big]{
      s = \sum \weight(\ty).\lambda_J
    }
  }\).
  Furthermore, it holds that \(\A \not\models \chi_\emptyset\).
  Combined, this shows that
  \((\A, \ta) \models \phi\)
  if and only if
  \((\A, \ta) \models \tilde{\phi}\)
  if and only if
  \((\A, \ta) \models \bigl(\tilde{\phi} \lor \chi_\emptyset\bigr)\).

  If \(J^* = \emptyset\),
  then we have that
  \((\A, \ta) \models \hat{\phi}\)
  if and only if
  \(s = 0_S\),
  and it holds that \((\A, \ta) \not\models \tilde{\phi}\).
  Hence, if \(s = 0_S\),
  then we have
  \((\A, \ta) \models \phi\)
  and
  \((\A, \ta) \models \bigl(\tilde{\phi} \lor \chi_\emptyset\bigr)\).
  On the other hand,
  if \(s \neq 0_S\),
  then
  \((\A, \ta) \not\models \phi\),
  and
  \((\A, \ta) \not\models \tilde{\phi}\).

  All in all, this shows that
  \(\phi \equiv \tilde{\phi}\) if \(s \neq 0_S\),
  and
  \(\phi \equiv \bigl(\tilde{\phi} \lor \chi_\emptyset\bigr)\)
  if \(s = 0_S\).
\end{claimproof}

Based on \cref{claim:GNF-tilde-phi},
in order to complete the proof for Case~(\ref*{def:fparam:aggregation}),
it suffices to consider an arbitrary non-empty \(J \subseteq [n]\)
and the formula
\(\lambda \deff \lambda_J\)
(with \(\lambda \in \loc{q}{k{+}\wmaxar}{d}\),
\(\free(\lambda) \subseteq \free(\phi) \cup \set{x_{k+1}, \dots, x_{k+\ell}}\),
and \(\qr(\lambda)\leq\qr(\phi)-1\))
and show how to transform the formula
\(
  \psi \deff
  \bigl(
     s = \sum \weight(\ty).\lambda
  \bigr)
\)
into an equivalent formula \(\psi'\) that is a Boolean combination
of formulas in \(\loc{q{+}1}{k}{d} \cup \sent{q{+}1}{k}{d}\)
with \(\free(\psi') = \free(\psi)\) and of outer (inner) quantifier rank
at most \(\qr(\phi)\) (resp.~\(\qr(\phi)-1\)).

If \(\free(\psi) = \emptyset\),
then
\(\free(\lambda) \subseteq \set{x_{k+1}, \dots, x_{k+\ell}}
= \set{y_1, \dots, y_\ell}\),
and \(\psi\) is a sentence in
\(\sent{q{+}1}{k}{d}\)
of inner quantifier rank \(\leq \qr(\phi)-1\);
hence, we are done by setting \(\psi' \deff \psi\).

Henceforth, we consider the case that \(\free(\psi) \neq \emptyset\);
in particular, this means that \(k \geq 1\).
Let \(k' \deff k{+}\wmaxar\),
\(\tx \deff (x_1, \dots, x_{k'})\),
\(I \deff \setc{i \in [k]}{x_i \in \free(\psi)}\), and
\(K \deff I \cup \set{x_{k+1}, \dots, x_{k+\ell}}\).
Then, \(\free(\psi) = \setc{x_i}{i \in I}\),
and \(\tx_{K \setminus I} = \ty\).
Let \(T \deff \setc{(t_1,t_2) \in S \times S}{t_1 +_S t_2 = s}\),
which is a finite set, because \(S\) is finite.
Clearly, \(\psi\) is equivalent to
\(\Lor_{(t_1,t_2) \in T} \bigl(\psi'_{t_1} \land \psi''_{t_2}\bigr)\),
where
\begin{eqnarray*}
  \psi'_{t_1}
  & \deff
  & \paren[\Big]{\,
    t_1 =
    \sum \weight(\ty).\paren[\big]{\,
      \lambda \landp \neg\;\dist(\tx_I; \ty) \,{>}\, r'
    \,}
  }
  \qquad\text{and}
  \\
  \psi''_{t_2}
  & \deff
  & \paren[\Big]{\,
    t_2 =
    \sum \weight(\ty).\paren[\big]{\,
      \lambda \landp \dist(\tx_I; \ty) \,{>}\, r'
    \,}
  }.
\end{eqnarray*}
Throughout the remainder of this proof, consider an arbitrary tuple
\((t_1, t_2) \in T\).

\begin{claim}
  \label{claim:psi-prime-in-loc}
  \(\psi'_{t_1}\) is equivalent to a formula in \(\loc{q{+}1}{k}{d}\)
  with the same free variables and of the same quantifier rank.
\end{claim}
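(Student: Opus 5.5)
The plan is to show that $\psi'_{t_1}$ is, after a harmless rewriting, literally of the form \ref{itm:def:loc:kind_weight_aggregation} in the definition of $\loc{q{+}1}{k}{d}$. Note that $\neg\,\dist(\tx_I;\ty)\,{>}\,r'$ abbreviates $\neg\Land_{i\in I,\nu\in[\ell]}\neg\,\dist(x_i,x_{k+\nu})\,{\leq}\,r'$. By De Morgan this is equivalent to $\Oder_{i\in I,\nu\in[\ell]}\dist(x_i,x_{k+\nu})\,{\leq}\,r'$. Substituting this equivalent subformula under the aggregation operator does not change the semantics. It also does not change the free variables or the quantifier rank, since distance atoms are quantifier-free. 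So $\psi'_{t_1}$ is equivalent to
\[
\Big(t_1=\sum\weight(\ty).\big(\lambda\und\Oder_{i\in I,\nu\in[\ell]}\dist(x_i,x_{k+\nu})\,{\leq}\,r'\big)\Big).
\]

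Next I check the side conditions of rule \ref{itm:def:loc:kind_weight_aggregation} for $\loc{q{+}1}{k}{d}$. We have $\weight\in\Weights$ with finite range $S$, and $t_1\in S$. The arity satisfies $\ell=\ar(\weight)\geq 1$, since the case $\ell=0$ was handled separately. The aggregated tuple is $\ty=(x_{k+1},\ldots,x_{k+\ell})$, and $\ell\leq\wmaxar$. Moreover $I\neq\emptyset$ and $I\subseteq[k]$, because we are in the case $\free(\psi)\neq\emptyset$, so $k\geq 1$. We have $\lambda\in\loc{q}{k{+}\wmaxar}{d}$. Finally, the radius in the disjunction is exactly $r'=\locRad{q}{k{+}\wmaxar}{d}$, which matches the definition.

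Two bookkeeping points remain. First, the free variables: they are $\free(\lambda)\setminus\set{x_{k+1},\ldots,x_{k+\ell}}$ together with $\setc{x_i}{i\in I}$ from the distance atoms. Since $\free(\lambda)\subseteq\free(\psi)\cup\set{x_{k+1},\dots,x_{k+\ell}}$ and $\free(\psi)=\setc{x_i}{i\in I}$, these free variables are exactly $\setc{x_i}{i\in I}=\free(\psi'_{t_1})$, all among $x_1,\ldots,x_k$. Second, the quantifier rank is $\qr(\lambda)+1$ in both formulas.

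There is no real obstacle here. The only point needing care is the syntactic identification: the negated conjunction of $\dist(\cdot){>}r'$ must be turned into the required disjunction of atoms $\dist(x_i,y_j)\,{\leq}\,r'$, with exactly the index set $I$ and the radius $r'$ the rule demands.
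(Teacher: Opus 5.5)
Your proposal is correct and is the same argument the paper gives. In both, $\neg\,\dist(\tx_I;\ty)\,{>}\,r'$ is rewritten as $\Oder_{i\in I,j\in[\ell]}\dist(x_i,y_j)\,{\leq}\,r'$, and the resulting aggregation formula is recognised as literally of the form \ref{itm:def:loc:kind_weight_aggregation} in $\loc{q{+}1}{k}{d}$, with free variables and quantifier rank unchanged. Your explicit checks of $\ell\geq 1$, $\emptyset\neq I\subseteq[k]$, the radius being exactly $r'$, and the free variables are details the paper leaves implicit.
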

\begin{claimproof}
  Note that \(\neg\,\dist(\tx_I; \ty) \,{>}\, r'\)
  is equivalent to the formula
  \(\Lor_{i\in I, j\in[\ell]} \distAtom{x_i}{y_j}{r'}\).
  Therefore, \(\psi'_{t_1}\) is equivalent to the formula
  \[
  \paren[\Big]{\,
    t_1 =
    \sum \weight(\ty).
    \paren[\big]{\,
      \lambda
      \landp\!\!\!
      \Lor_{i\in I,j\in[\ell]}
      \distAtom{x_i}{y_j}{r'}
    \,}
  };
  \]
  and this formula belongs to \(\loc{q{+}1}{k}{d}\)
  and has the same quantifier rank as \(\psi'_{t_1}\).
\end{claimproof}

It remains to transform \(\psi''_{t_2}\)
into an equivalent Boolean combination of formulas in
\(\loc{q{+}1}{k}{d} \cup \sent{q{+}1}{k}{d}\)
and of outer (inner) quantifier rank at most
\(\qr(\phi)\) (resp.~\(\qr(\phi)-1\)).
To achieve this, we apply \cref{lemma:fv} to \(\lambda(\tx_K)\)
and obtain a finite and non-empty set \(\Delta\)
of pairs of formulas \(\paren{\alpha(\tx_I), \beta(\ty)}\)
such that \(\alpha(\tx_I) \in \loc{q}{k{+}\wmaxar}{d}\),
\(\beta(\ty) \in \loc{q}{k{+}\wmaxar}{d}\),
the formulas have quantifier rank \(\leq \qr(\lambda) \leq \qr(\phi)-1\),
the first entries in \(\Delta\) are \emph{mutually exclusive},
and
\begin{equation}
  \label{eq:GNF-aggregation-fv}
  \lambda(\tx_K)
  \landp
  \dist(\tx_I; \ty) \,{>}\, r'
  ~~ \equiv ~
  \Lor_{(\alpha,\beta) \in \Delta}
  \paren[\big]{
    \alpha \paren{\projTup{x}{I}}
    \land
    \beta \paren{\ty}
  }
  \landp
  \dist(\projTup{x}{I}; \ty) \,{>}\, r'.
\end{equation}
Let
\[
  \tilde{\psi}_{t_2}
  ~ \deff ~
  \Lor_{(\alpha,\beta) \in \Delta}\paren[\Big]{
    \alpha(\tx_I)
    \landp
    \paren[\Big]{
      t_2 = \sum \weight(\ty).\paren[\big]{
        \beta(\ty) \landp \dist(\tx_I; \ty) \,{>}\, r'
      }
    }
  }.
\]
Let \(\myAlphas \deff \setc{\alpha}
{\text{there is a \(\beta\) with } (\alpha,\beta) \in \Delta}\)
and \(\myBetas \deff \setc{\beta}
{\text{there is a \(\alpha\) with } (\alpha,\beta) \in \Delta}\).
Using the fact that the first entries in \(\Delta\)
are mutually exclusive,
the following is straightforward to prove.

\begin{claim}
  \label{claim:GNF-psi-t-2}
  If \(t_2 \neq 0_S\),
  then \(\psi''_{t_2}\) is equivalent to
  \(\tilde{\psi}_{t_2}\).
  If \(t_2 = 0_S\),
  then \(\psi''_{t_2}\) is equivalent to
  \(\bigl(\tilde{\psi}_{t_2} \lor
  \Land_{\alpha \in \myAlphas} \neg\alpha\bigr)\).
\end{claim}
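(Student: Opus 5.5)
We fix a $(\sigma,\Weights)$-structure $\A$ and a tuple $\ta\in A^k$, and prove $\A\models\psi''_{t_2}(\ta_I)$ iff $\A$ satisfies the claimed right-hand side, by expressing both sides through sums over the same set of tuples. Let
\[
M \deff \setc{\tb\in A^\ell}{\A\models\lambda(\ta_I,\tb),\ \dist^{\A}(\ta_I;\tb)>r'},
\]
so that $\A\models\psi''_{t_2}(\ta_I)$ iff $t_2=\sum_{\tb\in M}\weight^\A(\tb)$.

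By \eqref{eq:GNF-aggregation-fv}, a tuple $\tb$ with $\dist^{\A}(\ta_I;\tb)>r'$ lies in $M$ iff there is $(\alpha,\beta)\in\Delta$ with $\A\models\alpha(\ta_I)$ and $\A\models\beta(\tb)$. Since the first entries of $\Delta$ are mutually exclusive, at most one $\alpha\in\myAlphas$ holds at $\ta_I$. Note, however, that several pairs may share that first entry, so we would first merge all pairs with the same $\alpha$ by taking the disjunction of their $\beta$'s. This merge is harmless: the merged $\beta$ stays in $\loc{q}{k'}{d}$ with no increase in quantifier rank, and it preserves both \eqref{eq:GNF-aggregation-fv} and mutual exclusivity. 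Alternatively, we can argue semantically with the set $\Delta_\alpha$ of pairs having first entry $\alpha$; but then $\tilde\psi_{t_2}$ would have to be read with the merged $\beta$, so we simply assume the merge.

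\emph{Case A:} some $\alpha_0\in\myAlphas$ holds at $\ta_I$, with unique partner $\beta_0$. Then
\[
M=\setc{\tb}{\A\models\beta_0(\tb),\ \dist^{\A}(\ta_I;\tb)>r'},
\]
hence $\A\models\psi''_{t_2}(\ta_I)$ iff $\A\models\bigl(t_2=\sum\weight(\ty).(\beta_0\land\dist(\tx_I;\ty)>r')\bigr)(\ta_I)$. The latter holds iff $\A\models\tilde\psi_{t_2}(\ta_I)$, because every other disjunct of $\tilde\psi_{t_2}$ fails: its $\alpha$ is false at $\ta_I$ by exclusivity. Moreover $\Land_{\alpha\in\myAlphas}\neg\alpha$ is false at $\ta_I$, so the extra disjunct added when $t_2=0_S$ does not change anything.

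\emph{Case B:} no $\alpha\in\myAlphas$ holds at $\ta_I$. Then $M=\emptyset$, so the sum is $0_S$, and $\A\models\psi''_{t_2}(\ta_I)$ iff $t_2=0_S$. On the other side, $\tilde\psi_{t_2}$ is false at $\ta_I$ because every disjunct contains a false $\alpha$, while $\Land_{\alpha\in\myAlphas}\neg\alpha$ is true. Hence the right-hand side is $\tilde\psi_{t_2}$, false, when $t_2\neq 0_S$, and $\tilde\psi_{t_2}\lor\Land\neg\alpha$, true, when $t_2=0_S$. This matches the left-hand side in both subcases.

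The only delicate point is uniqueness of the partner $\beta$ for the true $\alpha$, which the merge step settles; everything else is bookkeeping. Afterwards, the remaining inner aggregation $\bigl(t_2=\sum\weight(\ty).(\beta\land\dist(\tx_I;\ty)>r')\bigr)$ still has free variables $\tx_I$. It will presumably be treated later in the paper analogously to $\mu$, with separate arguments for the sentence part and the local part.
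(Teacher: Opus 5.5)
Your argument is correct and is essentially the paper's proof. Both fix \(\A,\ta\), split on whether some first entry of \(\Delta\) holds at \(\ta_I\), reduce \(\psi''_{t_2}\) to the aggregation over the partner \(\beta\) in the first case, and get the empty sum \(0_S\) in the second case. Your merge step is harmless but unnecessary: \cref{lemma:fv} requires mutual exclusivity for every two \emph{distinct pairs} of \(\Delta\), so any first entry shared by two pairs is unsatisfiable, and the pair whose \(\alpha\) holds at \(\ta_I\) is automatically unique, which is exactly what the paper uses.
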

\begin{claimproof}
  Let \(\A\) be a \(\paren{\sigma, \Weights}\)-structure,
  and let \(\ta \in A^k\).
  By \cref{eq:GNF-aggregation-fv},
  it holds that
  \((\A, \ta) \models \psi''_{t_2}\)
  if and only if
  \((\A, \ta) \models \paren[\Big]{
    t_2 = \sum \weight(\ty).\paren[\Big]{
      \Lor_{(\alpha, \beta) \in \Delta}
      \bigl(\alpha(\projTup{x}{I}) \land \beta(\ty)\bigr)
      \landp
      \dist(\projTup{x}{I}; \ty) \,{>}\, r'
    }
  }\).
  Since the first entries in \(\Delta\) are mutually exclusive,
  there is at most one \((\alpha, \beta) \in \Delta\)
  with \(\A \models \alpha(\ta_I)\).

  If there is some \((\alpha, \beta) \in \Delta\)
  with \(\A \models \alpha(\ta_I)\),
  then we have
  \((\A, \ta) \models \psi''_{t_2}\)
  if and only if
  \((\A, \ta) \models \Bigl(t_2 = \sum \weight(\ty).
  \bigl(\beta(\ty) \landp \dist(\projTup{x}{I}; \ty) \,{>}\, r'\bigr)\Bigr)\)
  if and only if
  \((\A, \ta) \models \tilde{\psi}_{t_2}\)
  if and only if
  \((\A, \ta) \models \bigl(\tilde{\psi}_{t_2} \lor
  \Land_{\alpha \in \myAlphas} \neg\alpha\bigr)\).

  If there is no \((\alpha, \beta) \in \Delta\)
  with \(\A \models \alpha(\ta_I)\),
  then \(\A \models \eval{\psi''_{t_2}}{\ta}\)
  if and only if \(t_2 = 0_S\),
  and we have \(\A \not\models \eval{\tilde{\psi}_{t_2}}{\ta}\)
  and \((\A, \ta) \models \Land_{\alpha \in \myAlphas} \neg\alpha\).
  Thus, if \(t_2 = 0_S\),
  then \((\A, \ta) \models \psi''_{t_2}\)
  and \((\A, \ta) \models \bigl(\tilde{\psi}_{t_2} \lor
  \Land_{\alpha \in \myAlphas} \neg\alpha\bigr)\).
  On the other hand, if \(t_2 \neq 0_S\),
  then \((\A, \ta) \not\models \psi''_{t_2}\)
  and \((\A, \ta) \not\models \tilde{\psi}_{t_2}\).

  All in all, this shows
  \(\psi''_{t_2} \equiv \tilde{\psi}_{t_2}\) if \(t_2 \neq 0_S\),
  and
  \(\psi''_{t_2} \equiv \bigl(\tilde{\psi}_{t_2} \lor
  \Land_{\alpha \in \myAlphas} \neg\alpha\bigr)\)
  if \(t_2 = 0_S\).
\end{claimproof}

Based on \cref{claim:GNF-psi-t-2},
in order to complete the proof
for Case~(\ref*{def:fparam:aggregation}),
it suffices to consider an arbitrary \(\beta \in \myBetas\)
and transform the formula
\[
  \mu(\tx_I)
  ~ \deff ~
  \Big(
    t_2 =
    \sum\weight(\ty).\big(
      \beta(\ty) \landp \dist(\tx_I; \ty) \,{>}\, r'
    \big)
  \Big)
\]
into an equivalent Boolean combination of formulas in
\(\loc{q{+}1}{k}{d} \cup \sent{q{+}1}{k}{d}\)
and of outer (inner) quantifier rank at most
\(\qr(\phi)\) (resp.~\(\qr(\phi)-1\)).
To achieve this, we let
\(U \deff \setc{(u_1,u_2) \in S \times S}{u_1 -_S u_2 = t_2}\).
This set is finite because \(S\) is finite.
Let
\(\hat{\mu}(\tx_I) \deff \Lor_{(u_1,u_2) \in U}\big(
  \mu'_{u_1} \land \mu''_{u_2}(\tx_I)
\big)\),
where
\begin{eqnarray*}
  \mu'_{u_1}
  & \deff
  & \Big(u_1 = \sum \weight(\ty).\beta(\ty)\Big)
  \qquad\text{and}
  \\
  \mu''_{u_2}(\tx_I)
  & \deff
  & \Big(u_2 = \sum \weight(\ty).\big(
      \beta(\ty) \landp \neg\;\dist(\tx_I; \ty) \,{>}\, r'
    \big)\Big).
\end{eqnarray*}
Here, \(\mu'_{u_1}\) checks that the sum of the weights of
all tuples satisfying \(\beta\) is equal to~\(u_1\),
and \(\mu''_{u_2}\) checks that the sum of the weights of
tuples satisfying \(\beta\) that \emph{are not} far from \(\tx_I\)
is equal to~\(u_2\).
Hence, if \(\mu'_{u_1}\) and \(\mu''_{u_2}\) are satisfied,
then the sum of the weights of tuples satisfying \(\beta\)
that \emph{are} far from \(\tx_I\) is equal to \(u_1 - u_2\).
Thus, using that $(S,+_S)$ is an \emph{abelian group},
and heavily relying on the existence of inverse elements in \(S\),
the following is straightforward to prove.
\begin{claim}
  \(\hat{\mu}(\tx_I)\) is equivalent to \(\mu(\tx_I)\).
\end{claim}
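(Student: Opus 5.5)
The plan is to prove the claim pointwise. Fix a $(\sigma,\Weights)$-structure $\A$ and $\ta\in A^k$, and introduce three index sets:
\begin{itemize}
\item $F\deff\setc{\tb\in A^\ell}{\A\models\beta(\tb)}$;
\item $F_{\mathrm{near}}\deff\setc{\tb\in F}{\neg\,\dist^{\A}(\ta_I;\tb)\,{>}\,r'}$;
\item $F_{\mathrm{far}}\deff\setc{\tb\in F}{\dist^{\A}(\ta_I;\tb)\,{>}\,r'}$.
\end{itemize}
Clearly $F=F_{\mathrm{near}}\uplus F_{\mathrm{far}}$, since for each tuple exactly one of $\dist(\tx_I;\ty)\,{>}\,r'$ and its negation holds. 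Let $w$, $w_{\mathrm{near}}$ and $w_{\mathrm{far}}$ denote the sums in $S$ of $\weight^{\A}(\tb)$ over $F$, $F_{\mathrm{near}}$ and $F_{\mathrm{far}}$, respectively.

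By the semantics of aggregation, we get the following three equivalences:
\[
\A\models\mu(\ta_I)\iff w_{\mathrm{far}}=t_2,\qquad
\A\models\mu'_{u_1}\iff w=u_1,\qquad
\A\models\mu''_{u_2}(\ta_I)\iff w_{\mathrm{near}}=u_2 .
\]
Because $S$ is finite, all these sums are finite sums in the abelian group $(S,+_S)$. By commutativity and associativity, the disjointness of the union gives $w=w_{\mathrm{near}}+_S w_{\mathrm{far}}$. Using inverses in $S$, this is equivalent to $w_{\mathrm{far}}=w-_S w_{\mathrm{near}}$.

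For the direction ``$\Rightarrow$'', assume $\A\models\mu(\ta_I)$, so $w_{\mathrm{far}}=t_2$. Set $u_1\deff w$ and $u_2\deff w_{\mathrm{near}}$. Then $u_1-_S u_2=w_{\mathrm{far}}=t_2$, so $(u_1,u_2)\in U$. By construction $\A\models\mu'_{u_1}\land\mu''_{u_2}(\ta_I)$, hence $\A\models\hat{\mu}(\ta_I)$.

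For the direction ``$\Leftarrow$'', assume some $(u_1,u_2)\in U$ has $\A\models\mu'_{u_1}\land\mu''_{u_2}(\ta_I)$. Then $u_1=w$ and $u_2=w_{\mathrm{near}}$, so $w_{\mathrm{far}}=w-_S w_{\mathrm{near}}=u_1-_S u_2=t_2$. Hence $\A\models\mu(\ta_I)$.

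There is no real obstacle here. The only points needing care are the following.
\begin{itemize}
\item The partition of $F$ must be disjoint and exhaustive, so that the weight over $F$ splits exactly.
\item Tuples with weight $0_S$ need no special handling; the argument does not need the locality condition, since they contribute nothing to any of the sums.
\item The degenerate case where $F$ (or one of its parts) is empty is covered by the convention that the empty sum is $0_S$.
\end{itemize}
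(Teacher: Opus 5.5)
Your proof is correct and follows essentially the same route as the paper. Like the paper, you split the tuples satisfying $\beta$ into those near to and those far from $\ta_I$, note that the total weight is the near weight plus the far weight, and use inverses in $(S,+_S)$ to identify the far weight with $u_1 -_S u_2$ for the unique $u_1,u_2$ witnessing $\mu'_{u_1}$ and $\mu''_{u_2}(\ta_I)$. One small correction: the sums are finite because the universe $A$ (and hence $A^\ell$) is finite, not because $S$ is finite.
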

\begin{claimproof}
  Let \(\A\) be a \(\paren{\sigma, \Weights}\)-structure,
  let \(\ta \in A^k\),
  and let \(u', u'' \in S\)
  such that \(\A \models \mu'_{u'}\)
  and \((\A, \ta) \models \mu''_{u''}\)
  (and note that such \(u', u''\) always exist).
We let
  \begin{align*}
    B
    &\ \deff \ \bigsetc{\,\tb \in A^\ell\,}{\,\A \models \beta(\tb)\,},\\
    B_>
    &\ \deff \ \bigsetc{\,\tb \in B\,}{\,\distS{\A}{\ta_I; \tb} \,{>}\,
      r'\,}, \ \text{ and}\\
    B_\leq
    &\ \deff \ B \setminus B'.
  \end{align*}
  It is easy to see that \(u' = \sum_{\tb \in B} \weight^\A(\tb)\),
  \(u'' = \sum_{\tb \in B_\leq} \weight^\A(\tb)\),
  and \(u' = u'' +_S \sum_{\tb \in B_>} \weight^\A(\tb)\).
  Since \(S\) is an abelian group, we obtain that
  \(u' -_S u'' = \sum_{\tb \in B_>} \weight^\A(\tb)\).
  Furthermore, it can easily be verified that
  \((\A, \ta) \models \mu\)
  if and only if
  \(t_2 = \sum_{\tb \in B_>} \weight^\A(\tb)\).

  Hence, combined, this shows that
  \begin{align*}
    (\A, \ta) \models \mu
    & ~\iff~
    t_2 = \sum_{\tb \in B_>} \weight^\A(\tb)
    \\
    & ~\iff~
    t_2 = u' -_S u''
    \\
    & ~\iff~
    (u', u'') \in U
    \\
    & ~\iff~
    (\A, \ta) \models \hat{\mu}.
  \end{align*}
  Since \(\A\) and \(\ta\) were chosen arbitrarily,
  this shows that \(\mu \equiv \hat{\mu}\).
\end{claimproof}
Note that \(\mu'_{u_1}\) is a sentence in \(\sent{q{+}1}{k}{d}\)
of inner quantifier rank \(\leq \qr(\phi)-1\).
Furthermore, with the same reasoning as in the proof of
\cref{claim:psi-prime-in-loc}
(replacing \(\lambda\) with \(\beta\)),
we obtain that \(\mu''_{u_2}(\tx_I)\)
is equivalent to a formula in \(\loc{q{+}1}{k}{d}\)
with free variables \(\setc{x_i}{i \in I}\)
and of the same quantifier rank.
Moreover, \(\qr(\mu''_{u_2}) = \qr(\beta)+1 \leq \qr(\phi)\).
This, finally, completes Case~(\ref*{def:fparam:aggregation})
and finishes the proof of \cref{thm:GNF_ngFOWplus_refined}.
\end{proof}
 
\section{Application:\texorpdfstring{\newline}{} Deciding First-Order Properties
of Nowhere Dense Structures}\label{section:nowhere-dense}

Throughout this section, we fix a signature \(\sigma\).
By $\FO[\sigma]$ we denote the set of all first-order formulas of
signature $\sigma$.
For a signature \(\sigma' \supseteq \sigma\), and a $\sigma$-structure
$\A$, a \emph{\(\sigma'\)-expansion} of \(\A\) is a \(\sigma'\)-structure \(\A'\)
with the same universe as $\A$, and with
\(R^{\A'} = R^{\A}\) for all \(R \in \sigma\).

Throughout this section,
$\locRadFOOperator$ is the radius function defined in \cref{eq:specialRadiusFunction}
in \cref{sec:rpGNFforFOandFOMOD}.

As an application of our new locality theorem (specifically,
\cref{cor:GNF_FOplus} in \cref{sec:rpGNFforFOandFOMOD}),
we give a simplified proof of the following theorem.

\begin{theorem}[Grohe, Kreutzer, and Siebertz~\cite{GroheKreutzerSiebertz_2017_NowhereDense}]
  \label{thm:fo-mc}
  Let \(\CC\) be a nowhere dense class of \(\sigma\)-structures.
  For every \(\FO[\sigma]\) sentence \(\phi\) and every \(\epsilon > 0\),
  there is an algorithm that, given \(\A \in \CC\),
  decides if \(\A \models \phi\) in time
  \(\bigO_{\CC, \phi, \epsilon}(\abs{\A}^{1+\epsilon})\).
\end{theorem}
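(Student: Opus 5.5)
We follow the architecture of \cite{GroheKreutzerSiebertz_2017_NowhereDense}, using \cref{cor:GNF_FOplus} in place of their rank-preserving locality theorem. Two standard ingredients from the nowhere-dense toolbox are imported as black boxes. The first is \emph{sparse neighbourhood covers}: for every $r$ and $\epsilon>0$ one can compute in time $\bigO(\abs{A}^{1+\epsilon})$ a cover $\mathcal{X}$ of $\A$ by clusters $X$ such that every $r$-neighbourhood lies in some cluster, every cluster has radius at most $2r$, and every element lies in $\bigO(\abs{A}^{\epsilon})$ clusters. The second is the \emph{splitter game}: on a nowhere dense class, Splitter wins the $(\ell,m)$-game, with $\ell=\ell(\CC,r)$ rounds, and each of her moves is computable efficiently. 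In addition, we strengthen the statement to an induction hypothesis over \emph{formulas with free variables and expanded signatures}. The induction is on the number $\lambda$ of splitter-game rounds still available to Splitter. Given $\phi(\bar x)\in\FOplusqkd$ with $q,k,d$ fixed, the algorithm computes, for a structure $\A$ from (an expansion of a subclass of) $\CC$, a $0$-ary or unary label representing $\phi$. Unary labels are used for $k=1$, and in general we work with labels per element, since after applying the normal form only local formulas around single elements (of the basic local sentences) and local formulas around $\bar x$ matter.

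The key steps, in order, are as follows. (1) We reduce to unary queries. By \cref{cor:GNF_FOplus}, $\phi$ is a Boolean combination of $\locRadFO{q}{k}{d}$-local formulas in $\FOplusqkd$ and basic local sentences whose $\lambda(x_1)$ lies in $\FOplusParam{q-1}{k+1}{d}$ and is local. Since the rank does not increase, recursion on the free-variable local parts stays within a fixed finite family of classes $\Fparam{q'}{k'}{d}$, whose size depends only on $q,k,d$. (2) We handle basic local sentences. Given the unary predicate $P=\lambda(\A)$, deciding whether there are $\ell$ points of $P$ with pairwise distance $>2r$ is done as in \cite{GroheKreutzerSiebertz_2017_NowhereDense} via the covers: this is a distance-independent-set problem that is solved greedily, or by the cover argument, in almost-linear time on nowhere dense classes. (3) We evaluate local unary formulas $\lambda(x)$ for all $x$. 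We compute an $r$-cover, and for each cluster $X$ with centre $c$ we let Splitter answer in the splitter game on $\A[X]$, obtaining a vertex $s_X$. We then remove $s_X$, encoding its adjacencies by fresh unary relations and by distance information in an expanded signature $\sigma'$. The resulting formula $\lambda'$ over $\A[X]\setminus\{s_X\}$ is again in some $\FOplusParam{q}{k'}{d'}$ with $d'$ bounded, and it is evaluated recursively with $\lambda-1$ remaining rounds. The results are glued back through the cover, and since every element lies in only $\abs{A}^{\epsilon}$ clusters, the total size per level is $\abs{A}^{1+\epsilon}$. As the recursion depth is at most $\ell(\CC,r)$, which is bounded, the overall running time is $\abs{A}^{1+\bigO(\ell\epsilon)}$, and rescaling $\epsilon$ yields the claimed bound.

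The main obstacle is step (3): removing the splitter vertex and rewriting distance atoms. Distances in $\A[X]\setminus\{s_X\}$ differ from distances in $\A$, so each atom $\dist(x,y)\le\tilde d$ has to be replaced by a disjunction of "short path avoiding $s_X$" and "path through $s_X$", where the latter uses new unary predicates recording $\dist(\cdot,s_X)$. Here it is essential that distance atoms are quantifier-free in $\FOplus$ and that the bounds on $\tilde d$ in $\FOplusqkd$ depend only on the position of the atom, so that the rewritten formula stays in a class $\FOplusParam{q}{k}{d''}$ with $d''$ depending only on $(q,k,d)$. I would first try to verify this by induction over the definition of $\FOplusqkd$, choosing the cover radius as $\locRadFO{q}{k}{d}$ so that quantifiers guarded by distance atoms stay within the cluster. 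The critical point is that the radius function does not grow across recursion levels: otherwise the number of splitter rounds $\ell(\CC,r)$ would not be bounded, and ensuring this bound is exactly the role of the rank-preserving normal form.
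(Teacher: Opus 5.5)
Your architecture matches the paper's proof of \cref{lem:fo-mc}: rank-preserving normal form, basic local sentences via distance-independent sets, neighbourhood covers with clusters of radius $2r$, removal of Splitter's vertices using unary distance predicates, re-normalisation, and a depth bound from the splitter game. The gap is in the step you call the main obstacle. You only aim to show that removal lands in some $\FOplusParam{q}{k'}{d''}$ with $d''$ (and $k'$) depending on $(q,k,d)$, and you credit the rank-preserving normal form with keeping the radius fixed. That does not suffice. The cover radius $r$, the splitter radius $2r$, the round bound $t=t(\CC_G,2r)$ and the final signature $\removalsigma{r}{\ell}$ must all be fixed before the recursion starts. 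Suppose each removal could move you from $d$ to $d''=f(q,k,d)>d$. Then at depth $j$ you would be in $\FOplusParam{q}{k}{f^{j}(d)}$, with radius $\locRadFO{q}{k}{f^{j}(d)}$. The covers and the game would then need an ever larger radius, which raises the number of rounds, which in turn raises $j$, so the depth bound is circular. \cref{cor:GNF_FOplus} is necessary here, but it only keeps you inside whatever class you are already in; it cannot undo a class change caused by removal. What you need, and what \cref{lem:fo-removal} and \cref{cor:iterated-removal} provide, is that removal maps $\FOplusqkd$ over $\sigma$ into $\FOplusqkd$ over $\removalsigma{r}{}$ with exactly the same $(q,k,d)$. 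This holds for concrete syntactic reasons. An atom $\dist(x_i,x_j)\le d'$ where neither variable is the removed element becomes $\dist(x_i,x_j)\le d'\lor\bigvee_{d_1+d_2=d'}(S_{d_1}(x_i)\land S_{d_2}(x_j))$, with the same bound $d'$. A guarded quantifier $\exists x_{k+1}(\dist(x_i,x_{k+1})\le\hat d\land\psi)$ whose guard $x_i$ is the removed element becomes $\exists x_{k+1}(S_{\hat d}(x_{k+1})\land\tilde\psi)$, which is an unguarded quantifier of type~(iv) at the same level. Only $S_1,\ldots,S_r$ are needed, because all distance bounds in $\FOplusqkd$ are at most $r=\locRadFO{q}{k}{d}$. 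For relations of arity greater than two you also need the projected relations $\tilde R_J$, not only unary ones.

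A second, smaller point concerns the splitter game. Splitter's answer cannot come from a fresh game on $\A[X]$, since that gives no depth bound. It has to come from a single play of the game with radius $2r$ along each branch of the recursion. In that play Connector's moves are the cluster centres, and the arenas are the clusters minus earlier removals, i.e.\ adaptively chosen subgraphs of the arenas of the ordinary game. You therefore need a Splitter strategy that still wins under such restrictions and can be computed from the restricted arenas alone; computing the unrestricted balls would be too expensive. The paper proves this in \cref{lem:splitter-strategy}. There Splitter deletes the set $W^{(i)}$ of vertices on short paths from earlier centres, so several vertices are removed per level, which is why \cref{cor:iterated-removal} is used. If you want single-vertex answers, you must justify that such a robust, efficiently computable strategy exists. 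Finally, a greedily built maximal $2r'$-scattered subset can be smaller than a maximum one. The basic local sentences therefore need the nontrivial \cref{lem:dist-is}, not a greedy search.
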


In this result, the subscripts indicate that the constants in the \(\bigO\)-notation
may depend on \(\CC\), \(\phi\), and \(\epsilon\).
Grohe, Kreutzer, and Siebertz~\cite{GroheKreutzerSiebertz_2017_NowhereDense}
also prove a uniform version of the theorem for effectively nowhere dense classes \(\CC\).
Here, to keep things simple, we focus on the non-uniform version (that is, \cref{thm:fo-mc}).
However, our simplified proof can easily be adapted to the uniform version as well.

While we review the main definitions necessary to understand the simplification
of the main algorithm that we achieve, this section is not self-contained;
we rely on several substantial results from \cite{GroheKreutzerSiebertz_2017_NowhereDense}
that we state here without proof.

For a \(\sigma\)-structure \(\A\),
we let \(\abs{\A} \deff \abs{A}\) be the \emph{order} of \(\A\),
and
\(\norm{\A} \deff \abs{A} + \sum_{R \in \sigma} \abs{R^\A}\)
is the \emph{size} of \(\A\).
Accordingly, for a graph \(G\),
we let \(\abs{G} \deff \abs{V(G)}\)
and \(\norm{G} \deff \abs{V(G)} + \abs{E(G)}\).

Note that the Gaifman graph \(G_\A\) of \(\A\) can be computed from \(\A\)
in time \(\bigO(\norm{\A})\) (assuming that \(\sigma\) is fixed).
This will allow us to always compute the Gaifman graphs of the structures
we are dealing with within the permitted time constraints.

\subsection{The Splitter Game and Nowhere Dense Classes}

To define nowhere dense graph classes,
we directly use a game characterisation of such classes introduced
in~\cite[Section~4]{GroheKreutzerSiebertz_2017_NowhereDense}.
Let \(G\) be a graph and \(\ell, r \in \NN\).
The \emph{\((\ell,r)\)-splitter game} on \(G\) is played by two players called
\emph{Connector} and \emph{Splitter} as follows.
We let \(G^{(0)} \deff G\).
For \(i \geq 1\), in round \(i\) of the game,
Connector selects a vertex \(v^{(i)} \in V(G^{(i-1)})\).
Then Splitter selects a set \(W^{(i)} \subseteq \neighbr{G^{(i-1)}}{v^{(i)}}\)
such that \(\sum_{j=1}^{i} \abs{W^{(j)}} \leq \ell\).
If \(\neighbr{G^{(i-1)}}{v^{(i)}} \setminus W^{(i)} = \emptyset\), then Splitter wins.
Otherwise, the play continues with
\(G^{(i)} \deff G^{(i-1)}\big[\neighbr{G^{(i-1)}}{v^{(i)}} \setminus W^{(i)}\big]\).
If the play never ends,
which is possible because Splitter may always choose \(W^{(i)} = \emptyset\),
then Connector wins.

It is convenient to allow Splitter to choose \(W^{(i)} = \emptyset\),
even though this does not help Splitter to make progress:
if \(W^{(i)} = \emptyset\), then Connector can always choose \(v^{(i+1)} = v^{(i)}\),
and the position remains unchanged.
Note that the sets \(W^{(i)}\) selected by Splitter are mutually disjoint.
Hence, in every play, there are at most \(\ell\) rounds \(i\)
where Splitter chooses a non-empty set \(W^{(i)}\).

A class \(\CC\) of graphs is \emph{nowhere dense}
if for every \(r \in \NN\),
there is an \(\ell(r) \in \NN\)
such that for all graphs \(G \in \CC\),
Splitter has a winning strategy for the \((\ell(r),r)\)-splitter game on \(G\).
Furthermore, a class \(\CC\) of \(\sigma\)-structures is \emph{nowhere dense}
if the class \(\setc{G_\A}{A \in \CC}\) of the Gaifman graphs
of all structures in \(\CC\) is nowhere dense.

It is a well-known fact that if \(\CC\) is a nowhere dense class of graphs,
then for every \(\epsilon > 0\) and all \(G \in \CC\),
it holds that \(\norm{G} \in \bigO(\abs{G}^{1+\epsilon})\)
(cf.~\cite{GroheKreutzerSiebertz_2017_NowhereDense}).
Hence, if \(\CC\) is a nowhere dense class of \(\sigma\)-structures,
then \(\norm{\A} \in \bigO(\abs{A}^{1+\epsilon})\) for all \(\A \in \CC\).
Furthermore, the closure of a nowhere dense class of graphs
under taking subgraphs is nowhere dense as well,
because if Splitter has a winning strategy for the \((\ell,r)\)-splitter game on a graph \(G\),
then this winning strategy induces a winning strategy for the game on any subgraph of \(G\).

We actually need to be able to compute winning strategies for Splitter efficiently.
Moreover, we must be able to do this while moving to subgraphs of the original graph.
We will describe a generic strategy for Splitter
that can be computed efficiently using just shortest path computations.
A \emph{partial play} of the \((\ell, r)\)-splitter game on a graph \(G\)
is a sequence
\(\pi \deff (v^{(1)}, W^{(1)}, v^{(2)}, W^{(2)}, \dots, v^{(k)}, W^{(k)})\),
where \(v^{(i)}\), \(W^{(i)}\) are the choices made by Connector and Splitter
in round \(i\) of the play.
The \emph{length} of \(\pi\) is \(k\).
We let \(G^{(0)} \deff G\) and
\(G^{(i)} \deff G^{(i-1)}\big[\neighbr{G^{(i-1)}}{v^{(i)}} \setminus W^{(i)}\big]\)
for \(i \in [k]\).
Observe that, by the rules of the game, for all \(i \in [k]\), we have
\(v^{(i)} \in V(G^{(i-1)})\),
which implies \(V(G^{(j)}) \neq \emptyset\) for \(j \in [0,k{-}1]\),
and \(W^{(i)} \subseteq  \neighbr{G^{(i-1)}}{v^{(i)}}\).

The following lemma is extracted from the proof of Theorem~4.2 and
Remark~4.3 of \cite{GroheKreutzerSiebertz_2017_NowhereDense}.
For the reader's convenience, we give a proof of the lemma.

\begin{lemma}[\cite{GroheKreutzerSiebertz_2017_NowhereDense}]
  \label{lem:splitter-strategy}
  Let \(\CC\) be a nowhere dense class of graphs.
  For every \(r \in \NN\),
  there is a \(t = t(\CC, r) \in \NNpos\) such that the following holds.
  Let \(G \in \CC\),
  and let
  \[\pi = (v^{(1)}, W^{(1)}, v^{(2)}, W^{(2)}, \dots, v^{(k)}, W^{(k)}),\]
  where \(k \in \NN\) and, for all \(i \in [k]\),
  \(v^{(i)} \in V(G)\) and \(W^{(i)} \subseteq V(G)\).
  Let
  \begin{equation}\label{eq:splitter-strategy}
    G^{(0)} \deff G \supseteq H^{(1)} \supseteq
    G^{(1)} \supseteq \cdots \supseteq H^{(k)} \supseteq G^{(k)}
  \end{equation}
  such that the following conditions are satisfied.
  \begin{romanenumerate}
    \item\label{item:splitter-strategy-i}
      For all \(i \in [k]\),
      it holds that \(v^{(i)} \in V(H^{(i)})\).
    \item\label{item:splitter-strategy-ii}
      \(W^{(1)} = \emptyset\)
      and for all \(i \in [2,k]\),
      it holds that \(W^{(i)} = \neighbr{H^{(i)}}{v^{(i)}} \cap \bigcup_{j=1}^{i-1} V(P^{(i)}_j)\),
      where \(P^{(i)}_j\) is a path of length at most \(r\)
      from \(v^{(j)}\) to \(v^{(i)}\) in the graph \(H^{(j)}\).
    \item\label{item:splitter-strategy-iii}
      For all \(i \in [k]\),
      it holds that
      \(G^{(i)} = H^{(i)}\big[\neighbr{H^{(i)}}{v^{(i)}} \setminus W^{(i)}\big]\).
  \end{romanenumerate}
  Then \(k \leq t\).
\end{lemma}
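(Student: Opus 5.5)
This lemma bounds the length of plays in which Splitter follows the generic shortest-path strategy. I would prove it by reducing to the existence of a winning strategy in the splitter game on $G$ for suitably chosen parameters. Since $\CC$ is nowhere dense, for the radius $r$ there is $\ell=\ell(r)$ such that Splitter wins the $(\ell,r)$-splitter game on every graph in $\CC$, and hence on every subgraph of a graph in $\CC$. The cleanest route, following \cite{GroheKreutzerSiebertz_2017_NowhereDense}, is a Ramsey-type argument. Assume $k$ is huge. From the sequence $v^{(1)},\dots,v^{(k)}$ and the nested graphs $H^{(i)}$, I extract a configuration that the nowhere dense class cannot contain, namely an $r$-subdivided large clique (or large shallow minor) in some subgraph of $G$. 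I would take $t$ to be the Ramsey-type bound that forces such a forbidden minor of size exceeding the bound $N(\CC,r)$ guaranteed by nowhere density.

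\textbf{Step 1: structure of the play.} From (ii) and (iii), each $v^{(i)}$ lies in $G^{(i-1)}\subseteq \neighbr{H^{(j)}}{v^{(j)}}$ for all $j<i$. Hence, for every $j<i$, there is a path $P^{(i)}_j$ of length at most $r$ from $v^{(j)}$ to $v^{(i)}$ inside $H^{(j)}$. Moreover, $G^{(i)}$ avoids $W^{(i)}$, which consists of all vertices of the earlier paths $P^{(i)}_j$ that survive into the $r$-ball of $v^{(i)}$ in $H^{(i)}$. The key consequence is the following: for $i<i'$, the path $P^{(i')}_i$ lives in $H^{(i)}$. Its tail near $v^{(i')}$ lies in later graphs, which avoid the vertices of the paths $P^{(i)}_j$ for $j<i$ that were deleted at round $i$. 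This gives a disjointness property between paths toward different targets.

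\textbf{Step 2: extract a shallow clique minor.} I would colour the pairs $j<i$ by combinatorial information, such as the first point at which $P^{(i)}_j$ meets $H^{(j+1)}$. Then I apply Ramsey's theorem to obtain a large index set on which the paths $P^{(i)}_j$ behave uniformly. On this set, the paths between distinct pairs can be shortcut so that their internal vertices are disjoint. This yields a topological $K_m$ of depth at most $r$, which is an $r$-shallow minor, in $G$. For $m$ large this contradicts nowhere density, since nowhere dense classes exclude $K_{N}$ as a $r$-shallow topological minor for some $N=N(\CC,r)$. Choosing $t$ as the Ramsey number $R(N,\text{colours})$ finishes the proof.

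\textbf{Main obstacle.} The hard part is Step 2, namely ensuring genuine internal disjointness of the extracted paths. The deletion sets $W^{(i)}$ only remove vertices of earlier paths within the current $r$-ball, so the disjointness has to be argued carefully through the nesting $H^{(j)}\supseteq G^{(j)}\supseteq H^{(j+1)}$. As an alternative, if this becomes too messy, I would use the equivalent characterisation that nowhere dense classes are uniformly quasi-wide. I would also fall back on the argument of Theorem~4.2 and Remark~4.3 in \cite{GroheKreutzerSiebertz_2017_NowhereDense} directly, bounding the number of rounds by the splitter number $\ell(2r)$ times a function of $r$.
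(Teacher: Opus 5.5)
Your main route has a genuine gap, and it is exactly where you locate the difficulty. The deletion rule gives only one disjointness property. For $j<i$, every vertex of $P^{(i)}_j$ lying in $N_r^{H^{(i)}}(v^{(i)})$ belongs to $W^{(i)}$, so $V(P^{(i)}_j)\cap V(G^{(i)})=\emptyset$. Since $H^{(i')}\subseteq G^{(i)}$ for all $i'>i$, the path $P^{(i)}_j$ is disjoint from every $P^{(i')}_{j'}$ with $j'>i$, that is, from paths whose index interval lies entirely after $[j,i]$. Nothing more is guaranteed.

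A subdivided clique on branch vertices $v^{(j_1)},\dots,v^{(j_m)}$ needs internally disjoint paths for all pairs, and these pairs have nested or interleaved index intervals. For $j<j'<i<i'$, the path $P^{(i')}_{j'}$ lives in $H^{(j')}$, which is fixed before round $i$, so it may freely reuse vertices of $P^{(i)}_j$. Likewise, $P^{(i)}_j$ and $P^{(i')}_j$ may share long initial segments at $v^{(j)}$. A Ramsey colouring cannot create disjointness the play does not provide, and neither can shortcutting. Your Step~1 claim that the tail of $P^{(i')}_i$ lies in later graphs is also unjustified, since $P^{(i')}_i$ is an arbitrary short path in $H^{(i)}$. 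So Step~2, which carries the whole proof, is unsupported.

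Your fallback of bounding the number of rounds by the splitter number does not work. The lemma concerns a \emph{prescribed} Splitter strategy in a generalised game where Connector may also pass to arbitrary subgraphs $H^{(i)}$. The mere existence of some winning strategy for Splitter says nothing about plays of this one; showing that this strategy wins is precisely what the lemma does.

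Uniform quasi-wideness is the right tool, and it is what the paper uses. However, you do not supply the argument, and its key idea is missing from your proposal. Let $s,N$ be the quasi-wideness functions and set $t=N(r,2s(r)+2)$. If $k>t$, you get $S$ with $\lvert S\rvert\le s(r)$ and $Y\subseteq\{v^{(1)},\dots,v^{(k)}\}$ of size $2s(r)+2$ whose elements are pairwise at distance $>r$ in $G\setminus S$.
\begin{itemize}
\item Order $Y$ by round index $i_1<\dots<i_{2s(r)+2}$ and pair \emph{consecutive} elements, taking $P_j=P^{(i_{2j})}_{i_{2j-1}}$ for $j\in[s(r)+1]$.
\item These index intervals are pairwise non-overlapping, so by the property above the $P_j$ are pairwise vertex-disjoint.
\item Each $P_j$ has length at most $r$ but joins two vertices at distance $>r$ in $G\setminus S$, so each meets $S$.
\end{itemize}
This forces $\lvert S\rvert\ge s(r)+1$, a contradiction. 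Pairing consecutive scattered vertices is exactly what matches the only disjointness the deletion rule provides, which is why this route works where the clique extraction stalls.
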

\begin{proof}
  We use the following characterisation,
  due to \cite{NesetrilOssonaDeMendez_2011_NowhereDense},
  of nowhere dense graph classes in terms of \emph{uniform quasi-wideness}
  (cf.~\cite{Dawar_2010_Quasiwide}).
  \emph{A class \(\CC\) of graphs is nowhere dense
  if and only if there are functions \(s \colon \NN \to \NN\)
  and \(N \colon \NN \times \NN \to \NN\) such that,
  for all \(r, m \in \NN\),
  all \(G \in \CC\),
  and all \(X \subseteq V(G)\) of size \(\abs{X} > N(r,m)\),
  there is a set \(S \subseteq V(G)\) of size \(\abs{S} \leq s(r)\)
  and a set \(Y \subseteq X\) of size \(\abs{Y} \geq m\)
  with \(\dist^{G \setminus S}(y,y') > r\) for all distinct \(y,y' \in Y\).}

  We choose the functions \(s, N\) for our nowhere dense class \(\CC\)
  according to this characterisation
  and let \(t \deff N\bigl(r, 2s(r) + 2\bigr)\).
  Suppose for contradiction that \(k > t\).

  Let \(X \deff \set{v^{(1)}, \dots, v^{(k)}}\).
  Choose \(S \subseteq V(G)\) of size \(\abs{S} \geq s(r)\)
  and a set \(Y \subseteq X\) of size \(\abs{Y} = 2s(r) + 2\)
  with \(\dist^{G \setminus S}(y,y') > r\) for all distinct \(y, y' \in Y\).
  Say, \(Y = \set{v^{(i_1)},\dots,v^{(i_{2s(r)+2})}}\)
  with \(1 \leq i_1 < i_2 < \cdots < i_{2r(s)+2} \leq k\).
  For all \(j \in [s(r) + 1]\),
  let \(P_j \deff P^{(i_{2j})}_{i_{2j-1}}\) be the path of length at most \(r\)
  from \(v^{(i_{2j-1})}\) to \(v^{(i_{2j})}\) in \(H^{(i_{2j-1})}\)
  from condition~\eqref{item:splitter-strategy-ii}.
  Then \(V(P_j) \cap V(G^{(i_{2j})}) = \emptyset\),
  and therefore \(V(P_j) \cap V(H^{(i_{2j+1})}) = \emptyset\)
  (if \(j < s(r) + 1\)).
  This means that the paths \(P_1, \dots, P_{2s(r)+1}\) are mutually vertex disjoint.
  Since, for every \(j\), the path \(P_j\) has length at most \(r\)
  and \(\dist^{G \setminus S}(v^{(i_{2j-1})}, v^{(i_{2j})}) > r\),
  it holds that \(V(P_j) \cap S \neq \emptyset\).
  Thus, \(\abs{S} \geq s(r) + 1\), which is a contradiction.
  This completes the proof of Lemma~\ref{lem:splitter-strategy}.
\end{proof}

Observe that \eqref{eq:splitter-strategy} and
\eqref{item:splitter-strategy-iii} imply that,
for \(i >j \geq 1\), we have
\(H^{(i)} \subseteq H^{(j)}[\neighbr{H^{(j)}}{v^{(j)}}]\),
and, as \(v^{(i)} \in V(H^{(i)})\) by \eqref{item:splitter-strategy-i},
there is a path \(P^{(i)}_j\) of length at most \(r\)
from \(v^{(j)}\) to \(v^{(i)}\) in \(H^{(j)}\).
Thus, the desired paths in \eqref{item:splitter-strategy-ii} always exist.

Furthermore, note that by \eqref{item:splitter-strategy-ii},
it holds that
\begin{equation}\label{eq:splitter2}
  \sum_{i=1}^{k} \bigabs{W^{(i)}} \leq \sum_{i=2}^k (1+r)(i-1)
  \leq (r + 1) \sum_{i=1}^{k-1} i \leq (r+1)k^2.
\end{equation}
and let
\begin{equation}\label{eq:ell-C-r}
  \ell \deff (r+1)t^2.
\end{equation}
Note that if \(H^{(i)} = G^{(i-1)}\) for all \(i \in [k]\),
then \(\pi\) is a partial play of the \((\ell,r)\)-splitter game.
Moreover, condition~\eqref{item:splitter-strategy-ii} describes a winning strategy for Splitter,
because if Splitter chooses the sets \(W^{(i)}\)
according to condition~\eqref{item:splitter-strategy-ii},
then Splitter wins no matter which vertices Connector chooses.
However, the lemma is stronger;
it even gives Splitter a winning strategy for a generalised game
where, in each round, Connector is allowed to restrict the game to a subgraph
\(H^{(i)}\) of \(G^{(i-1)}\).

\begin{remark}
  \label{rem:compute-splitter-strategy}
  Following \cite[Remark~4.7]{GroheKreutzerSiebertz_2017_NowhereDense},
  we can compute this winning strategy as follows:
  for each \(i \in [k]\),
  we compute a breadth-first search tree \(T^{(i)}\) for the graph \(H^{(i)}\)
  rooted in \(v^{(i)}\) in time \(\bigO\bigl(\norm{H^{(i)}}\bigr)\).
  Then, for each \(j \in [1, i{-}1]\),
  we use the tree \(T^{(j)}\) to compute the path \(P^{(i)}_j\)
  in time \(\bigO\bigl(\bigabs{P^{(i)}_j}\bigr)\),
  which allows us to compute the set \(W^{(i)}\) in time \(\bigO(ri)\).
\end{remark}

\subsection{Neighbourhood Covers}

Let \(G\) be a graph and \(r \in \NN\).
An \emph{\(r\)-neighbourhood cover} of \(G\) is a mapping
\(\X \colon V(G) \to 2^{V(G)}\)
such that for all \(v \in V(G)\),
the set \(\X(v) \subseteq V(G)\) is connected in \(G\)
and \(\neighbr{G}{v} \subseteq \X(v)\).
The sets \(\X(v)\) are called the \emph{clusters} of \(\X\).
For \(X \subseteq V(G)\), we write \(X \in \X\) to denote that \(X\) is a cluster of \(\X\),
\ie, \(X = \X(v)\) for some \(v \in V(G)\).

The \emph{radius} of a non-empty set \(X \subseteq V(G)\) is the least \(s \in \NN\)
such that there is a \(c \in X\),
called a \emph{centre} of \(X\),
with \(X \subseteq \neighb{s}{G[X]}{c}\).
The \emph{maximum radius} of an \(r\)-neighbourhood cover \(\X\)
is the maximum of the radii of the clusters \(X \in \X\).

The \emph{degree} of a vertex \(v \in V(G)\) in a neighbourhood cover \(\X\)
is the number of clusters \(X \in \X\) such that \(v \in X\).
The \emph{maximum degree} of \(\X\) is the maximum of the degrees of all \(v \in V(G)\).
Note that
\begin{equation}\label{eq:Delta}
  \sum_{X \in \X} \abs{X} \leq \abs{V(G)} \cdot \Delta,
\end{equation}
where \(\Delta\) denotes the maximum degree of \(\X\).

All these notions naturally generalise from graphs to \(\sigma\)-structures
by considering the Gaifman graph \(G_\A\) of a \(\sigma\)-structure \(\A\).

\begin{lemma}[\cite{GroheKreutzerSiebertz_2017_NowhereDense}]
  \label{lem:nb-covers}
  Let \(\CC\) be a nowhere dense class of graphs.
  For every \(\epsilon > 0\) and \(r \in \NN\),
  there is an algorithm that,
  given a graph \(G \in \CC\) with \(n \deff \abs{V(G)}\) vertices,
  computes in time \(\bigO(n^{1+\epsilon})\)
  an \(r\)-neighbourhood cover \(\X\) of \(G\)
  of radius at most \(2r\) and maximum degree \(\bigO(n^\epsilon)\)
  and a centre function \(c \colon \X \to V(G)\)
  such that \(X \subseteq \neighb{2r}{G[X]}{c(X)}\) for every \(X \in \X\).
\end{lemma}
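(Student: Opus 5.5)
The approach I would take is the standard route via \emph{weak colouring numbers}, which is also how \cite{GroheKreutzerSiebertz_2017_NowhereDense} obtains the lemma. For a linear order \(L\) on \(V(G)\) and \(s\in\NN\), let \(\mathrm{WReach}_s[G,L,u]\) be the set of vertices \(w\leq_L u\) that are reachable from \(u\) by a path of length at most \(s\) all of whose vertices are \(\geq_L w\). The weak colouring number \(\mathrm{wcol}_s(G)\) is the minimum over all \(L\) of \(\max_u\abs{\mathrm{WReach}_s[G,L,u]}\).

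The first step is to assume the known result (Zhu; Nešetřil--Ossona de Mendez; made algorithmic in \cite{GroheKreutzerSiebertz_2017_NowhereDense}) that for a nowhere dense class \(\CC\), every \(\epsilon>0\) and every \(s\), each \(G\in\CC\) has \(\mathrm{wcol}_s(G)\in\bigO(n^{\epsilon})\). Moreover, an order \(L\) attaining such a bound can be computed in time \(\bigO(n^{1+\epsilon})\). I would apply this with \(s\deff 2r\).

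The second step is the construction. For each \(w\in V(G)\), let
\(X_w\deff\setc{u\in V(G)}{w\in\mathrm{WReach}_{2r}[G,L,u]}\),
and for each \(v\) let \(m(v)\) be the \(L\)-minimal vertex of \(\neighbr{G}{v}\). Set \(\X(v)\deff X_{m(v)}\) and \(c(X_w)\deff w\). Four properties need checking.
\begin{itemize}
\item \emph{Covering.} Let \(u\in\neighbr{G}{v}\). Concatenate a shortest \(u\)--\(v\) path with a shortest \(v\)--\(m(v)\) path. Every vertex on this walk lies in \(\neighbr{G}{v}\), so every vertex is \(\geq_L m(v)\) by minimality of \(m(v)\). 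The walk has length at most \(2r\), hence \(m(v)\in\mathrm{WReach}_{2r}[u]\), i.e.\ \(u\in X_{m(v)}\).
\item \emph{Radius and connectivity.} Let \(u\in X_w\), witnessed by a path \(P\) of length at most \(2r\) from \(u\) to \(w\) with all vertices \(\geq_L w\). Every vertex on \(P\) also weakly \(2r\)-reaches \(w\) along the corresponding suffix of \(P\), so \(P\) lies entirely inside \(X_w\). Hence \(X_w\subseteq\neighb{2r}{G[X_w]}{w}\), and in particular \(X_w\) is connected.
\item \emph{Degree.} A vertex \(u\) belongs to \(X_w\) only if \(w\in\mathrm{WReach}_{2r}[u]\). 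Therefore the degree of \(u\) is at most \(\abs{\mathrm{WReach}_{2r}[u]}\in\bigO(n^{\epsilon})\).
\end{itemize}

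The third step, which I expect to be the main technical obstacle, is the running time. Computing the order \(L\) is the cited result. Given \(L\), for each \(w\) I would run a BFS from \(w\) to depth \(2r\) that only visits vertices \(\geq_L w\); this computes \(X_w\) exactly. Each such BFS costs \(\bigO(\sum_{u\in X_w}\deg(u))\), and by \eqref{eq:Delta} the sets have total size \(\sum_w\abs{X_w}\leq n\cdot\bigO(n^{\epsilon})\). Vertices of large degree could still make the sum of degrees expensive. I would handle this by orienting each edge towards its \(L\)-smaller endpoint and charging scanned edges only to the appropriate endpoint. This charging uses that \(\mathrm{wcol}_1\) bounds the degeneracy, so the charge is \(\bigO(n^{\epsilon})\) per vertex per cluster membership, giving total time \(\bigO(n^{1+2\epsilon})\). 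Finally, computing \(m(v)\) for all \(v\), and hence \(\X(v)\), amounts to one more bounded-depth search per vertex, or equivalently it can be read off from the clusters. Rescaling \(\epsilon\) then yields the bound \(\bigO(n^{1+\epsilon})\) stated in the lemma.
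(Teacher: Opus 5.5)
Your proposal is correct. It is essentially the weak-colouring-number construction from \cite{GroheKreutzerSiebertz_2017_NowhereDense}: the clusters \(X_w\) consist of all vertices that weakly \(2r\)-reach \(w\), and \(\X(v)\) is the cluster of the \(L\)-minimum of \(\neighbr{G}{v}\). The paper does not reprove this lemma but only cites it.

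One correction to your running-time step concerns computing \(m(v)\). A separate depth-\(r\) search from every vertex can cost \(\Theta(n^2)\), for example on a star with \(r\geq 2\), so use your second option instead. It works because \(m(v)\) equals the \(L\)-minimum of \(\mathrm{WReach}_r[G,L,v]\), which you can read off from the BFS depth labels recorded while computing the \(X_w\).

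Also, your edge-orientation charging is not needed. Plain double counting gives \(\sum_w\sum_{u\in X_w}\deg(u)\leq n^{\epsilon}\cdot 2\abs{E(G)}\), and \(\abs{E(G)}\leq n\cdot\mathrm{wcol}_1(G)\).
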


\subsection{Distance Independent Sets}

Let \(d \in \NNpos\),
and let \(\A\) be a \(\sigma\)-structure.
A \emph{\(d\)-independent set} in \(\A\) is a set \(Y \subseteq A\)
such that \(\dist^\A(y,y') > d\) for all distinct \(y,y' \in Y\).

\begin{lemma}[\cite{GroheKreutzerSiebertz_2017_NowhereDense}]
  \label{lem:dist-is}
  Let \(\CC\) be a nowhere dense class of \(\sigma\)-structures.
  For all \(d, m \in \NN\) and \(\epsilon > 0\),
  there is an algorithm that,
  given a structure \(\A \in \CC\) and a set \(X \subseteq A\),
  decides in time \(\bigO(n^{1+\epsilon})\)
  whether \(\A\) has a \(d\)-independent set \(Y \subseteq X\) of order \(\abs{Y} \geq m\),
  where \(n \deff \abs{A}\).
\end{lemma}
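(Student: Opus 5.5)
The statement to prove is \cref{lem:dist-is}: for fixed $d,m$, decide in time $\bigO(n^{1+\epsilon})$ whether some $d$-independent $Y\subseteq X$ with $\abs{Y}\geq m$ exists. My plan is a greedy construction that either produces such a set or certifies a bounded-size obstruction, followed by a recursion on $m$ that uses the splitter game together with the neighbourhood covers of \cref{lem:nb-covers}.

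\textbf{Greedy phase.} I first compute the Gaifman graph $G_\A$; this takes time $\bigO(\norm{\A})=\bigO(n^{1+\epsilon})$ by nowhere denseness. Then I repeat the following at most $m$ times. Pick an element $y\in X$ not yet covered, add it to $Y$, and mark all of $\neighb{d}{G_\A}{y}\cap X$ as covered. If I reach $m$ elements, I answer yes. The difficulty is that a naive BFS per step costs $\bigO(n^{1+\epsilon})$, which is acceptable since $m$ is a constant.

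\textbf{Failure of the greedy phase.} Otherwise the greedy phase stops with $X\subseteq\neighb{d}{}{Y_0}$ for some $\abs{Y_0}<m$. The greedy answer is not conclusive here, because a cleverer choice might still succeed. So I recurse. Any solution $Y$ of size $m$ has, by pigeonhole, at least $\lceil m/\abs{Y_0}\rceil$ elements in a single ball $\neighb{d}{}{y}$ with $y\in Y_0$. This bound alone does not reduce $m$ when $\abs{Y_0}$ is close to $m$.

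\textbf{Alternative route via covers.} The more robust approach I would actually pursue follows \cite{GroheKreutzerSiebertz_2017_NowhereDense} and uses uniform quasi-wideness. If $\abs{X}>N(d,m')$ with $m'=m+s(d)$, then there exist $S$ with $\abs{S}\leq s(d)$ and $Y\subseteq X$ with $\abs{Y}\geq m'$ that is $d$-scattered in $G\setminus S$. The task is then to make this statement algorithmic, or to reduce to it. I would combine it with an $d$-neighbourhood cover from \cref{lem:nb-covers} and a recursion: inside each cluster, play a splitter move by deleting the centre, and recurse on smaller parameters; by \cref{lem:splitter-strategy}, the recursion depth is bounded by $t(\CC,2d)$. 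The total work per level is $\sum_{X\in\X}\abs{X}=\bigO(n^{1+\epsilon})$ by \eqref{eq:Delta}. To combine answers, I would guess how many solution elements lie near each of the boundedly many "hub" vertices, and treat far-apart clusters as independent.

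\textbf{Main obstacle.} The hard step is this combination: distances in a subgraph with deleted vertices differ from those in $\A$. To handle it, I would record, for each deleted vertex $s$, the distance labels $\dist^{\A}(s,\cdot)$, truncated at $d$, as colours. I would then demand pairwise distance $>d$ with respect to the coloured reduct, in the spirit of the rewriting of \cite{GroheKreutzerSiebertz_2017_NowhereDense}. Since we may cite that paper, the cleanest choice is to state the result as imported from it, with the running time coming from their Theorem on distance independent sets.
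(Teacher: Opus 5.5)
Your final step is exactly what the paper does: \cref{lem:dist-is} is one of the results the paper explicitly imports from \cite{GroheKreutzerSiebertz_2017_NowhereDense} and states without proof, so the citation is the whole argument.

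You should drop the preceding sketch rather than offer it as support, because it does not stand on its own. The greedy phase is only a one-sided test. The uniform quasi-wideness step gives scatteredness in $G_\A\setminus S$, not in $\A$: if the centre of a star lies in $S$, its leaves are scattered in $G_\A\setminus S$ but pairwise at distance $2$ in $\A$. The depth bound of \cref{lem:splitter-strategy} is proved only for the path-based sets $W^{(i)}$ of condition~(ii), not for deleting the centre. Finally, you never specify how answers are combined across the overlapping clusters.
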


\subsection{The Removal Lemma}
The main algorithm proceeds by following a strategy for Splitter in the splitter game.
This means that we repeatedly remove elements from a structure.
We have to translate formulas from the original structure
to the structure obtained by removing an element,
and this is the content of the \emph{removal lemma}.
The lemma is closely related to \cite[Lemma~8.2]{GroheKreutzerSiebertz_2017_NowhereDense}
and \cite[Lemma~7.5]{GroheSchweikardt_2018_FOC1}
but needs to be adapted to match the definition of \(\FOplusqkd\)
(see \cref{sec:rpGNFforFOplus}).

Recall that by \(\tz_I\)
we denote the projection of a tuple \(\tz = (z_1, \dots, z_k)\)
to the coordinates in \(I \subseteq [k]\),
We extend the notation by letting \(\tz_{\setminus I} \deff \tz_{[k]\setminus I}\).

Let \(r \in \NN\).
For every relation symbol \(R \in \sigma\),
we let \(\tilde{R}_\emptyset \deff R\),
and for \(s \deff \ar(R)\) and for every set \(J \subseteq [s]\) with \(J \neq \emptyset\),
we introduce a fresh \((s - \abs{J})\)-ary relation symbol \(\tilde{R}_J\).
We let \(\removalsigma{}{}\) be the union of \(\sigma\)
and the set of all these new relation symbols,\footnote{In particular, \(R_{[s]}\) is a 0-ary relation symbol in \(\tilde{\sigma}\).}
and we let \(\removalsigma{r}{}\) be the extension of \(\tilde{\sigma}\)
by fresh unary relation symbols \(S_i\) for all \(i \in [r]\).
For every \(\sigma\)-structure \(\A\) of order \(\abs{A} \geq 2\) and every \(c \in A\),
we let \(\A \lbag c\) be the \(\removalsigma{}{}\)-structure
with universe \(A \setminus \set{c}\)
and relations
\[
  \tilde{R}_J^{\A \lbag c} \deff \bigsetc{\ta_{\setminus J}}
  {\ta \in R^\A \text{ and } J=\setc{j \in [s]}{a_j=c}}
\]
for every \(R \in \sigma\) and every \(J \subseteq [\ar(R)]\).
Furthermore, we let \(\A \lbag_r c\) be the \(\removalsigma{r}{}\)-expansion of \(\A \lbag c\)
in which each \(S_i\) is interpreted by the set of all \(b \in A \setminus \set{c}\)
such that \(\dist^\A(c,b) \leq i\).
Note that, for fixed \(\sigma\) and \(r\),
we can compute \(\A \lbag_r c\) from \(\A\) and \(c\) in time \(\bigO(\norm{\A})\).
Also note that the removal operation ‘respects’ the Gaifman graphs in the sense that
\begin{equation}
  \label{eq:removalGaifmanGraph}
  G_{\A \lbag_r c} = G_\A[A \setminus \set{c}].
\end{equation}

\noindent
Recall that
$\locRadFOOperator$ is the radius function defined in \cref{eq:specialRadiusFunction} in
\cref{sec:rpGNFforFOandFOMOD}.

\begin{lemma}[Removal Lemma]
  \label{lem:fo-removal}
  Let \(q, k, d \in \NN\) with \(k\geq 1\),
  let \(\sigma\) be a signature,
  and let \(r \deff \locRadFO{q}{k}{d}\).
  For every formula \(\phi(\tx) \in \FOplusqkd\) of signature \(\sigma\)
  with \(\tx = (x_1, \dots, x_k)\)
  and for every set \(I \subseteq [k]\),
  there is a formula \(\tilde{\phi}_I(\tx_{\setminus I}) \in \FOplusqkd\)
  of signature \(\removalsigma{r}{}\)
  such that for every \(\sigma\)-structure \(\A\) of order \(\abs{A} \geq 2\),
  all \(c \in A\), and all \(\ta = (a_1, \dots, a_k) \in A^k\)
  such that \(I = \setc{i \in [k]}{a_i=c}\),
  we have
  \begin{equation}
    \label{eq:removal}
    \A \models \phi(\ta)
    \quad\iff\quad
    \A \lbag_r c \models \tilde{\phi}_I(\ta_{\setminus I}).
  \end{equation}
  Furthermore, there is an algorithm that computes \(\tilde{\phi}_I\)
  from \(\phi(\tx)\) and \(I\).
\end{lemma}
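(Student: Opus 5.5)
We prove the Removal Lemma by induction on \(q\), simultaneously for all \(k \geq 1\), \(d\), and all \(I \subseteq [k]\), following the inductive definition of \(\FOplusqkd\). The idea is that the removed element \(c\) is \enquote{remembered} through the fresh symbols: \(\tilde R_J\) records the tuples of \(R\) in which exactly the positions in \(J\) were occupied by \(c\), and \(S_i\) records the elements at distance at most \(i\) from \(c\) in \(\A\). Since \(\locRadFOOperator\) is monotone in \(q\) and \(k\), every radius \(r'\) arising in a subformula satisfies \(r' \leq r = \locRadFO{q}{k}{d}\). Thus the symbols \(S_1,\ldots,S_r\) suffice throughout, and a formula of signature \(\removalsigma{r'}{}\) can be read as one of signature \(\removalsigma{r}{}\).

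\textbf{Base case and Boolean combinations.} For \(q=0\) we translate atom by atom.
\begin{itemize}
\item A relational atom \(R(x_{j_1},\ldots,x_{j_s})\) becomes \(\tilde R_J(\ldots)\), where \(J\) is the set of positions with \(j_\nu\in I\); the arguments are the remaining variables.
\item An equality \(x_i{=}x_j\) becomes \(\true\) if \(i,j\in I\), becomes \(\false\) if exactly one of them lies in \(I\), and stays unchanged otherwise.
\item A distance atom \(\dist(x_i,x_j)\leq d'\) with \(i,j\notin I\) is the delicate case, because a shortest path in \(\A\) may pass through \(c\). We replace it by
\[
\dist(x_i,x_j)\leq d' \;\lor\; \Lor_{d_1+d_2\leq d',\, d_1,d_2\geq 1} \bigl(S_{d_1}(x_i)\land S_{d_2}(x_j)\bigr).
\]
This uses only indices \(\leq d'\leq r\) and no quantifiers, so it stays in \(\FOplusParam{0}{k}{d}\). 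Distances in \(\A\lbag_r c\) are computed in \(G_\A[A\setminus\{c\}]\) by \eqref{eq:removalGaifmanGraph}, which is exactly why the disjunct through \(c\) is needed.
\item If exactly one of \(i,j\) lies in \(I\), say \(i\), the atom becomes \(S_{d'}(x_j)\) (or \(\false\) if \(d'=0\)). If both lie in \(I\), it becomes \(\true\).
\end{itemize}
Boolean combinations are translated componentwise.

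\textbf{Induction step.} In the step from \(q\) to \(q{+}1\) we treat the generators of \(\FOplusParam{q{+}1}{k}{d}\).
\begin{itemize}
\item Generators of type (i) follow from the induction hypothesis with \(k{+}1\); its radius \(\locRadFO{q}{k{+}1}{d}\) is at most \(r\).
\item Distance atoms of type (ii), with \(\tilde d\leq \locRadFO{q{+}1}{k}{d}=r\), are handled as in the base case.
\item For \(\exists x_{k+1}\psi\) of type (iv), we split on whether \(x_{k+1}\) is interpreted by \(c\). This gives
\[
\tilde\psi_{I\cup\{k+1\}} \;\lor\; \exists x_{k+1}\,\tilde\psi_I,
\]
where both translations come from the induction hypothesis applied to \(\psi\in\FOplusParam{q}{k{+}1}{d}\). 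The first disjunct is already in the right class by (i), and the second is of type (iv).
\item For type (iii), \(\exists x_{k+1}\bigl(\dist(x_i,x_{k+1})\leq\hat d\land\psi\bigr)\), we split in the same way. When \(x_{k+1}\neq c\) and \(i\notin I\), the guard \(\dist(x_i,x_{k+1})\leq\hat d\) must again allow paths through \(c\). We write the guard as the disjunction of the direct atom and the formulas \(S_{d_1}(x_i)\land S_{d_2}(x_{k+1})\). The direct part stays a type (iii) formula. The \(S\)-parts are pulled into the body, giving plain existentials of type (iv) conjoined with \(S_{d_1}(x_i)\), where the inner formula \(S_{d_2}(x_{k+1})\land\tilde\psi_I\) lies in \(\FOplusParam{q}{k{+}1}{d}\). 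When \(i\in I\), the guard simply becomes \(S_{\hat d}(x_{k+1})\) inside a type (iv) formula.
\end{itemize}

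\textbf{Main obstacle.} The main difficulty is the bookkeeping that keeps every translated formula inside \(\FOplusqkd\) and not merely in \(\FOplus\). Two things must be checked in each case: the distance bounds in the new atoms must respect the position-dependent limits of the stratification, and the quantifier rank must not grow. In particular, the guarded quantifier of type (iii) must only be rewritten into forms (iii) and (iv) with bounds \(\leq\locRadFO{q{+}1}{k}{d}-\locRadFO{q}{k{+}1}{d}\). Computability of \(\tilde\phi_I\) is immediate, since the translation is a syntactic recursion.
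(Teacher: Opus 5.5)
Your proposal is correct and follows essentially the same route as the paper's proof. Both use the same atom-by-atom translation via \(\tilde R_J\) and the atoms \(S_{d_1}(x_i)\land S_{d_2}(x_j)\) for shortest paths through \(c\), and the same split on whether \(x_{k+1}\) is interpreted by \(c\) for generators of types (iii) and (iv); your deviations (summing over \(d_1+d_2\leq d'\) instead of \(d_1+d_2=d'\), replacing \(S_0\) by \(\false\), pulling \(S_{d_1}(x_i)\) out of the quantifier) are cosmetic, though you should state explicitly that the branch \(x_{k+1}=c\) of type (iii) with \(i\notin I\) yields the disjunct \(S_{\hat d}(x_i)\land\tilde\psi_{I\cup\{k+1\}}\).
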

\begin{proof}
Note that, for all distance atoms $\dist(y,z)\,{\leq}\,d'$ that occur
in a formula in $\FOplusqkd$, it holds that $d'\leq\locRadFO{q}{k}{d}$;
and \(\removalsigma{\locRadFO{q}{k}{d}}{}\) contains unary relations $S_i$ for all $i$
with $1\leq i\leq \locRadFO{q}{k}{d}$.

Let us fix a $d\in \NN$.
We prove the lemma by induction on $q$, and for each fixed $q$, we show
that the assertion of the lemma holds for all $k\in \NNpos$.
\medskip

For the induction base with $q=0$, consider an arbitrary $k\in\NNpos$,
and let $r\deff\locRadFO{0}{k}{d}$.
We have to prove that the assertion of the lemma
holds for all formulas $\phi(\bar
x)\in\FOplusParam{0}{k}{d}$. First, consider the case that $\phi(\bar
x)$ is an atomic formula.

  If \(\phi(\tx)\) is of the form \(\dist(x_i, x_j) \leq d'\),
  then \(i, j \in [k]\) and \(d' \leq d \leq \locRadFO{0}{k}{d} = r\).
  If \(i, j \in I\),
  then we let
  \(\tilde{\phi}_I(\tx_{\setminus I}) \deff \true\).
  If \(i \in I\) and \(j \not\in I\),
  we let
  \(\tilde{\phi}_I(\tx_{\setminus I}) \deff S_{d'}(x_j)\).
  Analogously, if \(i \not\in I\) and \(j \in I\),
  we let
  \(\tilde{\phi}_I(\tx_{\setminus I}) \deff S_{d'}(x_i)\).
  Finally, if \(i, j \not\in I\),
  we let
  \[\tilde{\phi}_I(\tx_{\setminus I}) \ \deff \ \ \dist(x_i, x_j) \leq d'
  \ \lor \Lor_{\substack{1 \leq d_1, d_2 \leq d'-1,\\ d_1 + d_2 = d'}}
  \bigl(S_{d_1}(x_i) \land S_{d_2}(x_j)\bigr).\]

  If \(\phi(\tx)\) is of the form \(\true\) or \(\false\),
  then we also let \(\tilde{\phi}_I(\tx_{\setminus I}) \deff \true\)
  or \(\tilde{\phi}_I(\tx_{\setminus I}) \deff \false\), respectively.

  If \(\phi(\tx)\) is of the form \(x_i {=} x_j\),
  then \(i,j \in [k]\).
  If \(i,j \in I\), then we let \(\tilde{\phi}_I(\tx_{\setminus I}) \deff \true\).
  If \(i \in I\) and \(j \not\in I\) or if \(i \not\in I\) and \(j \in I\),
  then we let \(\tilde{\phi}_I(\tx_{\setminus I}) \deff \false\).
  Finally, if \(i,j \not\in I\),
  then we let \(\tilde{\phi}_I(\tx_{\setminus I}) \deff x_i {=} x_j\).

  If \(\phi(\tx)\) is of the form \(R(x_{i_1}, \dots, x_{i_s})\)
  for some \(R \in \sigma\) with \(s \deff \ar(R)\) and \(i_1, \dots, i_s \in [k]\),
  then we let \(\tilde{\phi}_I(\tx_{\setminus I}) \deff \tilde{R}_J(\ty)\),
  where \(J \deff \setc{j \in [s]}{i_j \in I}\)
  and \(\ty\) is the tuple obtained from \((x_{i_1}, \dots, x_{i_s})\)
  by dropping all entries \(x_{i_j}\) with \(j \in J\).

  This concludes the construction for atomic formulas.
  It can easily be verified that \(\tilde{\phi}_I(\tx_{\setminus I}) \in \FOplusParam{0}{k}{d}\)
  and that \cref{eq:removal} holds for all \(\sigma\)-structures \(\A\), \(c \in A\),
  and tuples \(\ta \in A^k\) as specified.

  Boolean combinations of such formulas are handled in the obvious
  way:
  if \(\phi = \neg\psi\), then \(\tilde{\phi}_I \deff \neg \tilde{\psi}_I\),
  and if \(\phi = \psi \lor \chi\),
  then \(\tilde{\phi}_I \deff \tilde{\psi}_I \lor \tilde{\chi}_I\).
  This completes the induction base with $q=0$.
\medskip
  
For the induction step from $q$ to $q{+}1$, consider arbitrary $q,k\in
\NN$ with $k\geq 1$.
We have to prove that the assertion of the lemma holds for all formulas in
$\FOplusParam{q{+}1}{k}{d}$ and $r\deff\locRadFO{q{+}1}{k}{d}$.
From the induction hypothesis, we obtain that the assertion of the lemma
holds for all formulas in $\FOplusParam{q}{k{+}1}{d}$ and $r'\deff\locRadFO{q}{k{+}1}{d}$.

By the definition of $\FOplusParam{q{+}1}{k}{d}$, it consists of
formulas that are Boolean combinations of formulas of the types
\eqref{item:FOplusParam_One}--\eqref{item:FOplusParam_Four} (cf.,
\cref{sec:rpGNFforFOplus}).
Consider a $\phi(\bar x)\in\FOplusParam{q{+}1}{k}{d}$. 

If $\phi(\bar x)$ is of type \eqref{item:FOplusParam_One}, it belongs to
$\FOplusParam{q}{k{+}1}{d}$. We are done by using the induction
hypothesis and the fact that $r\geq r'$.

If $\phi(\bar x)$ is of type \eqref{item:FOplusParam_Two}, it is a
distance atom and we can
handle it in the same way as we handled distance atoms in the base
case for $q=0$.

If $\phi(\bar x)$ is of type \eqref{item:FOplusParam_Three}, it is of
the form
  \(\exists x_{k+1}\, \bigl(\dist(x_i, x_{k+1}) \leq \hat{d} \land \psi(\tx')\bigr)\)
  for \(i \in [k]\), \(\psi(\tx') \in \FOplusParam{q}{k{+}1}{d}\),
  \(\tx' = (x_1, \dots, x_{k+1})\),
  and \(\hat{d} \leq r-r' \leq r\).
  Let \(I' \deff I \cup \set{k{+}1}\).
  If \(i \in I\), then we set
  \[\tilde{\phi}_I(\tx_{\setminus I}) \ \deff \ \  \tilde{\psi}_{I'}(\tx'_{\setminus I'})
  \ \lor\ \exists x_{k+1}\, \bigl(S_{\hat{d}}(x_{k+1}) \land \tilde{\psi}_I(\tx'_{\setminus I})\bigr).\]
  If \(i \not\in I\), we set
  \begin{align*}
    \tilde{\phi}_I(\tx_{\setminus I}) \ \deff \ \
    &\hphantom{\lor\ }\bigl(\,S_{\hat{d}}(x_i) \land \tilde{\psi}_{I'}(\tx'_{\setminus I'})\,\bigr)\\
    &\lor \exists x_{k+1}\,\bigl(\dist(x_i, x_{k+1}) \leq \hat{d}
    \ \land \ \tilde{\psi}_I(\tx'_{\setminus I})\bigr)\\
    &\lor \Lor_{\substack{1 \leq d_1, d_2 \leq \hat{d}-1\\ d_1 + d_2 = \hat{d}}}
    \exists x_{k+1}\,\bigl(S_{d_1}(x_i) \land S_{d_2}(x_{k+1})
    \land \tilde{\psi}_I(\tx'_{\setminus I})\bigr).
  \end{align*}

  If \(\phi(\tx)\) is of type \eqref{item:FOplusParam_Four}, it is of the form \(\exists x_{k+1} \psi(\tx')\)
  for \(\psi(\tx') \in \FOplusParam{q}{k{+}1}{d}\) and \(\tx' = (x_1, \dots, x_{k+1})\).
  Let \(I' \deff I \cup \set{k{+}1}\).
  We set
  \(\tilde{\phi}_I(\tx_{\setminus I}) \deff \tilde{\psi}_{I'}(\tx'_{\setminus I'})
  \lor \exists x_{k+1}\,\tilde{\psi}_I(\tx'_{\setminus I})\).

   It can easily be verified that, in all four cases, the constructed
   formula $\tilde{\phi}_I(\tx_{\setminus I})$ belongs to $\FOplusParam{q{+}1}{k}{d}$,
   and that \cref{eq:removal} is satisfied.

  Finally, Boolean combinations of formulas of the types
\eqref{item:FOplusParam_One}--\eqref{item:FOplusParam_Four} are
handled in the obvious way:
  if \(\phi = \neg\psi\), then \(\tilde{\phi}_I \deff \neg \tilde{\psi}_I\),
  and if \(\phi = \psi \lor \chi\),
  then \(\tilde{\phi}_I \deff \tilde{\psi}_I \lor \tilde{\chi}_I\).
 This completes the induction step, and it completes the proof of \cref{lem:fo-removal}.
\end{proof}

We will only apply the Removal Lemma to formulas with one free variable,
and we need an iterated version of the Removal Lemma for this case.
Let \(r \in \NN\).
We let \(\removalsigma{r}{0} \deff \sigma\), and, for \(s \in \NN\),
we let \(\removalsigma{r}{s+1} \deff \removalsig{\tau}{r}{}\)
for \(\tau \deff \sigmars\).
For a \(\sigma\)-structure \(\A\) of order \(\abs{A} > s\)
and distinct elements \(c_1, \dots, c_s \in A\), we let
\[
  \A\lbag_r c_1 \cdots c_s \deff ( \cdots ((\A \lbag_r c_1) \lbag_r c_2) \cdots )\lbag_r c_s.
\]
Then \(\A \lbag_r c_1 \cdots c_s\) is a \(\sigmars\)-structure
with universe \(A \setminus \set{c_1, \dots, c_s}\)
and Gaifman graph \(G_{\A}[A \setminus \set{c_1, \dots, c_s}]\)
(this follows inductively by using \cref{eq:removalGaifmanGraph}).
Furthermore, we have
\(
  \sigma = \removalsigma{r}{0}
  \subseteq \removalsigma{r}{1}
  \subseteq \removalsigma{r}{2}
  \subseteq \cdots
\).

\begin{corollary}
  \label{cor:iterated-removal}
  Let \(q, d \in \NN\), let \(k,s \in \NNpos\),
  let \(\sigma\) be a signature,
  and let \(r \deff \locRadFO{q}{k}{d}\).
  For every formula \(\phi(x) \in \FOplusqkd\) of signature \(\sigma\),
  there are sentences \(\tilde{\phi}_1, \dots, \tilde{\phi}_s \in \FOplusqkd\)
  and a formula
  \(\tilde{\phi}_0(x) \in \FOplusqkd\) of signature \(\sigmars\)
  such that for every \(\sigma\)-structure \(\A\) of order \(\abs{A} \geq s+1\),
  all distinct \(c_1, \dots, c_s \in A\), and all \(a \in A\),
  we have
  \begin{equation}
    \label{eq:removal2}
    \A \models \phi(a) \iff
    \begin{cases}
      \A \lbag_r c_1 \cdots c_s \models \tilde{\phi}_0(a)
      &\text{if } a \in A \setminus \set{c_1, \dots, c_s},\\
      \A \lbag_r c_1 \cdots c_s \models \tilde{\phi}_1
      &\text{if } a = c_1,\\
      \mkern64mu \vdots
      &\qquad \vdots\\
      \A \lbag_r c_1 \cdots c_s \models \tilde{\phi}_s
      &\text{if } a = c_s.
    \end{cases}
  \end{equation}
  Furthermore, there is an algorithm that computes
  \(\tilde{\phi}_0(x), \tilde{\phi}_1, \dots, \tilde{\phi}_s\)
  from \(\phi(x)\) and \(s\).
\end{corollary}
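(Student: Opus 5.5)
We obtain the corollary by iterating the Removal Lemma (\cref{lem:fo-removal}) $s$ times, inducting on $s$ and tracking the free variable separately. At each step, the only thing that matters is whether the current free variable has been assigned the removed element.

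For $s=1$, apply \cref{lem:fo-removal} with $k=1$ to $\phi(x_1)$ for both choices $I=\emptyset$ and $I=\set{1}$.
\begin{itemize}
\item With $I=\emptyset$ we get a formula $\tilde{\phi}_{\emptyset}(x_1)\in\FOplusqkd$ of signature $\removalsigma{r}{}=\removalsigma{r}{1}$, and we set $\tilde{\phi}_0\deff\tilde{\phi}_{\emptyset}$. It satisfies $\A\models\phi(a)\iff\A\lbag_r c_1\models\tilde{\phi}_0(a)$ whenever $a\neq c_1$.
\item With $I=\set{1}$ we get a formula $\tilde{\phi}_{\set{1}}$ whose free variables lie in $\tx_{\setminus I}=()$. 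It is therefore a sentence, and we set $\tilde{\phi}_1\deff\tilde{\phi}_{\set{1}}$. It handles the case $a=c_1$.
\end{itemize}
Membership in $\FOplusqkd$ and computability come directly from \cref{lem:fo-removal}.

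For the step from $s$ to $s{+}1$, assume we already have $\tilde{\phi}_0(x),\tilde{\phi}_1,\dots,\tilde{\phi}_s\in\FOplusqkd$ of signature $\tau\deff\removalsigma{r}{s}$, correct for $\B\deff\A\lbag_r c_1\cdots c_s$. Note that $\B$ has at least two elements because $\abs{A}\geq s{+}2$. Apply \cref{lem:fo-removal} over the signature $\tau$, with the same $q,k,d$ (so the same $r$), and with the element $c_{s+1}\in B$.
\begin{itemize}
\item Applying it to $\tilde{\phi}_0(x)$ with $I=\emptyset$ gives the new $\tilde{\phi}_0$ for elements $a\notin\set{c_1,\dots,c_{s+1}}$.
\item Applying it to $\tilde{\phi}_0(x)$ with $I=\set{1}$ gives a sentence $\tilde{\phi}_{s+1}$ for the case $a=c_{s+1}$.
\item For each $j\leq s$, the sentence $\tilde{\phi}_j$ must be transferred from $\B$ to $\B\lbag_r c_{s+1}$, and this is the one technical point. \cref{lem:fo-removal} is stated only for $k\geq 1$, so it does not apply to sentences directly. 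To get around this, view $\tilde{\phi}_j$ as a formula $\tilde{\phi}_j(x_1)$ with a dummy free variable; this is allowed since $\free(\tilde{\phi}_j)=\emptyset\subseteq\set{x_1}$, and it keeps the formula in $\FOplusParam{q}{1}{d}$. Then apply the lemma with $I=\set{1}$, evaluating $x_1$ at $c_{s+1}$. The result is again a sentence, and it is equivalent to $\tilde{\phi}_j$ via $\B\models\tilde{\phi}_j\iff\B\models\tilde{\phi}_j(c_{s+1})$.
\end{itemize}

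Finally, we check the bookkeeping. The new signature is $\removalsig{\tau}{r}{}=\removalsigma{r}{s+1}$. The radius $r=\locRadFO{q}{k}{d}$ stays unchanged because $q,k,d$ are fixed, so every distance atom remains covered by the predicates $S_i$. Correctness follows by chaining the equivalences: $\A\models\phi(a)$ holds iff $\B$ satisfies the appropriate formula of the induction hypothesis, iff $\B\lbag_r c_{s+1}$ satisfies its transferred version. All constructions are effective, so the formulas $\tilde{\phi}_0(x),\tilde{\phi}_1,\dots,\tilde{\phi}_s$ are computable from $\phi(x)$ and $s$.
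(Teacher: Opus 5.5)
Your overall strategy is the right one. The corollary is meant to follow by iterating \cref{lem:fo-removal} along $c_1,\dots,c_s$, and your induction on $s$ is sound in structure: the three cases $a\notin\set{c_1,\dots,c_{s+1}}$, $a=c_{s+1}$ and $a=c_j$ with $j\le s$, plus the transfer of already obtained sentences via a dummy free variable evaluated at the removed element. However, the step ``apply \cref{lem:fo-removal} with $k=1$'' fails as written. The same error reappears when you claim that a sentence of $\FOplusqkd$ lies in $\FOplusParam{q}{1}{d}$.

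With parameter $1$, the lemma applies only to formulas in $\FOplusParam{q}{1}{d}$. It uses radius $\locRadFO{q}{1}{d}$ and produces formulas in $\FOplusParam{q}{1}{d}$ of signature $\removalsigma{\locRadFO{q}{1}{d}}{}$. A formula of $\FOplusqkd$ with a single free variable need not lie in $\FOplusParam{q}{1}{d}$. For $q\ge 1$ and $k\ge 2$ it may contain a guard $\dist(x_1,x_{k+1})\le\hat d$ with $\hat d$ up to $(4k{-}1)\locRadFO{q-1}{k+1}{d}$. This exceeds $\locRadFO{q}{1}{d}=4\locRadFO{q-1}{2}{d}$, which bounds the maximum distance of every formula in $\FOplusParam{q}{1}{d}$. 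Its quantified variables are also $x_{k+1},x_{k+2},\dots$ rather than $x_2,x_3,\dots$. So with $k=1$ the lemma's hypothesis is violated. Even ignoring that, you would get $\lbag_r$ with the wrong $r$, the wrong signature and the wrong formula class. This contradicts your own bookkeeping claim that $r=\locRadFO{q}{k}{d}$ stays fixed.

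The repair is to always apply \cref{lem:fo-removal} with the $k$ of the statement. Regard $\phi(x)$ as a formula with free variables among $\tx=(x_1,\dots,x_k)$; in the induction step do the same for $\tilde{\phi}_0(x)$ and for each sentence $\tilde{\phi}_j$. Evaluate it at the constant tuple $(a,\dots,a)$, or $(c_{s+1},\dots,c_{s+1})$ for the sentences. Then $\setc{i\in[k]}{a_i=c}$ is either $\emptyset$ or $[k]$:
\begin{itemize}
\item the formula obtained for $I=[k]$ is a sentence;
\item the formula for $I=\emptyset$ has free variables only among those of the input, since the construction in the proof of \cref{lem:fo-removal} never creates free variables. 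Alternatively, identify $x_1,\dots,x_k$ with $x$; this is capture-free because bound variables in $\FOplusqkd$ formulas have indices greater than $k$.
\end{itemize}
With this change, $r=\locRadFO{q}{k}{d}$, the class $\FOplusqkd$ and the signatures $\removalsigma{r}{s}$ stay exactly as in the statement. The rest of your argument then goes through unchanged.
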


\subsection{The Main Algorithm}
\Cref{thm:fo-mc} immediately follows from the following lemma.
For a formula \(\phi(x)\) and a structure \(\A\),
it will be convenient to set
\(\phi(\A) \deff \setc{a \in A}{\A \models \phi(a)}\).

\begin{lemma}
  \label{lem:fo-mc}
  Let \(\CC\) be a nowhere dense class of \(\sigma\)-structures.
  Furthermore, let \(\phi(x)\) be an \(\FOplus[\sigma]\) formula.
  For every \(\epsilon > 0\), there is an algorithm that,
  given \(\A \in \CC\), computes the set \(\phi(\A)\)
  in time \(\bigO_{\CC, \phi, \epsilon}(\abs{A}^{1+\epsilon})\).
\end{lemma}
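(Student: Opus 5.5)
The plan is to compute $\phi(\A)$ recursively, following Splitter's strategy from \cref{lem:splitter-strategy}. The recursion is organised by the parameters $(q,k,d)$ of \cref{cor:GNF_FOplus}, with the quantifier rank $q$ as the measure that eventually runs out.

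First, fix $q,d$ with $\phi(x)\in\FOplusParam{q}{1}{d}$. Apply \cref{cor:GNF_FOplus} to rewrite $\phi(x)$ as a Boolean combination of basic local sentences and one local formula $\lambda(x)\in\FOplusParam{q}{1}{d}$. This $\lambda$ is $r$-local for $r\deff\locRadFO{q}{1}{d}$, and its quantifier rank does not exceed that of $\phi$. Then deal with the two kinds of components as follows.

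\begin{enumerate}
\item \emph{Basic local sentences.} Such a sentence has the form $\exists y_1\cdots\exists y_\ell(\bigwedge \dist(y_i,y_j)>2r''\wedge\bigwedge\lambda'(y_j))$ with $\qr(\lambda')\leq q-1$ and $\lambda'(y)$ local. Compute $\lambda'(\A)$ recursively; since the rank is smaller, the recursion on $q$ terminates. Then decide whether a $2r''$-independent set of size $\ell$ exists inside $\lambda'(\A)$ using \cref{lem:dist-is}.
\item \emph{The local formula $\lambda(x)$.} Compute a $r$-neighbourhood cover $\X$ of radius $\leq 2r$ and degree $\bigO(n^\epsilon)$ via \cref{lem:nb-covers}, together with centres $c(X)$. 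For each cluster $X$ and each $a$ with $\X(a)=X$, locality gives $\A\models\lambda(a)$ iff $\A[X]\models\lambda(a)$. Inside $\A[X]$, play one Splitter move with Connector's vertex $c(X)$, and remove the prescribed vertices $W$ together with the centre itself. To translate $\lambda$ into a formula over $\A[X]\lbag_{r} c_1\cdots c_s$, apply the iterated removal (\cref{cor:iterated-removal}). The resulting structures have radius-bounded components, and one recurses on these smaller structures.
\end{enumerate}

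The recursion depth is bounded by the splitter-game length $t(\CC,r)$ from \cref{lem:splitter-strategy}. Since the subgraphs $H^{(i)}$ may be arbitrary restrictions to clusters, the generalised game in that lemma is exactly what is needed here.

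For the running time, \eqref{eq:Delta} bounds the total size of all clusters at one level by $n^{1+\epsilon}$. Because the recursion depth is bounded by a constant depending on $\CC,\phi$ (and, via $\epsilon$, on the chosen covers), the total is $\bigO(n^{1+c\epsilon})$; rescaling $\epsilon$ gives the claimed bound.

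The main obstacle is keeping the parameters of the recursive calls bounded. Every removal step enlarges the signature to $\removalsigma{r}{s}$, and every recursive call on a cluster must again be applied to a formula in some fixed $\FOplusParam{q}{k}{d}$. It is essential that the radius $r$ fixed at the start stays valid throughout. This is precisely where the rank-preservation of \cref{cor:GNF_FOplus} enters: the local part $\lambda$ stays in $\FOplusParam{q}{1}{d}$, and the removal lemma keeps formulas in $\FOplusqkd$. As a result, the same $r$, and hence the same splitter bound $t$ and signature bound $\removalsigma{r}{t}$, can be used at all recursion levels. I would first verify carefully that the formulas produced after removal and renormalisation never leave $\FOplusParam{q}{1}{d}$, so that the recursion terminates via the splitter bound rather than via a decrease of $q$.
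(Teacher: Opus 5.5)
Your plan follows the paper's main line. You recurse along Splitter's strategy over clusters of $r$-neighbourhood covers, translate with \cref{cor:iterated-removal}, re-normalise with \cref{cor:GNF_FOplus}, and use rank preservation so that every formula stays in $\FOplusParam{q}{1}{d}$. This fixes $r$, the depth bound and the signature bound once and for all.

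\textbf{Where you differ: item~1.} You evaluate the inner formulas of basic local sentences by a separate recursion on the quantifier rank. The paper does not need this. Such an inner formula $\lambda'(x_1)\in\FOplusParam{q-1}{2}{d}$ already belongs to $\FOplusParam{q}{1}{d}$ (rule~(i) of the definition), and it is $r$-local, since its radius is at most $\locRadFO{q-1}{2}{d}\le r$. So the paper simply puts it into the pool $\Psi$ of one-variable $r$-local formulas handled by the same splitter recursion. The sentence is then decided by \cref{lem:dist-is} on the computed set.

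Your nested recursion can be made to work, but it only adds overhead. The rank-$(q-1)$ calls are needed at every node of the splitter tree, on structures $\A[X]\lbag_r c_1\cdots c_s$ that are not in $\CC$. So the induction hypothesis must be strengthened to all classes of expanded structures whose Gaifman graphs lie in the subgraph closure $\CC_G$. The $\epsilon$ of the inner calls must also be chosen below the per-node budget $\epsilon'$. Your final paragraph already contains the observation that makes this recursion unnecessary. In the paper $q$ never decreases, and termination comes solely from the splitter bound.

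\textbf{Details to correct.}
\begin{itemize}
\item Clusters have radius up to $2r$, so Connector's moves belong to the $2r$-game. The depth bound is therefore $t(\CC_G,2r)$, and Splitter's sets $W$ are built from paths of length at most $2r$. With $t(\CC,r)$ these paths need not exist, and \cref{lem:splitter-strategy} does not apply.
\item The signature bound counts removed vertices, not rounds. It is $\removalsigma{r}{\ell}$ with $\ell=(2r{+}1)t^2$ by \eqref{eq:splitter2}, not $\removalsigma{r}{t}$.
\item After the removal step there are no \enquote{radius-bounded components}. One computes a fresh $r$-neighbourhood cover of the Gaifman graph $G[X\setminus W]$ of the new structure and recurses on its clusters. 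A size threshold is needed so that \cref{lem:nb-covers} gives degree at most $n^{\epsilon'}$.
\item Once you drop the nested rank recursion, $t$ does not depend on $\epsilon$. That independence is what makes the choice $\epsilon'=\epsilon/(2t)$ legitimate and lets the recurrence $T(j,m)\le\sum_Y T(j{-}1,m_Y)+\bigO(m^{1+\epsilon'})$ close.
\end{itemize}
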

\begin{proof}
  Let \(\CC_G\) be the class of all Gaifman graphs of structures in \(\CC\)
  and all their subgraphs.
  Then \(\CC_G\) is nowhere dense.

  We assume without loss of generality that \(\free(\phi) = \set{x}\);
  in case that \(\free(\phi) = \emptyset\),
  we replace \(\phi\) with the formula \((\phi \land x{=}x)\).
  Let \(q, k, d \in \NN\) such that \(\phi \in \FOplusqkd\),
  and let \(r \deff \locRadFO{q}{k}{d}\), where $\locRadFOOperator$ is
  the function defined in \cref{eq:specialRadiusFunction} in
  \cref{sec:rpGNFforFOandFOMOD}.
  We choose \(t \deff t(\CC_G,2r)\) according to \cref{lem:splitter-strategy}
  and let \(\ell \deff (2r+1)t^2\) as in \cref{eq:ell-C-r}
  (for \(2r\) instead of \(r\)).
Let \(\epsilon > 0\).
  Without loss of generality,
  we assume \(\epsilon \leq \frac{1}{2}\)
  (which implies that \(\epsilon^2 \leq \epsilon/2\)).
  We let
  \begin{equation}
    \label{eq:eps}
    \epsilon' \deff \frac{\epsilon}{2t}.
  \end{equation}
  Furthermore, we choose a \(c_1 \in \NNpos\)
  such that for all graphs \(G \in \CC_G\) of order \(n_G \deff |G| > c_1\),
  the \(r\)-neighbourhood cover of \cref{lem:nb-covers},
  applied to \(G\) and \(\epsilon'/2\),
  has maximum degree at most \(n_G^{\epsilon'}\).
  Moreover, we let \(c_2 \deff c_1 + \ell\).

  Let \(\A\) be the input structure for our algorithm, and set \(n \deff \abs{A}\).
  If \(n \leq c_1\), then we compute the set \(\phi(\A)\) by brute force.
  Hence, in the following, suppose that \(n > c_1\),
  and let \(\X\) be an \(r\)-neighbourhood cover of \(G_\A\) of radius at most \(2r\)
  and with maximum degree at most \(n^{\epsilon'}\)
  computed via \cref{lem:nb-covers}.

  By \cref{cor:GNF_FOplus},
  the \(\FOplusqkd\) formula \(\phi(x)\) is equivalent to a Boolean combination \(\phi'(x)\)
  of \(r\)-local \(\FOplusqkd\) formulas \(\psi(x)\) and basic local sentences of the form
  \begin{equation}
    \label{eq:fo-mc-basic-local}
    \xi \deff \exists y_1\cdots\exists y_\ell\;
    \big(
      \Land_{1\leq j < j'\leq \ell} \dist(y_j,y_{j'})\,{>}\,2r'
      \ \land\ \Land_{j\in[\ell]} \lambda(y_j)
    \big),
  \end{equation}
  where \(1 \leq \ell \leq k+q\),
  \(y_1, \dots, y_\ell\) are \(\ell\) variables,
  \(r', \tilde{r} \in \NN\) with \(\tilde{r} \leq r' \leq \frac{1}{2}r + \locRadFO{q-1}{k+1}{d}\),
  \(\tilde{r} \leq \locRadFO{q-1}{k+1}{d} \leq r\),
  and \(\lambda(x_1) \in \FOplusParam{q-1}{k+1}{d}\) is \(\tilde{r}\)-local.
  In particular, \(\lambda(x_1)\) is \(r\)-local.

  For every \(r\)-local formula \(\psi(x)\) that appears in \(\phi'(x)\),
  either directly in the Boolean combination or as part of a basic local sentence,
  we shall compute the set \(\psi(\A)\).
  Then we can evaluate the basic local sentences \(\xi\)
  of the form~\eqref{eq:fo-mc-basic-local} as follows:
  we note that \(\A \models \xi\) if and only if \(\A\) has a \(2r'\)-independent set
  \(Y \subseteq \psi(\A)\) of order \(\abs{Y} \geq \ell\).
  We can use \cref{lem:dist-is} to decide if this is the case in time \(\bigO(n^{1+\epsilon'})\).
  Once we have computed the truth values of all basic local sentences \(\xi\) in \(\phi'\)
  as well as the sets \(\psi(\A)\) for all local formulas \(\psi(x)\) in \(\phi'\),
  we can easily evaluate the Boolean combination \(\phi'\)
  and compute the set \(\phi'(\A) = \phi(\A)\).
  Thus, it remains to compute \(\psi(\A)\) for all formulas \(\psi(x)\)
  in a finite set \(\Psi\) of \(r\)-local \(\FOplusqkd\) formulas.

  We shall recursively compute the set \(\psi(\A[X])\) for every \(X \in \X\)
  and every \(\psi(x) \in \Psi\).
  Observe that, if \(\nrA{a} \subseteq X\),
  then \(a \in \psi(\A) \iff a\in \psi(\A[X])\),
  because \(\psi(x)\) is \(r\)-local.
  Furthermore, \(\X\) is an \(r\)-neighbourhood cover,
  so this implies
  \begin{equation}
    \label{eq:evaluate-root}
    \psi(\A) = \bigsetc{a \in A}{a \in \psi\bigl(\A[\X(a)]\bigr)}.
  \end{equation}

  Let us move on to the recursive step.
  It will be convenient to think of our recursive algorithm
  in terms of the recursion tree it builds.
  Each node of the recursion tree will be indexed by a tuple \(\tX = (X_1, \dots, X_k)\),
  where \(k \in [0,t]\),
  and the \(X_i\) are subsets of \(A\).
  The root is indexed by the empty tuple \(()\).
  In the following, we will no longer distinguish between the nodes and their indices
  and refer to a tuple \(\tX\) as a node of the tree.

  For each node \(\tX\) that is not a leaf of the tree,
  we define a signature \(\sigma_{\tX}\)
  that will always be of the form \(\removalsigma{r}{i}\)
  for some \(i \geq 0\),
  a \(\sigma_{\tX}\)-structure \(\A_{\tX}\),
  its Gaifman graph \(G_{\tX}\),
  and \(r\)-neighbourhood cover \(\X_{\tX}\) of \(G_{\tX}\)
  of radius at most \(2r\) and maximum degree at most \(\abs{G_{\tX}}^{\epsilon'}\),
  a centre map \(c_{\tX}\) for \(\X_{\tX}\),
  and a finite set \(\Psi_{\tX}\) of \(r\)-local \(\FOplusqkd\) formulas
  with one free variable and of signature \(\sigma_{\tX}\).
  For the root \(()\),
  we let \(\sigma_{()} \deff \sigma = \removalsigma{r}{0}\),
  \(\A_{()} \deff \A\),
  \(G_{()} \deff G_\A\),
  \(\X_{()} \deff \X\),
  and \(\Psi_{()} \deff \Psi\).

  For nodes \(\tX \neq ()\) that are not leaves,
  in addition to the structure \(\A_{\tX}\), the graph \(G_{\tX}\),
  the neighbourhood cover \(\X_{\tX}\),
  and the set \(\Psi_{\tX}\) of formulas,
  we will define a graph \(H_{\tX}\),
  a vertex \(v_{\tX}\),
  and a set \(W_{\tX}\).
  They will be defined in such a way that,
  for each node \(\tX = (X_1, \dots, X_k)\),
  the vertices \(v^{(i)} \deff v_{(X_1, \dots, X_i)}\),
  sets \(W^{(i)} \deff W_{(X_1, \dots, X_i)}\),
  graphs \(G^{(0)} \deff G_{()}\)
  and \(G^{(i)} \deff G_{(X_1, \dots, X_i)}\),
  \(H^{(i)} \deff H_{(X_1, \dots, X_i)}\) for \(i \in [k]\) satisfy
  the conditions \eqref{eq:splitter-strategy}
  and \eqref{item:splitter-strategy-i}--\eqref{item:splitter-strategy-iii}
  of \cref{lem:splitter-strategy} with \(2r\) instead of \(r\).
  Furthermore, we define a breadth-first search tree \(T_{\tX}\)
  of the graph \(H_{\tX}\) rooted in \(v_{\tX}\).

  For a node \(\tX = (X_1, \dots, X_k)\),
  if \(k \geq 1\) and \(\abs{X_k} \leq c_2\), then \(\tX\) is a leaf of the tree.
  If \(k = 0\) or \(\abs{X_k} > c_2\),
  then the node \(\tX\) has a child \(\tX Y \deff (X_1, \dots, X_k, Y)\)
  for every \(Y \in \X_{\tX}\).

  Let \(\tX = (X_1, \dots, X_k)\) and \(\tY = (X_1, \dots, X_k, Y)\)
  for some \(Y \in \X_{\tX}\).
  Our goal at the node \(\tX\) is to compute \(\psi(\A_{\tX}[Y])\)
  for every \(\psi(x) \in \Psi_{\tX}\).
  From these sets \(\psi(\A_{\tX}[Y])\), we can compute
  \begin{equation}
    \label{eq:eval-from-children}
  \psi(\A_{\tX}) = \bigsetc{a \in A_{\tX}}{a \in \psi\bigl(\A_{\tX}[\X_{\tX}(a)]\big)},
  \end{equation}
  exploiting that \(\psi(x)\) is \(r\)-local
  and that \(\X_{\tX}\) is an \(r\)-neighbourhood cover
  of the Gaifman graph \(G_{\tX}\) of \(\A_{\tX}\).
  For the root \(\tX = ()\) of the recursion tree,
  this enables us to compute \(\phi(\A)\), as argued above.

  Consider a node \(\tY = \tX Y = (X_1, \dots, X_k, Y)\) for some \(k \in \NN\).
  If \(\abs{Y} \leq c_2\), then \(\tY\) is a leaf,
  and we compute \(\psi(\A_{\tX}[Y])\) for every \(\psi(x) \in \Psi_{\tX}\) by brute force.
  Now suppose that \(\abs{Y} > c_2\).
  For \(i \in [0,k]\), let \(\tX^{(i)} \deff (X_1, \dots, X_i)\)
  and \(G^{(i)} \deff G_{\tX^{(i)}}\).
  Furthermore, if \(i \geq 1\),
  let \(v^{(i)} \deff v_{\tX^{(i)}}\),
  \(W^{(i)} \deff W_{\tX^{(i)}}\),
  and \(H^{(i)} \deff H_{\tX^{(i)}}\).
  Observe that \(Y\) is a cluster in the neighbourhood cover \(\X_{\tX^{(k)}}\) of \(G^{(k)}\),
  and hence, \(Y \subseteq V(G^{(k)})\).
  We let
  \(v_{\tY} \deff v^{(k+1)} \deff c_{\tX^{(k)}}(Y)\)
  and \(H_{\tY} \deff H^{(k+1)} \deff G^{(k)}[Y]\).
  We choose \(W_{\tY} \deff W^{(k+1)} \subseteq Y\)
  according to \cref{lem:splitter-strategy}\eqref{item:splitter-strategy-ii}
  with \(2r\) instead of \(r\),
  and we let
  \(G_{\tY} \deff G^{(k+1)} \deff H^{(k+1)}
  \bigl[\neighb{2r}{H^{(k+1)}}{v^{(k+1)}} \setminus W^{(k+1)}\bigr]\)
  (according to \cref{lem:splitter-strategy}\eqref{item:splitter-strategy-iii}).
  Note that, actually, we have
  \(G^{(k+1)} = H^{(k+1)}[Y \setminus W^{(k+1)}]\),
  because the radius of the neighbourhood cover \(\X_{\tX^{(k)}}\) is at most \(2r\)
  and \(v^{(k+1)} = c_{\tX^{(k)}}(Y)\),
  so \(Y \subseteq \neighb{2r}{G^{(k)}[Y]}{v^{(k+1)}}\).

  If \(W^{(k+1)} = \emptyset\), then we proceed as follows.
  We let \(\sigma_{\tY} \deff \sigma_{\tX}\)
  and \(\A_{\tY} \deff \A_{\tX}[Y]\).
  We let \(G_{\tY} \deff G_{\A_{\tY}}\)
  and compute an \(r\)-neighbourhood cover \(\X_{\tY}\) of \(G_{\tY}\)
  of radius at most \(2r\) and maximum degree at most \(\abs{Y}^{\epsilon'}\)
  together with a centre map \(c_{\tY}\) using \cref{lem:nb-covers}.
  Moreover, we let \(\Psi_{\tY} \deff \Psi_{\tX}\).
  The children of \(\tY\) are \(\tY Z\) for \(Z \in \X_{\tY}\),
  and, for all \(\psi(x) \in \Psi_{\tY} = \Psi_{\tX}\),
  we can compute \(\psi(\A_{\tY}) = \psi(\A_{\tX}[Y])\)
  from the recursively computed sets \(\psi(\A_{\tY}[Z])\)
  for \(Z \in \X_{\tY}\) as in \cref{eq:eval-from-children}.

  Now suppose that \(W^{(k+1)} = \set{w_1, \dots, w_s} \neq \emptyset\)
  for pairwise distinct \(w_1, \dots, w_s\).
  In order to define \(\sigma_{\tY}\), suppose that \(\sigma_{\tX} = \removalsigma{r}{i}\).
  Then we let \(\sigma_{\tY} \deff \removalsigma{r}{i+s}\)
  and \(\A_{\tY} \deff \A_{\tX}[Y] \lbag_r w_1 \cdots w_s\).
  Set \(G_{\tY} \deff G_{\A_{\tY}}\),
  and note that \(V(G_{\tY}) = Y \setminus W^{(k+1)}\).
  Moreover, since \(\abs{Y} > c_2 = c_1 + \ell\) and \(s = \abs{W^{(k+1)}} \leq \ell\),
  we have \(\abs{G_{\tY}} > c_1\).
  We compute an \(r\)-neighbourhood cover \(\X_{\tY}\) of \(G_{\tY}\)
  of radius at most \(2r\) and maximum degree at most
  \(\abs{Y \setminus W^{(k+1)}}^{\epsilon'}\)
  together with a centre map \(c_{\tY}\) using \cref{lem:nb-covers}.
  The children of \(\tY\) are \(\tY Z\) for \(Z \in \X_{\tY}\).
For each \(\psi(x) \in \Psi_{\tX} \subseteq \FOplusqkd\),
  we compute sentences \(\tilde{\psi}_1, \dots, \tilde{\psi}_s \in \FOplusqkd\)
  and a formula \(\tilde{\psi}_0(x) \in \FOplusqkd\) by \cref{cor:iterated-removal}.
  The corollary shows that we can compute \(\psi(\A_{\tX}[Y])\)
  from the truth values of the sentences \(\tilde{\psi}_i\) in \(\A_{\tY}\)
  and the set \(\psi_0(\A_{\tY})\).
  We apply \cref{cor:GNF_FOplus} to each of the \(\tilde{\psi}_i\)
  and obtain a formula \(\tilde{\psi}'_i\) in Gaifman normal form.
  As explained above for \(\phi(x)\),
  from this formula in Gaifman normal form,
  we obtain a set \(\Psi_i\) of \(r\)-local formulas with one free variable in \(\FOplusqkd\)
  such that we can reduce the evaluation of \(\tilde{\psi}_i\) in \(\A_{\tY}\)
  to the evaluation of all formulas in \(\Psi_i\) in \(\A_{\tY}\).
  We let \(\Psi_\psi \deff \bigcup_{i=0}^s \Psi_i\).
  Then we can reduce the evaluation of \(\psi\) in \(\A_{\tX}[Y]\)
  to the evaluation of all formulas in \(\Psi_\psi\) in \(\A_{\tY}\).
  Finally, we let \(\Psi_{\tY} \deff \bigcup_{\psi \in \Psi_{\tX}} \Psi_\psi\).
  Then we can compute \(\psi(\A_{\tX}[Y])\) for all \(\psi \in \Psi_{\tX}\)
  from the recursively computed \(\chi(\A_{\tY}[Z])\) for \(Z \in \X_{\tY}\)
  and \(\chi(x) \in \Psi_{\tY}\).

  This completes the description of the recursive algorithm.
  It remains to analyse the algorithm.
  We first observe that the depth of the recursion is at most \(t\);
  this follows from \cref{lem:splitter-strategy}.
  Moreover, for each node \(\tX = (X_1, \dots, X_k)\),
  the sizes of the sets \(W_{\tX_i}\) for \(\tX_i \deff (X_1, \dots, X_i)\)
  add up to at most \(\ell\) by \cref{eq:splitter2}
  (for \(2r\) instead of \(r\)) and our choice of \(\ell\).
  Whenever we remove the elements of a non-empty set \(W_{\tX_i}\) of size \(s\),
  we extend the signature from \(\sigma_{\tX_{i-1}} = \removalsigma{r}{j}\)
  to \(\sigma_{\tX_{i}} = \removalsigma{r}{j+s}\).
  Thus, \(\sigma_{\tX} = \removalsigma{r}{j'}\) for \(j' = \sum_{i=1}^k \abs{W_{\tX_i}}\).
  As \(j' \leq \ell\),
  it follows that \(\sigma_{\tX} \subseteq \removalsigma{r}{\ell} \ffed \sigma^*\).
  Note that the size of \(\sigma^*\) only depends on the input formula \(\phi(x)\)
  (via \(\sigma\) and \(r\))
  and the class \(\CC\)
  (via \(t = t(\CC_G, 2r)\)),
  but not on the size of the structure $\A$.

  Thus, all formulas \(\psi(x) \in \Psi_{\tX}\) are of a signature
  that is contained in \(\sigma^*\).
  Furthermore, by the construction of \(\Psi_{\tX}\),
  all formulas in \(\Psi_{\tX}\) are contained in a finite set of \(\FOplusqkd\) formulas
  that only depends on \(\phi(x)\) and \(\CC\),
  but not on the size of the structure $\A$.
  Thus, the number and size of the formulas we need to evaluate at each node
  only contributes a constant factor to the running time.

  All structures \(\A_{\tX}\) have signature contained in \(\sigma^*\),
  which means that we can bound their size \(\norm{\A_{\tX}}\)
  and the size of their Gaifman graph \(G_{\tX}\)
  by \(\bigO\bigl(\abs{\A_{\tX}}^{1+\epsilon'}\bigr)\).
  Also note that for \(\tX = (X_1, \dots, X_k)\),
  we have \(\abs{\A_{\tX}} \leq \abs{X_k} \ffed n_{\tX}\).
  At each non-leaf node,
  we have to compute the neighbourhood cover \(\X_{\tX}\) and centre map \(c_{\tX}\),
  which can be done in time \(\bigO\bigl(n_{\tX}^{1+\epsilon'}\bigr)\) by \cref{lem:nb-covers}.
  Furthermore, we compute the breadth-first search tree \(T_{\tX}\) of \(H_{\tX}\),
  which can also be done in time
  \(\bigO(\norm{H_{\tX}}) \subseteq \bigO\bigl(n_{\tX}^{1+\epsilon'}\bigr)\).
  By using the trees \(T_{\tX_i}\) of the ancestors \(\tX_i\) of \(\tX\) in the tree,
  we can compute \(W_{\tX}\) in constant time (cf.~\cref{rem:compute-splitter-strategy}).

  Based on this,
  we can compute \(\A_{\tX}\) in time
  \(\bigO(\norm{\A_{\tX}}) \subseteq \bigO\bigl(n_{\tX}^{1+\epsilon'}\bigr)\).
  Finally, assuming \(\tX = \tX' X_k\),
  we have to compute \(\psi(\A_{\tX'}[X_k])\) for each \(\psi(x) \in \Psi_{\tX'}\)
  from the recursively computed \(\chi(\A_{\tX}[Y])\) for \(Y \in \X_{\tX}\)
  and \(\chi(x) \in \Psi_{\tX}\).
  Combining the sets \(\chi(\A_{\tX}[Y])\) to \(\chi(\A_{\tX})\) requires time
  \(\bigO\bigl(\sum_{Y \in \X_{\tX}} \abs{Y}\bigr)
  \subseteq \bigO\bigl(n_{\tX}^{1+\epsilon'}\bigr)\) by \cref{eq:Delta},
  since the maximum degree of \(\X_{\tX}\) is at most \(n_{\tX}^{\epsilon'}\).
  Computing \(\psi(\A_{\tX'}[X_k])\) for all \(\psi(x) \in \Psi_{\tX'}\)
  from the sets \(\chi(\A_{\tX})\) for \(\chi(x) \in \Psi_{\tX}\)
  involves a large constant factor to handle all formulas
  and may involve computing \(2r'\)-independent sets
  using \cref{lem:dist-is} to evaluate basic local sentences,
  but it is possible in time \(\bigO\bigl(n_{\tX}^{1+\epsilon'}\bigr)\).

  Overall, the running time at each non-leaf node \(\tX\) is in
  \(\bigO\bigl(n_{\tX}^{1+\epsilon'}\bigr)\)
  (not including the time for the recursive calls).
  The running time at a leaf node is constant.

  We obtain the following recurrence for the running time \(T(j,m)\)
  the algorithm spends in the subtree rooted at a node
  \(\tX = (X_1, \dots, X_{t-j})\) with \(n_{\tX} = m\):
  \begin{align*}
    T(0,m)
    &\in \bigO(1)
    &&\text{for all } m,\\
    T(j,m)
    &\in \bigO(1)
    &&\text{for all } j \in [t] \text{ and all } m \leq c_2,\\
    T(j,m)
    &\leq \sum_Y T(j-1, m_Y) + \bigO(m^{1+\epsilon'})
    &&\text{for all } j \in [t] \text{ and all } m > c_2.
  \end{align*}
  The sum ranges over all clusters \(Y\) in the neighbourhood cover \(\X_{\tX}\),
  and \(m_Y \deff n_{\tX Y}\).
  Recall that \(\sum_Y m_Y \leq m^{1+\epsilon'}\) by \cref{eq:Delta},
  using that the maximum degree of \(\X_{\tX}\) is at most \(m^{\epsilon'}\).
  Note that the clause \(T(0,m) \in \bigO(1)\) is justified by the fact
  that the height of the tree is at most \(t\).
  Therefore, all nodes of the form \((X_1, \dots, X_t)\) must be leaves.

  The same recurrence was obtained in \cite{GroheKreutzerSiebertz_2017_NowhereDense}.
  The straightforward analysis,
  carried out in \cite{GroheKreutzerSiebertz_2017_NowhereDense}
  (and using, among other things, the inequality
  \(\sum_i y_i^p \leq (\sum_i y_i)^p\)
  for \(y_i \geq 0\) and \(p \geq 1\),
  based on the fact that the \(\ell_p\) norm of a vector \((y_i)_i\)
  is bounded from above by its \(\ell_1\) norm),
  yields \(T(j,n) \in \bigO(n^{1 + 2j \epsilon'}) \subseteq \bigO(n^{1+\epsilon})\)
  for all \(j \in [0,t]\).
  In particular, this bounds the running time at the root
  to \(\bigO(n^{1+\epsilon})\)
  and completes the proof of \cref{lem:fo-mc}.
\end{proof}
 \end{document}